\newcommand{\bl}{\begin{lemma}}
\newcommand{\el}{\end{lemma}}
\def\beaa{\begin{eqnarray*}} 
\def\eeaa{\end{eqnarray*}}
\def\ba{\begin{array}}
\def\ea{\end{array}}
\def\be#1{\begin{equation} \label{#1}}
\def \eeq{\end{equation}}
\newcommand{\gd}{{g \mkern-8mu /\ \mkern-5mu }}
\newcommand{\di}{\mbox{$d \mkern-9.2mu /$\,}}
\newcommand\smallO{
  \mathchoice
    {{\scriptstyle\mathcal{O}}}
    {{\scriptstyle\mathcal{O}}}
    {{\scriptscriptstyle\mathcal{O}}}
    {\scalebox{.6}{$\scriptscriptstyle\mathcal{O}$}}
  }
\def\a{{\alpha}}
\def\be{{\beta}}
\def\ga{\gamma}
\def\Ga{\Gamma}
\def\de{\delta}
\def\si{\sigma}
\def\Si{\Sigma}
\def\om{\omega}
\def\Om{\Omega}
\def\th{\theta}
\def\nab{\nabla}
\def\varep{\varepsilon}
\def\pr{{\partial}}
\def\les{\lesssim}
\def\rh{{\rho}}
\def\ind{{\in \mkern-16mu /\ \mkern-4mu}}
\def\prd{{\partial \mkern-9mu /\ \mkern-4mu}}
\def\XX{{\mathcal{X}}}
\providecommand{\lrpar}[1]{\left( #1\right)}
\def\Ldo{{\overset{\circ}{\Ld}}}
\def\CC{{\mathcal C}}
\def\MM{{\mathcal M}}
\def\II{{\mathcal I}}
\def\FF{{\mathcal F}}
\def\EE{{\mathcal E}}
\def\HH{{\mathcal H}}
\def\LL{{\mathcal L}}
\def\GG{{\mathcal G}}
\def\TT{{\mathcal T}}
\def\OO{{\mathcal O}}
\def\DD{{\mathcal D}}
\def\PP{{\mathcal P}}
\def\HHb{\underline{\mathcal H}}
\def\D{{\bf D}}
\def\g{{\bf g}}
\def\SSS{{\mathbb{S}}}
\def\RRR{{\mathbb R}}
\def\f12{{\frac 1 2}}
\DeclareMathOperator{\Div}{\mathrm{div}}
\DeclareMathOperator*{\Curl}{\mathrm{curl}}
\def\half{\frac{1}{2}}
\newcommand{\pd}{\pd \mkern-9mu/\ \mkern-7mu}
\newcommand{\Lied}{\mathcal{L} \mkern-9mu/\ \mkern-7mu}
\newcommand{\DDd}{\DD \mkern-10mu /\ \mkern-5mu}
\newcommand{\Du}{\underline{D}}
\newcommand{\RRRic}{\mathrm{Ric}}
\newcommand{\Divd}{\Div \mkern-17mu /\ }
\newcommand{\Divdo}{{\overset{\circ}{\Div \mkern-17mu /\ }}}
\newcommand{\Curld}{\Curl \mkern-17mu /\ }
\newcommand{\Nd}{\nabla \mkern-13mu /\ }
\newcommand{\Ld}{\triangle \mkern-12mu /\ }
\newcommand{\iin}{\in \mkern-16mu /\ \mkern-5mu}
\newcommand{\trchi}{{\tr \chi}}
\newcommand{\trchib}{{\tr \chib}}
\def\ni{\noindent}
\def\Lb{{\,\underline{L}}}
\def\tr{\mathrm{tr}}
\def\chih{{\widehat \chi}}
\def\chib{{\underline \chi}}
\def\chibh{{\underline{\chih}}}
\def\etab{{\underline \eta}}
\def\omb{{\underline{\om}}}
\def\aa{{\underline{\a}}}
\def\th{\theta}
\def\f{\widetilde{f}}
\def\Rbf{{\mathbf{R}}}
\def\Gammad{{\Gamma \mkern-11mu /\,}}
\newcommand{\gac}{{\overset{\circ}{\ga}}}
\newcommand{\ab}{{\underline{\alpha}}}
\newcommand{\beb}{{\underline{\beta}}}
\newcommand{\mfm}{{\mathfrak{m}}}
\newcommand{\mfb}{{\mathfrak{B}}}
\newcommand{\mfW}{{\mathbb{W}}}
\newcommand{\tgd}{{\tilde{\gd}}}
\newcommand{\sql}{{\sqrt{\half l(l+1)-1}}}
\newcommand{\sqlp}{{\sqrt{\half l'(l'+1)-1}}}
\newcommand{\sqll}{{\sqrt{l(l+1)}}}
\newcommand{\tm}{{\tilde m}}
\newcommand{\Ef}{{\mathbf{E}}}
\newcommand{\Pf}{{\mathbf{P}}}
\newcommand{\Lf}{{\mathbf{L}}}
\newcommand{\Gf}{{\mathbf{G}}}
\newtheorem{theorem}{Theorem}[section]
\newtheorem{lemma}[theorem]{Lemma}
\newtheorem{corollary}[theorem]{Corollary}
\newtheorem{definition}[theorem]{Definition}
\newtheorem{remark}[theorem]{Remark}
\numberwithin{equation}{section}
\def\@setthanks{\vspace{-\baselineskip}\def\thanks##1{\@par##1\@addpunct.}\thankses}
\begin{document}

\title[Obstruction-free gluing for the Einstein equations]{Obstruction-free gluing \\ for the Einstein equations}
\author[S. Czimek and I. Rodnianski]{Stefan Czimek$^{(1)}$ and Igor Rodnianski$^{(2)}$} 
\thanks{\noindent$^{(1)}$ Mathematisches Institut, Universit\"at Leipzig, Augustusplatz 10, 04109 Leipzig, Deutschland,  \texttt{stefan.czimek@uni-leipzig.de}. \\
$^{(2)}$ Department of Mathematics, Princeton University, Fine Hall, Washington Road, Princeton, NJ 08544, USA, \texttt{irod@math.princeton.edu}. } 
%
\begin{abstract} In this paper we develop a new approach to the gluing problem in General Relativity,  that is, the problem of matching two solutions of the 
Einstein equations along a spacelike or characteristic (null) hypersurface. In contrast to the previous constructions, the new perspective actively utilizes the nonlinearity of the constraint equations. As a result, we are able to remove the 10-dimensional spaces of obstructions to the null and spacelike (asymptotically flat) gluing problems, previously known in the literature.
In particular, we show that any asymptotically flat spacelike initial data can be glued to the Schwarzschild initial data of mass $M$ for any $M>0$ sufficiently large. More generally, compared to the celebrated result of Corvino-Schoen, our methods allow us to choose ourselves the Kerr spacelike initial data that is being glued onto.  As in our earlier work, our primary focus is the analysis of the null problem, where we develop a new technique of combining low-frequency linear analysis with high-frequency {\it nonlinear} control. The corresponding spacelike results are derived a posteriori by solving a 
characteristic initial value problem.
\end{abstract}
\maketitle
\setcounter{tocdepth}{2}
\tableofcontents

\section{Introduction} \label{SECintro} 

\ni This paper is concerned with gluing constructions for the Einstein equations of general relativity. Generally, the  \emph{gluing problem of general relativity} is to ``connect" two given spacetimes $(\MM_1,\mathbf{g}_2)$ and $(\MM_2,\mathbf{g}_2)$ as solution to the Einstein equations. In other words, the goal is to construct a solution $(\MM,\mathbf{g})$ to the Einstein equations such that $(\MM_1,\mathbf{g}_2)$ and $(\MM_2,\mathbf{g}_2)$ isometrically embed into $(\MM,\mathbf{g})$.

A special case of this gluing problem are \emph{initial data gluing problems} where two (subsets of) initial data sets for the Einstein equations of the same type -- either spacelike or null -- shall be ``connected" one to another as solution of the corresponding initial data constraint equations. Once such initial data gluing is achieved, it suffices to evolve the Einstein equations forward to get a gluing of the corresponding pieces of spacetimes in the above sense (where the spacetime pieces are determined by the domain of dependence property of the Einstein equations).

In the present article we study the \emph{null} initial data gluing problem and then deduce \emph{a posteriori} the corresponding results for \emph{spacelike} initial data gluing by evolution of the Einstein equations. However, as spacelike gluing has a long history in mathematical relativity and Riemannian geometry, we first discuss in Section \ref{SECintroSPACELIKE} the spacelike gluing problem, recalling in particular the classical result by Corvino-Schoen on spacelike gluing to Kerr and explaining how our present work improves on it. In Section \ref{SECintroNULL} we then set up the null gluing problem and revisit the previous null gluing results established by the authors in collaboration with Aretakis in \cite{ACR1,ACR2,ACR3}. In Section \ref{SECintroSecondVersionMainTHM} we give the first version of the main  result of this paper and explain our approach. In Section \ref{SECtoymodelINTRO} we illustrate the ideas by discussing a model problem.

\subsection{The spacelike gluing problem}\label{SECintroSPACELIKE}

\ni Consider two given spacelike initial data sets $(\Si_1,g_1,k_1)$ and $(\Si_2,g_2,k_2)$. The \emph{spacelike gluing problem} is to ``connect" them as solution to the spacelike constraint equations, that is, to construct spacelike initial data $(\Si,g,k)$ such that $(\Si_1,g_1,k_1)$ and $(\Si_2,g_2,k_2)$ isometrically embed into $(\Si,g,k)$.

As the spacelike constraint equations are of elliptic nature, it was expected in the past that solutions display some form of elliptic rigidity, i.e. that solutions are completely or almost completely determined from their restriction onto an open set. It thus came as a surprise to the community when Corvino \cite{Corvino} and Corvino-Schoen \cite{CorvinoSchoen} (see also \cite{ChruscielDelay}) proved that \emph{asymptotically flat spacelike initial data}, that is, $\Si$ being diffeomorphic to $\RRR^3 \setminus \overline{B(0,1)}$ outside a compact set and $(g,k)$ admitting the following expansions as $\vert x \vert \to \infty$,
\begin{align*}
\begin{aligned}
g_{ij}(x)-e_{ij} = \OO(\vert x \vert^{-1}), \,\, k_{ij}(x) = \OO(\vert x \vert^{-2}),
\end{aligned}
\end{align*}
\emph{can be glued up to a $10$-dimensional obstruction space}. The latter should be understood in the sense that the data $(\Si_1,g_1,k_1)$ and $(\Si_2,g_2,k_2)$ can be chosen to be arbitrary \emph{up to the values of certain $10$ integral quantities.} In particular, as it turns out, $(\Si_2,g_2,k_2)$
can be chosen to be (a slice of) {\it{one}} of the elements of the ($10$-dimensional) Kerr family.

The gluing in these results takes place far out in the asymptotically flat region, namely, across the annulus $A_{[R,2R]}$ bounded by the coordinate spheres $S_R$ and $S_{2R}$, for large $R>0$. The scale-invariance of the constraint equations allows to rescale the problem to an equivalent \emph{small data} (i.e. close to Minkowski) gluing problem across the annulus $A_{[1,2]}$. We emphasize already that, in a similar vein, the main results of this paper are stated in terms of the appropriate small data gluing problem between two spheres $S_1$ and $S_2$ (except for our application to spacelike gluing, see Theorem \ref{THMintro1} below).  
 
 The $10$-dimensional obstruction space can be interpreted geometrically in terms of the $10$ ADM parameters of \emph{energy} $\mathbf{E}_{\mathrm{ADM}}$, \emph{linear momentum} $\Pf_{\mathrm{ADM}}$, \emph{angular momentum} $\Lf_{\mathrm{ADM}}$, and \emph{center-of-mass} $\mathbf{C}_{\mathrm{ADM}}$.

The Corvino-Schoen spacelike gluing \cite{CorvinoSchoen} can be interpreted as follows. Any solution with well-defined ADM parameters $(\mathbf{E}_{\mathrm{ADM}},\mathbf{P}_{\mathrm{ADM}},\mathbf{L}_{\mathrm{ADM}},\mathbf{C}_{\mathrm{ADM}})$ can be glued, far out in the asymptotically flat region, to the initial data $(\Si_{\mathrm{Kerr}},g_{\mathrm{Kerr}},k_{\mathrm{Kerr}})$ induced on a spacelike hypersurface $\Si_{\mathrm{Kerr}}$ in a Kerr black hole spacetime $(\MM_{\mathrm{Kerr}},\g_{\mathrm{Kerr}})$. However, the corresponding ADM parameters $(\mathbf{E}_{\mathrm{ADM}},\mathbf{P}_{\mathrm{ADM}},\mathbf{L}_{\mathrm{ADM}},\mathbf{C}_{\mathrm{ADM}})_{\mathrm{Kerr}}$ of $(\Si_{\mathrm{Kerr}},g_{\mathrm{Kerr}},k_{\mathrm{Kerr}})$ \emph{cannot be chosen freely but are determined from the original solution}. In fact, they can be calculated to be
\begin{align*}
\begin{aligned}
(\mathbf{E}_{\mathrm{ADM}},\mathbf{P}_{\mathrm{ADM}},\mathbf{L}_{\mathrm{ADM}},\mathbf{C}_{\mathrm{ADM}})_{\mathrm{Kerr}} = (\mathbf{E}_{\mathrm{ADM}},\mathbf{P}_{\mathrm{ADM}},\mathbf{L}_{\mathrm{ADM}},\mathbf{C}_{\mathrm{ADM}}) + \text{small corrections.}
\end{aligned}
\end{align*}

We note that the analogous result for \emph{null} gluing to Kerr, including an appropriate geometric interpretation of the $10$-parameter space, is proved in \cite{ACR3}.
\vskip 1pc

In this paper we propose a novel \emph{nonlinear gluing method} which allows in particular to \emph{eliminate this $10$-dimensional obstruction space}; see also Section \ref{SECintroSecondVersionMainTHM} for a literature comparison. In particular, we obtain the following result.

\begin{theorem}[Obstruction-free spacelike gluing, version 1] \label{THMintro1} Let $(\Si,g,k)$ be asymptotically flat spacelike initial data with well-defined ADM parameters $(\mathbf{E}_{\mathrm{ADM}},\mathbf{P}_{\mathrm{ADM}},\mathbf{L}_{\mathrm{ADM}},\mathbf{C}_{\mathrm{ADM}})$. It is possible to glue, far out in the asymptotic region, $(\Si,g,k)$ to any Kerr initial data $(g^{\mathrm{Kerr}},k^{\mathrm{Kerr}})$ provided that the following inequalities are satisfied,
\begin{align}
\begin{aligned}
\Ef_{\mathrm{ADM}}^{\mathrm{Kerr}} - \mathbf{E}_{\mathrm{ADM}} >0, \,\, \Ef_{\mathrm{ADM}}^{\mathrm{Kerr}} - \mathbf{E}_{\mathrm{ADM}} > C\cdot \left\vert \Pf_{\mathrm{ADM}}^{\mathrm{Kerr}} - \mathbf{P}_{\mathrm{ADM}}\right\vert,
\end{aligned}\label{EQenergyMomentumIneq}
\end{align}
where $C>0$ is a (potentially) large constant.

In particular, it is possible to glue any asymptotically flat spacelike initial data set with well-defined ADM parameters far out in the asymptotic region to the Schwarzschild spacelike initial data of mass $M>0$ for any sufficiently large $M>0$.

\end{theorem}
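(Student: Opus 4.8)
The plan is to derive Theorem \ref{THMintro1} \emph{a posteriori} from the null gluing result, exactly as advertised in the abstract: one does not attempt to solve the spacelike elliptic problem directly, but instead transfers the problem to a characteristic initial value problem where the nonlinear null-gluing technique applies. Concretely, I would first reduce to a small-data problem by the scaling of the constraint equations: gluing across the far annulus $A_{[R,2R]}$ for large $R$ is equivalent, after rescaling, to a gluing problem between two spheres $S_1$ and $S_2$ close to Minkowski data; this is the normalization under which all the null-gluing machinery of \cite{ACR1,ACR2,ACR3} and the present paper is stated. The Kerr data $(g^{\mathrm{Kerr}},k^{\mathrm{Kerr}})$, rescaled, likewise becomes small data on $S_2$, with the rescaled ADM/Kerr charges controlling the ``target'' sphere data.

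Next I would set up the bridge from spacelike to null data. Starting from a sphere $S_1$ in the given asymptotically flat slice $(\Si,g,k)$, emanate an outgoing null cone (or a null hypersurface $\HH$) and compute the induced null data (the Ricci coefficients and null curvature components, in particular $\trchi,\trchib,\chih,\chibh,\eta,\etab,\om,\omb$ and the mass aspect/charges) from $(g,k)$; do the same from a sphere $S_2$ sitting in the chosen Kerr spacetime. This produces two sets of characteristic data on spheres $S_1$ and $S_2$, together with the precise correspondence between their ADM-type charges and the null charges that appear as obstructions in \cite{ACR3}. Then I would invoke the main \emph{obstruction-free null gluing theorem} of this paper to glue the $S_1$ null data to the $S_2$ null data along a null hypersurface, \emph{without} the $10$-dimensional obstruction — this is where the combination of low-frequency linear analysis and high-frequency nonlinear control does the real work. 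Finally, evolving the Einstein equations off the glued characteristic data (using the characteristic Cauchy problem, with the domain-of-dependence property) yields a spacetime into which both $(\Si,g,k)$ near $S_1$ and the Kerr slice near $S_2$ embed isometrically; restricting to a suitable spacelike hypersurface through this spacetime produces the desired glued spacelike initial data. The Schwarzschild corollary follows by taking the Kerr parameters to be $(M,0,0,0)$ with $M$ large: then $\Pf^{\mathrm{Kerr}}_{\mathrm{ADM}} = 0$, and \eqref{EQenergyMomentumIneq} reduces to $M - \Ef_{\mathrm{ADM}} > 0$ and $M - \Ef_{\mathrm{ADM}} > C\,|\Pf_{\mathrm{ADM}}|$, both of which hold once $M$ is large enough (the fixed ADM data of $(\Si,g,k)$ being bounded).

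The inequalities \eqref{EQenergyMomentumIneq} are not an artifact of the reduction but encode a genuine structural feature of the nonlinear method, so I would expect them to drop out of the null-gluing step rather than be imposed by hand: the nonlinearity of the constraints is exploited by a specific sign/smallness of the energy deficit $\Ef^{\mathrm{Kerr}}_{\mathrm{ADM}} - \Ef_{\mathrm{ADM}}>0$, which functions as a positive ``budget'' against which the remaining $9$ charge mismatches (momentum, angular momentum, center-of-mass) are absorbed — the linear momentum mismatch being the most delicate, hence the explicit constraint $\Ef^{\mathrm{Kerr}}_{\mathrm{ADM}} - \Ef_{\mathrm{ADM}} > C|\Pf^{\mathrm{Kerr}}_{\mathrm{ADM}} - \Pf_{\mathrm{ADM}}|$.

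The main obstacle, and the heart of the matter, is the obstruction-free null gluing itself: showing that the $10$ linear obstructions of \cite{ACR3} can be cancelled by genuinely nonlinear (high-frequency) corrections to the sphere data, controlled quantitatively, and that this cancellation is compatible with the small-data regime and with the sign condition on the energy deficit. A secondary technical point is checking that the spacelike-to-null transcription preserves the asymptotic flatness and the precise normalization of the charges (so that the ADM parameters computed in the evolved spacetime agree, on the Kerr side, with the prescribed Kerr parameters, and on the original side, with $(\Ef,\Pf,\Lf,\mathbf{C})_{\mathrm{ADM}}$), and that choosing the final spacelike slice does not spoil these. But both of these are either the content of the paper's main null theorem or routine given it; the decisive step is the nonlinear null gluing.
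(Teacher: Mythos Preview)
Your overall strategy is correct and matches the paper's: rescale the far annulus to small data, pass to sphere data on $S_1$ (from $(\Si,g,k)$) and $S_2$ (from Kerr), apply the obstruction-free null gluing, solve the characteristic Cauchy problem forward, and select a spacelike slice. Two points, however, are glossed over and carry the actual content of the reduction.

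First, the paper applies the \emph{bifurcate} obstruction-free null gluing (Theorem~\ref{THMbifurcateFREE}) along $\HH\cup\HHb$ emanating from an auxiliary sphere $S_{\mathrm{aux}}$, not the single-hypersurface version; this is what allows one to evolve forward and also what fixes the precise form of the hypotheses, namely $\triangle\Ef>C_3\bigl(|\triangle\Pf|+|\triangle\Gf+\Pf(x_1)|\bigr)$ with the extra $\Pf(x_1)$ term (since $\Gf$ is not linearly conserved along $\HHb$). Second, your explanation of why only $\triangle\Ef$ and $|\triangle\Pf|$ survive in \eqref{EQenergyMomentumIneq} is not quite the mechanism. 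It is not that $\Pf$ is the ``most delicate'' of nine absorbed charges; rather, under the rescaling $\varep=R^{-1}$ and the comparison (from \cite{ACR3}) of $(\Ef,\Pf,\Lf,\Gf)$ with the local ADM integrals, one computes $\triangle\Lf=O(R^{-2})$ automatically (since $\Lf_{\mathrm{ADM}}$ is well-defined), so the condition $|\triangle\Lf|\le C_2\varep^2$ is free; and the combination $\triangle\Gf+\Pf({}^{(R)}x_1)$ works out to $-\tfrac{2}{R}\triangle\Pf_{\mathrm{ADM}}+O(R^{-2})$ because $\Gf\approx\mathbf{C}_{\mathrm{ADM}}/R^2 - r\cdot\Pf_{\mathrm{ADM}}/R$ with $r=1$ at $S_1$ and $r=2$ at $S_2$, so the center-of-mass terms drop to $O(R^{-2})$ and what remains is again proportional to $\triangle\Pf_{\mathrm{ADM}}$. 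This scaling computation is the substance of the proof in Section~\ref{SECproofSPACELIKE}; once it is done, the bifurcate hypothesis collapses to $\triangle\Ef_{\mathrm{ADM}}>3C_3|\triangle\Pf_{\mathrm{ADM}}|+O(R^{-1})$, which for $R$ large is exactly \eqref{EQenergyMomentumIneq}.
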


\ni \emph{Remarks on Theorem \ref{THMintro1}.}
\begin{enumerate}
\item It is interesting to compare the inequalities \eqref{EQenergyMomentumIneq}, which can be rewritten as 
$$
\Delta E >C |\Delta P|,
$$
with the general form of the positive mass theorem \cite{SchoenYau1,Witten} asserting 
that \emph{the ADM energy-momentum vector is timelike}:
$$
E\ge |P|.
$$
\item The precise quantitative version of the Theorem, see Corollary \ref{CORspacelike}, requires a lower bound $\Delta E\ge c>0$,  which should be much larger (independent) than the inverse of the radius $R$ where the gluing takes place. The constant $C$ then can be chosen to be large and 
universal. Its value however is not sharp. The question of sharpness of $C$ is not pursued in this paper.
\item Theorem \ref{THMintro1} shows that it is possible to \emph{{extract all angular momentum}} from given initial data and \emph{{transfer it into the mass}}. The reader should compare this to the considerations of Penrose \cite{PenroseNuovo} on the extraction of angular momentum of black holes, and of Christodoulou \cite{ChristodoulouMASS} on the reversible and irreversible transformations and the irreducible mass of a black hole.
\end{enumerate}

\ni As mentioned above, in this paper we continue to take the point of view established in \cite{ACR1,ACR2,ACR3} that results for spacelike gluing can be established as corollaries of the corresponding \emph{null} gluing results. Indeed, this strategy is successfully employed in \cite{ACR3} and \cite{ACR1} to provide alternative proofs of the Corvino-Schoen spacelike gluing to Kerr \cite{CorvinoSchoen} and the Carlotto-Schoen localization of spacelike initial data \cite{CarlottoSchoen}, respectively. In the latter case, it also resulted in a solution of an open problem on the sharp decay rates.

In the next section we introduce the null gluing problem and recall the main results of \cite{ACR1,ACR2}. In Section \ref{SECintroSecondVersionMainTHM} we state the first version of the main theorem of this paper and give an overview of its proof. In Section \ref{SECtoymodelINTRO} we illustrate the main ideas of our approach applied to a model problem.

\subsection{The null gluing problem} \label{SECintroNULL} \ni In the following we first introduce in Section \ref{SECSUBnullframework} the geometric \emph{double null framework} and the notion of null initial data, and discuss the characteristic initial value problem for the Einstein equations. In Section \ref{SUBSECcharseed} we discuss the hierarchical structure of the null structure equations and define characteristic seeds. In Section \ref{SECSUBnullgluingsetup} we formulate the null gluing problem and recapitulate the main results of \cite{ACR1,ACR2}.

\subsubsection{The double null framework, null structure equations and the characteristic Cauchy problem} \label{SECSUBnullframework}

Consider a spacelike $2$-sphere $S$ in a spacetime $(\MM,\g)$, and let ${u}_0$ and ${v}_0$ with $v_0>u_0$ be two real numbers. Let $u$ and $v$ be two \emph{optical functions} of $(\MM,\g)$, that is, satisfying each the eikonal equation (here $\mathbf{D}$ is the covariant derivative on $(\MM,\g)$)
\begin{align*} 
\begin{aligned} 
\vert \mathbf{D}u \vert_{\g}^2 = 0, \,\, \vert \mathbf{D}v \vert_{\g}^2 = 0,	
\end{aligned} 
\end{align*}
such that $S=\{ {u}={u}_0, {v}={v}_0\}$, and for real numbers ${u}_1$ and ${v}_1$, the level sets
$\HH_{{u}_1} := \{ {u}= {u}_1\}$ and $\HHb_{{v}_1} := \{ {v}= {v}_1\}$ are \emph{outgoing} and \emph{ingoing} null hypersurfaces, respectively. The union of these null hypersurfaces form locally a so-called \emph{double null foliation} of $(\MM,\g)$. Denote the spacelike intersection $2$-spheres by $S_{u_1,v_1} := \{ {u}={u}_1, {v}={v}_1\}$, and let $\gd$, $\Nd$ and $r(u,v)$ denote their induced Riemannian metric, the associated covariant derivative and area radius, respectively. The reference optical functions on Minkowski spacetime are given by $u=\half (t-r)$ and $v=\half(t+r)$.

On the sphere $S=S_{u_0,v_0}$ we define ($2$ patches of) local angular coordinates $(\th^1,\th^2)$, and extend them everywhere by propagating them first along the null generators of $\HH_{u_0}$ and then of $\HHb_v$ for all $v$, as indicated in the Figure \ref{FIGdoublenullS2pic1} below. The resulting coordinate system $( u,  v, \th^1,\th^2)$ is called a \emph{double null coordinate system}.

\begin{figure}[H]
\begin{center}
\includegraphics[width=6.5cm]{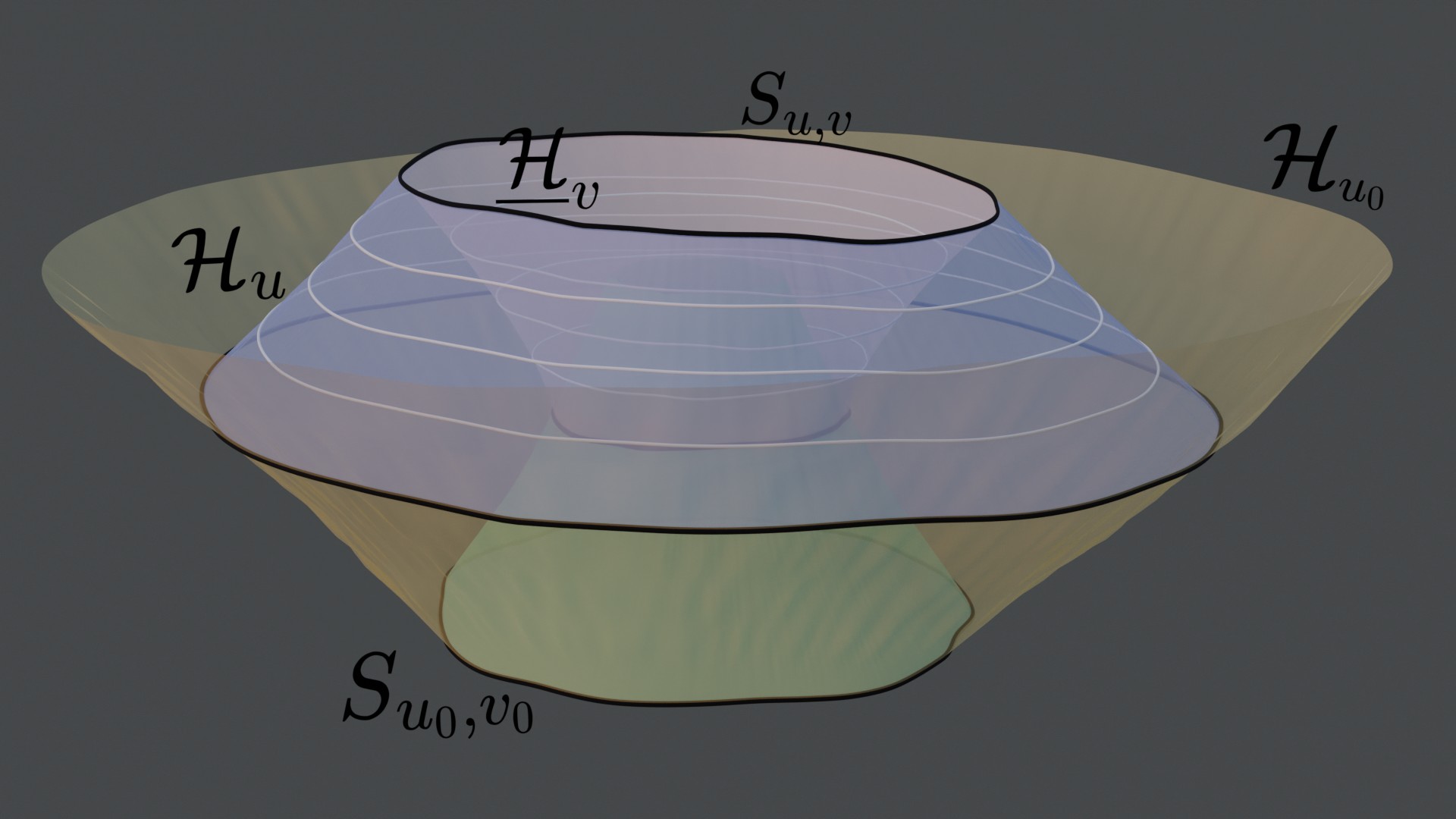} \,\, \includegraphics[width=6.5cm]{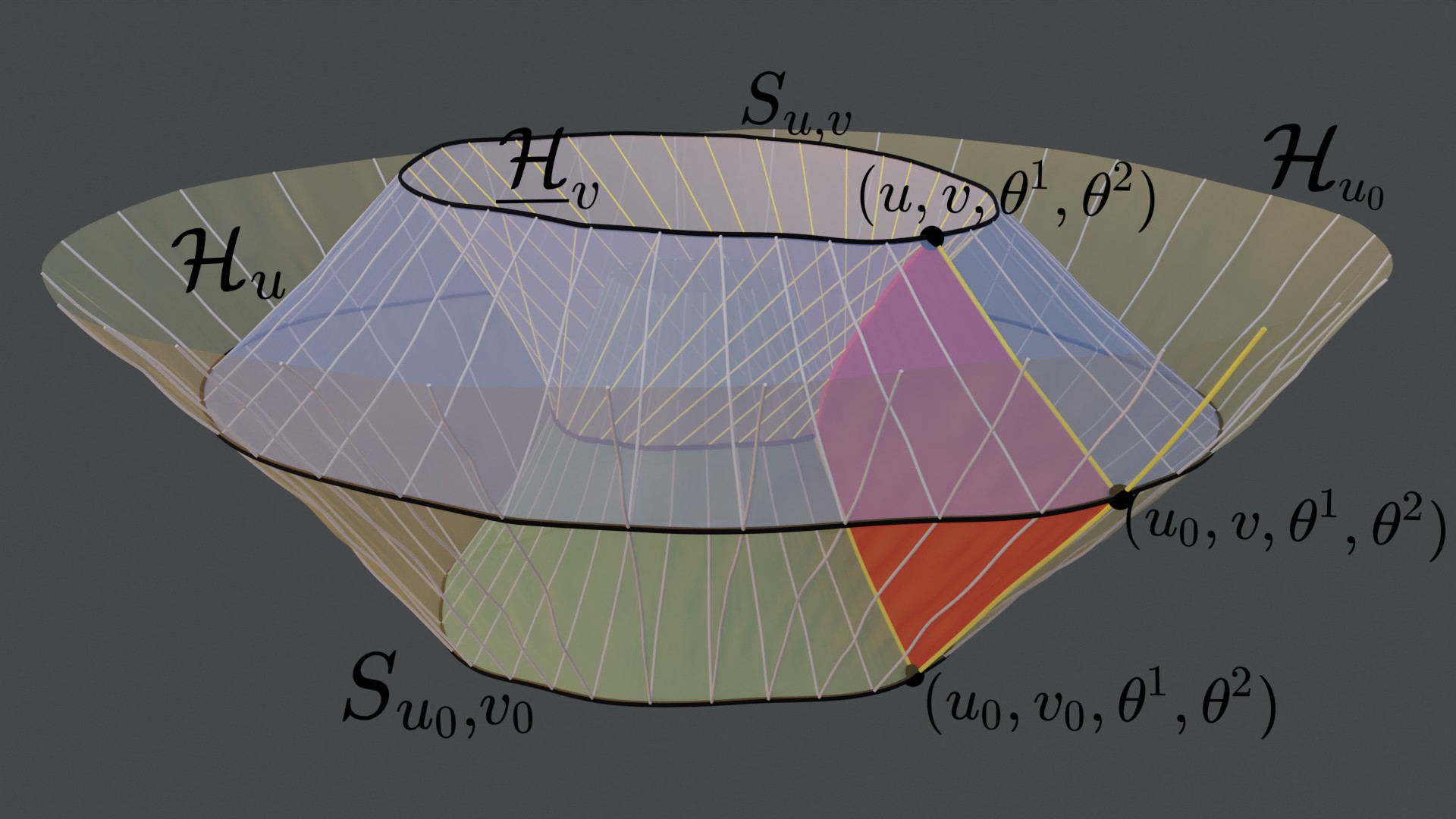} 
\vspace{0.4cm}
\caption{The double null foliation formed by the level sets of the optical functions $u$ and $v$ (left), and the construction of local double null coordinates (right).}
\end{center}
\label{FIGdoublenullS2pic1}
\end{figure}
\vspace{-0.2cm}

\ni Using the angular coordinates $(\th^1,\th^2)$ we can define on each $S_{u,v}$ the round (unit) metric $\gac := (d\th^1)^2 + (\sin\th^1)^2 (d\th^2)^2$. On each $S_{u,v}$ consider the following conformal decomposition of $\gd$,
\begin{align} 
	\begin{aligned} 
		\gd = \phi^2 \gd_c, 
	\end{aligned} \label{EQconfDecompintro}
\end{align}
where the conformal factor $\phi$ and the conformal metric $\gd_c$ are determined by \eqref{EQconfDecompintro} with the condition that $\det \gd_c = \det \gac$ on $S_{u,v}$. The conformal metric $\gd_c$ is the first of two essential quantities along null hypersurfaces used in the definition null initial data and the study of the null gluing problem.

Relative to a double null coordinate system the metric $\g$ takes the form
\begin{align} 
\begin{aligned} 
\g = -4 {\Om}^2 d u d v + \gd_{AB} \lrpar{d\th^A + {b}^A d v}\lrpar{d\th^B + {b}^B d v}.
\end{aligned} \label{EQintroDoubleNull123}
\end{align}
Here the scalar function $\Om$ is the so-called \emph{null lapse} and is related to the inverse foliation density of the spheres $S_{u,v}$ in the null hypersurfaces $\HH_u$, see below. The null lapse $\Om$ is the second of the two essential quantities used for the definition of null initial data and the study of the null gluing problem. The $S_{ u,  v}$-tangent vectorfield ${b}$ in \eqref{EQintroDoubleNull123} is the so-called \emph{shift vector}; by construction $b$ vanishes on $\HH_{u_0}$ so that it does not play an important role for the null gluing problem along $\HH_{u_0}$ later.

We follow the standard notation of \cite{ChrForm} to decompose the connection form and the Riemann curvature tensor $\Rbf$ of $(\MM,\g)$ into the so-called \emph{Ricci coefficients} and \emph{null curvature components}, respectively. Namely, define the two null vector pairs
\begin{align*} 
\begin{aligned} 
(L,\Lb):= (\pr_{ v} + {b}^A\pr_{\th^A}, \pr_u), \,\,\, (\widehat{ L},\widehat{ \Lb}) := ( \Om^{-1}L, \Om^{-1}\Lb),	
\end{aligned} 
\end{align*}
where the so-called \emph{normalized} null vectorfields $\widehat{L}$ and $\widehat{\Lb}$ satisfy $\g(\widehat{L},\widehat{\Lb}) =-2$ and we note that $\widehat{L}(v)=\Om^{-1}$. Denote for $S_{u,v}$-tangent vectorfields $X$ and $Y$ the \emph{Ricci coefficients} 
\begin{align} 
\begin{aligned} 
{\chi}(X,Y) :=& \g(\D_X \widehat{ L},Y), & {\chib}(X,Y) :=& \g(\D_X \widehat{ \Lb},Y), & {\zeta}(X) :=& \half \g(\D_X \widehat{L}, \widehat{\Lb}), \\
\eta :=& \zeta + \di \log \Om, & \om :=& D \log \Om, & \omb :=& \Du \log \Om,
\end{aligned} \label{EQdefRicciINTRO}
\end{align}
where $\di$ is the exterior derivative on $S_{u,v}$, and $D:= \Lied_L$, $\Du :=\Lied_\Lb$ denote the projections of the Lie derivatives along $L$ and $\Lb$, respectively, onto $S_{u,v}$. Let moreover $\etab:= -\zeta + \di\log \Om$, and note that $\eta+ \etab = 2\di\log\Om$.

The \emph{null curvature components} are given for $S_{u,v}$-tangent vectorfields $X$ and $Y$ by
\begin{align} 
\begin{aligned} 
\alpha(X,Y) :=& \Rbf(X,\widehat{ L}, Y, \widehat{ L}), & \beta(X) :=& \half \Rbf(X, \widehat{ L},\widehat{\Lb},\widehat{ L}), & \rh :=& \frac{1}{4} \Rbf(\widehat{\Lb}, \widehat{ L}, \widehat{\Lb}, \widehat{ L}), \\
 \sigma \iin(X,Y) :=& \half \Rbf(X,Y,\widehat{\Lb}, \widehat{ L}), &
\beb(X) :=& \half \Rbf(X, \widehat{\Lb},\widehat{\Lb},\widehat{ L}), & \ab(X,Y) :=& \Rbf(X,\widehat{\Lb}, Y, \widehat{\Lb});
\end{aligned} \label{EQNCCintrodef}
\end{align}
where $\iin$ denotes the area form on $(S_{u,v},\gd)$. We split $\chi$ and $\chib$ into tracefree and trace part,
\begin{align*} 
\begin{aligned} 
\chi = \chih +\half \trchi \gd, \,\, \chib= \chibh + \half \trchib \gd,
\end{aligned} 
\end{align*}
where $\trchi:= \gd^{AB}\chi_{AB}$ and $\chih:= \chi - \half \trchi \gd$.

The Einstein equations together with the embedding equations for a double null foliation stipulate that the metric components, Ricci coefficients and null curvature components satisfy the so-called \emph{null structure equations}; see Section \ref{SECdoublenull} for the full system of equations.

The null structure equations are typically viewed as \emph{transport-elliptic} system of equations. The elliptic part arises if one views, as is often done in evolution problems for the Einstein equations, the Ricci coefficients and the metric as determined from curvature; for example, through the Gauss and Gauss-Codazzi equations,
\begin{align*} 
\begin{aligned} 
K + \frac{1}{4}\trchi\trchib - \half (\chih,\chibh) =& -\rho, \\
\Divd \chih -\half \di \tr \chi + \chih \cdot \zeta - \half \trchi \zeta =& - \beta, 
\end{aligned} 
\end{align*}
where $(\Divd \chih)_A := \Nd^C \chih_{AC}$, $(\chih,\chibh):= \chih^{AB}\chibh_{AB}$, and $(\chih \cdot \zeta)_A:= \chih_{AB}\zeta^B$.
Here we take the opposite point of view that these equations \emph{determine} null curvature components from the metric components and Ricci coefficients on each sphere $S_{u,v}$.

The metric and the Ricci coefficients themselves satisfy a coupled system of null transport equations, such as, for example, the Raychaudhuri equation and the first variation equation along $\HH$,
\begin{align*} 
\begin{aligned} 
&D \trchi + \frac{\Om}{2} (\trchi)^2 - \om \trchi = - \Om \vert \chih \vert^2_{\gd}, \\
&D\phi=\frac{\Om\trchi\phi}{2},\,\,
\widehat{D\gd}=2\Om\chih.
\end{aligned}
\end{align*}

The null structure equations can be brought into a hierarchical order. This order and the solvability of the system are discussed in Section \ref{SUBSECcharseed}. 

Higher transversal derivatives of metric components and Ricci coefficients (for example, $\Du\Du\chibh$ along $\HH_u$) satisfy corresponding \emph{higher-order null transport equations}; these equations can easily be computed. For the purposes of this paper it suffices to consider the \emph{$2$-jet of null structure equations}, that is, the null transport equations for up to $2$nd order derivatives of the metric components.

As mentioned above, given metric components and Ricci coefficients on $S_{u,v}$, the further specification of $\rho$ and $\beta$ on $S_{u,v}$ is \emph{redundant}. More generally, considering all null structure equations to remove the redundancy of induced $2$-jet data on a sphere, we arrive at the following definition of \emph{$C^2$-sphere data} (defined for the first time in \cite{ACR1,ACR2}) where the $C^2$ indicates that it fully determines the $2$-jet of $\mathbf{g}$ on the sphere, i.e. all derivatives of $\mathbf{g}$ up to order $2$.

\begin{definition}[$C^2$-Sphere data] \label{DEFspheredata2} Let $S$ be a $2$-sphere. Sphere data $x$ on $S$ consists of choice of a unit round metric $\gac$ on $S$, see Section \ref{SECSUBnullframework}, and the following tuple of $S$-tangent tensors,
	\begin{align} 
		\begin{aligned} 
			x = (\Om,\gd, \Om\trchi, \chih, \Om\trchib, \chibh, \eta, \om, D\om, \omb, \Du\omb, \a, \ab),
		\end{aligned} \label{EQspheredataDEF}
	\end{align}
	where
	\begin{itemize}
		\item $\Om>0$ is a positive scalar function and $\gd$ is a Riemannian metric, 
		\item $\Om\trchi, \Om\trchib,\om, D\om, \omb, \Du\omb$ are scalar functions,
		\item $\eta$ is a vectorfield,
		\item $\chih$, $\chibh$, $\a$ and $\ab$ are symmetric $\gd$-tracefree $2$-tensors.
	\end{itemize}
\end{definition}

\ni The above definition of $C^2$-sphere data is used in our formulation of the null gluing problem below. Each quantity in \eqref{EQspheredataDEF} is subject to a null transport equation along $\HH$ and $\HHb$, respectively. By definition, sphere data is not subject to any constraints on $S$. Sphere data is \emph{gauge-dependent} in the sense that it is sensitive to sphere diffeomorphisms of $S$ and, in case $S$ lies in a vacuum spacetime $\MM$, perturbations of $S$ to nearby in $\MM$. The reference Minkowski sphere data on $S_{u,v}$ is denoted as follows, with $r:=v-u$,
\begin{align} 
\begin{aligned} 
\mathfrak{m}_{u,v} = \lrpar{1,r^2 \gac, \frac{2}{r}, 0, -\frac{2}{r}, 0, 0, 0, 0, 0, 0, 0, 0}.
\end{aligned} \label{EQrefMinkowskiSphereData}
\end{align} 

It is well-known (see, for example, \cite{Rendall,LukChar,LukRod1}) that the prescription of metric coefficients, Ricci coefficients, and null curvature components along two transversely-intersecting null hypersurfaces $\HH_{u_0}$ and $\HHb_{v_0}$, satisfying the respective null structure equations along $\HH_{u_0}$ and $\HHb_{v_0}$ (as well as simple compatibility conditions on the intersection $S_{u_0,v_0}$) leads to a well-posed \emph{characteristic initial  value} problem for the Einstein equations. 

\subsubsection{The null structure equations and the characteristic seed} \label{SUBSECcharseed}

\ni In the following we discuss the null structure equations along an outgoing null hypersurface $\HH$ as those are relevant for the null gluing problem along $\HH$ introduced below. Analogous statements hold for the null structure equations along ingoing null hypersurfaces $\HHb$.

\emph{A priori}, the construction of solutions to the coupled system of null transport equations in the null structure equations (discussed above in Section \ref{SECSUBnullframework}) may seem highy non-trivial. However, Sachs \cite{SachsIVP} (see also Christodoulou \cite{ChrForm}) made the following important observation. The null structure equations have the advantageous property that they can be rewritten in a \emph{hierarchical form} which allows to \emph{freely prescribe} a so-called \emph{characteristic seed} from which a solution can be constructed by straight-forward integration of a sequence of null transport equations along $v$ (i.e. along the null generators of $\HH$). Through this construction, the space of solutions to the null structure equations is parametrized in a $1$-to-$1$ fashion by the free characteristic seeds. This is in stark contrast to the general situation for spacelike initial data.

\begin{definition}[Characteristic seed for $\HH$] \label{DEFcharSEEDH} Let $\HH := [1,2] \times \SSS^2$. A characteristic seed along $\HH$ consists of the following two free prescriptions:
\begin{enumerate}
	\item The \emph{characteristic seed on the sphere $S_1:= \{1\}\times \SSS^2$}, consisting of 
	\begin{itemize}
		\item an induced Riemannian metric $\gd$, 
		\item four scalar functions $\trchi,\trchib,\omb,\Du\omb$,
		\item an $S_1$-tangential vectorfield $\eta$,
		\item two $\gd$-tracefree $S_1$-tangential symmetric $2$-tensors $\chibh$ and $\ab$.
	\end{itemize}
	\item The \emph{characteristic seed along the hypersurface $\HH=\cup_{1\leq v \leq 2} S_v$}, consisting of 
	\begin{itemize}
		\item a conformal class $\mathrm{conf}(\gd)$ of induced metrics on $S_v$, compatible with the prescribed $\gd$ on $S_1$, that is, $\gd\vert_{S_1} \in \mathrm{conf}(\gd) \vert_{S_1}$,
		\item a scalar function $\Om$, which will be the \emph{null lapse}.
	\end{itemize}
\end{enumerate}
\end{definition}
\ni We remark that prescribing the conformal class $\mathrm{conf}(\gd)$ along $\HH$ is equivalent to prescribing the conformal metric $\gd_c$ in \eqref{EQconfDecompintro}. The characteristic seed along $\HH$ is the key to null gluing constructions.

Given a characteristic seed, the null transport equations of the null structure equations are integrated in hierarchical sequence to yield metric components, Ricci coefficients and null curvature components in the following order, 
\begin{align}
	\begin{aligned}
		\phi, \gd, \Om\trchi, \Om\chih, \eta, \Om\trchib, \Om\chibh, \omb, \ab, \Du\omb.
	\end{aligned}\label{EQsequence}
\end{align}
The remaining Ricci coefficients and null curvature components can then be directly calculated from the characteristic seed and \eqref{EQsequence} by the remaining null structure equations. 
In particular, in this sense the quantities \eqref{EQspheredataDEF} fully encode the solution of the null structure equations -- this is the notation we are using in this paper.

The null structure equations are presented in their hierarchical order in Sections \ref{SECphiEstimatesPrecise} to \ref{SECestimateDuomb} where solutions are constructed and estimated from a prescribed characteristic seed.

Returning to the characteristic initial value problem for the Einstein equations, by the above observation it follows in particular that the next free specification along a pair of hypersurfaces $\HH$ and $\HHb$, intersecting transversely at a $2$-dimensional surface $S$, constitutes well-posed \emph{characteristic initial data},
\begin{align*} 
	\begin{aligned} 
		\lrpar{\Omega, \gd_c} \text{ on } \HH \cup \HHb \,\, \text{ and } \lrpar{\gd, \trchi, \trchib, \eta} \text{ on } S,
	\end{aligned} 
\end{align*}
where $\gd_c$ is required to be compatible with $\gd$ on $S$ as in Definition \ref{DEFcharSEEDH}.

One important detail for the construction of solutions to the null structure equations from a characteristic seed is that the quantity
\begin{align*}
\begin{aligned}
\vert \Om\chih \vert^2_{\gd} := \gd^{AC}\gd^{BD} \lrpar{\Om\chih}_{AB}\lrpar{\Om\chih}_{CD},
\end{aligned}
\end{align*}
is \emph{conformally invariant}, that is, 
\begin{align}
\begin{aligned}
\vert \Om\chih \vert^2_{\gd} = \vert \Om\widehat{\tilde{\chi}} \vert^2_{\tilde{\gd}},
\end{aligned}\label{EQshearINVARIANCE}
\end{align}
where the right-hand side can be directly calculated from the given characteristic seed along $\HH$.

\subsubsection{The null gluing problem and previous results} \label{SECSUBnullgluingsetup}

\ni The null gluing problem for the Einstein equations was first introduced in \cite{ACR1,ACR2,ACR3}. In the general terms of the beginning of this introduction, the \emph{null} initial data gluing problem asks that two spacetimes $(\MM_1,\g_1)$ and $(\MM_2,\g_2)$ be ``joined" by gluing the induced data on an outgoing null hypersurface $\HH_1 \subset \MM_1$ to the induced data on an outgoing null hypersurface $\HH_2 \subset \MM_2$ as solution to the null structure equations.

Using the notions of sphere data and null structure equations introduced above, we can phrase the null gluing problem as follows. 

\emph{Given two spacetimes $(\MM_1,\g_1)$ and $(\MM_2,\g_2)$, is it possible to pick $2$-spheres $S_1\subset \MM_1$ and $S_2\subset\MM_2$ together with respective local double null coordinate systems such that the induced sphere data $x_1$ on $S_1$ can be glued to the induced sphere data $x_2$ on $S_2$ as solution to the null structure equations? In other words, does there exist a solution $x$ to the null structure equations along a hypersurface $\HH_{[1,2]}=[1,2]\times \mathbb{S}^2$ such that its restriction to $S_1$ and $S_2$ agrees with the sphere data $x_1$ and $x_2$, respectively?}

In \cite{ACR1,ACR2} the above null gluing problem was solved in the asymptotically flat regime. In the following rough statement we present the corresponding rescaled result; see also Figure \ref{FIGperturbative} below.
\vskip 5pc
\begin{theorem}[Perturbative null gluing of \cite{ACR1,ACR2}, version 1] \label{THMintroPASTnull}

Consider spheres $S_1\subset\MM_1$ and $S_2\subset\MM_2$ with induced sphere data $x_1$ and $x_2$ close to the reference Minkowski sphere data on the spheres of radius $1$ and $2$, respectively. There exists a perturbation of the sphere $S_2$ to a nearby sphere $S_2'$ in $\MM_2$ and a local gauge transformation of the double null coordinates around $S_2'$ such that there is a solution $x$ to the null structure equations along $\HH_{[1,2]}=[1,2]\times \mathbb{S}^2$ with $x\vert_{S_1}=x_1$ and such that \emph{up to a $10$-dimensional space} it holds that $x\vert_{S_2} = x_2'$ where $x_2'$ denotes the sphere data on $S_2'$ with respect to the new double null coordinates in $\MM_2$.

Specifically, the $10$-dimensional space is spanned by $10$ sphere integrals denoted by $\mathbf{E},\Pf^m,\Lf^m,\Gf^m$ for $m=-1,0,1$ (stated in Definition \ref{DEFchargesEPLG} below) and the following statement holds. If \emph{a posteriori} the constructed solution $x$ satisfies (for $m=-1,0,1$)
\newline 
$$
\lrpar{\mathbf{E},\Pf^m,\Lf^m,\Gf^m}(x\vert_{S_2}) = \lrpar{\mathbf{E},\Pf^m,\Lf^m,\Gf^m}(x'_2),
$$ \,\,
then it holds that $x\vert_{S_2} = x'_2$ on $S_2$.
\end{theorem}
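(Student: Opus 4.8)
## Proof proposal for Theorem \ref{THMintroPASTnull}

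The plan is to set up the null gluing problem along $\HH_{[1,2]}$ as a perturbative problem around Minkowski and solve it by a linearization-plus-fixed-point scheme, with the $10$-dimensional obstruction appearing as the cokernel of the linearized operator. First I would parametrize the solution space by the characteristic seed $(\Om, \gd_c)$ along $\HH$ (together with the seed data on $S_1$, which is fixed by requiring $x|_{S_1} = x_1$), as in Definition \ref{DEFcharSEEDH}: by the hierarchical integration scheme \eqref{EQsequence} the entire solution $x$, and in particular the sphere data $x|_{S_2}$ at $v=2$, is a (smooth, locally Lipschitz) function of this seed. The gluing problem then becomes: choose the seed so that $x|_{S_2}$ matches the target sphere data $x_2'$ on $S_2'$. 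Since both $x_1$ and $x_2$ are close to Minkowski reference data $\mathfrak{m}_{1,1}$ and $\mathfrak{m}_{-1,1}$ (i.e. $r=2$), one starts from the Minkowski seed, writes $x = \mathfrak{m} + \dot x$, and Taylor-expands the seed-to-sphere-data map.

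The main body of the argument is the analysis of the \emph{linearized} map $\mathcal{L}$ sending a linearized seed $(\dot\Om, \dot{\gd_c})$ along $\HH$ to the linearized sphere data $\dot x|_{S_2}$ at $v=2$ (after subtracting the contribution of $\dot x_1$, which is propagated forward by the linearized transport equations). Decompose all $S$-tangent tensors into spherical harmonics / tensor spherical harmonics on $\SSS^2$; because the linearized null structure equations have coefficients depending only on $r=v-u$, the system decouples mode by mode, reducing to a family of linear ODE systems in $v \in [1,2]$ indexed by $(\ell,m)$ and by tensor type. For each mode one computes the image of $\mathcal{L}$ explicitly; the claim to establish is that $\mathcal{L}$ is surjective onto sphere data \emph{modulo} the explicit $10$-dimensional space spanned by $\mathbf{E}, \Pf^m, \Lf^m, \Gf^m$ ($m=-1,0,1$), and conversely that those $10$ functionals annihilate $\mathrm{Im}\,\mathcal{L}$ (the latter being a matter of checking that they are conserved, i.e. constant in $v$, along solutions of the linearized system — this is presumably recorded in the earlier sections where the charges are defined). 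Concretely, one isolates the relevant low modes ($\ell=0$ for $\mathbf{E}$, $\ell=1$ for $\Pf, \Lf, \Gf$) where the obstruction lives, and shows solvability with a right inverse $\mathcal{L}^{-1}$ bounded in suitable (weighted Sobolev) norms on all other modes and on the complement of the obstruction inside the low modes. The freedom to perturb $S_2 \to S_2'$ and to perform a local double-null gauge transformation around $S_2'$ is used to absorb the pure-gauge directions in the target, so that "matching $x_2'$" is a well-posed requirement transverse to gauge; I would fix this gauge normalization first so that $\mathcal{L}$ acts between genuine (gauge-fixed) data spaces.

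With the bounded right inverse $\mathcal{L}^{-1}$ in hand, the full nonlinear statement follows by a contraction-mapping / implicit-function-theorem argument: write the nonlinear gluing equation as $\mathcal{L}(\text{seed}) = (\text{target} - \text{Minkowski}) - \mathcal{Q}(\text{seed})$ where $\mathcal{Q}$ collects the quadratic-and-higher terms and the nonlinear dependence on $\dot x_1$, note $\mathcal{Q}$ is quadratically small and Lipschitz on a small ball (this uses the smooth dependence of the hierarchical integration on the seed, Section \ref{SUBSECcharseed}, together with the conformal-invariance identity \eqref{EQshearINVARIANCE} to control the one potentially dangerous term $|\Om\chih|^2_{\gd}$ directly from the seed), and solve by iteration in the $10$-codimensional complement. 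The output is a seed, hence a solution $x$ with $x|_{S_1}=x_1$ and with $x|_{S_2}$ equal to $x_2'$ \emph{up to} the $10$ charges; and if the $10$ charges of $x|_{S_2}$ happen to coincide with those of $x_2'$, then — since $\mathcal{L}^{-1}$ was constructed to hit \emph{everything} once one is inside the subspace where the charges agree — the residual also vanishes and $x|_{S_2}=x_2'$ exactly.

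The step I expect to be the main obstacle is the precise mode-by-mode inversion of $\mathcal{L}$ together with the identification of exactly which combinations of sphere-data components at $v=2$ are unreachable: one must carefully disentangle the genuine obstruction (the $10$ charges) from the gauge-redundant directions in sphere data (sphere diffeomorphisms and $S_2$-perturbations), verify that the conserved charges are indeed constant along the linearized flow so that they obstruct, and obtain right-inverse bounds that are uniform enough — in particular uniform in the high angular modes $\ell \to \infty$, where the weights in the norms must be chosen so that the ODE solution operators do not lose derivatives — to close the subsequent nonlinear iteration. Keeping track of the gauge normalization consistently between $S_1$, $\HH$, and $S_2'$ throughout the linear and nonlinear steps is the delicate bookkeeping that makes or breaks the argument.
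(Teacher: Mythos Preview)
Your overall strategy---parametrize by the characteristic seed, linearize around Minkowski, analyze mode-by-mode, and close by an implicit function theorem---is indeed the approach of \cite{ACR1,ACR2} (this theorem is cited, not proved, in the present paper, which only summarizes the argument after the statement). However, there is a structural point you have essentially inverted, and it is the heart of the matter.

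You write that the linearized seed-to-data map $\mathcal{L}$ should be ``surjective onto sphere data modulo the explicit $10$-dimensional space'' spanned by the charges, and that gauge/perturbation freedom merely ``absorb[s] the pure-gauge directions in the target.'' The paper's summary (and the proofs in \cite{ACR1,ACR2}) emphasize the opposite hierarchy: the linearized null structure equations along $\HH$ admit an \emph{infinite}-dimensional space of exact conservation laws, so the seed-to-data map $\mathcal{L}$ by itself has an infinite-dimensional cokernel, not a $10$-dimensional one. The characteristic seed freedom suffices to match sphere data \emph{transversal} to all of these conserved charges (the gluing conditions become linearly independent weighted integral conditions on the seed), but cannot move any of the charges themselves. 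It is then the linearized gauge transformations and linearized sphere perturbations at $S_2$ that are used \emph{actively} to adjust all but $10$ of the conserved charges; the remaining $10$---namely $\mathbf{E},\mathbf{P},\mathbf{L},\mathbf{G}$---are precisely those invariant under both the seed freedom and the gauge/perturbation freedom, and constitute the genuine obstruction.

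So in your scheme, the map whose cokernel is $10$-dimensional is not $\mathcal{L}$ alone but $\mathcal{L}$ augmented by the linearized gauge and perturbation degrees of freedom at $S_2$. Treating the gauge/perturbation merely as a normalization to be fixed first, and then expecting the remaining seed map to have finite cokernel, will not work: you will find infinitely many obstructions at the linear level with no mechanism to remove them. The correct bookkeeping is to take (seed along $\HH$) $\oplus$ (gauge and perturbation parameters at $S_2$) as the full input, identify the infinitely many conserved charges as obstructing the seed part, and then show that the gauge/perturbation part surjects onto all of them except the $10$ named charges.
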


\ni The proof of Theorem \ref{THMintroPASTnull} is based on an analysis of the \emph{linearized} null structure equations around Minkowski spacetime. It is shown in \cite{ACR1,ACR2} that the linearized equations admit an $\infty$-dimensional space of obstructions to null gluing in the form of \emph{exact conservation laws} along $\HH$. We make two short remarks on the null gluing of Theorem \ref{THMintroPASTnull}.

\begin{enumerate}
\item It is possible to glue transversally to the space of conservation laws by using the freedom of choosing the characteristic seed along $\HH$. In fact, the gluing conditions turn into linearly independent \emph{weighted integral conditions} on the characteristic seed; we refer to \cite{ACR1,ACR2} for details and discussion of the relevant hierarchy of weights in the integrals. 

\item Within the space of conservation laws, all but $10$ of the conserved charges can be adjusted by applying \emph{linearized gauge transformations} and \emph{linearized sphere perturbations} at $S_2$ (see \cite{DHR,ACR1,ACR2}), which leads exactly to the codimension-$10$ result stated in Theorem \ref{THMintroPASTnull}. The existence of $\infty$-many conservation laws lies in stark contrast to the situation in spacelike gluing where the linearized equations directly admit only a finite-dimensional space of obstructions due to the operators being Fredholm.
\end{enumerate}

In this paper we use a different phrasing of Theorem \ref{THMintroPASTnull} (see Theorem \ref{THMcharGluing10d} below) where the perturbation and subsequent gauge transformation are applied to the sphere $S_1$ instead of $S_2$.
\begin{figure}[H]
\begin{center}
\includegraphics[width=9.5cm]{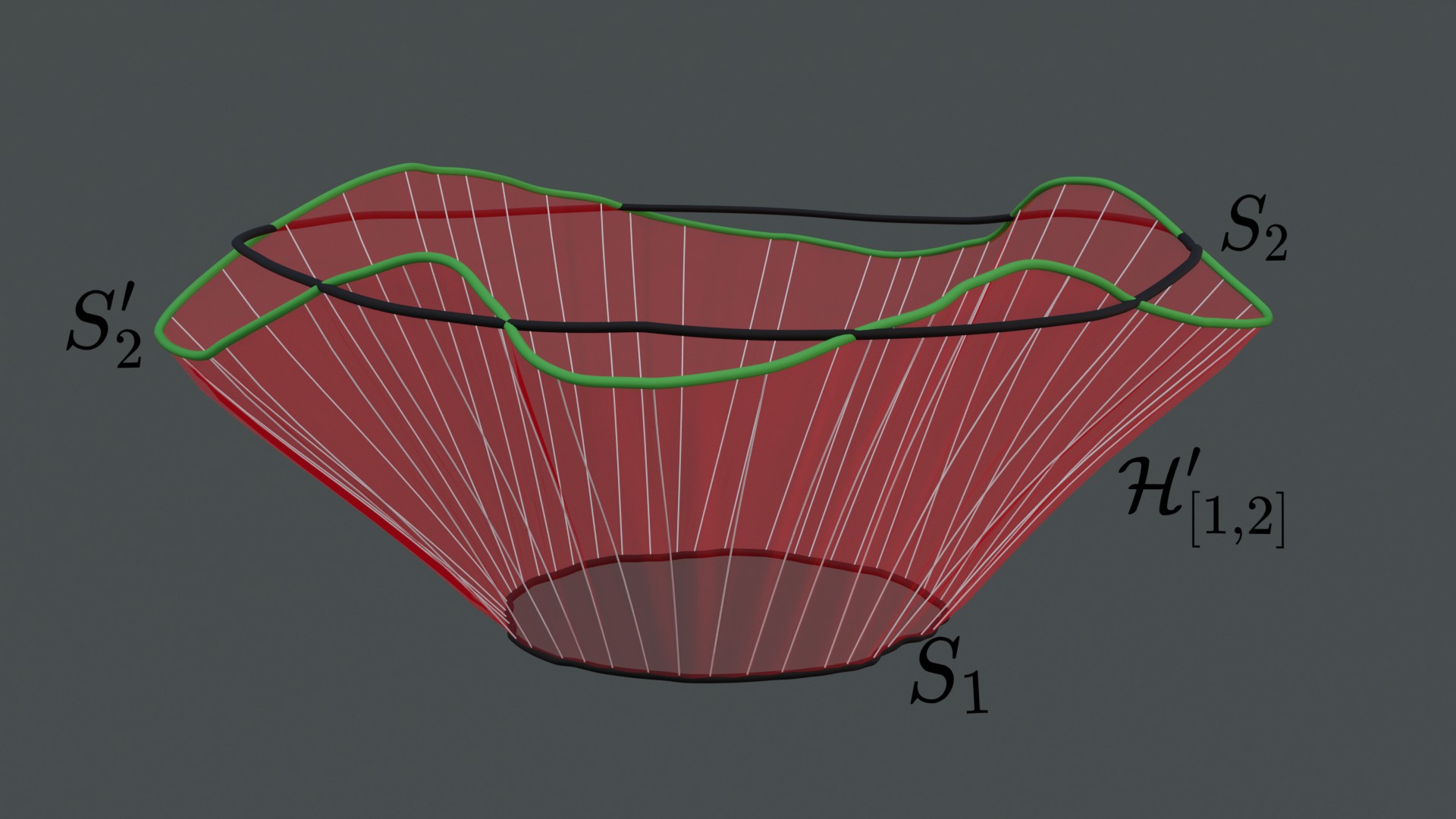} 
\vspace{0.4cm} 
\caption{Null gluing constructed in \cite{ACR1,ACR2} from the sphere $S_1\subset \MM_1$ along the null hypersurface $\HH'_{[1,2]}$ (red) to the perturbed sphere $S_2'\subset \MM_2$.}\label{FIGperturbative}
\end{center}
\end{figure}

\ni The $10$ integral charges $\Ef,\Pf^m,\Lf^m,\Gf^m$ in Theorem \ref{THMintroPASTnull} are given by the following expressions.
\begin{definition}[Charges $\mathbf{E},\mathbf{P},\mathbf{L},\mathbf{G}$] \label{DEFchargesEPLG} Given sphere data $x$ on a sphere $S$, first define the $S$-tangent vectorfield $\mfb$ and the scalar function $\mathfrak{m}$ on $S$ by
\begin{align}
\begin{aligned}
\mfb:= \frac{\phi^3}{2\Om^2} \lrpar{\di(\Om\trchi)+\Om\trchi\lrpar{\eta-2\di\log\Om}}, \,\,
\mathfrak{m} := \phi^3 \lrpar{K+\frac{1}{4}\trchi\trchib} -\phi \Divd \mfb.
\end{aligned}\label{EQdefMUBETA}
\end{align}
Then define from $\mathfrak{m}$ and $\mfb$ the charges $\Ef,\Pf,\Lf,\Gf$ as follows, for $m=-1,0,1$,
\begin{align}
\begin{aligned}
\mathbf{E}:= \mathfrak{m}^{(0)}, \,\, \mathbf{P}^m:= \mathfrak{m}^{(1m)}, \,\, \mathbf{L}^m := \mfb^{(1m)}_H, \,\, \mathbf{G}^m := \mfb^{(1m)}_E,
\end{aligned}\label{EQdefNonlinCharges}
\end{align}
where for scalar functions $f$ and vectorfields $X$ on $S$ the projections onto the tensor spherical harmonics (with respect to the round unit metric $\gac$, as in the rest of this paper) are defined by
\begin{align*}
\begin{aligned}
f^{(lm)} := \int_S f \, Y^{lm} d\mu_{\SSS^2}, \,\,
X_E^{(lm)} := \int_S \gac(X, E^{(lm)}) d\mu_{\SSS^2}, \,\, X_H^{(lm)} := \int_S \gac(X, H^{(lm)}) d\mu_{\SSS^2},
\end{aligned}
\end{align*}
where $Y^{lm}$ are the standard real-valued spherical harmonics with respect to $\gac$, and the electric and magnetic tensor spherical harmonics $E^{(lm)}$ and $H^{(lm)}$ are respectively defined by
\begin{align*}
\begin{aligned}
E^{(lm)} := -\frac{1}{\sqrt{l(l+1)}} \di Y^{lm}, \,\,\, H^{(lm)} := \frac{1}{\sqrt{l(l+1)}}{}^\ast \di Y^{lm} \,\,\,\text{ for } l\geq1, \, -l\leq m \leq l, 
\end{aligned}
\end{align*}
where ${}^{\ast}$ denotes the Hodge dual; see also Appendix \ref{APPconstructionW}.
\end{definition}
\ni \emph{Remarks on Definition \ref{DEFchargesEPLG}.}
\begin{enumerate}
\item The above definition of $\Ef,\Pf,\Lf,\Gf$ slightly differs from the $10$ charges defined in \cite{ACR1,ACR2}. They are designed in such a way as to 
take advantage of the structure of the nonlinear terms in their corresponding  null transport equations. Nevertheless, the linearizations of $\Ef,\Pf,\Lf,\Gf$ at Minkowski are importantly in precise agreement with the $10$ corresponding linearized charges introduced in \cite{ACR1,ACR2} (modulo numerical factors).
\item In Appendix \ref{AppDerivation} we derive the null transport equations for $\Ef,\Pf,\Lf,\Gf$ by first deriving null transport equations for $\mathfrak{m}$ and $\mfb$ and then projecting them according to \eqref{EQdefNonlinCharges}.
\item The quantities $\mathfrak{m}$ and $\mfb$ can also be expressed as follows,
\begin{align*}
\begin{aligned}
\mfb= \frac{\phi^3}{\Om} \lrpar{\be + \Divd \chih + \chih \cdot (\eta-\di\log\Om)}, \,\, \mathfrak{m} = \phi^3 \lrpar{-\rho + \half (\chih,\chibh)} -\phi \Divd \mfb,
\end{aligned}
\end{align*}
where we remark that $\mathfrak{m}$ is reminiscent of the classical \emph{mass function} \cite{CHRmassaspect}. In Section \ref{SECintroSecondVersionMainTHM} we consider a simplified version of the null transport equation of $\mfb$ to illustrate several central aspects of our proof.
\item Following \cite{ACR3}, the charges $\Ef,\Pf,\Lf$ can be interpreted far out in the asymptotically flat region as (local) energy $\mathbf{E}$, linear momentum $\Pf$, and angular momentum $\Lf$. The charge $\Gf$ is interpreted in \cite{ACR3} on the sphere of radius $R$ (for $R>0$ large) as
\begin{align*}
\begin{aligned}
\Gf = \text{center-of-mass} - R \cdot \text{linear momentum}.
\end{aligned}
\end{align*}
For \emph{strongly} asymptotically flat initial data with $\Pf_{\mathrm{ADM}}=0$, $\Gf$ is accordingly interpreted as the center-of-mass; see \cite{ACR3} for details. In asymptotically flat spacelike initial data with well-defined, finite center-of-mass and non-vanishing asymptotic linear momentum $\Pf_{\mathrm{ADM}}$, $\Gf$ is expected to grow proportionally to $R$. 
The above interpretation of $\Ef,\Pf,\Lf,\Gf$ is based on the proximity of these quantities to their ADM counterparts and was 
essential for the \emph{null gluing to Kerr} in \cite{ACR3}.

\item The Minkowski reference sphere data $\mathfrak{m}$ (see \eqref{EQrefMinkowskiSphereData}) has vanishing charges $(\Ef,\Pf,\Lf,\Gf)$. Let $\varep>0$ be a real number. For sphere data $x$ on a sphere $S$ which is $\varep$-close to Minkowski reference sphere data (in the norm $\XX(S)$ to be introduced in Section \ref{SECspacesNORMS}) it holds that the $10$ charges $\Ef,\Pf,\Lf,\Gf$ are well-defined and bounded of size $\varep$.
\end{enumerate}

\ni As mentioned at the end of Section \ref{SUBSECcharseed} above, one can more generally prescribe a characteristic seed along the union $\HH\cup\HHb$ of two transversely intersecting null hypersurfaces $\HH$ and $\HHb$ having as intersection a spacelike $2$-sphere. The null gluing problem extends in straight-forward fashion to this setting, and the following result is proved in \cite{ACR1,ACR2}; see also Figure \ref{FIGbifurcate} below.

\begin{theorem}[Bifurcate null gluing of \cite{ACR1,ACR2}, version 1] \label{THMbifurcateintro} Consider spheres $S_1\subset \MM_1$ and $S_2\subset\MM_2$ with induced sphere data $x_1$ and $x_2$ close to the reference Minkowski sphere dat aon the spheres of radius $1$ and $2$, respectively. There exists a solution $x$ to the null structure equations along the bifurcate null hypersurface $\HHb\cup\HH$ emanating from the auxiliary $2$-sphere $S_{\mathrm{aux}}$ such that $x\vert_{S_1} = x_1$, and it holds that $x\vert_{S_2} = x_2$ \emph{up to the $10$-dimensional space} spanned by $\Ef,\Pf,\Lf,\Gf$. If \emph{a posteriori} it holds that
\begin{align}
	\lrpar{\mathbf{E},\Pf,\Lf,\Gf}(x\vert_{S_2}) = \lrpar{\mathbf{E},\Pf,\Lf,\Gf}(x_2),
\end{align}
then it holds that $x\vert_{S_2} = x_2$ on $S_2$.
\end{theorem}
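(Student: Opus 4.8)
The plan is to reduce Theorem~\ref{THMbifurcateintro} to the single-hypersurface perturbative gluing of Theorem~\ref{THMintroPASTnull} (equivalently, to its reformulation with the sphere perturbation and gauge transformation applied at the near sphere), by ``unfolding'' the bifurcate hypersurface and letting the extra ingoing leg $\HHb$ play the role of those perturbation and gauge degrees of freedom. Since the two legs enter symmetrically, one may assume $S_1$ lies on the ingoing leg $\HHb$ and $S_2$ on the outgoing leg $\HH$; the vertex $S_{\mathrm{aux}}$ is then placed at a Minkowski radius strictly between $1$ and $2$, so that along $\HHb\cup\HH$ the sphere $S_1$ is reached from $S_{\mathrm{aux}}$ by the ingoing generators and $S_2$ by the outgoing ones.

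First I would dispose of the ingoing leg. The null structure equations along $\HHb$ are transport equations integrated from a freely prescribed characteristic seed (the ingoing analogue of Section~\ref{SUBSECcharseed}), so, $x_1$ being close to the Minkowski reference data, for any choice of conformal class and null lapse along $\HHb$ one integrates the hierarchy in the reverse direction, starting from $x_1$ on $S_1$ towards $S_{\mathrm{aux}}$; over the bounded generator interval this is well-posed and closeness-preserving near Minkowski by a Gronwall argument, and yields a solution $x$ on $\HHb$ with $x\vert_{S_1}=x_1$ \emph{exactly} and induced sphere data $x\vert_{S_{\mathrm{aux}}}$ close to Minkowski. The point is that the conformal class and lapse along $\HHb$ form an infinite-dimensional free parameter, so this produces not one but an infinite-dimensional family of sphere data on $S_{\mathrm{aux}}$, all compatible with $x\vert_{S_1}=x_1$; linearizing at Minkowski, one checks that this family contains the propagations along $\HHb$ of the linearized gauge transformations and linearized sphere perturbations at $S_{\mathrm{aux}}$. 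This family is precisely the substitute for the endpoint perturbation-and-gauge freedom used in Theorem~\ref{THMintroPASTnull}.

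The second step is the outgoing leg. On $\HH$ the remaining free data is the conformal class and null lapse along $\HH$ --- whose finitely many jets at $S_{\mathrm{aux}}$ are pinned by Step~1 through the compatibility of the two solutions at the vertex --- together with the infinite-dimensional freedom, inherited from Step~1, of choosing the $\HHb$-seed and hence the admissible $S_{\mathrm{aux}}$-data. Combining these, I would run the gluing argument of Theorem~\ref{THMintroPASTnull} along $\HH$ with near sphere $S_{\mathrm{aux}}$ and far sphere $S_2$: the gluing conditions transversal to the space of conservation laws are met by the weighted integral conditions on the $\HH$-seed exactly as there, while all components lying in the span of the conservation laws, except the ten charges $\Ef,\Pf^m,\Lf^m,\Gf^m$ on $S_2$, are adjusted using the gauge-and-perturbation freedom now realized by the ingoing leg rather than by perturbing an endpoint sphere. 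Matching the two legs at $S_{\mathrm{aux}}$ is automatic, as both restrict there to the common characteristic seed. One thus obtains $x$ on $\HHb\cup\HH$ with $x\vert_{S_1}=x_1$ and $x\vert_{S_2}=x_2$ modulo the ten-dimensional space; and if, a posteriori, the ten charges of $x\vert_{S_2}$ equal those of $x_2$, the obstruction vanishes and $x\vert_{S_2}=x_2$, just as in Theorem~\ref{THMintroPASTnull}.

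The hard part is the matching at the vertex: one must check that imposing $x\vert_{S_1}=x_1$ --- which fixes certain jets of the $\HH$-seed at $S_{\mathrm{aux}}$ and confines the $S_{\mathrm{aux}}$-data to the infinite-dimensional family of Step~1 --- still leaves enough room to run the $\HH$-gluing; that is, that the combined linearized map $(\HHb\text{-seed},\,\HH\text{-seed})\mapsto\big(x\vert_{S_1}-x_1,\ x\vert_{S_2}-x_2\ \mathrm{mod}\ \mathrm{span}(\Ef,\Pf,\Lf,\Gf)\big)$ is surjective. Proving this surjectivity --- equivalently, that the ten-dimensional obstruction on $\HH$ is not enlarged by the coupling at $S_{\mathrm{aux}}$, and that the $\HHb$-seed freedom genuinely covers the gauge and sphere-perturbation modes --- is the technical crux; the remaining estimates and the hierarchy of weighted integral conditions are a routine transcription of those in Theorem~\ref{THMintroPASTnull}.
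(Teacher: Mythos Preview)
The paper does not actually prove this theorem here; it is quoted from \cite{ACR1,ACR2}, and the only explanation offered is the remark immediately following the statement: the result holds because of the additional freedom of choosing the sphere data on $S_{\mathrm{aux}}$, together with the fact that --- except for $\Ef,\Pf,\Lf,\Gf$ --- the linearly conserved charges along $\HH$ can be adjusted as part of the null gluing along the transversal hypersurface $\HHb$, and vice versa.

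Your proposal is correct in substance and identifies the same mechanism, but you organize it differently. You work sequentially and asymmetrically: integrate exactly along $\HHb$ from $S_1$ to $S_{\mathrm{aux}}$ (so $x\vert_{S_1}=x_1$ is automatic), treat the resulting infinite-dimensional family of $S_{\mathrm{aux}}$-data as a substitute for the endpoint gauge/perturbation freedom of Theorem~\ref{THMintroPASTnull}, and then run the single-leg gluing along $\HH$. The paper's remark instead describes the picture symmetrically from $S_{\mathrm{aux}}$: the sphere data there is free, and the $\HH$-conserved charges are moved by the $\HHb$-gluing while the $\HHb$-conserved charges are moved by the $\HH$-gluing. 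Both framings bottom out in the same fact --- the $\HH$-conserved and $\HHb$-conserved charge spaces intersect only in the ten-dimensional span of $\Ef,\Pf,\Lf,\Gf$ --- and your ``hard part'' (surjectivity of the combined linearized map) is exactly the content of that fact. Your sequential framing is well suited to the asymmetric conclusion (exact at $S_1$, codimension-$10$ at $S_2$); the paper's symmetric framing makes it more transparent \emph{why} precisely those ten charges survive as obstructions. One minor correction: what you need to check is not that the $\HHb$-seed family literally reproduces the linearized sphere-perturbation and gauge modes of Theorem~\ref{THMintroPASTnull}, but only that it moves the $\HH$-conserved charges (other than the ten) --- this is what the paper's transversal-adjustment remark asserts, and it is a cleaner target than the stronger containment you state.
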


\begin{figure}[H]
	\begin{center}
		\includegraphics[width=9.5cm]{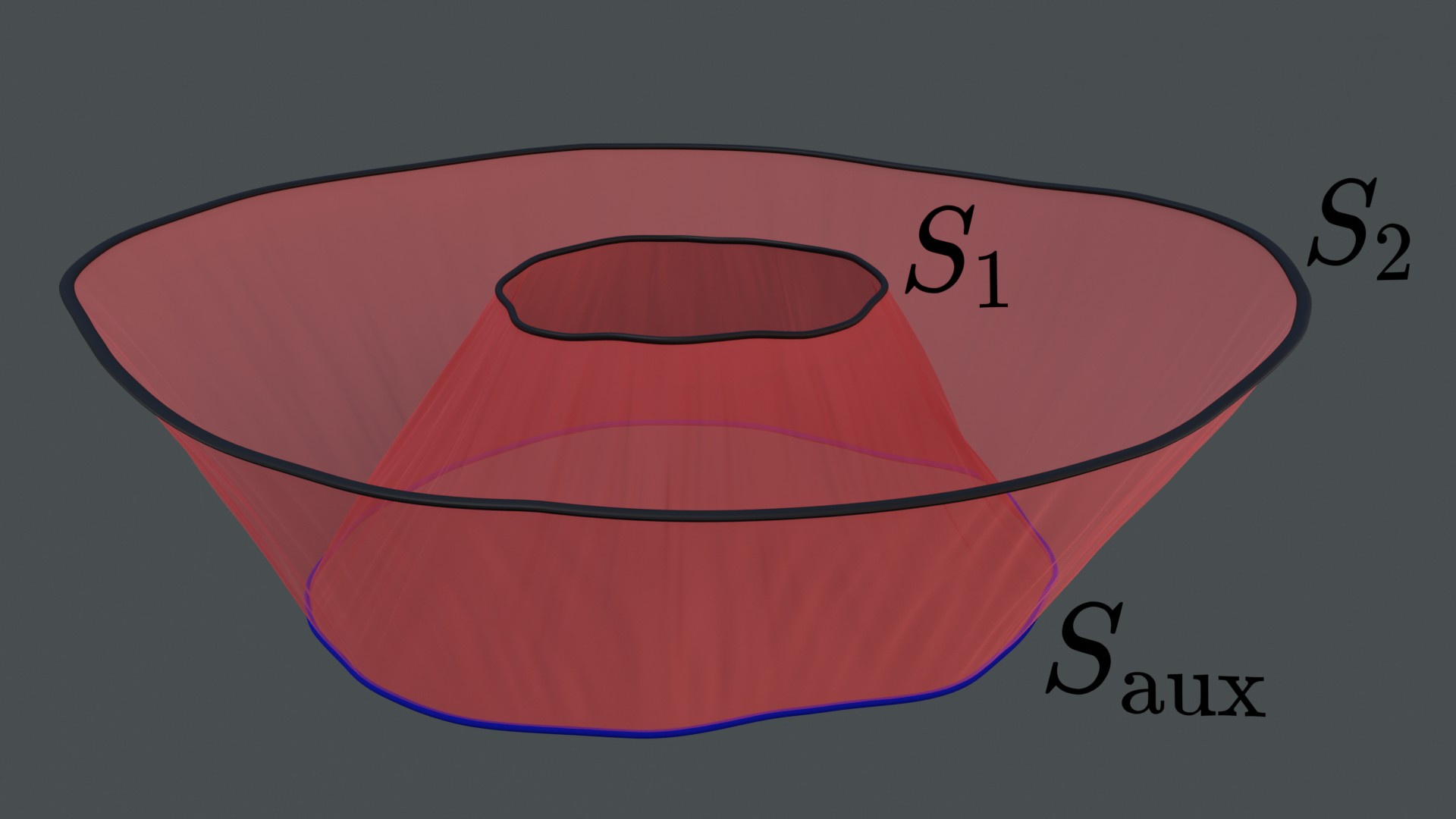} 
		\vspace{0.4cm} 
		\caption{Illustration of the codimension-$10$ null gluing constructed in \cite{ACR1,ACR2} along a bifurcate null hypersurface (red) emanating from an auxiliary sphere $S_{\mathrm{aux}}$ (blue).} \label{FIGbifurcate}
	\end{center}
\end{figure}

\ni \emph{Remarks on Theorem \ref{THMbifurcateintro}.}
\begin{enumerate}
\item In constrast to Theorem \ref{THMintroPASTnull}, Theorem \ref{THMbifurcateintro} does not require us to perturb the sphere $S_2\subset \MM_2$ or apply a gauge transformation. This stems from the additional degree of freedom of choosing the sphere data on $S_{\mathrm{aux}}$ together with the fact that -- except for $\Ef,\Pf,\Lf,\Gf$ -- the respective linearly conserved charges along $\HH$ and $\HHb$  can be adjusted as part of the null gluing along the respectively transversal null hypersurface. We refer to \cite{ACR1,ACR2} for details. 
\item The constructed solution to the null structure equations is sufficiently regular to apply the local existence results in \cite{LukChar} and \cite{LukRod1}. With thus constructed characteristic data on $\HH\cup \HHb$ we can solve the Einstein equations locally forward to construct an Einstein vacuum spacetime $(\MM,\g)$ containing $\HH\cup \HHb$. In this spacetime, we can now choose a spacelike hypersurface $\Si\subset\MM$ which then {\it{glues}} $S_1$ to $S_2$ as solution to the spacelike constraint equations. It is in this sense that our null gluing results imply the corresponding spacelike gluing results; see \cite{ACR1,ACR2,ACR3}.
\end{enumerate}

\subsection{Main theorem and overview of the proof}  \label{SECintroSecondVersionMainTHM}
We are now in position to state a first version of the main theorem of this paper.
\begin{theorem}[Main theorem: Obstruction-free null gluing, version 1] \label{THMmainIntrov1}
Let $\varep>0$ be a real number, and let $(\MM_1,\g_1)$ and $(\MM_2,\g_2)$ be two given spacetimes. Consider spheres $S_1\subset \MM_1$ and $S_2\subset\MM_2$ with respective sphere data $x_1$ and $x_2$ being $\varep$-close to the Minkowski reference sphere data on the spheres of radius $1$ and $2$, respectively. Let $(\mathbf{E},\mathbf{P},\mathbf{L},\mathbf{G})(x_A)$ for $A=1,2$ denote the charges of Definition \ref{DEFchargesEPLG} calculated from the sphere data $x_A$, and define
\begin{align*}
	\begin{aligned}
		\lrpar{\triangle \mathbf{E},\triangle \mathbf{P},\triangle \mathbf{L},\triangle \mathbf{G}} := \lrpar{\mathbf{E},\mathbf{P},\mathbf{L},\mathbf{G}}(x_2) - \lrpar{\mathbf{E},\mathbf{P},\mathbf{L},\mathbf{G}}(x_1).
	\end{aligned}
\end{align*}
Assume that for three real numbers $C_1,C_2,C_3>0$,
\begin{subequations}
\begin{align}
\triangle \mathbf{E} =& C_1\, \varep, \label{EQsmallnessMain21intro} \\ 
\vert \triangle \mathbf{L} \vert \leq& C_2 \varep^2,  \label{EQsmallnessMain1002intro} \\
\varep^{-1} (\triangle \mathbf{E}) >& C_3 \lrpar{\varep^{-1} \vert \triangle \mathbf{P}\vert + \varep^{-1} \vert \triangle \mathbf{G}\vert}. \label{EQsmallnessMain2intro}
\end{align}
\end{subequations}
There are real numbers $\tilde{\varep}>0$ and $\tilde{C}_3>0$ such that if $0<\varep<\tilde{\varep}$ and $C_3>\tilde{C}_3$, then there exist a perturbation from $S_1$ to a nearby sphere $S_1'$ in $\MM_2$ and a local gauge transformation at $S_1'$ yielding the sphere data $x_1'$ on $S_1'$ such that there exists a solution $x$ to the null structure equations along the hypersurface $\HH_{[1,2]}'$ leading from $S_1'$ to $S_2$ such that 
\begin{align}
\begin{aligned}
x\vert_{S_1'} = x'_1, \,\, x\vert_{S_2} = x_2.
\end{aligned}\label{EQfullgluing1intro}
\end{align}
In particular, as consequence of \eqref{EQfullgluing1intro}, \emph{the charges $\Ef,\Pf,\Lf,\Gf$ are glued on $S_2$}.
\end{theorem}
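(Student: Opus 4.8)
\emph{Proof plan.} The natural starting point is the perturbative null gluing of \cite{ACR1,ACR2}, in the phrasing where the sphere perturbation and gauge transformation are performed at $S_1$ rather than $S_2$ (Theorem \ref{THMcharGluing10d}; compare Theorem \ref{THMintroPASTnull}). After passing to $S_1'$ and the gauge-transformed data $x_1'$, that result produces, for a suitable \emph{low-frequency} part of the characteristic seed along $\HH_{[1,2]}'$, a solution $x$ of the null structure equations with $x|_{S_1'}=x_1'$ which already agrees with $x_2$ in every component \emph{except} the ten charges $\Ef,\Pf,\Lf,\Gf$ of Definition \ref{DEFchargesEPLG}; the sole remaining obstruction is the vector $(\triangle\Ef,\triangle\Pf,\triangle\Lf,\triangle\Gf)$. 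The whole point — and the reason for the hypotheses \eqref{EQsmallnessMain21intro}--\eqref{EQsmallnessMain2intro} — is that this obstruction can be removed by exploiting the \emph{nonlinear} terms in the transport equations satisfied by $\Ef,\Pf,\Lf,\Gf$ along $\HH$, which vanish upon linearization (these are the ACR conservation laws) but not at quadratic order.

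Accordingly, I would first record, following the derivation sketched in Appendix \ref{AppDerivation} and illustrated on the model of Section \ref{SECtoymodelINTRO}, the transport equations for $\mathfrak{m},\mfb$ built from \eqref{EQdefMUBETA} and their $\ell=0,1$ projections giving the equations for $\Ef,\Pf,\Lf,\Gf$. The decisive structural facts are: (i) the transport equation for $\Ef$ contains the \emph{sign-definite} quadratic term $\int_{\SSS^2}|\Om\chih|^2_{\gd}$, which by the conformal invariance \eqref{EQshearINVARIANCE} is \emph{directly prescribable} from the characteristic seed; (ii) the quadratic terms driving $\Pf$ and $\Gf$ are \emph{indefinite} and of size controlled by a multiple of $\triangle\Ef$; and (iii) the quadratic terms driving $\Lf$ are of the smaller, purely quadratic, size $\varep^2$. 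These three features correspond respectively to the one-sided condition $\triangle\Ef=C_1\varep>0$, to the large-$C_3$ inequality \eqref{EQsmallnessMain2intro}, and to the quadratic smallness \eqref{EQsmallnessMain1002intro}.

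Next I would introduce a \emph{high-frequency seed correction}: to the low-frequency seed above, add oscillatory pieces in $v$ of a large frequency $N=N(\varep)$, supported in the interior of $[1,2]$ so as not to alter the endpoint sphere data at $S_1'$ and $S_2$, placed in finitely many low tensor spherical-harmonic modes of the conformal metric $\gd_c$ and of the lapse $\Om$, with amplitudes collected into a finite-dimensional parameter $\lambda$. A direct computation with the equations of the previous step then gives
\[
(\triangle\Ef,\triangle\Pf,\triangle\Lf,\triangle\Gf)\;=\;Q(\lambda)\;+\;(\text{coupling with the low-frequency seed})\;+\;\text{h.o.t.},
\]
where $Q$ is a quadratic map into $\RRR^{10}$ whose $\Ef$-component is positive definite in the shear amplitude, whose differential is surjective at the relevant scaling, and for which \eqref{EQsmallnessMain21intro}--\eqref{EQsmallnessMain2intro} place the target in the image with uniform transversality once $C_3>\tilde C_3$ and $\varep<\tilde\varep$. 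The frequency $N$ is chosen large enough that, simultaneously, the oscillatory correction integrates to a negligible quantity against the smooth weights entering the \emph{linear} gluing conditions of the first step (so the non-charge gluing is only harmlessly perturbed), the resulting $x$ remains $\lesssim\varep$-close to Minkowski in the norm $\XX$, and $\int_{\HH}|\Om\chih|^2_{\gd}$ can still be made equal to $\triangle\Ef$. One then closes by a fixed-point argument in the pair $(\text{low-frequency seed},\lambda)$: the low-frequency seed is solved for by Theorem \ref{THMcharGluing10d} with the high-frequency correction treated as a controlled perturbation, $\lambda$ is solved for from the ten charge equations by a quantitative inverse function theorem near a non-degenerate point of $Q$, and iteration absorbs the cross-terms and the mutual coupling.

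The main obstacle is the reconciliation of scales. Moving $\Ef$ by $\triangle\Ef\sim\varep$ requires $\int_{\HH}|\Om\chih|^2_{\gd}\sim\varep$, which taken at face value would force $|\Om\chih|\sim\sqrt{\varep}\gg\varep$, i.e.\ a supercritically large shear. The high-frequency device is exactly what makes this compatible with the perturbative scheme: distributing the required $L^2_v$-mass of $\Om\chih$ over a rapidly oscillating profile keeps $\Om\chih$ small in the weighted solution norm $\XX$ while still producing the full quadratic effect, and at the same time renders the oscillations invisible to the linear, low-frequency gluing conditions. Arranging all three requirements — a large enough quadratic effect on the charges, a small enough solution in $\XX$, and a negligible effect on the linearly glued quantities — to hold at once, and carrying the attendant estimates through the genuinely nonlinear and coupled null structure equations \eqref{EQsequence} rather than through a toy model, is the technical heart of the argument; and it is the sign-definiteness of the $\Ef$-nonlinearity that leaves no choice but the one-sided hypothesis $\triangle\Ef>0$.
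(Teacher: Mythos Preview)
Your overall strategy is right and matches the paper's: start from the codimension-$10$ gluing of Theorem~\ref{THMcharGluing10d}, add a compactly supported high-frequency correction to the characteristic seed that adjusts the ten charges via the quadratic terms in their transport equations, and iterate between the two procedures. But the mechanism you describe for why this is compatible with the perturbative step is wrong in an essential way.

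You write that the oscillatory profile ``keeps $\Om\chih$ small in the weighted solution norm $\XX$'' and that ``the resulting $x$ remains $\lesssim\varep$-close to Minkowski in the norm $\XX$''. This cannot happen: to move $\Ef$ by $C_1\varep$ you need $\int_\HH |\Om\chih|^2\sim\varep$, which forces $|\Om\chih|\sim\varep^{1/2}$ pointwise; rapid oscillation in $v$ does nothing to reduce this $L^2_v$ or $L^\infty_v$ size. Worse, $\a\sim\Om^{-1}D\chih$ is then of order $\varep^{-1/2}$, so $x_{+W}$ is \emph{far} from Minkowski in $\XX(\HH)$, and any attempt to close estimates in $\XX$ fails. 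The actual role of the oscillation is different: it makes the $v$-\emph{integrals} of $\Om\chih$ against smooth low-frequency weights gain an extra power of $\varep$, so that the quantities determined by integrating null transport equations --- $\phi,\Om\trchi,\eta,\Om\trchib,\Om\chibh,\omb,\ab,\Du\omb$ --- remain $\OO(\varep)$. The paper therefore controls $x_{+W}$ only in a strictly weaker norm $\XX^{\mathrm{h.f.}}(\HH)$ (Definition~\ref{DEFnormHH}) which drops $D$-derivatives and in particular contains no $\a$; the cutoff $\varphi$ then makes $\chih$ and $\a$ genuinely small on $S_2$, and the iteration with Theorem~\ref{THMcharGluing10d} is run via $\XX(S_2)$-difference estimates rather than $\XX(\HH)$-estimates. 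Identifying this norm hierarchy is the technical heart of the argument, and your proposal as written would stall at exactly this point.

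A secondary but structurally important point: the paper does not use a single frequency $N$ but \emph{two}, namely $1/\varep$ and $1/\sqrt{\varep}$. The piece $\varep^{1/2}\sin(v/\varep)\mfW_0$ adjusts $(\Ef,\Pf,\Gf)$ at order $\varep$ through $|\mfW_0|^2$, while a separate piece $\varep^{3/4}\bigl(\sin(v/\sqrt\varep)\mfW_1+\cos(v/\sqrt\varep)\mfW_2\bigr)$ adjusts $\Lf$ at order $\varep^2$ through the antisymmetric bilinear $\mfW_1\cdot\Nd\mfW_2$ that the $H^{(1m)}$-projection of the $\Lf$-transport sees. The two scales are chosen so that all cross-frequency products remain high-frequency (hence integrate to higher order) and so that the $\Lf$-correction does not disturb the $(\Ef,\Pf,\Gf)$-adjustment at leading order; a single-$N$ ansatz does not obviously disentangle the two. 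Also, the paper perturbs only $\gd_c$ and keeps $\Om=\Om_x$; oscillating the lapse is unnecessary.
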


\ni In the following we give first a general discussion of the proof of Theorem \ref{THMmainIntrov1} before turning in Section \ref{SECtoymodelINTRO} to a model problem to illustrate the ideas.

Our starting point for the discussion of the proof of Theorem \ref{THMmainIntrov1} is the next observation. In the literature there have been three approaches to initial data gluing for the Einstein equations,
\begin{enumerate}
\item Gluing constructions centered on connected-sum gluing were developed by Chru\'sciel--Isenberg--Pollack \cite{CIP1,CIP2}, Chru\'sciel--Mazzeo \cite{ChruscielMazzeo}, Isenberg--Maxwell--Pollack \cite{IMP3}, Isenberg--Mazzeo--Pollack \cite{IMP1,IMP2}. In this context we note also the works \cite{SchoenYauPSCM,GromovL} on codimension-$3$ surgery for manifolds of positive scalar curvature.
\item Spacelike gluing along $3$-dimensional spacelike hypersurfaces pioneered by Corvino \cite{Corvino} and Corvino--Schoen \cite{CorvinoSchoen}, and further developed and refined by Chru\'sciel--Delay \cite{ChruscielDelay1,ChruscielDelay}, Chru\'sciel--Pollack \cite{ChruscielPollack}, Cortier \cite{Cortier}, Hintz \cite{Hintz}. Another milestone in this direction was the result \cite{CarlottoSchoen} by Carlotto--Schoen. 
\item Null gluing along $3$-dimensional null hypersurfaces recently established by the present authors in \cite{ACR1,ACR2,ACR3} in collaboration with Aretakis. 
\end{enumerate} 
In all of the above approaches, the gluing problem has always been treated \emph{linearly}. That is, the constraint equations are first linearized and the properties (such as injectivity, surjectivity, and flexibility) of the linearized operator are studied. Then an implicit function theorem is applied to show that, based on the properties of the linearized gluing problem, the nonlinear gluing can be achieved. In particular, all the obstructions to gluing that have been found are \emph{linear} obstructions, that is, obstructions to the linearized gluing problem.

In this paper we depart from the purely linear analysis and propose a new method for nonlinear gluing. In this approach we combine the linear analysis 
with the exploration and manipulation of the quadratic terms in the constraint equations and a novel iteration procedure, to show that the \emph{linear obstructions can be eliminated and no higher order obstructions appear} and thus establish that the gluing problem is {\it{obstruction-free}}.

In order to do that we show that we can choose specific characteristic seeds in a way such that they \emph{control both the linear and the nonlinear parts of the theory}.  The part of the characteristic seed that controls the quadratic terms consists of \emph{high-frequency terms in the $v$-variable} which also have \emph{larger amplitude} than the corresponding part used to control the problem at the linear level. This comes perhaps as a surprise because there is a legitimate danger that such large-amplitude high-frequency terms interact with the original linear theory ansatz and eventually destroy the linear part of the construction. However, the point here is that it does not, and part of it has to do exactly with the fact that we use a high-frequency ansatz which, in conjunction with the transport-nature of the hierarchy of the null structure equations, \emph{does not influence the problem at the linear level but only at the nonlinear level}. 

Formally, to combine the control of the linear and of the nonlinear terms, we set up an iteration scheme and prove that it converges. The limit of this iteration scheme is then the desired gluing solution. Recall that the classical implicit function theorem states that if the linear operator is surjective, one can use linear theory to produce an iteration scheme where error terms produced by the linear ansatz can be controlled by the next-order linear ansatz. In contrast, in our nonlinear constraint problem we need to precisely manipulate and control \emph{quadratic terms} to achieve certain properties. In doing so, we generally have error terms \emph{from linear terms and from quadratic terms}. We control some of those error terms not with the linear ansatz from the next iteration step \emph{but from the quadratic ansatz at the same step of the iteration} -- of course, this subsequently produces new linear and nonlinear error terms at the next order, and at the next order the same procedure has to be repeated.

\begin{remark} High-frequency characteristic seeds appeared previously in general relativity in the context of the black hole formation problem, see, for example, the work of Christodoulou \cite{ChrForm}, and also the analysis of high-frequency limits of solutions to the Einstein equations in the context of Burnett's conjecture, see, for example,  \cite{LukRodnianskiHIGH}. We should also note that both of these works exploited the designed largeness 
(but not the precise control) of the quadratic term $|\Omega \chih|^2$ (see below). These, to our knowledge, are the only instances of previous works which, either 
at the level of the constraint or evolutions equations, exploited actively (rather than in the form of cancellations) the nonlinear structure of the Einstein 
equations.
\end{remark}

\begin{remark} It is interesting to compare our approach to the methods of \emph{convex integration} (see, for example, \cite{Nash,Nash1,Gromov,DeLellisSz}) where one solves various nonlinear constraints by utilizing high-frequency iteration schemes. One major difference is that in the context of convex integration, while it is possible to prove the existence of solutions to nonlinear constraints with the help of high-frequency approximations, one necessary property of such iteration scheme procedures is that the frequency scale has to grow at each step, and often
results in a \emph{non-smooth solution}. The smooth isometric embedding result of Nash in \cite{Nash1} is however one of the exceptions to the above.
The construction in this paper features an infinite iteration procedure involving only 2 different high frequency scales. As a result, the limiting solution is automatically \emph{regular}. \end{remark}
\subsection{Model problem}\label{SECtoymodelINTRO} In the following we analyze a \emph{model problem} to illustrate the above ideas by a practical example. This model null gluing problem is based on the null transport equation for the vectorfield $\mfb$ introduced in Definition \ref{DEFchargesEPLG} (see also Appendix \ref{AppDerivation} for the full null transport equation for $\mfb$). In our discussion we focus on the lowest-order terms and omit higher-order terms.

Consider the following setup. Consider a vectorfield $B$ along $\HH=[1,2]\times \SSS^2$ subject to the null transport equation
\begin{align}
\begin{aligned}
D B = \di\lrpar{\vert \chih\vert^2_\gd} + \Divd_\gd \chih.
\end{aligned}\label{EQtoyODE}
\end{align}
The \emph{null gluing problem for $B$} is to pick the shear $\chih$ along $\HH$ such that for given vectorfields $B_1$ on $S_1$ and $B_2$ on $S_2$ (both $\varep$-small for a real number $\varep>0$) the solution of \eqref{EQtoyODE} satisfies 
\begin{align}
\begin{aligned}
B\vert_{S_1} = B_1, \,\, B\vert_{S_2} = B_2.
\end{aligned}\label{EQtoyboundaryconditions}
\end{align}
Moreover, in this model we stipulate that $\gd$ is determined from $\chih$ through the first variation equation
\begin{align}
\begin{aligned}
D\gd = \chih,
\end{aligned}\label{EQtoyfirstvariation}
\end{align} 
with some $\gd\vert_{S_1}$ (being $\varep$-close to the unit round metric $\gac$) given as data on $S_1$. In particular, we can calculate $\gd$ along $\HH$ by integrating $\chih$.

We remark that in the full null gluing problem for the Einstein equations, the role of $\chih$ is taken by $\Om\chih$; in other words, we set $\Om=1$ in this model problem.

In the following we discuss the null gluing problem for $B$ in three steps.
\begin{enumerate}
\item \emph{Linear gluing.} We show how to solve the null gluing problem for $B^{[l\geq2]}$, that is, prescribing all $(l\geq2)$-modes of $B$ (in particular, both electric and magnetic) on $S_1$ and $S_2$ of order $\varep>0$. This corresponds to the technique used in the codimension-$10$ gluing \cite{ACR1,ACR2}.
\item \emph{Nonlinear gluing I.} We illustrate how to go beyond \cite{ACR1,ACR2} and glue -- in addition to $B^{[l\geq2]}$ -- the electric $(l=1)$-modes of $B$ denoted by $B^{[1]}_E$ of order $\varep>0$.
\item \emph{Nonlinear gluing II.} We show how to glue, in addition to the above, the remaining magnetic $(l=1)$-modes of $B$, denoted by $B^{[1]}_H$, of order $\varep^2>0$. This completes the null gluing of $B$ along $\HH$ at orders which correspond to asymptotic flatness.
\end{enumerate}

\ni Before turning to the three steps, we note that integrating \eqref{EQtoyODE} yields the following central representation formula for our discussion of the null gluing problem for $B$ which should be kept in mind,
\begin{align*}
\begin{aligned}
B\vert_{S_2} - B\vert_{S_1} = \int\limits_1^2 \lrpar{\di\lrpar{\vert \chih\vert^2_\gd} + \Divd_\gd \chih}
\end{aligned}
\end{align*}

\ni\textbf{(1) Linear gluing.} For the construction of gluing of $B^{[\geq2]}$ we choose the following linear ansatz for $\chih$. For a smooth (low-frequency) $2$-covariant symmetric $\gac$-tracefree tensorfield $V$ along $\HH$, let
\begin{align}
\begin{aligned}
\chih := \varep V.
\end{aligned}\label{EQlinearomchihansatz}
\end{align}
By integrating \eqref{EQtoyfirstvariation}, it follows that $\gd = r^2 \gac + \OO(\varep)$ along $\HH$ (with $r=v$). First we clearly have by \eqref{EQlinearomchihansatz} that 
\begin{align}
\begin{aligned}
\vert \chih \vert^2_\gd = \OO(\varep^2).
\end{aligned}\label{EQsmallnesslinearansatz1}
\end{align}
Second, we have that
\begin{align}
\begin{aligned}
\lrpar{\Divd_\gd\chih}^{[1]} =& \lrpar{\Divd_{r^2\gac + \OO(\varep)} \lrpar{\varep V} }^{[1]} = \underbrace{\lrpar{\Divd_{r^2\gac} \lrpar{\varep V} }^{[1]}}_{=0} +\OO(\varep^2),\\
\lrpar{\Divd_\gd\chih}^{[\geq2]} =&\lrpar{\Divd_{r^2\gac}\lrpar{\varep V}}^{[\geq2]} +\OO(\varep^2),
\end{aligned}\label{EQsmallnesslinearansatz2}
\end{align}
where we used that the image of the operator $\Divd_{r^2\gac}$ has vanishing $(l=1)$-modes (see Appendix \ref{APPconstructionW}). 
Importantly, the operator $\Divd_{r^2\gac}$ is a bijection between $2$-covariant $\gac$-tracefree symmetric $2$-tensors and symmetric $\gac$-tracefree $2$-covariant tensors of modes $l\geq2$. Plugging \eqref{EQsmallnesslinearansatz1} and \eqref{EQsmallnesslinearansatz2} into \eqref{EQtoyODE} and integrating, this allows to solve the null gluing problem for $B^{[\geq2]}$ at order $\varep>0$ by stipulating the following integral condition on the tensor $V$,
\begin{align*}
\begin{aligned}
B_2^{[\geq2]} - B_1^{[\geq2]} = \int\limits_1^2 \lrpar{\Divd_{r^2\gac}\lrpar{\varep V}}^{[\geq2]} +\OO(\varep^2)
=& \int\limits_1^2 \lrpar{r^{-2}\Divd_{\gac}\lrpar{\varep V}}^{[\geq2]} +\OO(\varep^2)\\
=& \varep \cdot \Divd_{\gac}\lrpar{\int\limits_1^2r^{-2} V^{[\geq2]}} + \OO(\varep^2),
\end{aligned}
\end{align*}
where we applied the scaling of the divergence operator. By the ellipticity of $\Divd_{\gac}$, the above integral condition on $V$ is straight-forward to solve. On the other hand, \eqref{EQsmallnesslinearansatz1} and \eqref{EQsmallnesslinearansatz2} show that the null gluing problem for $B^{[1]}$ cannot be solved by the ansatz \eqref{EQlinearomchihansatz} at order $\varep>0$ because in this setting, $B^{[1]}$ is linearly \emph{conserved} along $\HH$,
\begin{align*}
\begin{aligned}
B_2^{[1]} - B_1^{[1]} = \int\limits_1^2 \lrpar{\Divd_{r^2\gac}\lrpar{\varep V}}^{[1]} + \OO(\varep^2) = \OO(\varep^2).
\end{aligned}
\end{align*}
This finishes our discussion of the null gluing of $B^{[\geq2]}$ at order $\varep>0$.\\

\ni\textbf{(2) Nonlinear gluing I.} In the following we consider the additional null gluing of $B^{[1]}_E$. Introduce the following generalization of the ansatz \eqref{EQlinearomchihansatz},
\begin{align}
\begin{aligned}
\chih = \varep V + \varep^{1/2} \sin\lrpar{\frac{v}{\varep}} \mfW_0,
\end{aligned}\label{EQnonlinearomchihansatz1}
\end{align}
where $V$ and $\mfW_0$ are smooth (low-frequency) symmetric $2$-covariant $\gac$-tracefree tensorfields along $\HH$. By integrating \eqref{EQtoyfirstvariation} we get that $\gd = r^2 \gac + \OO(\varep) - \varep^{3/2}\cos\lrpar{\frac{v}{\varep}} \mfW_0$. First, with \eqref{EQnonlinearomchihansatz1} we calculate that
\begin{align}
\begin{aligned}
\lrpar{\Divd_\gd\chih}^{[1]}_{E} =& \lrpar{\Divd_{r^2 \gac + \OO(\varep) - \varep^{3/2}\cos\lrpar{\frac{v}{\varep}} \mfW_0}\lrpar{\varep V + \varep^{1/2} \sin\lrpar{\frac{v}{\varep}} \mfW_0}}^{[1]}_{E} \\
=& \OO(\varep^2) + \OO(\varep) \cdot  \Nd \lrpar{\varep^{1/2} \sin\lrpar{\frac{v}{\varep}} \mfW_0}.
\end{aligned}\label{EQexpansionEdivd}
\end{align}
It plays an important role in this paper that \emph{the integration of high-frequency functions such as $\sin\lrpar{\frac{v}{\varep}}$ yields extra $\varep$-smallness}. For the term \eqref{EQexpansionEdivd} this results in the property that after integration, its contribution to $B\vert_{S_2}-B\vert_{S_1}$ is of order $\OO\lrpar{\varep^2}$,
\begin{align*}
\begin{aligned}
\int\limits_1^2 \lrpar{\Divd_\gd\lrpar{\chih}}^{[1]}_{E} = \OO(\varep^2) + \int\limits_1^2 \OO(\varep) \cdot  \Nd \lrpar{\varep^{1/2} \sin\lrpar{\frac{v}{\varep}} \mfW_0} = \OO(\varep^2) + \OO(\varep^{5/2}) = \OO(\varep^2).
\end{aligned}
\end{align*}
Second, with \eqref{EQnonlinearomchihansatz1} we importantly calculate that, at lowest-order,
\begin{align*}
\begin{aligned}
\vert \chih\vert^2_\gd = \varep \sin^2\lrpar{\frac{v}{\varep}} \vert \mfW_0\vert^2_{r^2\gac} = \half \varep \vert \mfW_0\vert^2_{r^2\gac} - \half \varep \cos\lrpar{\frac{2v}{\varep}} \vert \mfW_0\vert^2_{r^2\gac} .
\end{aligned}
\end{align*}
While the second term on the right-hand side is high-frequency and thus contributes only at order $\OO(\varep^2)$ to $B^{[1]}_E\vert_{S_2}-B^{[1]}_E\vert_{S_1}$, the first term contributes at order $\varep$. Indeed, using that $(\di f)^{(1m)}_E = -\sqrt{2}f^{(1m)}$ by Fourier analysis,
\begin{align*}
\begin{aligned}
\int\limits_1^2 \lrpar{\di \lrpar{\vert\chih\vert^2_\gd}}^{(1m)}_E = - \sqrt{2} \int\limits_1^2 \half \varep \lrpar{\vert \mfW_0\vert^2_{r^2\gac}}^{(1m)} + \OO(\varep^2).
\end{aligned}
\end{align*}
The above shows that the null gluing problem for $B^{[1]}_E$ can be solved at order $\varep$ if we can prescribe the values of the integral, for $m=-1,0,1$,
\begin{align}
\begin{aligned}
\int\limits_1^2 \lrpar{\vert \mfW_0\vert^2_{r^2\gac}}^{(1m)}.
\end{aligned}\label{EQintegralsquarelone}
\end{align}
We prove that this is indeed possible. In fact, it can be done with the explicit  ansatz for $\mfW_0$,
\begin{align*}
\begin{aligned}
\mfW_0 := f_{30} \psi^{30} + \sum\limits_{m=-1}^1 f_{2m}\psi^{2m},
\end{aligned}
\end{align*}
where $\psi^{30}$ and $\psi^{2m}$ are fixed $v$-independent tensor spherical harmonics, see Appendix \ref{APPconstructionW}, and $f_{30}, f_{2m}$ are appropriately chosen low-frequency scalar functions dependent only on $v$. For further details see Section \ref{SECproofW}.

We add a comment to illustrate that the explicit construction of $\mfW_0$ given in the paper can also be replaced by a more abstract argument. We may first reduce the statement to simply finding a symmetric $\gac$-tracefree 2-tensor $\mfW_0$ on $\Bbb S^2$ with the property (by rescaling) that
$$
(|\mfW_0|^2)^{(0)}=1,
$$
and 
$$
|(|\mfW_0|^2)^{(1m)}|\le \delta, \text{ for } m=-1,0,1,
$$
are prescribed, where $\delta>0$ is a sufficiently small constant.

Let $\xi_k$ be a sequence of smooth bump functions with disjoint supports and the property that these supports form a mesh  of a sufficiently small size 
(say, $\kappa\ll \delta$) of $\Bbb S^2$ and let $z_k$ be a sequence of constant symmetric $\gac$-tracefree $2$-tensors (using trivialization of the corresponding 
bundle over ${\text{sppt}} \,\xi_k$), so that 
 $$
 \mfW_0=\sum_k \xi_k z_k,\,\,\, |\mfW_0|^2= \sum_k \xi_k^2 |z_k|^2.
  $$
Let 
$$
a_k^0= \int_{\Bbb S^2} \frac{1}{\sqrt{4\pi}} \xi_k^2,\,\,\, a_k^j= \int_{\Bbb S^2}Y^{1 (j-2)} \xi_k^2,\, j=1,2,3.
$$
We can normalize the choice of $\xi_k$ so that for some choice of a partition of $\Bbb S^2$ into disjoint $\kappa$-size open sets $U_k$: 
$\Bbb S^2=\cup \overline U_k$ and ${\text{sppt}} \,\xi_k\subset U_k$, we have 
$$
a_k^0=\frac 1{4\pi}\int_{U_k} 1,
$$
so that in particular 
$$
\sum_k a_k^0=1.
$$
It follows that we need to find a sequence of non-negative numbers $x_k=|z_k|^2$ such that 
$$
(Sx)^j:=\sum_k a_k^j x_k=w^j,\,\,\, w^0=(|\mfW_0|^2)^{(0)}=1,\,\,\, w^{j}=(|\mfW_0|^2)^{(1(j-2))}.
$$
We observe that the sequence $x_k=1$ produces the solution 
$$
w^0=1,\,\,\, |w^j|\les \kappa, \,\,\, j=1,2,3,
$$
where the latter is just a Riemann sum estimate of the vanishing integrals $\int_{\Bbb S^2}Y^{1 (j-2)}$. This can be shown as follows.
For a sequence of points $p_k\in U_k$,
\begin{align*}
\int_{\Bbb S^2}Y^{1 (j-2)}&=\sum_k \int_{U_k}Y^{1 (j-2)}=\sum_k Y^{1 (j-2)}(p_k)  \int_{U_k} 1 +\OO(\kappa)
\\&= \sqrt{4\pi}\sum_k Y^{1 (j-2)}(p_k)  \int_{\Bbb S^2} \xi^2_k  +\OO(\kappa)=\sqrt{4\pi}\sum_k \int_{\Bbb S^2}\xi_k^2Y^{1 (j-2)}+\OO(\kappa).
\end{align*}
To find the desired solution for a given 4-tuple $w^0=1$ and $|w^j|\le \delta$, we need to show that $S$ is invertible and, if it is, we can choose a positive 
solution. We compute the conjugate operator $S^*$
$$
(S^*\omega)_k =\sum_{j=0}^3 a_k^j \omega^j.
$$
$S^*$ has a trivial kernel, which follows from a discretized version of the statement that the functions $Y^0, Y^{1m}$ are linearly independent.
Arguing as above, we see that 
$$
\kappa^{-1} (S^*\omega)_k=\frac{1}{\sqrt{4\pi}}Y^0(p_k) \omega^0+\sum_m \frac{1}{\sqrt{4\pi}}Y^{1m}(p_k) \omega^{m+2}+\OO(\kappa |\omega|).
$$
Note that the linear independence may already be established for $k$ ranging over a finite (much smaller, in fact independent of, than $\approx 1/\kappa^2$ terms in the sequence) set. As a result, the operator $SS^*:\Bbb R^4\to \Bbb R^4$ is invertible with the norm independent of $\kappa$.
In fact, explicitly computing $SS^*$
$$
(SS^*)_{ij}=\sum_k a_k^i a_k^j,
$$
we see that it is a Riemann sum approximation of the inner product of the spherical harmonics $Y^0, Y^{1m}$. As a consequence,
$$
(SS^*)_{ij}=\delta_{ij}+\OO(\kappa).
$$
Now, solving the problem 
$$
SS^*\omega=w,
$$
and defining 
$$
x:=S^*\omega,
$$
we find the desired solution. To verify that $x$ is positive, we first observe that since $w^0=1$ and $|w^j|\le \delta$ and $\kappa\ll\delta$, 
it follows that 
$$
\omega^0=1+\OO(\kappa), \,\,\, |\omega^j|\le \delta.
$$
As a result,
$$
x_k=(S^*\omega)_k=1+\OO(\delta),
$$
as required.\\

\ni \textbf{(3) Nonlinear gluing II.} In the following we consider the additional null gluing of $B^{[1]}_H$. We first observe that because $(\di f)_H=0$ for any scalar function $f$, the high-frequency ansatz \eqref{EQnonlinearomchihansatz1} leads only to a change of order $\varep^2$ of $B^{[1]}_H$,
\begin{align}
\begin{aligned}
B^{(1m)}_H\vert_{S_2}-B^{(1m)}_H\vert_{S_1} =\int\limits_1^2 \lrpar{\Divd_\gd\chih}_H^{(1m)} = \OO(\varep^2).
\end{aligned}\label{EQintegralnonlinear2intro}
\end{align}
However, at this point we note that the null gluing problem for $B^{(1m)}_H$ \emph{at order $\varep^2$} is in accordance with the \emph{decay rates of asymptotically flat spacelike initial data with bounded angular momentum}. Nevertheless, an inspection of the above integral \eqref{EQintegralnonlinear2intro} shows that the contribution $\OO(\varep^2)$ stems in fact solely from the \emph{linear ansatz} \eqref{EQlinearomchihansatz} (which was determined by the gluing of $B^{[\geq2]}$), while the high-frequency terms in the ansatz \eqref{EQnonlinearomchihansatz1} contributes only $\OO(\varep^3)$ to the integral. Indeed, this is already visible from \eqref{EQexpansionEdivd} by recalling the principle of high-frequency improvement and noting that $\sin\lrpar{\frac{v}{\varep}}\cos\lrpar{\frac{v}{\varep}}$ is a high-frequency function of frequency $1/\varep$ (so that its integration improves smallness by $\varep$). 

To solve the null gluing problem for $B^{[1]}_H$ at order $\varep^2$, we introduce the following modified high-frequency ansatz,
\begin{align}
\begin{aligned}
\chih = \varep V + \varep^{1/2} \sin\lrpar{\frac{v}{\varep}} \mfW_0 + \varep^{3/4}\lrpar{\sin\lrpar{\frac{v}{\sqrt{\varep}}} \mfW_1 + \cos\lrpar{\frac{v}{\sqrt{\varep}}} \mfW_2 },
\end{aligned}\label{EQnonlinearomchihansatz2}
\end{align}
where $\mfW_1$ and $\mfW_2$ are smooth (low-frequency) $2$-covariant symmetric $\gac$-tracefree tensorfields on $\HH$. By integrating \eqref{EQtoyfirstvariation} we get that
\begin{align}
\begin{aligned}
\gd= r^2 \gac + \OO(\varep) -\varep^{3/2} \cos\lrpar{\frac{v}{\varep}} \mfW_0 +\varep^{5/4} \lrpar{-\cos\lrpar{\frac{v}{\sqrt{\varep}}} \mfW_1+ \sin\lrpar{\frac{v}{\sqrt{\varep}}} \mfW_2}
\end{aligned}\label{EQmetricexpressionnonlin2}
\end{align}
Plugging \eqref{EQnonlinearomchihansatz2} and \eqref{EQmetricexpressionnonlin2} into $\Divd_\gd\chih$, we get, at lowest-order,
\begin{align*}
\begin{aligned}
\int\limits_1^2 \lrpar{\Divd_\gd \chih}^{(1m)}_H =& \varep^2 \int\limits_1^2 \lrpar{-\cos^2\lrpar{\frac{v}{\sqrt{\varep}}}\mfW_1 \cdot \Nd \mfW_2 + \sin^2\lrpar{\frac{v}{\sqrt{\varep}}} \mfW_2 \cdot \Nd \mfW_1}_H^{(1m)} + \OO(\varep^2) \\
=& - \varep^2 \int\limits_1^2 \lrpar{\mfW_1 \cdot \Nd \mfW_2}_H^{(1m)} + \OO(\varep^2),
\end{aligned}
\end{align*}
where we integrated by parts the second integrand, using that $(\di f)_H=0$ for any scalar function. Here the term $\OO(\varep^2)$ does not depend on $\mfW_0, \mfW_1$ and $\mfW_2$ but only on $\varep V$ of \eqref{EQlinearomchihansatz}, and we used that products of high-frequency terms with different frequencies are again high-frequency and thus subject to high-frequency improvement in integration.

Thus we reduced the gluing problem of $B^{[1]}_H$ at order $\varep^2$ to the prescription of the integral (the actual expression is a sum of two terms of this kind)
\begin{align}
\begin{aligned}
\int\limits_1^2 \lrpar{\mfW_1 \cdot \Nd \mfW_2}_H^{(1m)}.
\end{aligned}\label{EQprescriptionnonlin2}
\end{align}
We show that this is once again possible. And, again, it can be done with the explicit ansatz
\begin{align*}
\begin{aligned}
\mfW_1 = \tilde{f}_{30} \psi^{30}, \,\, \mfW_2 = \sum\limits_{m=-1}^1 \tilde{f}^{2m} \psi^{2m},
\end{aligned}
\end{align*}
where $\tilde{f}_{30}, \tilde{f}_{2m}$ are appropriately chosen low-frequency scalar functions in $v$. For further details see Section \ref{SECproofW}.

\vskip .5pc

We note moreover that, by similar arguments as above,
\begin{align*}
\begin{aligned}
\int\limits_1^2 \lrpar{\vert\chih\vert^2_\gd}^{(1m)} =&- \sqrt{2} \int\limits_1^2 \half \varep \lrpar{\vert \mfW_0\vert^2_{r^2\gac}}^{(1m)} + \OO(\varep^{5/4}),
\end{aligned}
\end{align*}
that is, at lowest-order $\varep$, the added high-frequency terms with $\mfW_1$ and $\mfW_2$ are not counteracting the adjustment of $B^{[1]}_E$ through $\mfW_0$.

\vskip .5pc
To carry out the gluing procedure {\it{at all orders}} of $\varep$ requires an iteration, combined from linear and 2 nonlinear steps as above. 
The details of this iteration are discussed in the body of the paper. Perhaps one of its surprising elements is the fact each successive step 
can be done with the linear ansatz of low $v$-frequency and the nonlinear ansatzes of the {\it{same}} 2 $v$-frequencies $\frac 1{\sqrt\varep}, 
\frac 1\varep$, attached to a finite number of spherical harmonics. For further details on the iteration process see Section \ref{SECgluingEP}.

\subsection{Organisation of the paper} The paper is structured as follows.
\begin{itemize}
\item In Section \ref{SECprelimANDmainResults}, we present the null structure equations, define norms, 
give the statement of the perturbative codimension-$10$ gluing of \cite{ACR1,ACR2}, and state the 
precise versions of the main theorem of this paper and its application to spacelike gluing.
\item In Section \ref{SECgluingEP}, we prove the main theorem by employing an iteration scheme which combines the codimension-$10$ gluing of Theorem \ref{THMintroPASTnull} (proved in \cite{ACR1,ACR2}) and the null gluing of $(\Ef,\Pf,\Lf,\Gf)$ in Theorem \ref{THMmain0}.
\item In Section \ref{SECproofSPACELIKE} we prove the obstruction-free spacelike gluing result, Corollary \ref{CORspacelike}.
\item In Section \ref{SECconstructionSolution} we construct and bound specific high-frequency solutions to the null structure equations by integrating the corresponding null transport equations in hierarchical order.
\item In Section \ref{SECproofW}, we complete the proof  of Theorem \ref{THMmain0} by showing how  to use
the construction in the previous section to glue the  $(\Ef,\Pf,\Lf,\Gf)$ charges and proving the necessary  estimates.
\item In Appendix \ref{APPconstructionW} we recall the basics from Fourier theory of spherical harmnics. 
\item In Appendix \ref{AppDerivation} we derive the null transport equations for $(\Ef,\Pf,\Lf,\Gf)$ along $\HH$.

\end{itemize}
\vskip .5pc

\subsection{Acknowledgements} The authors would like to thank Andy Strominger for inspiring comments. I.R. is supported by a Simons Investigator Award.

\section{Preliminaries and statement of main theorem} \label{SECprelimANDmainResults}

\ni The notation of this paper follows \cite{ACR1,ACR2,ChrForm}. For real numbers $A,B$, the relation $A\les B$ indicates that there is a universal constant $C>0$ such that $A\leq CB$. The expression $A\les_q B$ indicates that $A\leq CB$ holds for a constant $C>0$ depending on the quantity $q$. 

For real numbers $\varep>0$ and $\a \geq0$, let $\OO(\varep^{\a})$ denote terms such that $\OO(\varep^\a)/\varep^{\a}$ stays bounded as $\varep\to0$. Let $\varphi(v): [1,2]\to[0,1]$ be a smooth cut-off function vanishing near $v=1$ and $v=2$,
\begin{align}
\begin{aligned}
\varphi\vert_{[1+\frac{1}{8}, 2-\frac{1}{8}]} \equiv 1, \,\, \varphi\vert_{[1,1+\frac{1}{16}] \cup [2-\frac{1}{16},2]} \equiv 0.
\end{aligned}\label{EQcutoffFCTdef}
\end{align} 

\subsection{Null structure equations} \label{SECdoublenull} As discussed in Section \ref{SECSUBnullframework}, the double null geometry and the Einstein equations stipulate the following \emph{null structure equations} between metric components, Ricci coefficients and null curvature components. Before stating them, we introduce the following notation following Chapter 1 of \cite{ChrForm}.
\begin{itemize}
\item For two $S_{u,v}$-tangential $1$-forms $X$ and $Y$,
\begin{align*} \begin{aligned}
(X,Y):=& \gd(X,Y), & ({}^\ast X)_A :=& \ind_{AB}X^B, \\
(X \widehat{\otimes} Y)_{AB} :=& X_A Y_B + X_B Y_A - (X \cdot Y)\gd_{AB}, & \Divd X :=& \Nd^A X_A, \\
 (\Nd \widehat{\otimes} Y)_{AB} :=& \Nd_A Y_B + \Nd_B Y_A - (\Divd Y)\gd_{AB}, & \Curld X :=& \ind^{AB}\Nd_A X_B,
\end{aligned}\end{align*}
where $\ind$ denotes the area $2$-form of $(S_{u,v},\gd)$.
\item For two symmetric $S_{u,v}$-tangential $2$-tensors $V$ and $W$, 
\begin{align*}
\tr V := \gd^{AB} V_{AB}, \,\, \widehat{V} := V - \half \tr V \gd, \,\, V \wedge W := \ind^{AB} V_{AC}W^C_{\,\,\,B}.
\end{align*}
\item For a symmetric $S_{u,v}$-tangential $2$-tensor $V$ and a $1$-form $X$,
\begin{align*} 
\begin{aligned} 
(V \cdot X)_A := V_{AB}X^B.
\end{aligned} 
\end{align*}
\item For a symmetric $S_{u,v}$-tangential  $2$-tensor $V$,
\begin{align*} 
\begin{aligned} 
\Divd V_A := \Nd^B V_{BA}.
\end{aligned} 
\end{align*}
\item For a symmetric $S_{u,v}$-tangential tensor $W$, let $\widehat{D}W$ and $\widehat{\Du}W$ denote the tracefree parts of $DW$ and $\Du W$, respectively, with respect to $\gd$.
\end{itemize}

\ni We are now in position to state the \emph{null structure equations}, see also Chapter 1 of \cite{ChrForm}.\\

\begin{subequations}
\ni \textbf{First variation equations}
\begin{align}
\begin{aligned}
D\gd = 2\Om\chi, \,\, \Du \gd =& 2 \Om \chib.
\end{aligned}\label{EQfirstvariation001}
\end{align}
In particular, using \eqref{EQconfDecompintro} and \eqref{EQdefRicciINTRO}, the first variation equations imply that
\begin{align}
\begin{aligned}
D\phi=\frac{\Om\trchi\phi}{2}, \,\, \widehat{D\gd}=2\Om\chih, \,\, \Du\phi=\frac{\Om\trchib\phi}{2}, \,\, \widehat{\Du\gd}=2\Om\chibh.
\end{aligned}\label{EQcompactRAYpart1}
\end{align}
 
\ni \textbf{Raychaudhuri equations}
\begin{align}
\begin{aligned}
D\trchi + \frac{\Om}{2}(\trchi)^2 - \om \trchi = -\Om\vert\chih\vert^2, \,\, \Du \trchib + \frac{\Om}{2} (\trchib)^2 - \omb \trchib = - \Om \vert \chibh \vert^2_{\gd}.
\end{aligned}\label{EQfirstvariation001part2}
\end{align}

\ni In this paper we combine \eqref{EQcompactRAYpart1} and \eqref{EQfirstvariation001part2} in the forms 
\begin{align}
\begin{aligned}
D\lrpar{\Om^{-2}D\phi} = -\frac{\phi}{2\Om^2} \vert \Om\chih \vert^2_{\gd}, \,\, D\lrpar{\frac{\phi\Om\trchi}{\Om^2}}=&-\frac{\phi}{\Om^2} \vert \Om\chih \vert^2.
\end{aligned}\label{EQcompactRAY}
\end{align}
\textbf{Null transport equations for Ricci coefficients}
\begin{align} \begin{aligned}
D\chih =& \Om \vert \chih \vert^2 \gd + \om \chih - \Om \a, &\Du \chibh =& \Om \vert \chibh \vert^2 \gd + \omb \chibh - \Om \ab,\\
D \eta =& \Om (\chi \cdot \etab - \beta), & \Du \etab =& \Om (\chib \cdot \eta + \beb), \\
D \omb =& \Om^2(2 (\eta, \etab) - \vert \eta \vert^2 -\rh), & \Du \om=& \Om^2(2 (\eta,\etab) - \vert \etab \vert^2 - \rh), \\
D\etab =& - \Om (\chi \cdot \etab -\be) + 2 \di \om, & \Du\eta =& - \Om (\chib \cdot \eta + \beb) + 2 \di \omb.
\end{aligned} \label{EQtransportEQLnullstructurenonlinear}\end{align}
where $\etab:= -\eta + 2\di\log\Om$, and 
\begin{align}\begin{aligned}
D (\Om \trchib) =& 2 \Om^2 \Divd \etab + 2 \Om^2 \vert \etab \vert^2 - \Om^2 (\chih, \chibh) - \half \Om^2 \trchi \trchib + 2 \Om^2 \rh,\\ 
\Du (\Om \trchi) =& 2 \Om^2 \Divd \eta + 2 \Om^2 \vert \eta \vert^2 - \Om^2 (\chih, \chibh) - \half \Om^2 \trchi \trchib + 2 \Om^2 \rh,
\end{aligned} \label{EQtransporttrchitrchib1}\end{align}
as well as
\begin{align} \begin{aligned}
D(\Om \chibh) =& \Om^2\lrpar{(\chih, \chibh) \gd + \half \trchi \chibh + \Nd \widehat{\otimes}\etab + \etab \widehat{\otimes}\etab - \half \trchib \chih}, \\
\Du(\Om \chih) =& \Om^2 \lrpar{(\chih, \chibh)\gd + \half \trchib \chih + \Nd \widehat{\otimes} \eta + \eta \widehat{\otimes} \eta - \half \trchi \chibh}.
\end{aligned} \label{EQchihequations1} \end{align} 

\ni \textbf{Gauss equation}
\begin{align}
	K + \frac{1}{4}\trchi\trchib - \half (\chih,\chibh) = -\rho, \label{EQGaussEQ}
\end{align}
\ni where $K$ denotes the Gauss curvature of $(S,\gd)$ and can be calculated as follows,
\begin{align*}
\begin{aligned}
K = \half \mathrm{R}_{\mathrm{scal}}(\gd) = \half \gd^{BC}\lrpar{\pr_A\Gammad^A_{BC}-\pr_C\Gammad^A_{BA}+ \Gammad^E_{BC}\Gammad^A_{AE}-\Gammad^E_{BA}\Gammad^A_{CE}},
\end{aligned}
\end{align*}
where $\mathrm{R}_{\mathrm{scal}}(\gd)$ denotes the scalar curvature of $\gd$, and $\Gammad^A_{BC}$ the Christoffel symbols of $\gd$,
\begin{align*}
\begin{aligned}
\Gammad^A_{BC} := \half \gd^{AD} \lrpar{\pr_B \gd_{CD}+\pr_C\gd_{BD}-\pr_D\gd_{BC}}.
\end{aligned}
\end{align*}

\ni \textbf{Gauss-Codazzi equations}
\begin{align}\begin{aligned}
\Divd \chih -\half \di \trchi + \chih \cdot \zeta -\half \trchi \zeta =& -\beta \label{EQGaussCodazzi}, \\
\Divd \chibh - \half \di \trchib -\chibh \cdot \zeta +\half \trchib \zeta =& \beb.
\end{aligned}\end{align}
\ni \textbf{Curl equations}
\begin{align}
\begin{aligned}
\Curld \eta=& - \half \chih \wedge \chibh - \si, & \Curld \etab =& - \Curld \eta = - \Curld \zeta.
\end{aligned}\label{EQcurlEquations}
\end{align}

\ni\textbf{Null transport equation for $\Du\omb$} (see Appendix B of \cite{ACR1} for a derivation)
\begin{align} 
\begin{aligned} 
D\Du\omb =&-12 \Om^2 (\eta-\di\log\Om,\di\omb)+2\Om^2\omb\lrpar{(\eta,-3\eta+4\di\log\Om)-\rho}\\
&+4\Om^3\chib(\eta,\di\log\Om) +\Om^3 \lrpar{\beb,7\eta-3\di\log\Om} + \frac{3}{2} \Om^3\trchib\rh + \Om^3 \Divd\beb + \frac{\Om^3}{2} (\chih, \ab).
\end{aligned} \label{EQDUOMU1}
\end{align}
We omit the statement of the analogous null transport equation for $D\om$ along $\HHb$.\\

The null curvature components also satisfy null transport equations along $\HH$ and $\HHb$, the so-called \emph{null Bianchi equations}, see Chapter 1 of \cite{ChrForm}. However, given \eqref{EQtransportEQLnullstructurenonlinear},\eqref{EQGaussEQ},\eqref{EQGaussCodazzi},\eqref{EQcurlEquations}, the only relevant null Bianchi equation for the purpose of null gluing are the following.\\

\ni \textbf{Null Bianchi equations for $\ab$ and $\a$.} 
\ni By Proposition 1.2 in \cite{ChrForm}, the following \emph{null Bianchi equations} hold,
\begin{align} \begin{aligned}
\widehat{\Du} \a - \half \Om \tr \chib \a + 2 \omb \a + \Om \left( -\Nd \widehat{\otimes} \be - (4 \eta + \zeta) \widehat{\otimes} \be + 3 \chih \rh + 3 {}^\ast \chih \si \right) =&0, \\
\widehat{D} \aa - \half \Om \tr \chi \aa + 2 \om \aa + \Om \left( \Nd \widehat{\otimes} \beb + (4 \etab - \zeta) \widehat{\otimes} \beb + 3 \chibh \rh -3 {}^\ast \chibh \si \right) =&0.
\end{aligned} \label{EQnullBianchiEquations}\end{align}

\end{subequations}

\subsection{Function spaces and norms} \label{SECspacesNORMS} We start by defining the standard tensor spaces on spheres and null hypersurfaces.
\begin{definition}[Tensor spaces on $2$-spheres] \label{DEFsphereSPACES} For two real numbers $v\geq u$, let $S_{u,v}$ be a $2$-sphere equipped with a round unit metric $\gac$. For integers $m\geq0$ and tensors $T$ on $S_{u,v}$, define
\begin{align*} 
\begin{aligned} 
\Vert T \Vert^2_{H^m(S_{u,v})} :=   \sum\limits_{i=0}^m  \left\Vert \Nd^i T \right\Vert^2_{L^2(S_{u,v})},
\end{aligned} 
\end{align*}
where the covariant derivative $\Nd$ and the measure in $L^2(S_{u,v})$ are with respect to the round metric $\ga=(v-u)^2\gac$. Let $H^m(S_{u,v}) := \{ T: \Vert T \Vert_{H^m(S_{u,v})} < \infty \}$.
\end{definition}

\begin{definition}[Tensor spaces along null hypersurfaces] \label{DEFnullHHspaces} Let $m\geq0$ and $l\geq0$ be two integers. In the following let $D$ and $\Du$ be defined as in Section \eqref{SECSUBnullframework} for null hypersurfaces in Minkowski. 
\begin{enumerate}

\item For real numbers $u_0 < v_1 < v_2$ and $S_{u_0,v}$-tangential tensors $T$ on $\HH_{u_0,[v_1,v_2]}$, define
\begin{align*} 
\begin{aligned} 
\Vert T \Vert^2_{H^m_l(\HH_{u_0,[v_1,v_2]})} :=&  \int\limits_{v_1}^{v_2} \,\, \sum\limits_{0\leq j\leq l} \left\Vert D^j T \right\Vert^2_{H^m(S_{u_0,v})} dv, \\
\Vert T \Vert_{L^\infty_vH^m(S_{u_0,v})} :=&  \sup\limits_{v_1\leq v \leq v_2} \left\Vert T \right\Vert_{H^m(S_{u_0,v})}.
\end{aligned} 
\end{align*}
The function spaces are denoted by $H^m_l(\HH_{u_0,[v_1,v_2]}) := \{ F: \Vert F \Vert_{H^m_l(\HH_{u_0,[v_1,v_2]})} < \infty\}$, and $L^\infty_vH^m(S_{u_0,v}):= \{ F: \Vert F \Vert_{L^\infty_vH^m(S_{u_0,v})} < \infty\}$.

\item For real numbers $u_1 < u_2 < v_0$ and $S_{u,v_0}$-tangential tensors $T$ on $\HHb_{[u_1,u_2],v_0}$, define
\begin{align*} 
\begin{aligned} 
\Vert T \Vert^2_{H^m_l(\HHb_{[u_1,u_2],v_0})} :=  \int\limits_{u_1}^{u_2} \,\, \sum\limits_{0\leq j\leq l} \left\Vert \Du^j T \right\Vert^2_{H^m(S_{u,v_0})} du, \end{aligned} 
\end{align*}
and let $H^m_l(\HHb_{[u_1,u_2],v_0}) := \{ F: \Vert F \Vert_{H^m_l(\HHb_{[u_1,u_2],v_0})} < \infty\}$.

\end{enumerate}
\end{definition}

\ni The following calculus estimates are standard applied tacitly throughout this paper. 
\begin{lemma}[Calculus estimates] \label{LEMstandardTRACE} Let $u_0<v_1<v_2$ be real numbers. The following holds.

\begin{enumerate}

\item \textbf{Trace estimate.} Let $T$ be an $S_{u_0,v}$-tangent tensor on $\HH_{u_0,[v_1,v_2]}$. Then, for $v_1\leq v \leq v_2$,
\begin{align*} 
\begin{aligned} 
\Vert T \Vert_{L^\infty_vH^m(S_{u_0,v})} \leq& C_{m,u_0,v_1,v_2} \cdot \Vert T \Vert_{H^m_1(\HH_{u_0,[v_1,v_2]})},
\end{aligned} 
\end{align*}
where the constant $C_{m,u_0,v_1,v_2}>0$ depends on $m,u_0, v_1$ and $v_2$. 

\item \textbf{$L^\infty$-estimate.} For any $S_{u_0,v_0}$-tangent tensor $T$ on $S_{u_0,v_0}$ we have that
\begin{align*} 
\begin{aligned} 
\Vert T \Vert_{L^\infty(S_{u_0,v_0})} \leq& C_{u_0,v_0} \cdot \Vert T \Vert_{H^2(S_{u_0,v_0})},
\end{aligned} 
\end{align*}
where the constant $C_{u_0,v_0}>0$ depends on $u_0, v_0$. 

\item \textbf{Product estimate.} Let $m_1, m_2 \geq 2$ and $l_1, l_2 \geq 1$ be integers, and further let $T \in H^{m_1}_{l_1}(\HH_{u_0,[v_1,v_2]})$ and $T' \in H^{m_2}_{l_2}(\HH_{u_0,[v_1,v_2]})$ be two $S_{u_0,v}$-tangent tensors. Then it holds that for integers $0\leq m\leq \mathrm{min}(m_1,m_2)$ and $0\leq l \leq \mathrm{min}(l_1,l_2)$,
\begin{align*} 
\begin{aligned} 
\Vert T \cdot T' \Vert_{H^m_l(\HH_{u_0,[v_1,v_2]})} \leq& C \cdot \Vert T\Vert_{H^m_l(\HH_{u_0,[v_1,v_2]})} \cdot \Vert T' \Vert_{H^{m_2}_{l_2}(\HH_{u_0,[v_1,v_2]})} \\
&+ C \cdot \Vert T\Vert_{H^{m_1}_{l_1}(\HH_{u_0,[v_1,v_2]})} \cdot \Vert T' \Vert_{H^{m}_{l}(\HH_{u_0,[v_1,v_2]})},
\end{aligned} 
\end{align*}
where the constant $C>0$ depends on $m,m_1,m_2,l,l_1$ and $l_2$.
\end{enumerate}
\end{lemma}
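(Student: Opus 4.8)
The plan is to prove the three estimates in turn, with the product estimate drawing on the first two. All are routine, the key structural fact being that the underlying null hypersurfaces here are those of Minkowski spacetime: on $S_{u_0,v}$ the reference round metric is the explicit $\ga=(v-u_0)^2\gac$, so the Christoffel symbols of $\ga$, the area element, and the commutator $[D,\Nd]$ between the transport derivative $D$ and the angular covariant derivative $\Nd$ are all smooth in $v$ with bounds depending only on $m,u_0,v_1,v_2$. I would discuss only $\HH$, the $\HHb$-statements being identical with $\Du$ in place of $D$.

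For the trace estimate I would use the fundamental theorem of calculus in $v$: for $v,v'\in[v_1,v_2]$,
\[
\|T\|_{H^m(S_{u_0,v})}^2=\|T\|_{H^m(S_{u_0,v'})}^2+\int_{v'}^{v}\frac{d}{dw}\|T\|_{H^m(S_{u_0,w})}^2\,dw ,
\]
and, after differentiating under the integral, using the smoothness of $\ga$ in $v$ and commuting $D$ through the at most $m$ angular derivatives, one obtains $\big|\tfrac{d}{dw}\|T\|_{H^m(S_{u_0,w})}^2\big|\les_{m,u_0,v_1,v_2}\|T\|_{H^m(S_{u_0,w})}^2+\|DT\|_{H^m(S_{u_0,w})}^2$. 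Averaging the displayed identity over $v'\in[v_1,v_2]$ and recognising the right-hand side as a multiple of $\|T\|_{H^m_1(\HH_{u_0,[v_1,v_2]})}^2$ then gives the claim. The $L^\infty$-estimate is just the two-dimensional Sobolev embedding $H^2\hookrightarrow L^\infty$, applied componentwise in the two angular coordinate patches of $\SSS^2$, with the conformal factor $(v_0-u_0)^2$ relating $\ga$ and $\gac$ producing the constant $C_{u_0,v_0}$.

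For the product estimate I would expand $D^j(T\cdot T')=\sum_{j_1+j_2=j}\binom{j}{j_1}D^{j_1}T\cdot D^{j_2}T'$ and, inside each $\|\cdot\|_{H^m(S_{u_0,v})}$, distribute the at most $m$ angular derivatives by Leibniz; a bilinear Gagliardo--Nirenberg inequality on the round $\SSS^2$ reduces every resulting term to one of the two extreme products
\[
\|D^{j_1}T\|_{L^\infty(S_{u_0,v})}\,\|D^{j_2}T'\|_{H^m(S_{u_0,v})}\qquad\text{and}\qquad\|D^{j_1}T\|_{H^m(S_{u_0,v})}\,\|D^{j_2}T'\|_{L^\infty(S_{u_0,v})},
\]
and the $L^\infty(S_{u_0,v})$ factor is then bounded by its $H^2(S_{u_0,v})$ norm via the $L^\infty$-estimate. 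It then remains to integrate in $v$: in each of the two products one factor is placed in $L^\infty_v H^2(S_{u_0,v})$ and converted, via the trace estimate, into a norm of type $\|\cdot\|_{H^{m_i}_{l_i}}$, while the other is kept in $L^2_v H^m(S_{u_0,v})$ and dominated by the corresponding $\|\cdot\|_{H^m_l}$; Cauchy--Schwarz in $v$ then produces exactly the two terms on the right-hand side of the stated inequality.

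For a lemma of this kind I do not expect a serious obstacle; the one point that will demand care is the last step of the product estimate, namely the choice of which factor to send into the $L^\infty_v$ slot. The trace estimate applies only to a factor carrying at most $l_i-1$ transport derivatives, so whenever a term contains a factor bearing the full $l$ copies of $D$ (which happens when $j_1$ or $j_2$ equals $j=l$) that factor must instead be kept in $L^2_v$ with only its $H^2(S_{u_0,v})$ norm, which is legitimate precisely because $l\leq l_i$; the complementary factor then carries at most $\lfloor l/2\rfloor$ transport derivatives and so has a derivative to spare for the trace estimate. The hypotheses $m_1,m_2\geq 2$ (needed to run the embedding $H^2\hookrightarrow L^\infty$) and $l_1,l_2\geq 1$ (needed to supply that spare transport derivative) are exactly what make this finite case distinction always go through.
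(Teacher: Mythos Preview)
The paper does not prove this lemma at all: it is introduced with the sentence ``The following calculus estimates are standard applied tacitly throughout this paper'' and left without argument. Your sketch supplies exactly the routine proofs one would expect---fundamental theorem of calculus plus averaging for the trace estimate, two-dimensional Sobolev embedding for the $L^\infty$-estimate, and Leibniz/Moser on $\SSS^2$ combined with the first two parts for the product estimate---and is correct.

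One small remark on your final paragraph: when $j_1=j=l$ the complementary factor has $j_2=0$ transport derivatives, not ``at most $\lfloor l/2\rfloor$''; the salient point is simply that $j_2+1\le l_2$ so the trace estimate applies to it. Concretely, for the Moser term $\|D^{l}T\|_{H^2}\|T'\|_{H^m}$ you keep $D^{l}T$ in $L^2_vH^2$ (bounded by $\|T\|_{H^{m_1}_{l_1}}$ since $2\le m_1$, $l\le l_1$) and place $T'$ in $L^\infty_vH^m$ (bounded via the trace estimate by $\|T'\|_{H^m_1}\le\|T'\|_{H^m_l}$), producing the second summand on the right-hand side. The symmetric case is handled the same way. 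With that clarification your case analysis closes cleanly.
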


\ni We are now in position to define the norm of sphere data, see Definition \ref{DEFspheredata2}.
\begin{definition}[Norm for sphere data] \label{DEFnormFirstOrderDATA} Let $x_{u,v}$ be sphere data. Define 
\begin{align*} 
\begin{aligned} 
\Vert x_{u,v} \Vert_{\XX(S_{u,v})} :=& \Vert \Om \Vert_{H^{6}(S_{u,v})} +\Vert \gd \Vert_{H^{6}(S_{u,v})} + \Vert \Om\trchi \Vert_{H^{6}(S_{u,v})} + \Vert \chih \Vert_{H^{6}(S_{u,v})}\\
& + \Vert \Om\trchib \Vert_{H^{4}(S_{u,v})} + \Vert \chibh \Vert_{H^{4}(S_{u,v})} + \Vert \eta \Vert_{H^{5}(S_{u,v})} \\
&+ \Vert \om \Vert_{H^{6}(S_{u,v})}+ \Vert D\om \Vert_{H^{6}(S_{u,v})}+\Vert \omb \Vert_{H^{4}(S_{u,v})} + \Vert \Du\omb \Vert_{H^{2}(S_{u,v})} \\
&+ \Vert \a \Vert_{H^{6}(S_{u,v})} +\Vert \ab \Vert_{H^{2}(S_{u,v})},
\end{aligned} 
\end{align*}
where the norms are with respect to the round metric $\ga= (v-u)^2 \gac$ on $S_{u,v}$. Let
\begin{align*} 
\begin{aligned} 
\XX(S_{u,v}) := \{ x_{u,v} : \Vert x_{u,v} \Vert_{\XX(S_{u,v})} < \infty\}.
\end{aligned}
\end{align*}
\end{definition}
\ni We remark that the charges $\mathbf{E}, \mathbf{P}, \mathbf{L}, \mathbf{G}$ (introduced in Definition \ref{DEFchargesEPLG}) are well-defined for sphere data $x\in \XX(S)$, with 
\begin{align*}
\begin{aligned}
\vert \mathbf{E} \vert + \vert \mathbf{P} \vert + \vert \mathbf{L}\vert + \vert\mathbf{G} \vert \les \Vert x-\mathfrak{m} \Vert_{\XX(S)}.
\end{aligned}
\end{align*}

\ni Next we define function spaces to control the solutions to the null structure equations constructed in this paper. First we recapitulate the notion of \emph{null data} from \cite{ACR1,ACR2}.  
\begin{definition}[Outgoing and ingoing null data] \label{DEFnulldata111} We define the following.
\begin{enumerate}
\item For real numbers $u_0 < v_1 <v_2$, \emph{outgoing null data} on $\HH_{u_0,[v_1,v_2]}$ is given by a tuple of $S_{u_0,v}$-tangent tensors 
\begin{align} 
\begin{aligned} 
x=(\Om,\gd,\Om\trchi, \chih, \Om\trchib, \chibh, \eta, \om, D\om, \omb, \Du\omb, \a, \ab),
\end{aligned} \label{EQcollectionofTensors11122}
\end{align}
such that $x_{u_0,v} := x \vert_{S_{u_0,v}}$ is sphere data on each $S_{u_0,v} \subset \HH_{u_0,[v_1,v_2]}$.

\item For real numbers $u_1 < u_2 <v_0$, \emph{ingoing null data} on $\HHb_{[u_1,u_2],v_0}$ is given by a tuple of $S_{u,v_0}$-tangent tensors 
\begin{align} 
\begin{aligned} 
\underline{x}=(\Om,\gd,\Om\trchi, \chih, \Om\trchib, \chibh, \etab, \om, D\om, \omb, \Du\omb, \a, \ab),
\end{aligned} \label{EQcollectionofTensors111223}
\end{align}
such that $x_{u,v_0} := x \vert_{S_{u,v_0}}$ is sphere data on each $S_{u,v_0} \subset \HHb_{[u_1,u_2],v_0}$.
\end{enumerate}

\end{definition}

\ni The reference outgoing and ingoing null data of Minkowski are denoted by $\mathfrak{m}$ and $\underline{\mathfrak{m}}$, respectively; see also \eqref{EQrefMinkowskiSphereData}. %

In the following we define norms for null data. We will need two different types of norms. On the one hand, the norms $\XX(\HH), \XX(\HHb)$ and $\XX^+(\HHb)$ had already been defined in \cite{ACR1,ACR2} and appear in the precise statements the perturbative and bifurcate codimension-$10$ null gluing, see Theorems \ref{THMcharGluing10d} and \ref{THMbifurcateNULLcite} below. In addition, in this paper we construct high-frequency solutions to the null structure equations which are $\varep$-small in the high-frequency norm $\XX^{\mathrm{h.f.}}$.
\begin{definition}[Norms for null data] \label{DEFnormHH} Introduce the following.
\begin{enumerate}
\item Let $x$ be outgoing null data on $\HH:=\HH_{u_0,[v_1,v_2]}$. Define
\begin{align*} 
\begin{aligned}
\Vert x \Vert_{\XX(\HH)} :=& \Vert \Om \Vert_{H^6_3(\HH)} +\Vert \gd \Vert_{H^6_3(\HH)}+ \Vert \Om\trchi \Vert_{H^6_3(\HH)}+\Vert \chih \Vert_{H^6_2(\HH)}\\
& + \Vert \Om\trchib \Vert_{H^4_2(\HH)}+ \Vert \chibh \Vert_{H^4_3(\HH)}+ \Vert \eta \Vert_{H^5_2(\HH)}\\
&+\Vert \om \Vert_{H^6_2(\HH)} +\Vert D\om \Vert_{H^6_1(\HH)}+ \Vert \omb \Vert_{H^4_3(\HH)}+ \Vert \Du\omb \Vert_{H^2_3(\HH)}\\
&+  \Vert \a \Vert_{H^{6}_1(\HH)} +\Vert \ab \Vert_{H^{2}_3(\HH)}.
\end{aligned} 
\end{align*}
and moreover, define the \emph{high-frequency norm}
\begin{align*} 
\begin{aligned}
\Vert x \Vert_{\XX^{\mathrm{h.f.}}(\HH)} :=& \Vert \Om \Vert_{H^6_3(\HH)} +\Vert \om \Vert_{H^6_2(\HH)} +\Vert D\om \Vert_{H^6_1(\HH)} \\
&+\Vert \gd \Vert_{L^\infty_v H^6(S_v)}+ \Vert \Om\trchi \Vert_{L^\infty_v H^6(S_v)}+\Vert \chih \Vert_{L^\infty_v H^6(S_v)}\\
& + \Vert \Om\trchib \Vert_{L^\infty_v H^4(S_v)}+ \Vert \chibh \Vert_{L^\infty_v H^4(S_v)}+ \Vert \eta \Vert_{L^\infty_v H^5(S_v)}\\
&+ \Vert \omb \Vert_{L^\infty_v H^4(S_v)}+ \Vert \Du\omb \Vert_{L^\infty_v H^2(S_v)} 
+\Vert \ab \Vert_{L^\infty_v H^2(S_v)}.
\end{aligned} 
\end{align*}
\item Let $\underline{x}$ be ingoing null data on $\HHb:=\HHb_{[u_0,u_1],v_0}$. Define 
\begin{align*}
\begin{aligned}
\Vert \underline{x} \Vert_{\XX(\HHb)} :=& \Vert \Om \Vert_{H^6_3(\HHb)} +\Vert \gd \Vert_{H^6_3(\HHb)}+ \Vert \Om\trchib \Vert_{H^6_3(\HHb)}+\Vert \chibh \Vert_{H^6_2(\HHb)}\\
& + \Vert \Om\trchi \Vert_{H^4_2(\HHb)}+ \Vert \chih \Vert_{H^4_3(\HHb)}+ \Vert \etab \Vert_{H^5_2(\HHb)}\\
&+\Vert \omb \Vert_{H^6_2(\HHb)} +\Vert \Du\omb \Vert_{H^6_1(\HHb)}+ \Vert \om \Vert_{H^4_3(\HHb)}+ \Vert D\om \Vert_{H^2_3(\HHb)}\\
&+  \Vert \ab \Vert_{H^{6}_1(\HHb)} +\Vert \a \Vert_{H^{2}_3(\HHb)}.
\end{aligned}
\end{align*}
and define moreover the higher-regularity norm (used in the statement of Theorem \ref{THMcharGluing10d})
\begin{align*} 
\begin{aligned} 
\Vert \underline{x} \Vert_{\XX^+\lrpar{\HHb}} :=& \Vert \Om \Vert_{H^{12}_9(\HHb)} +\Vert \gd \Vert_{H^{12}_9(\HHb)}+ \Vert \Om\trchib \Vert_{H^{12}_9(\HHb)}+\Vert \chibh \Vert_{H^{12}_8(\HHb)}\\
& + \Vert \Om\trchi \Vert_{H^{10}_8(\HHb)}+ \Vert \chih \Vert_{H^{10}_9(\HHb)}+ \Vert \etab \Vert_{H^{11}_8(\HHb)}\\
&+\Vert \omb \Vert_{H^{12}_8(\HHb)} +\Vert \Du\omb \Vert_{H^{12}_7(\HHb)}+ \Vert \om \Vert_{H^{10}_9(\HHb)}+ \Vert D\om \Vert_{H^8_9(\HHb)}\\
&+  \Vert \ab \Vert_{H^{12}_7(\HHb)} +\Vert \a \Vert_{H^{8}_9(\HHb)}.
\end{aligned} 
\end{align*}
\end{enumerate}
\end{definition}
\ni We remark that by Sobolev embedding (see Lemma \ref{LEMstandardTRACE} above) it clearly holds for $x\in \XX(\HH)$ that 
\begin{align}
\begin{aligned}
\Vert x \Vert_{\XX^{\mathrm{h.f.}}(\HH)} \les \Vert x \Vert_{\XX(\HH)}.
\end{aligned}\label{EQembeddingsobolevnulldata}
\end{align}

\subsection{Codimension-$10$ null gluing of \cite{ACR1,ACR2}} \label{SECcodim10gluingREF} 

\ni The main result of \cite{ACR2} concerning perturbative codimension-$10$ null gluing can be stated as follows.
\begin{theorem}[Perturbative codimension-$10$ null gluing of \cite{ACR1,ACR2}, version 2] \label{THMcharGluing10d} Let $\de>0$ be a real number. Consider sphere data ${x}_{1}$ on ${S}_{1}$ contained in ingoing null data ${\underline{x}}$ on ${\HHb}_{[-\de,\de],1}$ (with ${S}_{0,1}={S}_1$) satisfying the null structure equations, and consider sphere data $x_{2}$ on $S_{2}$. Assume that for some real number $\varep>0$, 
\begin{align} 
\begin{aligned} 
\Vert {\underline{x}}-\mathfrak{m} \Vert_{\mathcal{X}^+(\tilde{\HHb}_{[-\de,\de],1}) } + \Vert x_{2}-\mathfrak{m} \Vert_{\mathcal{X}(S_{2})} \leq \varep.
\end{aligned} \label{EQsmallnessAssumptionMainTheoremMAIN222}
\end{align}
\begin{enumerate}
\item \textbf{Existence.} There is a real number $\varep_0>0$ such that for all $0<\varep<\varep_0$ sufficiently small, then there is a solution $x$ to the null structure equations on $\HH_{[1,2]}$ such that
\begin{align*}
\begin{aligned}
x\vert_{S_1} = x_1',
\end{aligned}
\end{align*}
where $x_{1}'$ is the sphere data on a perturbed sphere $S_{1}' \subset {\HHb}_{[-\de,\de],1}$ near ${S}_{1}$ in ${\HHb}_{[-\de,\de],1}$, and such that
\begin{align*}
\begin{aligned}
x \vert_{S_{2}} = x_{2} \text{ up to } (\Ef,\Pf,\Lf,\Gf), 
\end{aligned}
\end{align*}
that is, if \emph{a posteriori} it holds that $\lrpar{\mathbf{E},\mathbf{P}, \mathbf{L}, \mathbf{G}}(x \vert_{S_{2}}) =\lrpar{\mathbf{E},\mathbf{P}, \mathbf{L}, \mathbf{G}}\lrpar{x_{2}}$
then $x \vert_{S_{2}} = x_{2}$.
\item \textbf{Estimates for the solution.} The following bounds hold, 
\begin{align} 
\begin{aligned} 
\Vert x -\mathfrak{m} \Vert_{\XX(\HH_{[1,2]})}+ \Vert x_{1}' -x_{1} \Vert_{\XX(S_{1})}\les \Vert {\underline{x}}-\mathfrak{m} \Vert_{\mathcal{X}^+(\tilde{\HHb}_{[-\de,\de],1}) } + \Vert x_{2}-\mathfrak{m} \Vert_{\mathcal{X}(S_{2})},
\end{aligned} \label{EQUniversalBound}
\end{align}
Furthermore, we have the perturbation estimate
\begin{align} 
\begin{aligned} 
\left\vert \lrpar{\mathbf{E},\mathbf{P},\mathbf{L},\mathbf{G}}\lrpar{x_{1}'} - \lrpar{\mathbf{E},\mathbf{P},\mathbf{L},\mathbf{G}}\lrpar{x_1} \right\vert \les \varep \lrpar{\Vert {\underline{x}}-\mathfrak{m} \Vert_{\mathcal{X}^+(\tilde{\HHb}_{[-\de,\de],1}) } + \Vert x_{2}-\mathfrak{m} \Vert_{\mathcal{X}(S_{2})} }.
\end{aligned} \label{EQChargeEstimatesMainTheorem0}
\end{align}
and the almost conservation law
\begin{align} 
\begin{aligned}
\left\vert \lrpar{\mathbf{E},\mathbf{P},\mathbf{L},\mathbf{G}}\lrpar{x \vert_{S_{2}}} - \lrpar{\mathbf{E},\mathbf{P},\mathbf{L},\mathbf{G}}\lrpar{x \vert_{S_{1}}}\right\vert \les \varep \lrpar{\Vert {\underline{x}}-\mathfrak{m} \Vert_{\mathcal{X}^+(\tilde{\HHb}_{[-\de,\de],1}) } + \Vert x_{2}-\mathfrak{m} \Vert_{\mathcal{X}(S_{2})} }.
\end{aligned} \label{EQChargeEstimatesMainTheorem}
\end{align}
\item \textbf{Difference estimate.} Given a pair of sphere data $({x}_{1})_{(1)}$ and $({x}_{1})_{(2)}$ on ${S}_{1}$ contained in the pair of ingoing null data ${\underline{x}}_{(1)}$ and ${\underline{x}}_{(2)}$ on ${\HHb}_{[-\de,\de],1}$ (with ${S}_{0,1}={S}_1$) satisfying the null structure equations, and given a pair of sphere data $(x_{2})_{(1)}$ and $(x_{2})_{(2)}$ on $S_{2}$, such that \eqref{EQsmallnessAssumptionMainTheoremMAIN222} is satisfied for $0<\varep<\varep_0$, consider the respectively constructed solutions $x_{(1)}$ and $x_{(2)}$ on $\HH_{[1,2]}$. It holds that
\begin{align}
\begin{aligned}
&\Vert x_{(1)} -x_{(2)} \Vert_{\XX(\HH_{0,[1,2]})}+ \Vert (x_{1})'_{(1)} -(x_{1})'_{(2)} \Vert_{\XX(S_{1})}\\
\les& \Vert {\underline{x}}_{(1)}-{\underline{x}}_{(2)} \Vert_{\mathcal{X}^+(\tilde{\HHb}_{[-\de,\de],1}) } + \Vert (x_{2})_{(1)}-(x_2)_{(2)} \Vert_{\mathcal{X}(S_{2})},
\end{aligned}\label{EQDifferenceEstimateIFT}
\end{align}
and moreover,
\begin{align*}
\begin{aligned}
&\left\vert \lrpar{\mathbf{E},\mathbf{P},\mathbf{L},\mathbf{G}}((x_{1})'_{(1)}) - \lrpar{\mathbf{E},\mathbf{P},\mathbf{L},\mathbf{G}}((x_{1})'_{(2)}) \right\vert \\
\les& \varep\lrpar{\Vert {\underline{x}}_{(1)}-{\underline{x}}_{(2)} \Vert_{\mathcal{X}^+(\tilde{\HHb}_{[-\de,\de],1}) } + \Vert (x_{2})_{(1)}-(x_2)_{(2)} \Vert_{\mathcal{X}(S_{2})}}.
\end{aligned}
\end{align*}
\end{enumerate}
\end{theorem}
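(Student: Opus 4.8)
The plan is to obtain Theorem~\ref{THMcharGluing10d} from the perturbative codimension-$10$ null gluing of \cite{ACR1,ACR2} (the version recalled in Theorem~\ref{THMintroPASTnull}), the only structural difference being that here the sphere perturbation and the ensuing local gauge transformation are applied to $S_1$ rather than to $S_2$. This is available precisely because $S_1$ is assumed to sit inside ingoing null data $\underline{x}$ on $\HHb_{[-\de,\de],1}$ solving the null structure equations: the ingoing hypersurface plays exactly the role that the ambient spacetime $\MM_2$ played for $S_2$ in \cite{ACR1,ACR2}, furnishing a family of nearby spheres $S_1'\subset\HHb_{[-\de,\de],1}$ among which the perturbation freedom is exercised. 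The first step is therefore to fix this dictionary and to verify that the hierarchy of weighted integral gluing conditions imposed on the characteristic seed along $\HH_{[1,2]}$ --- the core mechanism of \cite{ACR1,ACR2} --- is insensitive to whether the sphere-perturbation/gauge freedom is used at the left endpoint $S_1$ or the right endpoint $S_2$.

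For the existence statement I would run the construction of \cite{ACR1,ACR2} essentially verbatim: solve the linearized null gluing problem along $\HH_{[1,2]}$ transversally to the infinite-dimensional space of linearly conserved charges by prescribing the appropriate weighted integral conditions on the characteristic seed (conformal metric and null lapse); adjust all conserved charges except $(\Ef,\Pf,\Lf,\Gf)$ by linearized sphere perturbations and gauge transformations, now performed at $S_1'\subset\HHb_{[-\de,\de],1}$; and finally close the nonlinear problem by a contraction / implicit-function-theorem argument in $\XX(\HH_{[1,2]})$. The one genuinely delicate point is the loss of derivatives built into a sphere perturbation: parametrizing $S_1'$ inside $\HHb_{[-\de,\de],1}$ and computing the induced sphere data $x_1'$ costs a fixed number of transversal and tangential derivatives of the ingoing null data, which is exactly why \eqref{EQsmallnessAssumptionMainTheoremMAIN222} is posed in the higher-regularity norm $\XX^+(\tilde{\HHb}_{[-\de,\de],1})$ rather than in $\XX(\HHb_{[-\de,\de],1})$; I would track these counts to confirm that the solution lands in $\XX(\HH_{[1,2]})$ and the perturbed data in $\XX(S_1)$ with the stated regularity.

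The estimates of part~(2) would then be read off from the same fixed-point scheme. The universal bound \eqref{EQUniversalBound} is the standard a priori estimate for the nonlinear gluing solution in terms of the data. The perturbation estimate \eqref{EQChargeEstimatesMainTheorem0} holds because the sphere perturbation and gauge transformation relating $S_1$ and $S_1'$ are themselves of size $\les$ the right-hand side of \eqref{EQsmallnessAssumptionMainTheoremMAIN222}, and the change of $(\Ef,\Pf,\Lf,\Gf)$ under such a transformation is, from the explicit expressions in Definition~\ref{DEFchargesEPLG}, controlled by the perturbation size times $O(\varep)$ first derivatives of the data --- hence $\les \varep(\ldots)$. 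The almost conservation law \eqref{EQChargeEstimatesMainTheorem} is the quantitative form of the fact that $(\Ef,\Pf,\Lf,\Gf)$ are conserved along $\HH$ at the linear level: by the null transport equations for these charges derived in Appendix~\ref{AppDerivation}, the difference of the charges between $S_1$ and $S_2$ equals an integral over $\HH_{[1,2]}$ of purely quadratic expressions in the solution, which by \eqref{EQUniversalBound} and the product estimate of Lemma~\ref{LEMstandardTRACE} is $\les \varep(\ldots)$.

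Finally, the difference estimate of part~(3) follows from the fact that the construction realizes $(x,x_1')$ as the fixed point of a map that depends Lipschitz-continuously on the data $\lrpar{\underline{x},x_2}$ in the relevant norms; subtracting the two fixed-point problems and absorbing the contraction factor yields \eqref{EQDifferenceEstimateIFT}, and the charge difference estimate then follows by combining it with the Lipschitz dependence of $(\Ef,\Pf,\Lf,\Gf)$ on sphere data in $\XX(S_1)$. The main obstacle throughout is bookkeeping rather than conceptual: reconciling the derivative count of the sphere-perturbation step with the asymmetric norms $\XX$, $\XX^+$ (and later $\XX^{\mathrm{h.f.}}$) appearing in the hypotheses and conclusions, and keeping careful track of the powers of $\varep$ so that \eqref{EQChargeEstimatesMainTheorem0} and \eqref{EQChargeEstimatesMainTheorem} emerge with the advertised extra factor of $\varep$.
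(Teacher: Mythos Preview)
Your proposal is correct and matches the paper's own treatment. The paper does not give a standalone proof of Theorem~\ref{THMcharGluing10d}; it simply remarks that the result is a straightforward adaptation of the codimension-$10$ null gluing proved in \cite{ACR1,ACR2} (with the sphere perturbation applied at $S_1$ rather than $S_2$), and that the difference estimates follow directly from the implicit function theorem used there --- exactly the reduction and mechanism you outline.
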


\ni \emph{Remarks on Theorem \ref{THMcharGluing10d}.}
\begin{enumerate}
\item In \cite{ACR1,ACR2} the codimension-$10$ null gluing result is stated with perturbations applied to the sphere $S_2$ which is assumed to lie in an ingoing null hypersurface ${\HHb}_{[-\de,\de],2}$. The version stated above in Theorem \ref{THMcharGluing10d} in which the perturbations are applied to the sphere $S_1$ is proved by straight-forward adaption of the proof in \cite{ACR1,ACR2}.
\item The difference estimates of Theorem \ref{THMcharGluing10d} follow directly from the implicit function theorem that is used in \cite{ACR1,ACR2} to prove the codimension-$10$ null gluing.
\end{enumerate}

\ni As mentioned in Section \ref{SECSUBnullgluingsetup}, one can also consider null gluing along a bifurcate null hypersurface emanating from a spacelike $2$-sphere. The corresponding codimension-$10$ result proved in \cite{ACR1,ACR2} is the following.
\begin{theorem}[Bifurcate codimension-$10$ null gluing of \cite{ACR1,ACR2}, version 2]\label{THMbifurcateNULLcite} Let $\varep>0$ be a real number. Consider sphere data $x_1$ and $x_2$ on spheres $S_1$ and $S_2$, respectively, such that
\begin{align*}
\begin{aligned}
\Vert x_1- \mathfrak{m} \Vert_{\XX(S_1)} + \Vert x_2- \mathfrak{m} \Vert_{\XX(S_2)} \leq \varep.
\end{aligned}
\end{align*}
There exists a universal $\varep_0>0$ such that for $0<\varep<\varep_0$, there exists a solution $x$ to the null structure equations along the bifurcate null hypersurface $\HH\cup\HHb$ such that
\begin{align*}
\begin{aligned}
x\vert_{S_1} = x_1 \text{ on } S_1, \text{ and } \,\, x \vert_{S_{2}} = x_{2} \text{ up to } (\Ef,\Pf,\Lf,\Gf) \text{ on } S_2,
\end{aligned}
\end{align*}
that is, if \emph{a posteriori} it holds that $\lrpar{\mathbf{E},\mathbf{P}, \mathbf{L}, \mathbf{G}}(x \vert_{S_{2}}) =\lrpar{\mathbf{E},\mathbf{P}, \mathbf{L}, \mathbf{G}}\lrpar{x_{2}}$ then $x \vert_{S_{2}} = x_{2}$. Moreover, the following estimate holds,
\begin{align*}
\begin{aligned}
\Vert x-\mathfrak{m} \Vert_{\XX(\HH)} + \Vert x-\mathfrak{m} \Vert_{\XX(\HHb)} \les \Vert x_1 -\mathfrak{m} \Vert_{\XX(S_1)} + \Vert x_2 - \mathfrak{m} \Vert_{\XX(S_2)}.
\end{aligned}
\end{align*}
Analogous difference estimates to \eqref{EQDifferenceEstimateIFT} hold. Moreover, it is possible to glue higher-order sphere data without any additional obstructions.
\end{theorem}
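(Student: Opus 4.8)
The plan is to prove Theorem~\ref{THMbifurcateNULLcite} by running the linearized analysis of \cite{ACR1,ACR2} directly on the bifurcate characteristic initial value problem, the new feature being that the extra sphere $S_{\mathrm{aux}}$ takes over the role played by the sphere perturbation in the single-hypersurface Theorem~\ref{THMcharGluing10d}. First I would fix the geometry: by the scale invariance of the null structure equations one may choose $S_{\mathrm{aux}}$, with all spheres at radii comparable to $1$, to be the bifurcation sphere of a bifurcate null hypersurface $\HHb\cup\HH$ with $S_1\subset\HHb$ (the ingoing leg, along which the area radius decreases from $S_{\mathrm{aux}}$ to $S_1$) and $S_2\subset\HH$ (the outgoing leg, along which the radius increases to $S_2$). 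The free data for this problem are $(\Omega,\gd_c)$ on $\HHb\cup\HH$ together with $(\gd,\trchi,\trchib,\eta)$ on $S_{\mathrm{aux}}$, and integrating the transport hierarchy \eqref{EQsequence} (and its ingoing analogue) from $S_{\mathrm{aux}}$ outward along $\HH$ and inward along $\HHb$ produces a solution $x$ of the null structure equations on $\HHb\cup\HH$. The gluing problem is thereby recast as the problem of choosing this seed so that $x|_{S_1}=x_1$ exactly and $x|_{S_2}=x_2$ up to $(\Ef,\Pf,\Lf,\Gf)$.

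I would then linearize around the Minkowski reference data $\underline{\mathfrak{m}},\mathfrak{m}$. As in \cite{ACR1,ACR2}, the linearized null structure equations along each leg organize into a triangular transport hierarchy whose solution operator has an explicit kernel of exactly conserved charges. The non-conserved part of the linearized sphere data at $S_1$ (resp.\ at $S_2$) depends on the seed on $\HHb$ (resp.\ on $\HH$) through a system of weighted $v$-integrals, and the ellipticity of the relevant Hodge operators on $\SSS^2$ — $\Divd$, $\Nd\widehat{\otimes}$, $\Curld$ and so on — lets these integral conditions be solved for the seed; this is the surjectivity statement established in \cite{ACR1,ACR2}. The conserved modes are where the auxiliary sphere enters: a charge conserved along $\HHb$ takes, in the linearization, the same value at $S_1$ as at $S_{\mathrm{aux}}$, and its value at $S_{\mathrm{aux}}$ can be prescribed through the $S_{\mathrm{aux}}$-seed together with the data of the \emph{transversal} leg $\HH$; symmetrically for the charges conserved along $\HH$ at $S_2$. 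After exhausting this freedom the only quantities left unmatched are those conserved along \emph{both} legs, hence equal at $S_1$ and at $S_2$ once fixed at $S_{\mathrm{aux}}$ — and these are precisely the $10$-dimensional space $(\Ef,\Pf,\Lf,\Gf)$. This yields both the codimension-$10$ statement and the fact that no larger obstruction appears, and since no sphere is perturbed we obtain the cleaner conclusion of Theorem~\ref{THMbifurcateNULLcite}.

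With the linearized bifurcate gluing map shown to be a bounded surjection onto the codimension-$10$ subspace, I would close the argument by an implicit-function-theorem / Picard iteration in the norms $\XX(\HH)$, $\XX(\HHb)$: because $\mathfrak{m},\underline{\mathfrak{m}}$ have vanishing charges, for $\varepsilon$-small $x_1,x_2$ the iterates remain in the small-data regime, the quadratic remainder of the null structure equations is estimated via the product inequality of Lemma~\ref{LEMstandardTRACE}, and the scheme converges to a solution $x$ with $\Vert x-\mathfrak{m}\Vert_{\XX(\HH)}+\Vert x-\mathfrak{m}\Vert_{\XX(\HHb)}\les\Vert x_1-\mathfrak{m}\Vert_{\XX(S_1)}+\Vert x_2-\mathfrak{m}\Vert_{\XX(S_2)}$; the difference estimates analogous to \eqref{EQDifferenceEstimateIFT} follow from the Lipschitz dependence of this fixed point on $(x_1,x_2)$. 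The gluing of higher-order sphere data without additional obstructions is obtained by extending the transport hierarchy to the corresponding higher $v$-jets and checking that every new conserved charge produced at higher order is adjustable by the higher-order components of the seed, so the obstruction space does not grow beyond $(\Ef,\Pf,\Lf,\Gf)$.

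The main obstacle — as already in \cite{ACR1,ACR2} — is the bookkeeping behind the second paragraph: one must track the precise weight hierarchy of the $v$-integrals, identify on which spherical-harmonic modes each Hodge operator in the hierarchy is invertible, and verify that after using up the freedom in the two seeds and in the $S_{\mathrm{aux}}$-data exactly the ten charges $(\Ef,\Pf,\Lf,\Gf)$, and nothing more, remain unmatched. Equivalently, the crux is the surjectivity of the linearized bifurcate gluing operator onto the codimension-$10$ subspace, in particular the non-obvious point that the charges conserved along $\HHb$ and those conserved along $\HH$ (other than the ten) can be adjusted \emph{independently}, so that matching $x_1$ at $S_1$ does not over-constrain the matching of $x_2$ at $S_2$.
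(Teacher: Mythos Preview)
This theorem is not proved in the present paper: it is stated in Section~\ref{SECcodim10gluingREF} as a result imported verbatim from \cite{ACR1,ACR2} (``The corresponding codimension-$10$ result proved in \cite{ACR1,ACR2} is the following''), and is used thereafter only as a black box in the iteration of Section~\ref{SECgluingEP} and in the spacelike application of Section~\ref{SECproofSPACELIKE}. There is therefore no proof in the paper to compare your proposal against.

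That said, your sketch accurately reconstructs the strategy of \cite{ACR1,ACR2} as summarized in the paper's introduction (Section~\ref{SECSUBnullgluingsetup} and the remarks after Theorem~\ref{THMbifurcateintro}): linearize at Minkowski, identify the infinite family of linearly conserved charges along each leg, use the characteristic seed on $\HH\cup\HHb$ to glue transversally to the conservation laws via weighted integral conditions, and use the freedom in the $S_{\mathrm{aux}}$-data plus the transversal leg to adjust all conserved charges except the ten $(\Ef,\Pf,\Lf,\Gf)$ conserved along both legs. Your identification of the ``main obstacle'' --- the surjectivity bookkeeping and the decoupling of the two legs' conserved charges --- is exactly the content of the linear analysis carried out in \cite{ACR1,ACR2}, and your closing via the implicit function theorem matches the remark after Theorem~\ref{THMcharGluing10d} that the difference estimates ``follow directly from the implicit function theorem that is used in \cite{ACR1,ACR2}''. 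So while there is nothing in this paper to grade your proposal against, it is a faithful outline of the cited proof.
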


\subsection{Precise statement of main theorem} \label{SECmainTHM}

The following is the main theorem concerning \emph{obstruction-free null gluing} of this paper.
\begin{theorem}[Main theorem: Obstruction-free null gluing, version 2] \label{THMmain1} 
Let $\de>0$ be a real number. Consider sphere data ${x}_{1}$ on ${S}_{1}$ contained in ingoing null data ${\underline{x}}$ on ${\HHb}_{[-\de,\de],1}$ (with ${S}_{0,1}={S}_1$) satisfying the null structure equations, and consider sphere data $x_{2}$ on $S_{2}$. Assume that for some real number $\varep>0$, 
\begin{align} 
\begin{aligned} 
\Vert {\underline{x}}-\mathfrak{m} \Vert_{\mathcal{X}^+(\tilde{\HHb}_{[-\de,\de],1}) } + \Vert x_{2}-\mathfrak{m} \Vert_{\mathcal{X}(S_{2})} \leq \varep.
\end{aligned} \label{EQsmallnessMain1001}
\end{align}
Using Definition \ref{DEFchargesEPLG} to calculate the charges $(\mathbf{E},\mathbf{P},\mathbf{L},\mathbf{G})$ of the sphere data $x_1$ and $x_2$, consider
\begin{align*}
\begin{aligned}
\lrpar{\triangle \mathbf{E},\triangle \mathbf{P},\triangle \mathbf{L},\triangle \mathbf{G}} := \lrpar{\mathbf{E},\mathbf{P},\mathbf{L},\mathbf{G}}(x_2) - \lrpar{\mathbf{E},\mathbf{P},\mathbf{L},\mathbf{G}}(x_1),
\end{aligned}
\end{align*}
Assume that for three real numbers $C_1,C_2,C_3>0$,
\begin{subequations}
\begin{align}
\triangle \mathbf{E} =& C_1\, \varep, \label{EQsmallnessMain21} \\ 
\vert \triangle \mathbf{L} \vert \leq& C_2 \varep^2,  \label{EQsmallnessMain1002} \\
\varep^{-1} (\triangle \mathbf{E}) >& C_3 \lrpar{\varep^{-1} \vert \triangle \mathbf{P}\vert + \varep^{-1} \vert \triangle \mathbf{G}\vert}. \label{EQsmallnessMain2}
\end{align}
\end{subequations}
There are real numbers $\tilde{\varep}>0$ and $\tilde{C}_3>0$ such that if $0<\varep<\tilde{\varep}$ and $C_3>\tilde{C}_3$, then there is a solution $x$ to the null structure equations along the outgoing null hypersurface $\HH$ (leading from a perturbed sphere $S_1'\subset {\HHb}_{[-\de,\de],1}$ to $S_2$) such that 
\begin{align*}
\begin{aligned}
x\vert_{S'_1} = x'_1, \,\, x\vert_{S_2} = x_2.
\end{aligned}
\end{align*}
Moreover, it holds that $x \in \XX^{\mathrm{h.f.}}(\HH)$ with
\begin{align}
\begin{aligned}
\Vert x-\mathfrak{m} \Vert_{\XX^{\mathrm{h.f.}}(\HH)} \les \Vert {\underline{x}}-\mathfrak{m} \Vert_{\mathcal{X}^+(\tilde{\HHb}_{[-\de,\de],1}) } + \Vert x_{2}-\mathfrak{m} \Vert_{\mathcal{X}(S_{2})}.
\end{aligned}\label{EQsolBoundMainTHM}
\end{align}
\end{theorem}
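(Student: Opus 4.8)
The plan is to prove Theorem \ref{THMmain1} by an iteration scheme that, at every step, combines two gluing operations on the residual data: the perturbative codimension-$10$ null gluing of Theorem \ref{THMcharGluing10d}, which uses a low-frequency (``linear'') characteristic seed and glues all of the sphere data at $S_2$ \emph{except} the ten charges $(\Ef,\Pf,\Lf,\Gf)$, and the charge-gluing construction of Theorem \ref{THMmain0}, which uses a high-frequency characteristic seed built from the building blocks $\mfW_0,\mfW_1,\mfW_2$ and the two frequency scales $\varep^{-1},\varep^{-1/2}$ of the model problem of Section \ref{SECtoymodelINTRO} and adjusts $(\Ef,\Pf,\Lf,\Gf)$ by prescribed amounts. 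The two can be run together because of a \emph{decoupling of scales}: by the transport structure of the hierarchy \eqref{EQsequence} and the high-frequency integration gain illustrated in Section \ref{SECtoymodelINTRO}, the high-frequency seed affects the non-charge sphere data at $S_2$ only through its quadratic self-interaction $|\Om\chih|^2$, and after integration only at one higher order in $\varep$ in all but the $(l\leq1)$-modes it is designed to move; while the low-frequency linear seed moves the charges at $S_2$ only through the almost-conservation law \eqref{EQChargeEstimatesMainTheorem}, that is, by a factor $\varep$ smaller than its own size. The spurious $(l\geq2)$-modes produced by $|\Om\chih|^2$ are absorbed, at the same step, into the low-frequency part $V$.

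First I would set up the base step. Applying Theorem \ref{THMcharGluing10d} to $(x_1,x_2)$ produces a perturbed sphere $S_1'$, sphere data $x_1'$, and a solution $x^{(0)}$ on $\HH$ with $x^{(0)}\vert_{S_1'}=x_1'$ and $x^{(0)}\vert_{S_2}=x_2$ up to the ten charges; by \eqref{EQChargeEstimatesMainTheorem0} and \eqref{EQChargeEstimatesMainTheorem} the residual charge defect $d_0:=(\Ef,\Pf,\Lf,\Gf)(x_2)-(\Ef,\Pf,\Lf,\Gf)(x^{(0)}\vert_{S_2})$ equals $(\triangle\Ef,\triangle\Pf,\triangle\Lf,\triangle\Gf)$ up to an error of size $\lesssim\varep^2$. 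The hypotheses \eqref{EQsmallnessMain21}--\eqref{EQsmallnessMain2} are precisely the admissibility conditions for $d_0$ to lie in the image of the charge-gluing map of Theorem \ref{THMmain0}: that construction adjusts $\Ef$ by the sign-definite quantity $\int_1^2 (|\Om\chih|^2)^{(0)}$, so the $\Ef$-adjustment has a definite size $\sim\varep$ and cannot be negative; it adjusts $\Pf^m$ and $\Gf^m$ by the $(l=1)$-moments $\int_1^2 (|\Om\chih|^2)^{(1m)}$ and companion $\mfW_0$-integrals, which, because $|\Om\chih|^2\geq0$, are necessarily dominated by the $\Ef$-adjustment, forcing $\triangle\Ef\gg|\triangle\Pf|+|\triangle\Gf|$ and hence $C_3>\tilde C_3$; and it adjusts $\Lf^m$ by the magnetic $(l=1)$-moments $\int_1^2(\mfW_1\cdot\Nd\mfW_2)^{(1m)}_H$, which are genuinely quadratic in the high-frequency amplitude and can be reached only at order $\varep^2$, whence \eqref{EQsmallnessMain1002}. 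Applying Theorem \ref{THMmain0} with target $d_0$, together with a low-frequency piece $V^{(1)}$ that simultaneously corrects the $(l\geq2)$-byproduct of $|\Om\chih|^2$, produces the first correction $x^{(1)}$, which glues the charges at $S_2$ and leaves a residual non-charge mismatch at $S_2$ and a residual charge defect $d_1$, both of size $\lesssim\varep^2$.

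Next I would iterate: at step $k\geq1$, feed the residual $S_2$-data and the defect $d_k$ (of size $\lesssim\varep^{k+1}$) back through the combined linear/codimension-$10$ correction (Theorem \ref{THMcharGluing10d} and its difference estimate \eqref{EQDifferenceEstimateIFT}) plus the high-frequency charge correction (Theorem \ref{THMmain0}), obtaining $x^{(k+1)}$ that reduces the residuals to size $\lesssim\varep^{k+2}$: the linear/codimension-$10$ step gains a factor $\varep$ on the charges via \eqref{EQChargeEstimatesMainTheorem} and matches the non-charge data exactly up to the new quadratic back-reaction, while the high-frequency step, applied to the now $\varep^{k+1}$-small charge defect, requires only $\varep^{(k+1)/2}$-amplitude terms and re-introduces non-charge errors only at order $\varep^{k+2}$. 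At each step one must check that the admissibility inequalities persist for the residual defect — this is where $\varep<\tilde\varep$ and $C_3>\tilde C_3$ enter, and where one verifies that the cumulative consumption of the $\triangle\Ef\sim\varep$ ``budget'' (each step raises $\Ef$ by a positive, geometrically decreasing amount) stays strictly below $\triangle\Ef$, so the target $\Ef$ is always approached from below and the monotone $\Ef$-adjustment never has to run backwards. The partial sums $\sum_{j\leq k}(x^{(j)}-\mathfrak{m})$ then converge geometrically in $\XX^{\mathrm{h.f.}}(\HH)$, and so do the perturbations of $S_1$; the limit $x$ solves the null structure equations (which are preserved under the $\XX^{\mathrm{h.f.}}$-limit), satisfies $x\vert_{S_1'}=x_1'$ and $x\vert_{S_2}=x_2$ exactly, and \eqref{EQsolBoundMainTHM} follows by summing the geometric bound on the corrections.

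The main obstacle is exactly this bookkeeping: showing that the alternation is a genuine contraction while the ten admissibility constraints — in particular the one-sided, budgeted nature of the $\Ef$-adjustment and the $C_3$-domination of $\triangle\Ef$ over $\triangle\Pf,\triangle\Gf$ — are preserved along the iteration. Two ingredients make it work and form the technical heart of the paper: first, the quantitative high-frequency estimates, namely that integrating the hierarchy \eqref{EQsequence} against the ans\"atze of the form \eqref{EQnonlinearomchihansatz2} with frequencies $\varep^{-1},\varep^{-1/2}$ produces solutions that are controlled in $\XX^{\mathrm{h.f.}}$ and whose interaction with the low-frequency linear ansatz is always one order of $\varep$ better than naively expected (Section \ref{SECconstructionSolution}); and second, the solvability at every step of the reduced finite-dimensional moment problems for the $(\Ef,\Pf,\Lf,\Gf)$-charges, using only a fixed finite family of spherical harmonics attached to $\mfW_0,\mfW_1,\mfW_2$ and respecting the positivity of $|\Om\chih|^2$ (the abstract argument sketched after \eqref{EQintegralsquarelone}, made quantitative in Section \ref{SECproofW}).
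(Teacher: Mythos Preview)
Your overall strategy --- iterate between the codimension-$10$ gluing (Theorem \ref{THMcharGluing10d}) and the charge gluing (Theorem \ref{THMmain0}) --- is correct and matches the paper, and your explanation of why the hypotheses \eqref{EQsmallnessMain21}--\eqref{EQsmallnessMain2} arise from the structure of Theorem \ref{THMmain0} is right. But the specific iteration you propose has a gap. You envision at step $k$ correcting a \emph{residual} charge defect $d_k$ of size $\lesssim\varep^{k+1}$ with high-frequency terms of amplitude $\varep^{(k+1)/2}$. Theorem \ref{THMmain0} cannot be applied to arbitrarily small charge targets: the admissibility condition \eqref{EQsmallnessMain021} requires $\triangle\Ef_0=C_1\varep$ with $C_1$ bounded below (the same $\varep$ as the background smallness \eqref{EQsmallnessCONDEPLGnullgluing1}), and the difference estimate \eqref{EQdiffEstimate2mainthm2statement} carries a constant depending on $\min(C_1,C_1')$ which degenerates as the $\Ef$-component of the target shrinks. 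Your ``budget'' argument --- arranging that $\Ef$ is always approached from below so the monotone adjustment never has to run backwards --- is also fragile: after the codimension-$10$ step perturbs $S_1$ there is no a priori sign on the resulting charge perturbation \eqref{EQChargeEstimatesMainTheorem0}, only a size bound. And ``the partial sums $\sum_{j\leq k}(x^{(j)}-\mathfrak{m})$ converge'' is not the right framing: the null structure equations are nonlinear, and sums of solutions are not solutions.

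The paper's iteration (Section \ref{SECgluingEP}) avoids all of this by never shrinking the charge target. At every step $i$ it applies Theorem \ref{THMmain0} with the \emph{same} fixed charge target $(\Ef,\Pf,\Lf,\Gf)(x_2)$, so that $\triangle\Ef_{(i)}=\Ef(x_2)-\Ef((x_1)'_{(i)})$ stays within $O(\varep^2)$ of the original $\triangle\Ef\sim\varep$ throughout (see \eqref{EQexpansionProblemsetupproofmainthm99099099}--\eqref{EQchargeprescribdiffestim}) and the admissibility conditions persist with only slightly perturbed constants \eqref{EQsmallnessMain1002VARIED}. What changes from step to step is the $S_2$-target for the codimension-$10$ gluing: one feeds back the discrepancy $y_{(i)}:=(x_{(i)+W_{(i)}}-x_{(i)})\vert_{S_2}$ as an offset, targeting $x_2-y_{(i-1)}$ in \eqref{EQbdrycondith}. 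The resulting map $y\mapsto(x_{+W}-x)\vert_{S_2}$ is a contraction in $\XX(S_2)$ with factor $\lesssim\varep^{1/4}$ (combining \eqref{EQiterationXestimate1}, \eqref{EQiterationXestimate1CHARGE14}, \eqref{EQstepipart2estim1}), and its fixed point $y_{(\infty)}$ yields the solution $x_{(\infty)+W_{(\infty)}}$; each iterate is a full solution and no summation of solutions is taken.
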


\ni \emph{Remarks on Theorem \ref{THMmain1}.}
\begin{enumerate}
\item By rescaling, the smallness assumption \eqref{EQsmallnessMain1002} for a constant $C_2>0$ is consistent with the decay rates of \emph{asymptotic flatness} under the assumption that the angular momentum $\mathbf{L}_{\mathrm{ADM}}$ is well-defined; see also Definition \ref{DEFasymptoticFlatness}. We remark that the proof of Theorem \ref{THMmain1} goes through also for larger $\triangle \mathbf{L}$, say, for $\vert \triangle \mathbf{L} \vert \leq \varep^{-1/16} \varep^2$. This is generalization is, however, not pursued in this paper.

\item The bound \eqref{EQsmallnessMain1001} implies an upper bound on $\triangle \mathbf{E}$. Thus the assumption \eqref{EQsmallnessMain21} should be seen as a lower bound on $\triangle \mathbf{E}$. Similarly to \eqref{EQsmallnessMain1002}, the assumption \eqref{EQsmallnessMain21} can be significantly weakened by a straight-forward refinement our construction. However, this goes beyond the scope of this paper, and is thus postponed to a future work.
\item The constructed solution $x$ is large in the sense that $\vert\chih\vert\sim\varep^{1/2}$ and $\vert\a\vert\sim \varep^{-1/2}$ along $\HH$. Nevertheless, the smallness bound \eqref{EQsolBoundMainTHM} in the norm $\XX^{\mathrm{h.f.}}(\HH)$ is sufficient to apply the well-posedness results of Luk-Rodnianski \cite{LukChar,LukRod1} on the characteristic initial value problem for the Einstein equations.
\item Theorem \ref{THMmain1} admits a straight-forward analogue for \emph{bifurcate null gluing}, see Theorem \ref{THMbifurcateFREE} below.
\item The regularity of the constructed solution is limited only by the assumed regularity of the sphere data. In particular, as in our previous 
\cite{ACR1,ACR2}, the results in Theorem \ref{THMmain1} and the subsequent null and spacelike gluing statements below can be upgraded 
to the $C^k$ gluing for any $k\ge 2$. We note again however that even though the $C^k$ norms of the sphere data are prescribed to be small,
the resulting solution will not be small in the same norm. 
\end{enumerate}

\ni The proof of Theorem \ref{THMmain1}, given in Section \ref{SECgluingEP}, is based on the codimension-$10$ null gluing of \cite{ACR1,ACR2} together with the following novel result proved in Section \ref{SECproofW}.
\begin{theorem}[Null gluing of $\mathbf{E},\mathbf{P},\mathbf{L},\mathbf{G}$] \label{THMmain0} Let $\varep>0$ be a real number. Let $x$ be a given solution to the null structure equations along $\HH$ such that
\begin{align}
\begin{aligned}
\Vert x - \mfm \Vert_{\XX(\HH)} \leq \varep,
\end{aligned}\label{EQsmallnessCONDEPLGnullgluing1}
\end{align}
where $\mfm$ denotes the reference Minkowski data along $\HH$, and let $(\gd_x,\Om_x)$ denote its characteristic seed along $\HH$. Let moreover 
$$(\triangle\mathbf{E}_0,\triangle\mathbf{P}_0,\triangle\mathbf{L}_0,\triangle\mathbf{G}_0) \in \RRR^{10}$$ 
be a vector such that for real numbers $C_1,C_2,C_3>0$, 
\begin{subequations}
\begin{align}
\triangle \mathbf{E}_0 =& C_1\, \varep, \label{EQsmallnessMain021} \\ 
\vert \triangle \mathbf{L}_0 \vert \leq& C_2 \varep^2, \label{EQsmallnessMain021LSMALL} \\
\varep^{-1} (\triangle \mathbf{E}_0) >& C_3 \lrpar{\varep^{-1} \vert \triangle \mathbf{P}_0\vert + \varep^{-1} \vert \triangle \mathbf{G}_0\vert}. \label{EQsmallnessMain02}
\end{align}
\end{subequations}
There exist real numbers $\tilde{\varep}>0$ and $\tilde{C}_3>0$ such that if $0<\varep<\tilde{\varep}$ and $C_3>\tilde{C}_3$, then the following holds.
\begin{enumerate}
\item \textbf{Existence.} There is a smooth $2$-tensorfield $W$ along $\HH$, such that the characteristic seed along $\HH$
\begin{align}
\begin{aligned}
\tilde{\gd} := \gd_x + \varphi W, \,\, \Om := \Om_x,
\end{aligned}\label{EQansatzcharseedstatementmainthm}
\end{align}
where $\varphi$ is a standard cut-off function supported in $\HH$, see \eqref{EQcutoffFCTdef}, leads to a solution $x_{+W}$ of the null structure equations which satisfies
\begin{align}
\begin{aligned}
x_{+W}\vert_{S_1} = x\vert_{S_1}, \,\, (\triangle\mathbf{E},\triangle\mathbf{P},\triangle\mathbf{L},\triangle\mathbf{G})(x_{+W})=(\triangle\mathbf{E}_0,\triangle\mathbf{P}_0,\triangle\mathbf{L}_0,\triangle\mathbf{G}_0).
\end{aligned}\label{EQchargeADJUSTED}
\end{align}

\item \textbf{Estimates for the solution.} It holds that 
\begin{align}
\begin{aligned}
&\Vert x_{+W} -x \Vert_{\XX^{\mathrm{h.f.}}(\HH)} + \Vert x_{+W}- x\Vert_{\XX(S_2)} \\
\les& \, \varep \lrpar{\sqrt{\varep^{-1}\triangle\Ef_0} + \sqrt{\varep^{-1}\vert\triangle\Pf_0\vert} + \sqrt{\varep^{-1}\vert\triangle\Gf_0\vert}} \\
& + \varep^{5/4} \sqrt{\varep^{-2}\vert\triangle\Lf_0\vert} + \varep^{1/4} \Vert x - \mathfrak{m}\Vert_{\XX(\HH)}.
\end{aligned}\label{EQdiffEstimate1mainthm2statement}
\end{align}

\item \textbf{Difference estimates.} Given two solutions $x$ and $x'$ to the null structure equations on $\HH$ satisfying \eqref{EQsmallnessCONDEPLGnullgluing1} for sufficiently small $\varep>0$, and given two vectors 
\begin{align}
\begin{aligned}
(\triangle\mathbf{E}_0,\triangle\mathbf{P}_0,\triangle\mathbf{L}_0,\triangle\mathbf{G}_0), \,\, (\triangle\mathbf{E}_0',\triangle\mathbf{P}_0',\triangle\mathbf{L}_0',\triangle\mathbf{G}_0') \in \RRR^{10},
\end{aligned}\label{EQtwovectors1200099}
\end{align}
both satisfying \eqref{EQsmallnessMain021}-\eqref{EQsmallnessMain02}, consider the respectively constructed solutions $W$ and $W'$ to 
\begin{align*}
\begin{aligned}
(\triangle\mathbf{E},\triangle\mathbf{P},\triangle\mathbf{L},\triangle\mathbf{G})(x_{+W})=&(\triangle\mathbf{E}_0,\triangle\mathbf{P}_0,\triangle\mathbf{L}_0,\triangle\mathbf{G}_0), \\
(\triangle\mathbf{E},\triangle\mathbf{P},\triangle\mathbf{L},\triangle\mathbf{G})(x'_{+W'})=&(\triangle\mathbf{E}'_0,\triangle\mathbf{P}'_0,\triangle\mathbf{L}'_0,\triangle\mathbf{G}'_0).
\end{aligned}
\end{align*}
Then it holds that
\begin{align}
\begin{aligned}
&\Vert (x_{+W} - x) - (x'_{+W'} - x')\Vert_{\XX(S_2)}+\Vert (x_{+W} - x) - (x'_{+W'} - x')\Vert_{\XX^{\mathrm{h.f.}}(\HH)} \\
\les& \, \varep \lrpar{\varep^{-1}\vert \triangle\Ef_0-\triangle\Ef_0'\vert + \varep^{-1}\vert\triangle\Pf_0-\triangle\Pf_0'\vert + \varep^{-1}\vert\triangle\Gf_0-\triangle\Gf_0'\vert} \\
&+\varep^{5/4}\lrpar{ \varep^{-2}\vert\triangle\Lf_0-\triangle\Lf_0'\vert} + \varep^{1/4} \Vert x - x' \Vert_{\XX(\HH)},
\end{aligned}\label{EQdiffEstimate2mainthm2statement}
\end{align}
where the constant in \eqref{EQdiffEstimate2mainthm2statement} depends on $\mathrm{min}(C_1,C_1')$ calculated from the two vectors \eqref{EQtwovectors1200099}.
\end{enumerate}
\end{theorem}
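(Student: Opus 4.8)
The plan is to realize the desired charge adjustment $(\triangle\mathbf{E},\triangle\mathbf{P},\triangle\mathbf{L},\triangle\mathbf{G}) = (\triangle\mathbf{E}_0,\triangle\mathbf{P}_0,\triangle\mathbf{L}_0,\triangle\mathbf{G}_0)$ by a carefully chosen perturbation $\tilde\gd = \gd_x + \varphi W$ of the conformal-class part of the characteristic seed, keeping $\Om=\Om_x$ fixed. Since $W$ is supported away from $S_1$ by the cutoff $\varphi$, the identity $x_{+W}\vert_{S_1}=x\vert_{S_1}$ is automatic; the content is the second equation in \eqref{EQchargeADJUSTED}. The first step is to derive, from the null transport equations for $\mathfrak{m}$ and $\mfb$ (Appendix \ref{AppDerivation}), the representation formula
$$
(\triangle\mathbf{E},\triangle\mathbf{P},\triangle\mathbf{L},\triangle\mathbf{G})(x_{+W}) - (\triangle\mathbf{E},\triangle\mathbf{P},\triangle\mathbf{L},\triangle\mathbf{G})(x) = \mathcal{F}(W),
$$
where $\mathcal{F}$ is an explicit functional: the $\mathbf{E},\mathbf{P}$ components involve an integral over $\HH$ of terms that, at leading order in the high-frequency ansatz, reduce to $\int_1^2 (\vert\Om\chih\vert^2_\gd)^{(lm)}$-type quantities (the $\Divd\chih$ contributions being either absorbed by the codimension-$10$ linear theory or suppressed by high-frequency cancellation as in the model problem), while the $\mathbf{L},\mathbf{G}$ components involve the magnetic projections $\int_1^2(\Divd_\gd\chih)^{(1m)}_H$ plus quadratic terms of the form $\int_1^2(\mfW_1\cdot\Nd\mfW_2)^{(1m)}_H$. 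The conformal invariance \eqref{EQshearINVARIANCE} is used to express $\vert\Om\chih\vert^2_\gd$ directly in terms of $W$ via the characteristic seed.

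The second step is to make the explicit three-frequency ansatz
$$
W = \varep\, V + \varep^{1/2}\sin\!\lrpar{\tfrac{v}{\varep}}\mfW_0 + \varep^{3/4}\lrpar{\sin\!\lrpar{\tfrac{v}{\sqrt\varep}}\mfW_1 + \cos\!\lrpar{\tfrac{v}{\sqrt\varep}}\mfW_2},
$$
with $V$ a low-frequency tensor chosen (via the codimension-$10$ theory, Theorem \ref{THMcharGluing10d}) to handle the $l\geq2$ modes and the bulk consistency, and $\mfW_0=f_{30}\psi^{30}+\sum_m f_{2m}\psi^{2m}$, $\mfW_1=\tilde f_{30}\psi^{30}$, $\mfW_2=\sum_m\tilde f^{2m}\psi^{2m}$ with fixed tensor spherical harmonics $\psi^{30},\psi^{2m}$ from Appendix \ref{APPconstructionW} and low-frequency scalar profiles $f,\tilde f$ in $v$. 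Plugging this into $\mathcal{F}$, the high-frequency integration lemma (integration of $\sin(v/\varep)$, $\sin(v/\sqrt\varep)$ and cross-terms of distinct frequencies gains a power of $\varep$ or $\sqrt\varep$) must be invoked repeatedly to show: (i) the $\mathbf{P}$ and $\mathbf{G}$ equations are driven at order $\varep$ by $\int_1^2(\vert\mfW_0\vert^2_{r^2\gac})^{(1m)}$; (ii) the $\mathbf{L}$ and (the remaining part of) $\mathbf{G}$ equations are driven at order $\varep^2$ by $\int_1^2(\mfW_1\cdot\Nd\mfW_2)^{(1m)}_H$; (iii) the $\mathbf{E}$ equation is driven at order $\varep$ by $\int_1^2(\vert\mfW_0\vert^2_{r^2\gac})^{(0)}$ plus $C_1\varep$ worth of low-frequency contribution from $V$; and (iv) all residual interactions among the three frequencies and with $V$ are lower order. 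The surjectivity statement then reduces, after rescaling, to the algebraic problem of prescribing the $11$ numbers $\lrpar{(\vert\mfW_0\vert^2_{r^2\gac})^{(0)}, (\vert\mfW_0\vert^2_{r^2\gac})^{(1m)}, (\mfW_1\cdot\Nd\mfW_2)^{(1m)}_H}$ with $(\vert\mfW_0\vert^2)^{(0)}$ bounded below and the others small — solved by the explicit harmonic construction (or the abstract bump-function/Riemann-sum argument) sketched in Section \ref{SECtoymodelINTRO}, which is where the hypothesis $C_3>\tilde C_3$ enters: the ratio of the $(1m)$-data to the $(0)$-data must be small for the positivity $\vert\mfW_0\vert^2\geq c>0$ to survive.

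The third step is the quantitative bookkeeping: establishing \eqref{EQdiffEstimate1mainthm2statement}. Here one tracks the norms of each piece of $W$ — $\Vert\varep V\Vert\sim\varep$, $\Vert\varep^{1/2}\sin(v/\varep)\mfW_0\Vert_{\XX^{\mathrm{h.f.}}}\sim\varep^{1/2}\cdot(\text{amplitude of }\mfW_0)$ with the amplitude of $\mfW_0$ controlled by $\sqrt{\varep^{-1}\triangle\Ef_0}+\sqrt{\varep^{-1}\vert\triangle\Pf_0\vert}+\sqrt{\varep^{-1}\vert\triangle\Gf_0\vert}$ (since $\vert\mfW_0\vert^2$ scales linearly with these charges), and $\Vert\varep^{3/4}(\cdots)\Vert_{\XX^{\mathrm{h.f.}}}\sim\varep^{3/4}\cdot(\text{amplitude of }\mfW_1,\mfW_2)$ with amplitudes controlled by $\sqrt{\varep^{-2}\vert\triangle\Lf_0\vert}$ — and then propagates these through the hierarchical null transport equations \eqref{EQsequence} using the construction and bounds of Section \ref{SECconstructionSolution}; the key structural point, already emphasized in the introduction, is that the high-frequency norm $\XX^{\mathrm{h.f.}}$ only asks for $L^\infty_v H^m$ control (not $D$-derivative control) on most components, so the $\varepsilon^{-1}$-frequency oscillations do not cost derivatives, and the $\XX$-norm of the background $x$ enters only at the improved order $\varep^{1/4}\Vert x-\mathfrak{m}\Vert_{\XX(\HH)}$. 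The difference estimate \eqref{EQdiffEstimate2mainthm2statement} follows by linearizing the whole construction — the map from charge data to $W$ is affine-to-leading-order in the relevant variables (the square root being the only nonlinearity, handled by $\vert\sqrt a-\sqrt b\vert\lesssim\vert a-b\vert/\sqrt{\min}$, which is where the dependence on $\min(C_1,C_1')$ comes from) — and repeating the norm bookkeeping for differences.

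The main obstacle I expect is controlling the \emph{back-reaction of the large-amplitude high-frequency terms onto the codimension-$10$ (linear) part of the construction}: the terms $\varep^{1/2}\sin(v/\varep)\mfW_0$ and $\varep^{3/4}(\cdots)$ have amplitude much larger than $\varep$, so naively they could corrupt the gluing of the $l\geq2$ modes and of $\mathbf{E},\mathbf{P}$ at order $\varep$ that was arranged via $V$. The resolution — which must be made quantitative — is the high-frequency improvement principle: because these terms oscillate at frequency $1/\varep$ (resp. $1/\sqrt\varep$), their \emph{linear} contributions $\int_1^2\Divd_\gd(\text{osc})$ to $B\vert_{S_2}-B\vert_{S_1}$ gain a full power of $\varep$ (resp. $\sqrt\varep$) upon integration against the smooth transport kernel, bringing them back below the threshold; only their \emph{quadratic} self-interactions survive, and those produce exactly the low-frequency driving terms $\vert\mfW_0\vert^2$, $\mfW_1\cdot\Nd\mfW_2$ that we \emph{want}. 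Making this robust through the full nonlinear hierarchy — verifying that no resonant combination of the two frequencies $1/\varep,1/\sqrt\varep$ and their harmonics produces an unexpected low-frequency term at a dangerous order, and that the iteration (linear step plus two nonlinear steps) from Section \ref{SECgluingEP} converges — is the technical heart of the argument.
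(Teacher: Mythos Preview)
Your plan captures the key mechanism --- the two-scale high-frequency ansatz, with $\mfW_0$ driving $\Ef,\Pf,\Gf$ at order $\varep$ via $|\Om\chih|^2$ and $\mfW_1,\mfW_2$ driving $\Lf$ at order $\varep^2$ via the cross term $\mfW_1\cdot\Nd\mfW_2$ --- and correctly identifies the $\sqrt{\,\cdot\,}$-structure behind the difference estimates and the role of $C_3$. But two points need correcting.

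First, you conflate Theorem \ref{THMmain0} with Theorem \ref{THMmain1}. The present theorem takes $x$ as \emph{given} and perturbs the seed by \emph{only} high-frequency terms; there is no low-frequency $V$ in the ansatz (compare \eqref{EQWansatz}) and no task of ``handling the $l\geq 2$ modes'' or invoking Theorem \ref{THMcharGluing10d} --- that belongs to the outer iteration in Section \ref{SECgluingEP}, not here. Your formula for $W$ is in fact the expression for $\Om\chih$, not the metric perturbation (they differ by a $v$-integration, whence the amplitudes $\varep^{3/2},\varep^{5/4}$ in \eqref{EQWansatz} versus your $\varep^{1/2},\varep^{3/4}$). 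Also, $\Gf$ is glued entirely at order $\varep$ via $\mfW_0$, using a different $r$-weight than $\Pf$ (see \eqref{EQdefGW}); this is why $f_{2m}$ in \eqref{EQchoiceofCoeffFunction1} carries \emph{two} independent profiles $\xi_1,\xi_2$. Your list of ``$11$ numbers'' contains only $7$ and misses that $\Pf^m$ and $\Gf^m$ must be prescribed independently.

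Second, and this is the genuine gap: you give no mechanism for passing from leading-order surjectivity of $W\mapsto(\Ef(W),\Pf(W),\Lf(W),\Gf(W))$ to actually solving \eqref{EQchargeADJUSTED}. The error terms $\OO(\varep^{5/4})$ and $\OO_x(\varep^2)+\OO(\varep^{9/4})$ in the charge transport formulas depend on $W$ (and on $x$) nonlinearly and are not directly invertible; treating the algebraic leading-order problem as the endpoint is not sufficient. The paper closes this with a degree argument (Section \ref{SECdegreeargument1}): one builds a homotopy $\FF(W,t)$ between the full error map and its leading part, shows the leading part is a bijection onto an explicit ellipsoid containing the target, and verifies that no zeros lie on the boundary for any $t\in[0,1]$. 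Some such device --- degree, or a carefully set-up implicit function theorem in the coefficient variables of \eqref{EQchoiceofCoeffFunction1}--\eqref{EQchoiceofCoeffFunction3} --- is required and is absent from your plan.
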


\ni \emph{Remarks on Theorem \ref{THMmain0}.}
\begin{itemize}
\item From \eqref{EQsmallnessCONDEPLGnullgluing1} and \eqref{EQdiffEstimate1mainthm2statement} we deduce in particular that $x_{+W}$ is close to Minkowski in the high-frequency norm $\XX^{\mathrm{h.f.}}$,
\begin{align*}
\begin{aligned}
\Vert x_{+W} -\mathfrak{m} \Vert_{\XX^{\mathrm{h.f.}}(\HH)}
\les \varep.
\end{aligned}
\end{align*}
\item Recall from Theorem \ref{THMcharGluing10d} that in the codimension-$10$ null gluing of \cite{ACR1,ACR2}, the charges $\mathbf{E},\mathbf{P},\mathbf{L},\mathbf{G}$ are transported (almost conserved) up to the error of order $\OO(\varep^2)$ along $\HH$ (see \eqref{EQChargeEstimatesMainTheorem}) and are thus not glued. In Theorem \ref{THMmain0}, the charges $\mathbf{E},\mathbf{P}, \mathbf{L}$ are adjusted by size $\varep$ along $\HH$, and $\mathbf{L}$ by size $\varep^2$. This is achieved by choosing the $2$-tensorfield $W$ in \eqref{EQansatzcharseedstatementmainthm} to be \emph{large} and \emph{high-frequency}. The constructed solution $x_{+W}$ to the null structure equations has $\vert\chih\vert\sim\varep^{1/2}$ and $\vert\a\vert\sim \varep^{-1/2}$ along $\HH$. In particular, in this gluing procedure the size ($\varep^{1/2}$) of the characteristic seed $\Om\chih$ is much larger than the size ($\varep$) of the characteristic seeds used in the codimension-$10$ null gluing.

\end{itemize}

\ni The proof of Theorem \ref{THMmain0} is split into two parts.
\begin{enumerate}
\item In Section \ref{SECconstructionSolution} we make a large, high-frequency ansatz for $W$, and construct and control from the resulting characteristic seed $\tilde{\gd}:=\gd_x+\varphi W$ and $\Om:= \Om_x$ (see \eqref{EQansatzcharseedstatementmainthm}) the corresponding solution to the null structure equations along $\HH$. Moreover, we prove estimates for the constructed high-frequency solutions. 
\item In Section \ref{SECproofW} we show that the high-frequency ansatz for $W$ allows to adjust the charges $\mathbf{E},\mathbf{P},\mathbf{L},\mathbf{G}$ at $S_2$, and prove the estimates \eqref{EQdiffEstimate1mainthm2statement} and \eqref{EQdiffEstimate2mainthm2statement}.
\end{enumerate}

\subsection{Bifurcate obstruction-free null gluing} \label{SECbifurcateObstructionFree}
\ni The main result of this paper, Theorem \ref{THMmain1}, is phrased and proved within the setup of perturbative null gluing, that is, null gluing along one outgoing null hypersurface. In complete analogy, it is possible to combine the codimension-$10$ \emph{bifurcate} null gluing of \cite{ACR1,ACR2} along a bifurcate null hypersurface $\HH\cup\HHb$ emanating from a spacelike $2$-sphere (see Theorem \ref{THMbifurcateNULLcite} above) with the novel high-frequency approach of this paper applied along $\HH$. The result is the following; see also Figure \ref{FIGbifurcateLATER} below.

\begin{theorem}[Bifurcate obstruction-free null gluing]  \label{THMbifurcateFREE} Consider sphere data $x_1$ and $x_2$ on sphere $S_1$ and $S_2$ such that for a real number $\varep>0$,
\begin{align*}
\begin{aligned}
\Vert x_1 - \mathfrak{m} \Vert_{\XX(S_1)} + \Vert x_2 - \mathfrak{m} \Vert_{\XX(S_2)} \leq \varep.
\end{aligned}
\end{align*}
Consider the vector
\begin{align*}
\begin{aligned}
\lrpar{\triangle \mathbf{E},\triangle \mathbf{P},\triangle \mathbf{L},\triangle \mathbf{G}} := \lrpar{\mathbf{E},\mathbf{P},\mathbf{L},\mathbf{G}}(x_2) - \lrpar{\mathbf{E},\mathbf{P},\mathbf{L},\mathbf{G}}(x_1),
\end{aligned}
\end{align*}
and assume that for three real numbers $C_1,C_2,C_3>0$,
\begin{subequations}
\begin{align}
\triangle \mathbf{E} =& C_1\, \varep, \label{EQsmallnessMain21BIF} \\ 
\vert \triangle \mathbf{L} \vert \leq& C_2 \varep^2,  \label{EQsmallnessMain1002BIF} \\
\varep^{-1} (\triangle \mathbf{E}) >& C_3 \lrpar{\varep^{-1} \vert \triangle \mathbf{P}\vert + \varep^{-1} \vert \triangle \mathbf{G}+\Pf(x_1)\vert}. \label{EQsmallnessMain2BIF}
\end{align}
\end{subequations}
There are real numbers $\tilde{\varep}>0$ and $\tilde{C}_3>0$ such that if $0<\varep<\tilde{\varep}$ and $C_3>\tilde{C}_3$, then there exists a solution $x$ to the null structure equations along the null hypersurface $\HH\cup\HHb$ such that 
\begin{align*}
\begin{aligned}
x\vert_{S_1} = x_1, \,\, x\vert_{S_2} = x_2.
\end{aligned}
\end{align*}
Moreover, it holds that $x\vert_{\HH} \in \XX^{\mathrm{h.f.}}(\HH)$ and $x\vert_{\HHb} \in \XX(\HHb)$ with
\begin{align*}
\begin{aligned}
\Vert x-\mathfrak{m} \Vert_{\XX^{\mathrm{h.f.}}(\HH)}+ \Vert x-\mathfrak{m} \Vert_{\XX(\HHb)} \les \Vert x_1-\mathfrak{m} \Vert_{\mathcal{X}(S_1) } + \Vert x_{2}-\mathfrak{m} \Vert_{\mathcal{X}(S_{2})}.
\end{aligned}
\end{align*}

\end{theorem}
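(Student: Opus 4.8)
The plan is to reproduce, \emph{mutatis mutandis}, the proof of Theorem \ref{THMmain1} carried out in Section \ref{SECgluingEP}, replacing the perturbative codimension-$10$ gluing of Theorem \ref{THMcharGluing10d} by its bifurcate counterpart Theorem \ref{THMbifurcateNULLcite}. The iteration rests on two building blocks: the bifurcate codimension-$10$ null gluing of \cite{ACR1,ACR2} along $\HHb\cup\HH$, which---using the freedom to prescribe sphere data on the auxiliary bifurcation sphere $S_{\mathrm{aux}}$---glues $x_1$ on $S_1\subset\HHb$ to $x_2$ on $S_2\subset\HH$ up to the $10$-dimensional space spanned by $(\Ef,\Pf,\Lf,\Gf)$ \emph{without} perturbing $S_1$ or $S_2$; and the high-frequency charge gluing of Theorem \ref{THMmain0} applied along the outgoing leg $\HH$, with the base sphere ``$S_1$'' in that theorem now playing the role of $S_{\mathrm{aux}}$. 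The key structural observation is that adding the high-frequency correction $\varphi W$ to the characteristic seed along $\HH$ leaves the solution on $S_{\mathrm{aux}}$ unchanged; hence the entire $\HHb$-leg is untouched and the identity $x\vert_{S_1}=x_1$ is preserved at every step.

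First I would run the bifurcate codimension-$10$ gluing to obtain $x^{(0)}$ on $\HHb\cup\HH$ with $x^{(0)}\vert_{S_1}=x_1$ and $x^{(0)}\vert_{S_2}=x_2$ up to the charges, and compute the residual $10$-vector at $S_2$. By the almost-conservation and perturbation estimates of Theorem \ref{THMbifurcateNULLcite} (analogues of \eqref{EQChargeEstimatesMainTheorem0}--\eqref{EQChargeEstimatesMainTheorem}), this residual coincides with $(\triangle\Ef,\triangle\Pf,\triangle\Lf,\triangle\Gf)$ up to an $\OO(\varep^2)$ error, with one correction: the linearly conserved charges along the ingoing leg $\HHb$ differ from those along $\HH$---in particular $\Gf$ carries an $R$-dependent normalization---so transporting the $\Gf$-charge of $x_1$ from $S_1$ across $S_{\mathrm{aux}}$ to $S_2$ incurs a drift equal at leading order to $\Pf(x_1)$, which is exactly why the hypothesis \eqref{EQsmallnessMain2BIF} is stated with $\triangle\Gf+\Pf(x_1)$ in place of $\triangle\Gf$. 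With this identification the residual satisfies the cone and smallness conditions \eqref{EQsmallnessMain021}--\eqref{EQsmallnessMain02} needed to invoke Theorem \ref{THMmain0} along $\HH$; choosing $W=W^{(0)}$ accordingly adjusts $(\Ef,\Pf,\Lf,\Gf)$ at $S_2$ by exactly the residual, at the cost of an $\OO(\varep)$ change of the non-charge sphere data at $S_2$ controlled by \eqref{EQdiffEstimate1mainthm2statement}.

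Next I would iterate: at step $n$ the charge gluing yields a solution matching the $(\Ef,\Pf,\Lf,\Gf)$ of $x_2$ at $S_2$ but with a residual non-charge error $q_n$ there; I then re-run the bifurcate codimension-$10$ gluing with target data $(x_1,x_2-q_n)$ to eliminate $q_n$, re-introducing only an $\OO(\varep)\cdot\vert q_n\vert$-scale charge mismatch (by almost-conservation), and re-apply Theorem \ref{THMmain0} to absorb that mismatch---the latter adjustment now being of size $\OO(\varep^2)$ rather than $\OO(\varep)$. The difference estimates \eqref{EQDifferenceEstimateIFT} and \eqref{EQdiffEstimate2mainthm2statement} show that for $\varep$ small and $C_3$ large the increments form a summable sequence; passing to the limit produces a solution $x$ of the null structure equations on $\HHb\cup\HH$ with $x\vert_{S_1}=x_1$, $x\vert_{S_2}=x_2$, $x\vert_{\HH}\in\XX^{\mathrm{h.f.}}(\HH)$, $x\vert_{\HHb}\in\XX(\HHb)$, and the asserted bound by summation.

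The main obstacle is the convergence of the iteration. The delicate point---identical to the one in Section \ref{SECgluingEP}---is that although the high-frequency correction is $\OO(\varep)$-small in $\XX^{\mathrm{h.f.}}(\HH)$, the resulting solution has $\vert\chih\vert\sim\varep^{1/2}$ and $\vert\a\vert\sim\varep^{-1/2}$ and is therefore \emph{not} small in the ordinary norm $\XX(\HH)$ required by the hypothesis \eqref{EQsmallnessCONDEPLGnullgluing1} of Theorem \ref{THMmain0}; hence the charge-glued solution cannot be fed back into Theorem \ref{THMmain0} directly. The iteration must instead be organized so that the codimension-$10$ gluing always acts on a genuinely $\XX$-small \emph{low-frequency} background while the high-frequency correction is rebuilt at each step from the \emph{same} two $v$-frequencies $\varep^{-1/2}$ and $\varep^{-1}$ attached to finitely many fixed spherical harmonics---exactly as in the perturbative proof. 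Beyond this, the only genuinely new bookkeeping is the tracking of the $\Pf(x_1)$-drift of $\Gf$ across the bifurcation sphere and the verification that it does not spoil the cone condition; everything else transcribes Section \ref{SECgluingEP} verbatim, with $\HH$ replaced by $\HHb\cup\HH$ and Theorem \ref{THMcharGluing10d} replaced by Theorem \ref{THMbifurcateNULLcite}.
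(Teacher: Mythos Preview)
Your proposal is correct and matches the paper's own approach: the paper does not give a detailed proof but states in the remarks following Theorem \ref{THMbifurcateFREE} that the argument is analogous to that of Theorem \ref{THMmain1}, with Theorem \ref{THMcharGluing10d} replaced by Theorem \ref{THMbifurcateNULLcite} and the high-frequency gluing of Theorem \ref{THMmain0} applied along $\HH$. Your identification of the $\Pf(x_1)$-drift in $\Gf$ is also on target; the paper makes this precise by noting that $\Gf+2r\Pf$ (rather than $\Gf$ alone) is linearly conserved along $\HHb$, so that $\Gf(x_{\mathrm{aux}})=\Gf(x_1)-\Pf(x_1)+\OO(\varep^2)$ and hence the relevant charge difference along $\HH$ is $\Gf(x_2)-\Gf(x_{\mathrm{aux}})=\triangle\Gf+\Pf(x_1)+\OO(\varep^2)$, which is exactly the combination appearing in \eqref{EQsmallnessMain2BIF}.
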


\begin{figure}[H]
	\begin{center}
		\includegraphics[width=9.5cm]{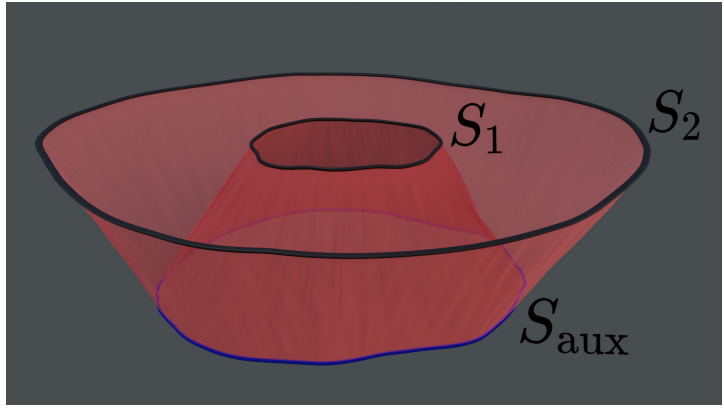} 
		\vspace{0.4cm} 
		\caption{Illustration of obstruction-free null gluing along a bifurcate null hypersurface (red) emanating from an auxiliary sphere $S_{\mathrm{aux}}$ (blue).} \label{FIGbifurcateLATER}
	\end{center}
\end{figure}

\ni \emph{Remarks on Theorem \ref{THMbifurcateFREE}.}
\begin{enumerate}
\item The proof of Theorem \ref{THMbifurcateFREE} is based, analogous to Theorem \ref{THMmain1}, on an iteration scheme between the bifurcate codimension-$10$ null gluing proved in \cite{ACR1,ACR2}, see Theorem \ref{THMbifurcateNULLcite}, and the high-frequency null gluing of $\Ef,\Pf,\Lf,\Gf$ of Theorem \ref{THMmain0}.
\item The difference in the conditions \eqref{EQsmallnessMain2} and \eqref{EQsmallnessMain2BIF} is explained as follows. The charge $\Gf$ on $S_{\mathrm{aux}}$ is determined from the charges $\mathbf{G}$ and $\mathbf{P}$ on $S_1$ as follows,
\begin{align}
\begin{aligned}
\mathbf{G}(x_{\mathrm{aux}}) = \mathbf{G}(x_1) - \mathbf{P}(x_1) + \OO(\varep^2).
\end{aligned}\label{EQchargeGchange}
\end{align}
This stems from the fact that $\mathbf{G}$ is \emph{not} linearly conserved \emph{along} $\HHb$ (but only along $\HH$). Namely, it is shown in \cite{ACR1} (see Lemma 6.3) that at the linear level, $\Gf +2 r \Pf$ is conserved along $\HHb$. Note however that the charges $\Ef,\Pf,\Lf$ \emph{are} linearly conserved along both $\HH$ and $\HHb$. Together with the setup that at the linear level, $r\vert_{S_1} = 1$ and $r\vert_{S_{\mathrm{aux}}}=3/2$, the above leads to \eqref{EQchargeGchange}. Consequently, at $S_{\mathrm{aux}}$ we thus calculate that the difference
\begin{align*}
\begin{aligned}
\Gf(x_2)-\Gf(x_{\mathrm{aux}}) = \Gf(x_2)-\Gf(x_1) + \mathbf{P}(x_1) + \OO(\varep^2),
\end{aligned}
\end{align*}
where the first three terms are precisely the quantity which appears in \eqref{EQsmallnessMain2BIF}.

\item Similar to the \emph{higher-order} bifurcate null gluing proved in \cite{ACR1,ACR2}, Theorem \ref{THMbifurcateFREE} generalizes to a \emph{higher-order} bifurcate null gluing result. We omit details and refer to Remark (3) after the bound \eqref{EQestimateDIFFx0x1} for the higher-order control of high-frequency solutions of the null structure equations.
\end{enumerate}

\subsection{Application to spacelike gluing} Before stating the main result of this section, see Corollary \ref{CORspacelike}, we shortly recall the setup and notions of spacelike initial data; for more details we refer to Section \ref{SECproofSPACELIKE} and also \cite{ACR3}.

Spacelike initial data for the Einstein equations consists of a triple $(\Sigma,g,k)$ where $\Sigma$ is a $3$-manifold, $g$ a Riemannian metric on $\Si$, and $k$ a symmetric $2$-tensor on $\Si$ satisfying the \emph{spacelike constraint equations}
\begin{align*}
\begin{aligned}
\mathrm{R}_{\mathrm{scal}}(g) = \vert k \vert_g^2 - (\tr_gk)^2, \,\, \mathrm{div}_g k = \mathrm{d}(\tr_g k),
\end{aligned}
\end{align*}
where $\mathrm{R}_{\mathrm{scal}}(g)$ denotes the scalar curvature of $g$, $\mathrm{d}$ the exterior derivative on $\Si$, and
\begin{align*}
\begin{aligned}
\vert k \vert_g^2 := g^{ab}g^{cd} k_{ac}k_{bd}, \,\, \tr_g k := g^{ij}k_{ij}, \,\, (\mathrm{div}_g k)_c := g^{ab}\nab_a k_{bc},
\end{aligned}
\end{align*}
where $\nab$ denotes the covariant derivative with respect to $g$.

The trivial spacelike initial data is the reference Minkowski $(\Si,g,k)=(\RRR^3,e,0)$ where $e$ denotes the Euclidean metric. This is the spacelike initial data induced on the hypersurface $\{t=0\}$ in standard coordinates $(t,x^1,x^2,x^3)$ on Minkowski spacetime $(\RRR^4,\mathbf{m}:= \mathrm{diag}(-1,1,1,1))$.  

Reference spacelike initial data for the $1$-dimensional Schwarzschild family parametrized by the mass $M\geq0$ is defined on $\RRR^3 \setminus \overline{B(0,2M)}$ in spherical coordinates $(r,\th^1,\th^2)$ by
\begin{align}
\begin{aligned}
g^M = \lrpar{1-\frac{2M}{r}}^{-1} dr^2 + r^2 ((d\th^1)^2 + (\sin\th^1)^2 (d\th^2)^2), \,\, k^M =0.
\end{aligned}\label{EQdefSSmassMdata}
\end{align}
This is the spacelike initial data induced on the hypersurface $\{t=0\}$ in Schwarzschild coordinates $(t,r,\th^1,\th^2)$ with respect to which the Schwarzschild spacetime metric of mass $M\geq0$ takes the form
\begin{align*}
\begin{aligned}
\mathbf{g}^M = -  \lrpar{1-\frac{2M}{r}} dt^2+ \lrpar{1-\frac{2M}{r}}^{-1} dr^2 + r^2 ((d\th^1)^2 + (\sin\th^1)^2 (d\th^2)^2).
\end{aligned}
\end{align*}

We also consider reference spacelike initial data for the exterior region of the \emph{$10$-dimensional} family of Kerr spacetimes. These spacelike initial data sets are induced on the spacelike hypersurfaces $\{t'=0\}$ in coordinates $(t',x'^1,x'^2,x'^3)$ which are related to the standard Boyer-Lindquist coordinates by Poincar\'e transformations, and can be parametrized through their $10$ ADM parameters $\Ef_{\mathrm{ADM}},\Pf_{\mathrm{ADM}},\Lf_{\mathrm{ADM}},\mathbf{C}_{\mathrm{ADM}}$. For the explicit construction and details we refer to \cite{ACR3}.

In this section we consider \emph{asymptotically flat} spacelike initial data defined as follows.
\begin{definition}[Asymptotic flatness] \label{DEFasymptoticFlatness} Spacelike initial data $(\Si,g,k)$ is \emph{asymptotically flat} if there is a compact set $K \subset \Sigma$ such that $\Sigma \setminus K$ is diffeomorphic to the complement of the closed unit ball in $\RRR^3$ and $(g,k)$ have the following asymptotics in this chart as $\vert x \vert\to\infty$, 
\begin{align}
\begin{aligned}
g_{ij}(x)-e_{ij} = \OO\lrpar{\vert x \vert^{-1}}, \,\, k_{ij}(x) = \OO\lrpar{\vert x \vert^{-2}}.
\end{aligned}\label{EQasymptoticflatness}
\end{align}
\end{definition}

\ni The above mentioned reference spacelike initial data for Minkowski, Schwarzschild and Kerr are asymptotically flat in the sense of Definition \ref{DEFasymptoticFlatness}.

From \eqref{EQasymptoticflatness} it follows that the ADM energy $\mathbf{E}_{\mathrm{ADM}}$ and linear momentum $\mathbf{P}_{\mathrm{ADM}}$, defined for $i=1,2,3$ by
\begin{align}
\begin{aligned}
\mathbf{E}_{\mathrm{ADM}} :=&  \lim\limits_{r \to\infty} \frac{1}{16\pi}\int\limits_{S_r} \sum\limits_{j=1,2,3} \lrpar{\pr_j g_{jl}-\pr_l g_{jj}} N^l d\mu_\gd, \\
\lrpar{\mathbf{P}_{\mathrm{ADM}}}^i :=& \lim\limits_{r \to\infty} \frac{1}{8\pi} \int\limits_{S_{r}}  \lrpar{k_{il}-\tr k \, g_{il}} N^l d\mu_\gd,
\end{aligned}\label{EQDEFADMEP}
\end{align}
are well-defined and finite. Here $N$ denotes the outward unit normal to $S_r$ and $d\mu_\gd$ the induced volume element on $S_r$.

In the following we assume to work with asymptotically flat spacelike initial data which has moreover well-defined and finite angular momentum $\mathbf{L}_{\mathrm{ADM}}$ and center-of-mass $\mathbf{C}_{\mathrm{ADM}}$, defined for $i=1,2,3$ by
\begin{align}
\begin{aligned}
\lrpar{\mathbf{L}_{\mathrm{ADM}}}^i :=&  \lim\limits_{r\to\infty} \frac{1}{8\pi}\int\limits_{S_{r}} (k_{jl}-\tr k \, g_{jl}) \lrpar{Y_{(i)}}^j N^l d\mu_\gd, \\
\lrpar{\mathbf{C}_{\mathrm{ADM}}}^i :=& \lim\limits_{r\to\infty} \int\limits_{S_r} \lrpar{ x^i \sum\limits_{j=1,2,3} \lrpar{\pr_j g_{jl}- \pr_l g_{jj}}N^l - \sum\limits_{j=1,2,3} \lrpar{g_{ji} N^j - g_{jj} N^i} }d\mu_\gd,
\end{aligned}\label{EQDEFADMLC}
\end{align}
where $Y_{(i)}$, $i=1,2,3$, are the rotation fields defined by $(Y_{(i)})_j := \in_{ilj} x^l$, where $x^l$ are the Cartesian coordinate functions on $\RRR^3$ and $\in_{ilj}$ is the volume form.

We are now in position to state the spacelike gluing corollary of our main result. Its proof is given in Section \ref{SECproofSPACELIKE}, based on the bifurcate obstruction-free null gluing of Theorem \ref{THMbifurcateFREE}.
\begin{corollary}[Obstruction-free spacelike gluing, version 2] \label{CORspacelike} Let $(\Si,g,k)$ be asymptotically flat spacelike initial data with well-defined, finite ADM parameters
\begin{align*}
\begin{aligned}
(\mathbf{E}^{(\Si,g,k)}_{\mathrm{ADM}},\mathbf{P}^{(\Si,g,k)}_{\mathrm{ADM}},\mathbf{L}^{(\Si,g,k)}_{\mathrm{ADM}},\mathbf{C}^{(\Si,g,k)}_{\mathrm{ADM}}),
\end{aligned}
\end{align*}
and consider Kerr initial data $(g^{\mathrm{Kerr}}, k^{\mathrm{Kerr}})$ of ADM parameters $(\mathbf{E}^{\mathrm{Kerr}}_{\mathrm{ADM}},\mathbf{P}^{\mathrm{Kerr}}_{\mathrm{ADM}},\mathbf{L}^{\mathrm{Kerr}}_{\mathrm{ADM}},\mathbf{C}^{\mathrm{Kerr}}_{\mathrm{ADM}})$. Assume that the ADM parameter differences
\begin{align*}
\begin{aligned}
\triangle \Ef_{\mathrm{ADM}} :=  \Ef_{\mathrm{ADM}}^{\mathrm{Kerr}} -\Ef^{(\Si,g,k)}_{\mathrm{ADM}}, \,\, \triangle \Pf_{\mathrm{ADM}} :=  \Pf_{\mathrm{ADM}}^{\mathrm{Kerr}} -\Pf^{(\Si,g,k)}_{\mathrm{ADM}}, \,\, \triangle \Lf_{\mathrm{ADM}} :=  \Lf_{\mathrm{ADM}}^{\mathrm{Kerr}} -\Lf^{(\Si,g,k)}_{\mathrm{ADM}},
\end{aligned}
\end{align*}
satisfy, for real numbers $C_1,C_2,C_3>0$, 
\begin{align*}
\begin{aligned}
\triangle \Ef_{\mathrm{ADM}} = C_1 >0, \,\,
\vert \triangle \Lf_{\mathrm{ADM}}\vert \leq C_2,\,\,
\triangle \Ef_{\mathrm{ADM}} > C_3 \cdot \vert \triangle \Pf_{\mathrm{ADM}}\vert.
\end{aligned}
\end{align*}
There exist real numbers $\tilde{R}>0$ and $\tilde{C}_3>0$ such that if $C_3>\tilde{C}_3$, then for all real numbers $R>\tilde{R}$, $(\Si,g,k)$ can be glued, far out in the asymptotic region, across the annulus $A_{[R,2R]}$ bounded by the coordinate spheres $S_R$ and $S_{2R}$, to $(g^{\mathrm{Kerr}}, k^{\mathrm{Kerr}})$.

In particular, there is a real number $M_0>0$ such that for any $M>M_0$, $(\Si,g,k)$ can be glued, far out in the asymptotically flat region, to the reference Schwarzschild spacelike initial data $(g^M,k^M)$ defined in \eqref{EQdefSSmassMdata}.
\end{corollary}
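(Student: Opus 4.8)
The plan is to reduce the spacelike gluing statement to the bifurcate obstruction-free null gluing of Theorem \ref{THMbifurcateFREE}, following the strategy of \cite{ACR3}. First I would rescale the problem: since the spacelike constraint equations are scale-invariant, gluing across the annulus $A_{[R,2R]}$ for large $R$ is equivalent to a small-data gluing across $A_{[1,2]}$, where the smallness parameter $\varep$ is controlled by $1/R$ together with the decay of $(g,k)$ and $(g^{\mathrm{Kerr}},k^{\mathrm{Kerr}})$ toward Minkowski. Concretely, for $R>0$ large one picks two spheres: a sphere $S_1$ sitting in the spacelike slice of the given data $(\Si,g,k)$ at coordinate radius $\sim R$, and a sphere $S_2$ sitting in the spacelike slice of the Kerr data at coordinate radius $\sim 2R$. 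One then needs to realize these spheres (and a neighborhood) in spacetimes $(\MM_1,\g_1)$ and $(\MM_2,\g_2)$ obtained by evolving the respective spacelike data, and compute the induced sphere data $x_1$ on $S_1$ and $x_2$ on $S_2$ in suitable double null coordinates, verifying that $\Vert x_1-\mathfrak{m}\Vert_{\XX(S_1)}+\Vert x_2-\mathfrak{m}\Vert_{\XX(S_2)}\le\varep$ with $\varep\sim 1/R$.

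Next I would translate the ADM hypotheses into the charge hypotheses of Theorem \ref{THMbifurcateFREE}. This is the heart of the matter: by the interpretation recalled after Definition \ref{DEFchargesEPLG} (and established in \cite{ACR3}), on the sphere of radius $R$ the charges $\Ef,\Pf,\Lf$ of the induced sphere data are, after the appropriate rescaling, equal to the corresponding ADM quantities up to errors that vanish as $R\to\infty$, while $\Gf$ equals $\text{center-of-mass}-R\cdot\text{linear momentum}$. Hence $\triangle\Ef=\Ef(x_2)-\Ef(x_1)$ rescales to $\triangle\Ef_{\mathrm{ADM}}+o(1)$, so choosing $\varep=\varep(R)\sim 1/R$ and $C_1$ appropriately we get $\triangle\Ef=C_1\varep$ with $C_1\approx\triangle\Ef_{\mathrm{ADM}}/\varep$, which is large (of order $R$) by the assumption $\triangle\Ef_{\mathrm{ADM}}=C_1>0$; the hypothesis $|\triangle\Lf_{\mathrm{ADM}}|\le C_2$ translates to $|\triangle\Lf|\le C_2\varep^2$ after the double rescaling (one power from $R$, one from the angular-momentum weight), matching \eqref{EQsmallnessMain1002BIF}; and the inequality $\triangle\Ef_{\mathrm{ADM}}>C_3|\triangle\Pf_{\mathrm{ADM}}|$ together with the bookkeeping of the $\Gf$-term $\Pf(x_1)$ in \eqref{EQsmallnessMain2BIF} — recall from \eqref{EQchargeGchange} that $\Gf$ shifts by $-\Pf(x_1)$ going to $S_{\mathrm{aux}}$, which is exactly why the bifurcate condition involves $|\triangle\Gf+\Pf(x_1)|$ — gives \eqref{EQsmallnessMain2BIF} provided $C_3>\tilde C_3$. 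One must be careful that the center-of-mass contribution to $\Gf(x_2)$, being of size $\sim R\cdot|\Pf_{\mathrm{ADM}}^{\mathrm{Kerr}}|$ plus a bounded center-of-mass term, is dominated; since $\triangle\Ef_{\mathrm{ADM}}>C_3|\triangle\Pf_{\mathrm{ADM}}|$ forces the two linear momenta to be close, and $R\triangle\Ef_{\mathrm{ADM}}\gg$ bounded terms, this works for $R$ large.

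Having verified the hypotheses, Theorem \ref{THMbifurcateFREE} produces a solution $x$ to the null structure equations along a bifurcate null hypersurface $\HH\cup\HHb$ emanating from an auxiliary sphere $S_{\mathrm{aux}}$, with $x\vert_{S_1}=x_1$, $x\vert_{S_2}=x_2$, and with $x-\mathfrak{m}$ controlled in $\XX^{\mathrm{h.f.}}(\HH)$ on the outgoing leg and in $\XX(\HHb)$ on the ingoing leg. By Remark (3) after Theorem \ref{THMbifurcateNULLcite} and Remark (3) after Theorem \ref{THMbifurcateFREE} this can be taken to higher regularity, so the constructed characteristic data is regular enough to invoke the Luk--Rodnianski characteristic well-posedness theory \cite{LukChar,LukRod1} and solve the Einstein vacuum equations forward from $\HH\cup\HHb$, producing a vacuum spacetime $(\MM,\g)$ containing both $S_1$ and $S_2$ together with a piece of the original data near $S_1$ and a piece of Kerr near $S_2$. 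Finally I would choose a spacelike hypersurface $\Si'\subset\MM$ through $S_1$ and $S_2$ that coincides with (a coordinate copy of) the slice of $(\Si,g,k)$ inside the coordinate ball of radius $\sim R$ and with the slice of Kerr outside radius $\sim 2R$; the induced data on $\Si'$ solves the spacelike constraints and realizes the desired gluing across $A_{[R,2R]}$. The Schwarzschild consequence follows by taking $(g^{\mathrm{Kerr}},k^{\mathrm{Kerr}})=(g^M,k^M)$ with zero linear and angular momentum: then $\triangle\Pf_{\mathrm{ADM}}=-\Pf^{(\Si,g,k)}_{\mathrm{ADM}}$ and $\triangle\Lf_{\mathrm{ADM}}=-\Lf^{(\Si,g,k)}_{\mathrm{ADM}}$ are fixed, while $\triangle\Ef_{\mathrm{ADM}}=M-\Ef^{(\Si,g,k)}_{\mathrm{ADM}}\to\infty$ as $M\to\infty$, so all three inequalities hold for $M$ large.

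The main obstacle I expect is the bookkeeping in the second paragraph: matching the precise powers of $\varep$ (equivalently $1/R$) in the charge hypotheses \eqref{EQsmallnessMain21BIF}--\eqref{EQsmallnessMain2BIF} against the ADM-parameter hypotheses, keeping careful track of the $R$-growth of $\Gf$ and of the $S_1\to S_{\mathrm{aux}}$ charge shift \eqref{EQchargeGchange}, and ensuring that the ``large constant $C_3$'' in Theorem \ref{THMbifurcateFREE} can be matched with the ``large constant $C_3$'' of the corollary uniformly in $R$. The rest — the rescaling, the passage from spacelike data to sphere data via short-time evolution, and the re-solving of the Einstein equations from the glued characteristic data — is essentially identical to the arguments already carried out in \cite{ACR3} and can be cited accordingly.
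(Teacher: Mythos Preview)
Your approach is exactly the paper's: rescale to small data, compare the sphere charges $(\Ef,\Pf,\Lf,\Gf)$ on the rescaled spheres to the ADM parameters via the local ADM integrals and \eqref{EQchargecomparisonSR}, verify the hypotheses \eqref{EQsmallnessMain21BIF}--\eqref{EQsmallnessMain2BIF} of Theorem~\ref{THMbifurcateFREE}, apply it, solve forward by \cite{LukChar,LukRod1}, and pick a spacelike hypersurface as in \cite{ACR3}.

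Two bookkeeping points to correct. First, after rescaling one has $\triangle\Ef\approx R^{-1}\triangle\Ef_{\mathrm{ADM}}$ and $\varep\sim R^{-1}$, so $C_1\approx\triangle\Ef_{\mathrm{ADM}}$ is a \emph{fixed} constant, not of order $R$ as you wrote. Second, your treatment of the $\Gf$-condition is too vague: the point is not merely that the linear momenta are close, but that an exact cancellation occurs. With $\Gf({}^{(R)}x_1)=R^{-2}\mathbf{C}^{(\Si,g,k)}_{\mathrm{ADM}}-R^{-1}\Pf^{(\Si,g,k)}_{\mathrm{ADM}}+\OO(R^{-2})$ and $\Gf({}^{(R)}x_2^{\mathrm{Kerr}})=R^{-2}\mathbf{C}^{\mathrm{Kerr}}_{\mathrm{ADM}}-2R^{-1}\Pf^{\mathrm{Kerr}}_{\mathrm{ADM}}+\OO(R^{-2})$ (the factor $2$ from radius $2R$), one computes
\[
\bigl|\triangle\Gf+\Pf({}^{(R)}x_1)\bigr|=\tfrac{2}{R}\,|\triangle\Pf_{\mathrm{ADM}}|+\OO(R^{-2}),
\]
so that after multiplying by $R$ the condition \eqref{EQsmallnessMain2BIF} reduces to $\triangle\Ef_{\mathrm{ADM}}>3C_3|\triangle\Pf_{\mathrm{ADM}}|+\OO(R^{-1})$, which holds for $\tilde C_3$ large and $R$ large. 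Without this cancellation the individual $\Pf_{\mathrm{ADM}}$ terms (not just their difference) would survive at order $R^{-1}$ and spoil the inequality.
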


\ni \emph{Remarks on Corollary \ref{CORspacelike}.}
\begin{enumerate}
\item In Corollary \ref{CORspacelike} we work with initial data that is asymptotically flat in the sense of Definition \ref{DEFasymptoticFlatness}, and assume that the angular momentum $\mathbf{L}_{\mathrm{ADM}}$ and center-of-mass $\mathbf{C}_{\mathrm{ADM}}$ are well-defined. In contrast, the spacelike gluing construction of Corvino-Schoen \cite{Corvino,CorvinoSchoen} requires that the spacelike initial data is asymptotically flat and satisfies in addition the so-called \emph{Regge-Teitelbaum conditions} (parity conditions on $g$ and $k$ near spacelike infinity). The Regge-Teitelbaum conditions imply in particular that $\mathbf{L}_{\mathrm{ADM}}$ and $\mathbf{C}_{\mathrm{ADM}}$ are well-defined and finite, but are further used in \cite{Corvino,CorvinoSchoen} to reduce (by a density argument) the problem to the case of spacelike initial data which is conformally flat near spacelike infinity.
\end{enumerate}

\section{Proof of the main theorem}\label{SECgluingEP} \ni In this section we prove Theorem \ref{THMmain1}. The idea is to set up an iteration scheme which combines the codimension-$10$ null gluing of Theorem \ref{THMcharGluing10d} with the high-frequency null gluing of $(\Ef,\Pf,\Lf,\Gf)$ in Theorem \ref{THMmain0}. The iteration scheme is defined as follows.\\

\ni \underline{\textbf{Definition of Step $0$.}}

\begin{enumerate}
\item[(0.a)] Apply the codimension-$10$ null gluing of Theorem \ref{THMcharGluing10d} to construct a solution $x_{(0)}$ to the null structure equations along $\HH$ such that
\begin{align}
\begin{aligned}
x_{(0)}\vert_{S_2} = x_2 \text{ up to } (\Ef,\Pf,\Lf,\Gf), \,\, \text{ and } \,\, x_{(0)} \vert_{S_1} = (x_1)'_{(0)},
\end{aligned}\label{EQcodim10conditionstep0}
\end{align}
where $(x_1)'_{(0)}$ is induced sphere data on the perturbed sphere $(S_1)'_{(0)}\subset \HHb_1$ near $S_1$. 

\item[(0.b)] Apply the null gluing of $(\Ef,\Pf,\Lf,\Gf)$ of Theorem \ref{THMmain0} to $x_{(0)}$ to construct $W_{(0)}$ such that
\begin{align}
\begin{aligned}
(\mathbf{E}, \mathbf{P}, \mathbf{L}, \mathbf{G})\lrpar{{x_{(0)}}_{+W_{(0)}} \Big\vert_{S_2}} = (\mathbf{E}, \mathbf{P},\mathbf{L}, \mathbf{G})(x_2).
\end{aligned}\label{EQiterationEPLGconditionSTEP0}
\end{align}

\end{enumerate}

\ni \underline{\textbf{Definition of Step $i$, for $i\geq1$.}}
\begin{enumerate}
\item[(i.a)] Apply the codimension-$10$ null gluing of Theorem \ref{THMcharGluing10d} to construct a solution $x_{(i)}$ to the null structure equations along $\HH$ such that
\begin{align}
\begin{aligned}
x_{(i)}\vert_{S_2} = x_2 - \lrpar{{x_{(i-1)}}_{+W_{(i-1)}}-x_{(i-1)}}\Big\vert_{S_2} \text{ up to } (\Ef,\Pf,\Lf,\Gf), \,\, \text{ and } \,\, x_{(i)} \vert_{S_1} = (x_1)'_{(i)},
\end{aligned}\label{EQbdrycondith}
\end{align}
where $(x_1)'_{(i)}$ is induced sphere data on the perturbed sphere $(S_1)'_{(i)}\subset \HHb_1$ near $S_1$.

\item[(i.b)] Apply the null gluing of $(\Ef,\Pf,\Lf,\Gf)$ of Theorem \ref{THMmain0} to construct $W_{(i)}$ such that
\begin{align}
\begin{aligned}
(\mathbf{E}, \mathbf{P}, \mathbf{L}, \mathbf{G})\lrpar{{x_{(i)}}_{+W_{(i)}}\Big\vert_{S_2}} = (\mathbf{E}, \mathbf{P},\mathbf{L}, \mathbf{G})(x_2).
\end{aligned}\label{EQiterationEPLGconditionSTEPi}
\end{align}

\end{enumerate}

\ni We claim that for $\varep>0$ sufficiently small and $C_3>0$ sufficiently large, the iteration scheme is well-defined, uniformly bounded,
\begin{align}
\begin{aligned}
\Vert x_{(i)} -\mathfrak{m} \Vert_{\XX(\HH)}+\left\Vert {x_{(i)}}_{+W_{(i)}} -\mathfrak{m} \right\Vert_{\XX^{\mathrm{h.f.}}(\HH)} \les \varep \,\,\, \text{ for all } i\geq0,
\end{aligned}\label{EQuniformbound}
\end{align}
and that the sequence $(x_{(i)})_{i\geq0}$ converges. Indeed, these claims follow in a straight-forward, standard fashion once we show that for $\varep>0$ sufficiently small, the iteration scheme is a contraction in the sense that
\begin{align}
\begin{aligned}
&\left\Vert \lrpar{{x_{(i)}}_{+W_{(i)}}- {x_{(i)}}} - \lrpar{{x_{(i-1)}}_{+W_{(i-1)}}- {x_{(i-1)}}} \right\Vert_{\XX(S_2)} \\
\leq& \half \left\Vert \lrpar{{x_{(i-1)}}_{+W_{(i-1)}}- {x_{(i-1)}}} - \lrpar{{x_{(i-2)}}_{+W_{(i-2)}}- {x_{(i-2)}}} \right\Vert_{\XX(S_2)}.
\end{aligned}\label{EQiterationcontract}
\end{align}

\ni In the following we only discuss estimates for Step $0$ (\emph{i.e.} the base case) and the difference estimates for Step $i$ (\emph{i.e.} the induction step proving \eqref{EQiterationcontract}). We omit details and focus on the essential contraction property \eqref{EQiterationcontract}. The conclusion of the proof of Theorem \ref{THMmain1} is at the end of this section.\\

\ni \textbf{Estimates for step $0$.} By assumption \eqref{EQsmallnessMain1001} it holds that
\begin{align*}
\begin{aligned}
\Vert {\underline{x}}-\mathfrak{m} \Vert_{\mathcal{X}^+(\tilde{\HHb}_{[-\de,\de],1}) } + \Vert x_{2}-\mathfrak{m} \Vert_{\mathcal{X}(S_{2})} \leq \varep,
\end{aligned}
\end{align*}
so that we can indeed apply Theorem \ref{THMcharGluing10d} to construct $x_{(0)}$ solving \eqref{EQcodim10conditionstep0} and satisfying the bounds
\begin{align}
\begin{aligned}
\Vert x_{(0)} - \mathfrak{m} \Vert_{\XX(\HH)} \les \Vert {\underline{x}}-\mathfrak{m} \Vert_{\mathcal{X}^+(\tilde{\HHb}_{[-\de,\de],1}) } + \Vert x_2 - \mathfrak{m} \Vert_{\XX(S_2)} \les \varep,
\end{aligned}\label{EQx0estimate}
\end{align}
and
\begin{align}
\begin{aligned}
\vert (\Ef,\Pf,\Lf,\Gf)((x_1)'_{(0)}) - (\Ef,\Pf,\Lf,\Gf)(x_1) \vert \les& \varep\lrpar{ \Vert {\underline{x}}-\mathfrak{m} \Vert_{\mathcal{X}^+(\tilde{\HHb}_{[-\de,\de],1}) } + \Vert x_{2}-\mathfrak{m} \Vert_{\mathcal{X}(S_{2})}} \les \varep^2.
\end{aligned}\label{EQchargeDiffestimstep0}
\end{align}
We observe that by the assumptions \eqref{EQsmallnessMain021}, \eqref{EQsmallnessMain021LSMALL}, \eqref{EQsmallnessMain02}, and \eqref{EQcodim10conditionstep0}, \eqref{EQiterationEPLGconditionSTEP0} and \eqref{EQchargeDiffestimstep0}, the vector
\begin{align*}
\begin{aligned}
\lrpar{\triangle \Ef_{(0)},\triangle \Pf_{(0)},\triangle \Lf_{(0)},\triangle \Gf_{(0)}}:=& (\Ef,\Pf,\Lf,\Gf)\lrpar{{x_{(0)}}_{+W_{(0)}}\big\vert_{S_2}} - (\Ef,\Pf,\Lf,\Gf)\lrpar{{x_{(0)}}_{+W_{(0)}}\big\vert_{S_1}}\\
=& (\Ef,\Pf,\Lf,\Gf)(x_2)-(\Ef,\Pf,\Lf,\Gf)((x_1)'_{(0)}) \\
=& \lrpar{\triangle \mathbf{E},\triangle \mathbf{P},\triangle \mathbf{L},\triangle \mathbf{G}} + \OO(\varep^2),
\end{aligned}
\end{align*}
satisfies for $\varep>0$ sufficiently small the conditions \eqref{EQsmallnessMain021}, \eqref{EQsmallnessMain021LSMALL}, \eqref{EQsmallnessMain02} for the application of Theorem \ref{THMmain0} with slightly differing constants $C_1' > C_1/2$, $C_2' < 2C_2 $ and $C_3> C_3' > C_3/2$, that is,
\begin{align}
\begin{aligned}
\triangle \mathbf{E}_{(0)} =& C_1' \, \varep,  \\ 
\vert \triangle \mathbf{L}_{(0)} \vert \leq& C_2' \varep^2, \\
\varep^{-1} (\triangle \mathbf{E}_0) >& C_3' \lrpar{\varep^{-1} \vert \triangle \mathbf{P}_0\vert + \varep^{-1} \vert \triangle \mathbf{G}_0\vert}.
\end{aligned}\label{EQsmallnessMain1002VARIED}
\end{align}
Thus for $C_3>0$ sufficiently large and $\varep>0$ sufficiently small, we can indeed apply Theorem \ref{THMmain0} to $x_{(0)}$ with \eqref{EQiterationEPLGconditionSTEP0} and the constructed ${x_{(0)}}_{+W_{(0)}}$ is bounded by 
\begin{align*}
\begin{aligned}
&\Vert {x_{(0)}}_{+W_{(0)}} -x_{(0)} \Vert_{\XX^{\mathrm{h.f.}}(\HH)} + \Vert {x_{(0)}}_{+W_{(0)}} -x_{(0)}\Vert_{\XX(S_2)} \\
\les& \, \varep \lrpar{\sqrt{\varep^{-1}\triangle\Ef_{(0)}} + \sqrt{\varep^{-1}\vert\triangle\Pf_{(0)}\vert} + \sqrt{\varep^{-1}\vert\triangle\Gf_{(0)}\vert}} + \varep^{5/4} \sqrt{\varep^{-2}\vert\triangle\Lf_{(0)}\vert} + \varep^{1/4} \Vert x_{(0)} - \mathfrak{m}\Vert_{\XX(\HH)} \\
\les& \varep.
\end{aligned}
\end{align*}

\ni \textbf{Estimates for the $i^{\mathrm{th}}$ step, for $i\geq2$.} On the one hand, from the codimension-$10$ gluing of Theorem \ref{THMcharGluing10d} we get the estimate
\begin{align}
\begin{aligned}
&\Vert x_{(i)}-x_{(i-1)} \Vert_{\XX(\HH)} \\
\les& \Vert \underbrace{\underline{x}-\underline{x}}_{=0}\Vert_{\XX^+(\HHb)} \\
&+  
\left\Vert \lrpar{x_2 - \lrpar{{x_{(i-1)}}_{+W_{(i-1)}}- {x_{(i-1)}}}} - \lrpar{x_2 - \lrpar{{x_{(i-2)}}_{+W_{(i-2)}}- {x_{(i-2)}}} } \right\Vert_{\XX(S_2)} \\
\les& \left\Vert \lrpar{{x_{(i-1)}}_{+W_{(i-1)}}- {x_{(i-1)}}} - \lrpar{{x_{(i-2)}}_{+W_{(i-2)}}- {x_{(i-2)}}} \right\Vert_{\XX(S_2)},
\end{aligned}\label{EQiterationXestimate1}
\end{align}
as well as the charge estimate
\begin{align}
\begin{aligned}
&\vert (\Ef,\Pf,\Lf,\Gf)((x_1)'_{(i)}) - (\Ef,\Pf,\Lf,\Gf)((x_1)'_{(i-1)}) \vert \\
\les& \varep \left\Vert \lrpar{{x_{(i-1)}}_{+W_{(i-1)}}- {x_{(i-1)}}} - \lrpar{{x_{(i-2)}}_{+W_{(i-2)}}- {x_{(i-2)}}} \right\Vert_{\XX(S_2)}.
\end{aligned}\label{EQiterationXestimate1CHARGE14}
\end{align}

\ni On the other hand, from the $\Ef,\Pf,\Lf,\Gf$ null gluing of Theorem \ref{THMmain0} with the charge condition \eqref{EQiterationEPLGconditionSTEPi} on $S_2$, we get the estimate
\begin{align}
\begin{aligned}
&\left\Vert \lrpar{{x_{(i)}}_{+W_{(i)}}- {x_{(i)}}} - \lrpar{{x_{(i-1)}}_{+W_{(i-1)}}- {x_{(i-1)}}} \right\Vert_{\XX(S_2)} \\
\les&_{C_1^{-1}} \varep \lrpar{\varep^{-1}\vert \triangle\Ef_{(i)}-\triangle\Ef_{(i-1)}\vert + \varep^{-1}\vert\triangle\Pf_{(i)}-\triangle\Pf_{(i-1)}\vert + \varep^{-1}\vert\triangle\Gf_{(i)}-\triangle\Gf_{(i-1)}\vert} \\
&+\varep^{5/4}\lrpar{ \varep^{-2}\vert\triangle\Lf_{(i)}-\triangle\Lf_{(i-1)}\vert} + \varep^{1/4} \Vert x_{(i)} - x_{(i-1)}\Vert_{\XX(\HH)}, 
\end{aligned}\label{EQstepipart2estim1}
\end{align} 
where, for $i\geq0$, using the condition \eqref{EQiterationEPLGconditionSTEPi} on $S_2$ and the condition \eqref{EQbdrycondith} on $S_1$, and employing that $W_{(i)}$ vanishes near $S_1$ for each $i\geq0$,
\begin{align}
\begin{aligned}
(\triangle\Ef_{(i)},\triangle\Pf_{(i)},\triangle\Lf_{(i)},\triangle\Gf_{(i)}) :=& (\Ef,\Pf,\Lf,\Gf)\lrpar{{x_{(i)}}_{+W_{(i)}}\big\vert_{S_2}} - (\Ef,\Pf,\Lf,\Gf)\lrpar{{x_{(i)}}_{+W_{(i)}}\big\vert_{S_1}}\\
=& (\Ef,\Pf,\Lf,\Gf)(x_2) - (\Ef,\Pf,\Lf,\Gf)\lrpar{x_{(i)}\big\vert_{S_1}} \\
=& (\Ef,\Pf,\Lf,\Gf)(x_2) - (\Ef,\Pf,\Lf,\Gf)\lrpar{(x_1)'_{(i)}}.
\end{aligned}\label{EQexpansionProblemsetupproofmainthm99099099}
\end{align}
By \eqref{EQiterationXestimate1CHARGE14} we can estimate the difference
\begin{align}
\begin{aligned}
&\left\vert (\triangle\Ef_{(i)},\triangle\Pf_{(i)},\triangle\Lf_{(i)},\triangle\Gf_{(i)}) - (\triangle\Ef_{(i-1)},\triangle\Pf_{(i-1)},\triangle\Lf_{(i-1)},\triangle\Gf_{(i-1)})\right\vert \\
=&\left\vert (\Ef,\Pf,\Lf,\Gf)\lrpar{(x_1)'_{(i)}} - (\Ef,\Pf,\Lf,\Gf)\lrpar{(x_1)'_{(i-1)}} \right\vert \\
\les& \varep \left\Vert \lrpar{{x_{(i-1)}}_{+W_{(i-1)}}- {x_{(i-1)}}} - \lrpar{{x_{(i-2)}}_{+W_{(i-2)}}- {x_{(i-2)}}} \right\Vert_{\XX(S_2)}.
\end{aligned}\label{EQchargeprescribdiffestim}
\end{align}
Plugging \eqref{EQchargeprescribdiffestim} and \eqref{EQiterationXestimate1} into \eqref{EQstepipart2estim1} yields
\begin{align*}
\begin{aligned}
&\left\Vert \lrpar{{x_{(i)}}_{+W_{(i)}}- {x_{(i)}}} - \lrpar{{x_{(i-1)}}_{+W_{(i-1)}}- {x_{(i-1)}}} \right\Vert_{\XX(S_2)} \\
\les&_{C_1^{-1}} \varep^{1/4} \left\Vert \lrpar{{x_{(i-1)}}_{+W_{(i-1)}}- {x_{(i-1)}}} - \lrpar{{x_{(i-2)}}_{+W_{(i-2)}}- {x_{(i-2)}}} \right\Vert_{\XX(S_2)}.
\end{aligned}
\end{align*} 
From the above we conclude that for $\varep>0$ sufficiently small,
\begin{align*}
\begin{aligned}
&\left\Vert \lrpar{{x_{(i)}}_{+W_{(i)}}- {x_{(i)}}} - \lrpar{{x_{(i-1)}}_{+W_{(i-1)}}- {x_{(i-1)}}} \right\Vert_{\XX(S_2)} \\
\leq& \half \left\Vert \lrpar{{x_{(i-1)}}_{+W_{(i-1)}}- {x_{(i-1)}}} - \lrpar{{x_{(i-2)}}_{+W_{(i-2)}}- {x_{(i-2)}}} \right\Vert_{\XX(S_2)}.
\end{aligned}
\end{align*}
This finishes the proof of \eqref{EQiterationcontract}.\\

\ni \textbf{Analysis of the limit of the iteration scheme and conclusion of Theorem \ref{THMmain1}.} First let us rephrase what we proved above. For sphere data $y\in \XX(S_2)$ consider the mapping
\begin{align*}
\begin{aligned}
y\mapsto \FF(y):= (x_{+W}-x) \vert_{S_2},
\end{aligned}
\end{align*}
where $x$ and $W$ are respectively defined by applications of Theorems \ref{THMcharGluing10d} and \ref{THMmain0} with the conditions
\begin{align}
\begin{aligned}
&x\vert_{S_2} = x_2 - y \text{ up to } (\Ef,\Pf,\Lf,\Gf), \,\, \text{ and } \,\, x \vert_{S_1} = x_1', \\
&(\mathbf{E}, \mathbf{P}, \mathbf{L}, \mathbf{G})\lrpar{{x}_{+W}\Big\vert_{S_2}} = (\mathbf{E}, \mathbf{P},\mathbf{L}, \mathbf{G})(x_2),
\end{aligned}\label{EQdefCauchysequence2}
\end{align}
where $x_1'$ is induced sphere data on the perturbed sphere $S_1'\subset \HHb_1$ near $S_1$ constructed in the application of Theorem \ref{THMcharGluing10d}.

The above estimates show that for $\varep>0$ sufficiently small and $C_3>0$ sufficiently large, the mapping $y\mapsto \FF(y)$ is well-defined on $\Vert y \Vert_{\XX(S_2)} \les \varep$ and is a \emph{contraction}. In particular, the sequence $(y_{(i)})_{i\geq-1} \subset \XX(S_2)$ of sphere data on $S_2$ defined by
\begin{align}
\begin{aligned}
y_{(-1)} := 0, \,\,
y_{(i)} := \FF(y_{(i-1)}) \,\,\, \text{ for } i\geq 0,
\end{aligned}\label{EQdefCauchysequence}
\end{align}
is well-defined and converges to a fixed point $y_{(\infty)}\in \XX(S_2)$ of $\FF$ of the form
\begin{align}
\begin{aligned}
y_{(\infty)} = {x_{(\infty)}}_{+W_{(\infty)}} -x_{(\infty)},
\end{aligned}\label{EQformFixedPoint}
\end{align}
where $x_{(\infty)}$ and $W_{(\infty)}$ satisfy the properties
\begin{align}
\begin{aligned}
&x_{(\infty)}\vert_{S_2} = x_2 - y_{(\infty)} \text{ up to } (\Ef,\Pf,\Lf,\Gf), \,\, \text{ and } \,\, x_{(\infty)} \vert_{S_1} = (x_1)'_{(\infty)}, \\
&(\mathbf{E}, \mathbf{P}, \mathbf{L}, \mathbf{G})\lrpar{{x_{(\infty)}}_{+W_{(\infty)}}\Big\vert_{S_2}} = (\mathbf{E}, \mathbf{P},\mathbf{L}, \mathbf{G})(x_2),
\end{aligned}\label{EQbdrycondith14}
\end{align}
where $(x_1)'_{(\infty)}$ is induced sphere data on the perturbed sphere $(S_1)'_{(\infty)}\subset \HHb_1$ near $S_1$ constructed by the application of Theorem \ref{THMcharGluing10d}.

Combining \eqref{EQformFixedPoint} with the first of \eqref{EQbdrycondith14}, we get that
\begin{align*}
\begin{aligned}
&x_{(\infty)}\vert_{S_2} = x_2 - \lrpar{{x_{(\infty)}}_{+W_{(\infty)}}-x_{(\infty)}}\Big\vert_{S_2} \text{ up to } (\Ef,\Pf,\Lf,\Gf) \\
\Leftrightarrow\,\,\,\,& {x_{(\infty)}}_{+W_{(\infty)}} \Big\vert_{S_2} = x_2 \text{ up to } (\Ef,\Pf,\Lf,\Gf),
\end{aligned}
\end{align*}
which, with the second of \eqref{EQbdrycondith14}, implies that
\begin{align*}
\begin{aligned}
{x_{(\infty)}}_{+W_{(\infty)}} \Big\vert_{S_2} = x_2.
\end{aligned}
\end{align*}
Moreover, applying the estimates of Theorems \ref{THMcharGluing10d} and \ref{THMmain0} as above, it follows in a standard fashion that
\begin{align*}
\begin{aligned}
\Vert {x_{(\infty)}}_{+W_{(\infty)}}-\mathfrak{m} \Vert_{\XX^{\mathrm{h.f.}}(\HH)} \les \Vert {\underline{x}}-\mathfrak{m} \Vert_{\mathcal{X}^+(\tilde{\HHb}_{[-\de,\de],1}) } + \Vert x_{2}-\mathfrak{m} \Vert_{\mathcal{X}(S_{2})}.
\end{aligned}
\end{align*}
This shows that ${x_{(\infty)}}_{+W_{(\infty)}}$ is the claimed solution to obstruction-free null gluing, and finishes the proof of Theorem \ref{THMmain1}.

\section{Proof of obstruction-free spacelike gluing} \label{SECproofSPACELIKE}
\ni In this section we prove Corollary \ref{CORspacelike}. Before turning to the proof, we recapitulate some preliminaries concerning spacelike initial data. The material is an adaption of the presentation and analysis in \cite{ACR3} (Sections 6 and 7) to the asymptotically flat spacelike initial data treated in this paper.\\

\ni \textbf{(1) Local ADM integrals.} Let $(\Si,g,k)$ be asymptotically flat spacelike initial data. In the chart near spacelike infinity we define for radii $r>0$ large the following \emph{local ADM integrals} on $S_r$, for $i=1,2,3$,
\begin{align} 
\begin{aligned}  
{\mathbf{E}}^{\mathrm{loc}}_{\mathrm{ADM}}(r,g,k) :=& -\frac{1}{8\pi} \int\limits_{S_{r}} \lrpar{\RRRic - \half R_{\mathrm{scal}} \, g}(X ,N) d\mu_\gd, \\
\lrpar{\mathbf{P}^{\mathrm{loc}}_{\mathrm{ADM}}}^i(r,g,k) :=& \frac{1}{8\pi} \int\limits_{S_{r}}  \lrpar{k_{il}-\tr k \, g_{il}} N^l d\mu_\gd, \\
\lrpar{\mathbf{L}^{\mathrm{loc}}_{\mathrm{ADM}}}^i(r,g,k) :=& \frac{1}{8\pi} \int\limits_{S_{r}} (k_{jl}-\tr k \, g_{jl}) \lrpar{Y_{(i)}}^j N^l d\mu_\gd, \\
\lrpar{{\mathbf{C}}^{\mathrm{loc}}_{\mathrm{ADM}}}^i(r,g,k):=& \frac{1}{16\pi} \int\limits_{S_{r}} \lrpar{\RRRic - \half R_{\mathrm{scal}} \, g }(Z^{(i)},N) d\mu_\gd,
\end{aligned} \label{EQlocalisedADMcharges}
\end{align}
where the vectorfields $X$ and $Z^{(i)}$, $i=1,2,3$, are defined by
\begin{align*}
\begin{aligned}
X := x^i \pr_i, \,\, Z^{(i)} := \lrpar{\vert x \vert^2 \de^{ij}-2x^i x^j} \pr_j,
\end{aligned}
\end{align*}
and $N$ and $Y_{(i)}$, $i=1,2,3$ are defined after \eqref{EQDEFADMEP} and \eqref{EQDEFADMLC}.

Given \eqref{EQlocalisedADMcharges} and the definition of the ADM parameters $({\mathbf{E}}_{\mathrm{ADM}},{\mathbf{P}}_{\mathrm{ADM}},{\mathbf{L}}_{\mathrm{ADM}},{\mathbf{C}}_{\mathrm{ADM}})$ in \eqref{EQDEFADMEP} and \eqref{EQDEFADMLC}, it is well-known (see \cite{ACR3} and references therein) that for asymptotically flat spacelike initial data, as $r\to \infty$,
\begin{align}
\begin{aligned}
{\mathbf{E}}^{\mathrm{loc}}_{\mathrm{ADM}}(r,g,k) = {\mathbf{E}}_{\mathrm{ADM}}(g,k) + \OO(r^{-1}), \,\, {\mathbf{P}}^{\mathrm{loc}}_{\mathrm{ADM}}(r,g,k) = {\mathbf{P}}_{\mathrm{ADM}}(g,k) + \OO(r^{-1}),
\end{aligned}\label{EQconvergenceADMlocalEP}
\end{align}
and moreover, if ${\mathbf{L}}_{\mathrm{ADM}}(g,k)$ and ${\mathbf{C}}_{\mathrm{ADM}}(g,k)$ are well-defined and finite, then, as $r\to\infty$,
\begin{align}
\begin{aligned}
{\mathbf{L}}^{\mathrm{loc}}_{\mathrm{ADM}}(r,g,k) = {\mathbf{L}}_{\mathrm{ADM}}(g,k) + \smallO(1), \,\, {\mathbf{C}}^{\mathrm{loc}}_{\mathrm{ADM}}(r,g,k) = {\mathbf{C}}_{\mathrm{ADM}}(g,k) + \smallO(1),
\end{aligned}\label{EQconvergenceADMlocalLC}
\end{align}
where $\smallO(1)$ denotes terms that go to zero as $r\to \infty$.\\

\ni \textbf{(2) Scaling of spacelike initial data.} It is well-known that spacelike initial data can be rescaled as follows. Let $(\Si,g,k)$ be asymptotically flat spacelike initial data, and let $(x^1,x^2,x^3)$ be coordinates near spacelike infinity. The scaling of $(g,k)$ is defined in two steps.  
\begin{enumerate}
\item For a given real number $R\geq1$, define new coordinates $(y^1,y^2, y^3)$ by 
\begin{align} 
\begin{aligned} 
\Psi_R(y^1,y^2,y^3) := (R\cdot y^1, R \cdot y^2, R\cdot y^3) = (x^1, x^2, x^3).
\end{aligned} \label{EQdefcoordinatechangeSPACELIKE8889}
\end{align}

\item Define $({}^{(R)} g,{}^{(R)} k)$ by
\begin{align} 
\begin{aligned} 
{}^{(R)} g := R^{-2} g, \,\,{}^{(R)}k := R^{-1}\, k.
\end{aligned} \label{EQdefconformalchangeSPACELIKE8889}
\end{align}
\noindent By construction, $({}^{(R)} g,{}^{(R)} k)$ solve the spacelike constraint equations.
\end{enumerate}
\ni We remark that by \eqref{EQdefcoordinatechangeSPACELIKE8889} and \eqref{EQdefconformalchangeSPACELIKE8889}, for all integers $l\geq0$, we have the relations
\begin{align} 
\begin{aligned} 
\pr_y^{l} \lrpar{{}^{(R)}g_{ij}}= R^l \lrpar{\pr_x^l g_{ij}} \circ \Psi_R, \,\, \pr_y^{l} \lrpar{{}^{(R)}k_{ij}} = R^{l+1} \lrpar{\pr_x^l k_{ij}} \circ \Psi_R,
\end{aligned} \label{EQscalingderivatives}
\end{align}
where we denote
\begin{align*} 
\begin{aligned} 
{}^{(R)} g_{ij} := {}^{(R)} g(\pr_{y^i},\pr_{y^j}), \,\, {}^{(R)} k_{ij} := {}^{(R)} k(\pr_{y^i},\pr_{y^j}).
\end{aligned} 
\end{align*}
Moreover, it is straight-forward to prove that the local integrals scale as follows, for reals $r>0$,
\begin{align} 
\begin{aligned} 
\mathbf{E}_{\mathrm{ADM}}^{\mathrm{loc}}\lrpar{r,{}^{(R)}g,{}^{(R)}k} =& R^{-1}  \mathbf{E}_{\mathrm{ADM}}^{\mathrm{loc}}\lrpar{R r,g,k}, &
\mathbf{P}_{\mathrm{ADM}}^{\mathrm{loc}}\lrpar{r,{}^{(R)}g,{}^{(R)}k} =& R^{-1}  \mathbf{P}_{\mathrm{ADM}}^{\mathrm{loc}}\lrpar{R r,g,k}, \\
\mathbf{L}_{\mathrm{ADM}}^{\mathrm{loc}}\lrpar{r,{}^{(R)}g,{}^{(R)}k} =& R^{-2}  \mathbf{L}_{\mathrm{ADM}}^{\mathrm{loc}}\lrpar{R r,g,k}, &
\mathbf{C}_{\mathrm{ADM}}^{\mathrm{loc}}\lrpar{r,{}^{(R)}g,{}^{(R)}k} =& R^{-2}  \mathbf{C}_{\mathrm{ADM}}^{\mathrm{loc}}\lrpar{R r,g,k}.
\end{aligned} \label{EQLEMscalingADMlocal}
\end{align}

\ni\textbf{(3) Norms for spacelike initial data.} \ni We now turn to the introduction of local norms for spacelike initial data. For ease of presentation we use $C^k$-spaces.

\begin{definition}[$C^k$-norms for tenors] Let $K\subset \RRR^3$ denote a compact set with smooth boundary, and let $T$ be a $j$-tensor on $K$. Define, for integers $k\geq0$,
\begin{align*} 
\begin{aligned} 
\Vert T \Vert_{C^k(K)} := \sum\limits_{1\leq i_1, \cdots i_j \leq 3}\sum\limits_{0\leq \vert \a \vert \leq k} \Vert \pr^\a T_{i_1\cdots i_j} \Vert_{L^{\infty}(K)},
\end{aligned} 
\end{align*}
where $\a=(\a_1,\a_2,\a_3) \in \mathbb{N}^3$, $\pr^\a = \pr_1^{\a_1}\pr_2^{\a_2}\pr_3^{\a_3}$, and $T_{i_1 \cdots i_l}$ denotes the Cartesian coordinate components of $T$. Let $C^k(K)$ be the space of $k$-times continuously differentiable tensors $T$ on $K$ with $\Vert T  \Vert_{C^k(K)} < \infty$.
\end{definition}

\ni \textbf{Notation.} For two real numbers $0<r_1<r_2$, define the coordinate annulus $A_{[r_1,r_2]}$ by
\begin{align*} 
\begin{aligned} 
A_{[r_1,r_2]} := \left\{ x \in \RRR^3: r_1 \leq \vert x \vert \leq r_2 \right\}.
\end{aligned} 
\end{align*}

\begin{definition}[Local norm for spacelike initial data] \label{DEFnormSpacelikeInitialData} Let $0<r_1<r_2$ be two reals, and let $k\geq1$ be an integer. Given spacelike initial data $(g,k)$ on $A_{[r_1,r_2]}$, we define
\begin{align*} 
\begin{aligned} 
\left\Vert \lrpar{g,k} \right\Vert_{C^{k}(A_{[r_1,r_2]}) \times C^{k-1}(A_{[r_1,r_2]})} :=
\Vert g \Vert_{C^{k}(A_{[r_1,r_2]})} + \Vert k \Vert_{C^{k-1}(A_{[r_1,r_2]})}.
\end{aligned} 
\end{align*}
\end{definition}

\ni \textbf{Notation.} In the following we assume that the metric $g$ is $k_0$-times and the second fundamental form $k$ is $k_0$-times continuously differentiable, where the universal integer $k_0 \geq 8$ is determined by the condition that if, for $\varep>0$ sufficiently small,
\begin{align*}
\begin{aligned}
\Vert (g-e,k) \Vert_{C^{k_0}(A_{[r_1,r_2]}) \times C^{k_0-1}(A_{[r_1,r_2]})} \leq \varep,
\end{aligned}
\end{align*}
then the corresponding sphere data on spheres $S_r$, $r_1\leq r \leq r_2$ (see Section 7 in \cite{ACR3} for the construction and the precise gauge choices) lies in $\XX(S_r)$, with
\begin{align}
\begin{aligned}
\Vert x - \mathfrak{m} \Vert_{\XX(S_r)} \les \Vert (g-e,k) \Vert_{C^{k_0}(A_{[r_1,r_2]}) \times C^{k_0-1}(A_{[r_1,r_2]})}.
\end{aligned}\label{EQboundednessSpheredata}
\end{align}

\ni By scaling and the definition of asymptotic flatness, we have the following estimates for rescaled spacelike initial data. Its proof is omitted, see \eqref{EQscalingderivatives} and also Lemma 6.12 in \cite{ACR3}.
\begin{lemma}[Smallness of rescaled spacelike initial data] \label{LEMspacelikeRescaling} Let $(\Si,g,k)$ be asymptotically flat spacelike initial data. Let $(x^1,x^2,x^3)$ be coordinates near spacelike infinity. For real numbers $R\geq1$ sufficiently large, the rescaled spacelike initial data
$({}^{(R)}{g}_{ij}, {}^{(R)}{k}_{ij})$ is well-defined on ${A}_{[1/2,7/2]}$ and
\begin{align} 
\begin{aligned} 
\left\Vert \lrpar{{}^{(R)}{g} -e,  {}^{(R)}{k}} \right\Vert_{C^{k_0}({A}_{[1/2,7/2]})\times C^{k_0-1}({A}_{[1/2,7/2]})} 
= \OO(R^{-1}),
\end{aligned} \label{EQestimateLEMspacelikeRescaling}
\end{align}
where $e_{ij}=\de_{ij}$ denotes the Euclidean metric in coordinates $(x^1,x^2,x^3)$.
\end{lemma}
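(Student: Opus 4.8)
The plan is to prove the scaling estimate \eqref{EQestimateLEMspacelikeRescaling} by unwinding the definition of asymptotic flatness and tracking how coordinate derivatives transform under the dilation $\Psi_R$. The starting point is Definition \ref{DEFasymptoticFlatness}: outside a compact set there are coordinates $(x^1,x^2,x^3)$ in which $g_{ij}(x)-e_{ij} = \OO(|x|^{-1})$ and $k_{ij}(x) = \OO(|x|^{-2})$. The first thing I would note is that for the $C^{k_0}$ statement one needs not just decay of $g-e$ and $k$ themselves but of their derivatives up to order $k_0$; this is part of the standing regularity hypothesis on the data (the paper fixes $k_0\ge 8$ precisely so that the subsequent sphere-data construction works), and I would record explicitly that asymptotic flatness together with this regularity gives $\pr^\a_x(g_{ij}-e_{ij}) = \OO(|x|^{-1-|\a|})$ and $\pr^\a_x k_{ij} = \OO(|x|^{-2-|\a|})$ for $0\le|\a|\le k_0$. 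This is the standard meaning of asymptotic flatness at the relevant regularity; if the paper's Definition \ref{DEFasymptoticFlatness} is to be read literally as only giving $|\a|=0$ decay, one appeals instead to the additional smoothness assumptions imposed earlier on the data near spacelike infinity.

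The core computation is then just the chain rule applied to \eqref{EQscalingderivatives}. On the annulus $A_{[1/2,7/2]}$ in the $y$-coordinates, the corresponding $x$-points satisfy $|x| = R|y| \in [R/2, 7R/2]$, so in particular $|x| \gtrsim R$ uniformly. From \eqref{EQscalingderivatives} we have
\begin{align*}
\pr_y^{\a}\lrpar{{}^{(R)}g_{ij} - e_{ij}} = R^{|\a|}\lrpar{\pr_x^{\a}(g_{ij}-e_{ij})}\circ\Psi_R,
\end{align*}
using that $e_{ij}$ is constant and dilation-invariant. Taking $L^\infty$ over $A_{[1/2,7/2]}$ and inserting the decay estimate $|\pr_x^\a(g_{ij}-e_{ij})| \lesssim |x|^{-1-|\a|} \lesssim R^{-1-|\a|}$ yields $\Vert \pr_y^\a({}^{(R)}g - e)\Vert_{L^\infty(A_{[1/2,7/2]})} \lesssim R^{|\a|}\cdot R^{-1-|\a|} = R^{-1}$. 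Summing over $0\le|\a|\le k_0$ and over the finitely many component indices gives $\Vert {}^{(R)}g - e\Vert_{C^{k_0}(A_{[1/2,7/2]})} = \OO(R^{-1})$. The argument for $k$ is identical but with one extra power of $R$ in the transformation rule and one extra power of decay: $\pr_y^\a({}^{(R)}k_{ij}) = R^{|\a|+1}(\pr_x^\a k_{ij})\circ\Psi_R$, and $|\pr_x^\a k_{ij}| \lesssim |x|^{-2-|\a|}\lesssim R^{-2-|\a|}$, so $\Vert \pr_y^\a({}^{(R)}k)\Vert_{L^\infty} \lesssim R^{|\a|+1}R^{-2-|\a|} = R^{-1}$, and again summing over $0\le |\a|\le k_0-1$ gives $\Vert {}^{(R)}k\Vert_{C^{k_0-1}(A_{[1/2,7/2]})} = \OO(R^{-1})$. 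Adding the two bounds gives precisely \eqref{EQestimateLEMspacelikeRescaling}.

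There is no serious obstacle here; the lemma is essentially bookkeeping of scaling weights, which is why the paper omits the proof. The only point requiring a little care — and the thing I would state carefully rather than gloss — is the well-definedness claim: for $R$ large enough the set $\{R/2 \le |x| \le 7R/2\}$ lies inside the region $\Sigma\setminus K$ where the asymptotically flat chart exists, so ${}^{(R)}g$ and ${}^{(R)}k$ are genuinely defined on all of $A_{[1/2,7/2]}$ in the $y$-coordinates; this is what forces the "$R$ sufficiently large" hypothesis. One should also note that the decay of derivatives of the data, needed for $|\a|\ge 1$, is part of the asymptotic-flatness regularity package assumed throughout (cf. the discussion fixing $k_0\ge 8$), and reference Lemma 6.12 of \cite{ACR3} for the analogous statement in the setting treated there. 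With these two remarks in place the chain-rule computation above completes the proof.
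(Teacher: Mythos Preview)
Your proof is correct and follows exactly the approach the paper indicates: it omits the proof and points to the scaling relations \eqref{EQscalingderivatives} together with Lemma 6.12 of \cite{ACR3}, which is precisely the chain-rule bookkeeping you carry out. Your remarks on the need for derivative decay (implicit in the $C^{k_0}$ regularity assumption) and on well-definedness for $R$ large are the only points requiring comment, and you handle them appropriately.
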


\ni\textbf{(4) Comparison of local ADM integrals to charge integrals $(\Ef,\Pf,\Lf,\Gf)$.} Consider spacelike initial data $(g,k)$ on the annulus $A_{[1/2,7/2]}$ such that for a real number $\varep>0$,
\begin{align}
\begin{aligned}
\Vert (g-e,k) \Vert_{C^{k_0}(A_{[r_1,r_2]}) \times C^{k_0-1}(A_{[r_1,r_2]})} \leq \varep.
\end{aligned}\label{EQsmallness1}
\end{align}
By \eqref{EQboundednessSpheredata}, the associated sphere data $x_{r}$ on $S_r$ for each $r_1\leq r \leq r_2$ (see Section 7 in \cite{ACR3}) satisfies 
\begin{align}
\begin{aligned}
\Vert x_r - \mathfrak{m} \Vert_{\XX(S_r)} \les \varep.
\end{aligned}\label{EQsmallness2}
\end{align}
The analysis in \cite{ACR3} shows that under the smallness assumptions \eqref{EQsmallness1} and \eqref{EQsmallness2} we can relate
\begin{align}
\begin{aligned}
{\mathbf{E}}^{\mathrm{loc}}_{\mathrm{ADM}}(r,g,k) =& \Ef(x_r) + \OO(\varep^2), & ({\mathbf{P}}^{\mathrm{loc}}_{\mathrm{ADM}})^i(r,g,k) =& \Pf^{m_i}(x_r) + \OO(\varep^2), \\
({\mathbf{L}}^{\mathrm{loc}}_{\mathrm{ADM}})^i(r,g,k) =& \Lf^{m_i}(x_r) + \OO(\varep^2), & ({\mathbf{C}}^{\mathrm{loc}}_{\mathrm{ADM}})^i(r,g,k) =& \Gf^{m_i}(x_r) + r \Pf^{m_i}(x_r) + \OO(\varep^2),
\end{aligned}\label{EQchargecomparisonSR}
\end{align}
where $i=1,2,3$ and $(m_1,m_2,m_3)=(1,-1,0)$.\\

\ni We are now in position to prove Corollary \ref{CORspacelike}. The idea is to first rescale from the asymptotic region to small data, and then apply the bifurcate obstruction-free gluing; see also Figure \ref{FIGbifurcateLATERspacelike} below. The important points are to relate the rescaled small charges to the ADM parameters, and to verify that the assumptions of bifurcate obstruction-free null gluing are satisfied.

First, for $R>0$ large (to be determined), rescale the sphere $S_R$ with sphere data $x_R$ in the given spacelike initial data set $(\Si,g,k)$ to $S_1$. We denote the rescaled sphere data on $S_1$ by ${}^{(R)}x_1$, and by asymptotic flatness, it holds in particular that 
\begin{align*}
\begin{aligned}
\Vert {}^{(R)}x_1 - \mathfrak{m} \Vert_{\XX(S_1)} = \OO(R^{-1}).
\end{aligned}
\end{align*}
By the above, it holds for $R>0$ large that
\begin{align*}
\begin{aligned}
\Ef({}^{(R)}x_1)= \Ef^{\mathrm{loc}}_{\mathrm{ADM}}(1,{}^{(R)}g,{}^{(R)}k) + \OO(R^{-2})
= \frac{1}{R} \Ef^{\mathrm{loc}}_{\mathrm{ADM}}(R,g,k) +\OO(R^{-2})
= \frac{1}{R} \Ef_{\mathrm{ADM}}^{(\Si,g,k)} +\OO(R^{-2}).
\end{aligned}
\end{align*} 
Similarly, we can calculate, for $m=-1,0,1$ and $(i_{-1},i_0,i_1)=(2,3,1)$,
\begin{align*}
\begin{aligned}
\Pf^m({}^{(R)}x_1) =& \frac{1}{R} (\Pf_{\mathrm{ADM}}^{(\Si,g,k)})^{i_m} +\OO(R^{-2}), \,\, \Lf^m({}^{(R)}x_1)= \frac{1}{R^2} (\Lf_{\mathrm{ADM}}^{(\Si,g,k)})^{i_m} +\OO(R^{-2}),\\
\Gf^m({}^{(R)}x_1) =& \frac{1}{R^2}(\mathbf{C}_{\mathrm{ADM}}^{(\Si,g,k)})^{i_m} - \frac{1}{R} (\Pf_{\mathrm{ADM}}^{(\Si,g,k)})^{i_m} +\OO(R^{-2}).
\end{aligned}
\end{align*}

Second, rescale the sphere $S_{2R}$ with sphere data $x_{2R}^{\mathrm{Kerr}}$ in Kerr reference initial data to the sphere $S_2$ with sphere data ${}^{(R)}x_2^{\mathrm{Kerr}}$. Similarly to the above, for $R>0$ large it holds that
\begin{align*}
\begin{aligned}
\Ef({}^{(R)}x_2^{\mathrm{Kerr}}) =& \frac{1}{R} \Ef_{\mathrm{ADM}}^{\mathrm{Kerr}} +\OO(R^{-2}), & \Pf^m({}^{(R)}x_2^{\mathrm{Kerr}}) =& \frac{1}{R} (\Pf_{\mathrm{ADM}}^{\mathrm{Kerr}})^{i_m} +\OO(R^{-2}),\\
\Lf^m({}^{(R)}x_2^{\mathrm{Kerr}}) =& \frac{1}{R^2} (\Lf_{\mathrm{ADM}}^{\mathrm{Kerr}})^{i_m} +\OO(R^{-2}), & \Gf^m({}^{(R)}x_2^{\mathrm{Kerr}}) =& \frac{1}{R^2} (\mathbf{C}_{\mathrm{ADM}}^{\mathrm{Kerr}})^{i_m}-\frac{2}{R} (\Pf_{\mathrm{ADM}}^{\mathrm{Kerr}})^{i_m} +\OO(R^{-2}).
\end{aligned}
\end{align*}

We are now in position to apply the bifurcate obstruction-free null gluing, Theorem \ref{THMbifurcateFREE}, to the spheres $S_1$ and $S_2$ with respective sphere data ${}^{(R)}x_1$ and ${}^{(R)}x_1^{\mathrm{Kerr}}$. We have to check that there are constants $C_1, C_2>0$ such that for $R>0$ sufficiently large,
\begin{align}
\begin{aligned}
\triangle \Ef =& C_1 \varep, \\
\triangle \Lf =& C_2 \varep^2, \\
\triangle \Ef >& C_3 \lrpar{\vert \triangle \Pf \vert + \left\vert \triangle \Gf + \Pf({}^{(R)}x_1) \right\vert},
\end{aligned}\label{EQconditionsTBS}
\end{align}
for a sufficiently large constant $C_3>0$, and we denoted $\varep = R^{-1}$.\\

\ni \textbf{First of \eqref{EQconditionsTBS}.} By the above it holds that
\begin{align*}
\begin{aligned}
\triangle \Ef := \Ef({}^{(R)}x_2^{\mathrm{Kerr}})-\Ef({}^{(R)}x_1)= \frac{1}{R} \lrpar{\Ef_{\mathrm{ADM}}^{\mathrm{Kerr}}-\Ef_{\mathrm{ADM}}^{(\Si,g,k)}}+\OO(R^{-2}).
\end{aligned}
\end{align*}
Hence for $R>0$ sufficiently large, the first of \eqref{EQconditionsTBS} holds, with a constant
\begin{align*}
\begin{aligned}
C_1 > \half \lrpar{\Ef_{\mathrm{ADM}}^{\mathrm{Kerr}}-\Ef_{\mathrm{ADM}}^{(\Si,g,k)}}=\half \triangle \Ef_{\mathrm{ADM}}>0.
\end{aligned}
\end{align*}

\ni\textbf{Second of \eqref{EQconditionsTBS}.} By the above it holds that
\begin{align*}
\begin{aligned}
\triangle \Lf = \OO(R^{-2}),
\end{aligned}
\end{align*}
so that the second of \eqref{EQconditionsTBS} is satisfied.\\

\ni \textbf{Third of \eqref{EQconditionsTBS}.} On the one hand, from the above we have that
\begin{align*}
\begin{aligned}
\triangle \Ef = \frac{1}{R}\triangle\Ef_{\mathrm{ADM}}+\OO(R^{-2}), \,\, \vert \triangle \Pf \vert = \frac{1}{R} \vert \triangle\Pf_{\mathrm{ADM}}\vert+\OO(R^{-2}).
\end{aligned}
\end{align*}
On the other hand, we calculate that
\begin{align*}
\begin{aligned}
\left\vert \triangle \Gf + \Pf({}^{(R)}x_1) \right\vert =& \left\vert \Gf({}^{(R)}x_2^{\mathrm{Kerr}}) -\Gf({}^{(R)}x_1) + \Pf({}^{(R)}x_1) \right\vert \\
=& \left\vert \lrpar{\frac{1}{R^2} \mathbf{C}_{\mathrm{ADM}}^{\mathrm{Kerr}}-\frac{2}{R} \Pf_{\mathrm{ADM}}^{\mathrm{Kerr}}} -\lrpar{\frac{1}{R^2}\mathbf{C}_{\mathrm{ADM}}^{(\Si,g,k)} - \frac{1}{R} \Pf_{\mathrm{ADM}}^{(\Si,g,k)}} +\frac{1}{R} \Pf_{\mathrm{ADM}}^{(\Si,g,k)} \right\vert +\OO(R^{-2}) \\
=& \left\vert \frac{1}{R^2}\lrpar{\mathbf{C}_{\mathrm{ADM}}^{(\Si,g,k)}-\mathbf{C}_{\mathrm{ADM}}^{\mathrm{Kerr}}} -\frac{2}{R}\lrpar{ \Pf_{\mathrm{ADM}}^{\mathrm{Kerr}} - \Pf_{\mathrm{ADM}}^{(\Si,g,k)}} \right\vert +\OO(R^{-2}) \\
=& \frac{2}{R} \left\vert \triangle\Pf_{\mathrm{ADM}} \right\vert + \OO(R^{-2}).
\end{aligned}
\end{align*}
Recall that by assumption it holds that for a constant $\tilde{C}_3>0$,
\begin{align}
\begin{aligned}
\triangle \Ef_{\mathrm{ADM}}>0, \,\, \triangle \Ef_{\mathrm{ADM}} > \tilde{C}_3 \left\vert \Pf_{\mathrm{ADM}}\right\vert.
\end{aligned}\label{EQassumptionrecall9909}
\end{align}
Plugging the above expressions into the third of \eqref{EQconditionsTBS} and multiplying with $R>0$, we can rewrite the condition as
\begin{align*}
\begin{aligned}
\triangle \Ef_{\mathrm{ADM}} + \OO(R^{-1}) > C_3 \lrpar{ 3\left\vert \triangle\Pf_{\mathrm{ADM}} \right\vert +\OO(R^{-1})},
\end{aligned}
\end{align*}
which holds true for $\tilde{C}_3>0$ in \eqref{EQassumptionrecall9909} sufficiently large and $R>0$ sufficiently large (depending on $\triangle \Ef_{\mathrm{ADM}}$). This shows that the third of \eqref{EQconditionsTBS} is satisfied. 

Thus we can apply the bifurcate obstruction-free null gluing, Theorem \ref{THMbifurcateFREE}, to get a solution $x$ to the null structure equations along $\HH$ and $\HHb$ which agrees with ${}^{(R)}x_1$ on $S_1$ and with ${}^{(R)}x_2^{\mathrm{Kerr}}$ on $S_2$. As remarked after Theorem \ref{THMmain1}, the constructed solution (and the resulting solution to the full null structure equations) is sufficiently regular to apply local existence results for the Einstein equations and solve forward for a spacetime $(\MM,\g)$. In this spacetime, it is straight-forward to pick a spacelike hypersurface $\Si'$ connecting the spheres $S_1$ and $S_2$ such that $\Si'$ regularly extends the initial data hypersurfaces on which $(g,k)$ and the Kerr reference spacelike initial data are posed; for details we refer to Section 4.4 in \cite{ACR3} where a similar spacelike hypersurface $\Si'$ is constructed.

We conclude the proof by scaling the spacelike initial data back out by applying the inverse scaling transformation that was used to reduce to the small data setup. This finishes the proof of Corollary \ref{CORspacelike}.

\begin{figure}[H]
	\begin{center}
		\includegraphics[width=9.5cm]{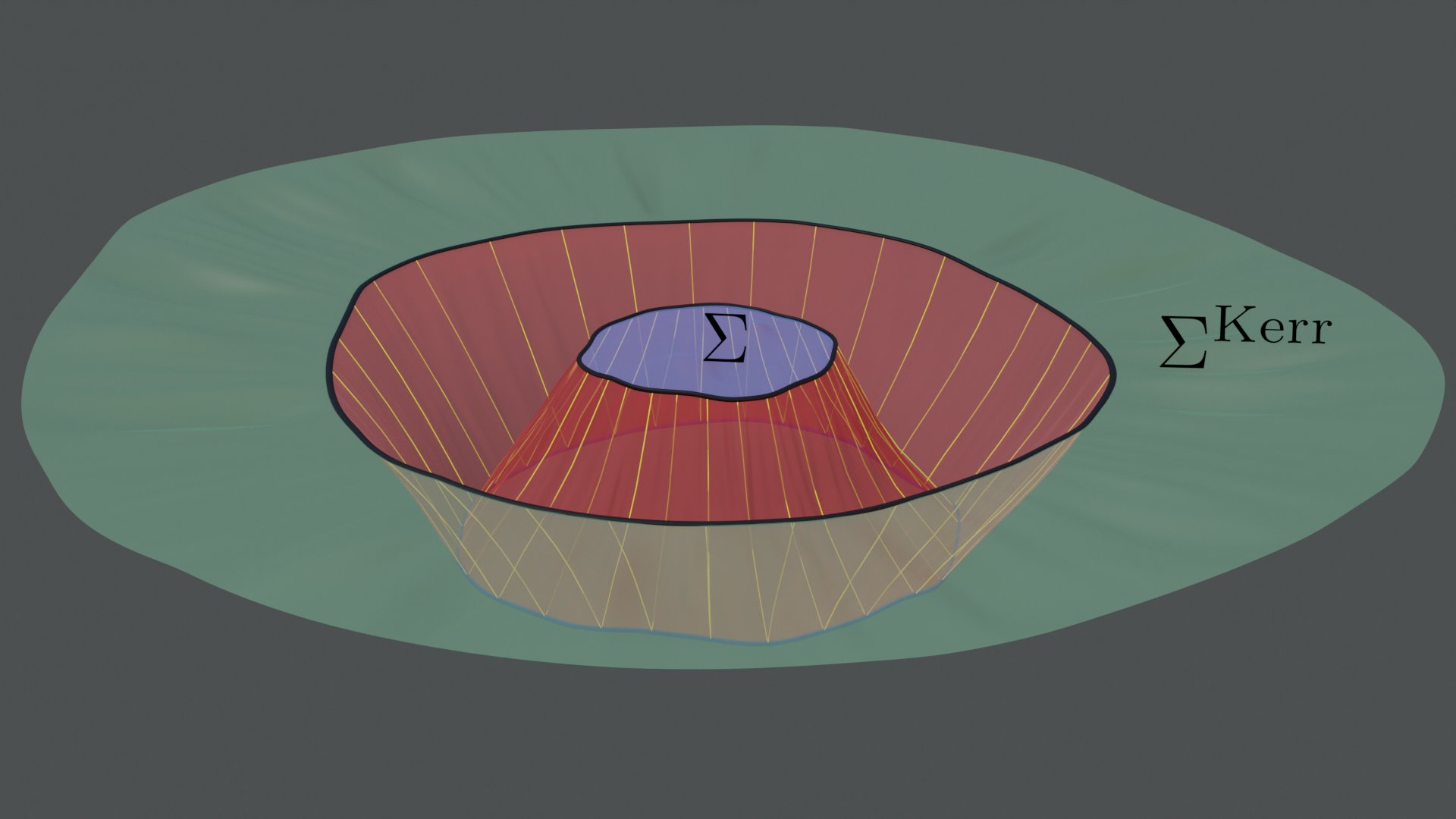} 
		\vspace{0.4cm} 
		\caption{Spacelike gluing via bifurcate null gluing.} \label{FIGbifurcateLATERspacelike}
	\end{center}
\end{figure}

\newpage

\section{Construction of high-frequency solutions}\label{SECconstructionSolution} \ni In this section we make an ansatz for a class of \emph{high-frequency characteristic seeds along an outgoing null hypersurface $\HH$} and subsequently construct, in explicit fashion, the corresponding high-frequency solutions to the null structure equations along $\HH$.

The construction starts from the following two base ``inputs". 
\begin{enumerate}
\item A (low-frequency) solution $x$ to the null structure equations along $\HH$, satisfying for some real number $\varep>0$,
\begin{align}
\Vert x-\mathfrak{m} \Vert_{\XX(\HH)} \leq \varep. \label{EQsmallnesscondBaseCONSTRUCTION}
\end{align}
Denote the metric components, Ricci components and null curvature components of $x$ by an $x$-subscript. For example, the \emph{null lapse} and \emph{induced metric on $S_v$} of $x$ are denoted by $\Om_x$ and $\gd_x$, respectively. By Definition \ref{DEFcharSEEDH}, the \emph{characteristic seed along $\HH$} of $x$ is given by $(\Om_x, \mathrm{conf}(\gd_x))$, and the \emph{characteristic seed on $S_1$} of $x$ is given by
\begin{align}
(\gd_x,\trchi_x,\trchib_x,\omb_x,(\Du\omb)_x,\eta_x,\chibh_x,\ab_x).	\label{EQcharSEEDgiven}
\end{align}

\item An $S_v$-tangential $2$-covariant tensor $W$ along $\HH$ of high-frequency in $v$. The precise class of $W$'s to be considered is introduced in Section \ref{SECAnsatzEstimates} below. In particular, $W$ is a linear combination of finitely many (fixed) tensor spherical harmonics, with the coefficients being high-frequency functions in $v$ and having large amplitude.
\end{enumerate}

\ni Given the above input, we consider the new \emph{\emph{high-frequency characteristic seed}} 
\begin{align}
\begin{aligned}
(\Om,\mathrm{conf}(\tilde{\gd})) \text{ along } \HH,
\end{aligned}\label{EQhighfrequcharseed1introsection}
\end{align}
where $\Om$ and $\tilde{\gd}$ are defined by
\begin{align}
\Om :=& \Om_x, \label{EQomegadef} \\
\tilde{\gd} :=& \, \gd_x + \varphi W, \label{EQrep1}
\end{align}
where $\varphi$ is a smooth cut-off function near $v\in \{1,2\}$ to ensure that $\varphi W$ is supported only inside of $1<v<2$. The goal of this section is to construct from \eqref{EQhighfrequcharseed1introsection} and \eqref{EQcharSEEDgiven} a high-frequency solution $x_{+W}$ to the null structure equations along $\HH$, and derive bounds in appropriate function spaces; see Section \ref{SECpreliminariesHIGHfrequency} below for an overview of the estimates and further remarks. 

As outlined in Section \ref{SUBSECcharseed}, we construct the solution $x_{+W}$ by integrating the null transport equations of the null structure equations in hierarchical order; the explicit order is \eqref{EQntePHIestimatesection01} for $\phi$, \eqref{EQnulleta} for $\eta$, \eqref{EQtrchib} for $\Om\trchib$, \eqref{EQchibh} for $\Om\chibh$, \eqref{EQombnulltransport} for $\omb$, \eqref{EQabNTE} for $\ab$, \eqref{EQnteFORduomb} for $\Du\omb$. Along the way we show how the remaining Ricci coefficients and null curvature components are directly determined through the null structure equations.

The rest of this section is structured as follows. 
\begin{itemize}
\item In Section \ref{SECAnsatzEstimates} we give the ansatz for the tensor $W$, and define the corresponding tensor norm. 
\item In Section \ref{SECpreliminariesHIGHfrequency} we give an overview of the estimates of this section and make further remarks on the control of high-frequency terms.
\item We derive preliminary control estimates in Section \ref{SECcontrol2}. 
\item In Sections \ref{SECphiEstimatesPrecise} to \ref{SECestimateDuomb} we construct and control the solution of the null transport equations along $\HH$ by hierarchical integration of null transport equations.

\item In Section \ref{SECestimateDiffest2} we derive \emph{difference estimates} for constructed solutions. These are employed in Section \ref{SECdiffEstimates2W} to prove \eqref{EQdiffEstimate2mainthm2statement} of Theorem \ref{THMmain0}, which is applied in the iteration scheme in the proof of the main theorem (see Section \ref{SECgluingEP}). 
\end{itemize}
As mentioned above, in Sections \ref{SECphiEstimatesPrecise}, \ref{SECestimatesOMTRCHI} and \ref{SECcontrolETA} the quantities $\phi, \Om\trchi$ and $\eta$ are analyzed \emph{beyond} the lowest order $\varep$. This is important for Section \ref{SECgluingL} where we study the transport equation for the charge $\mathbf{L}$ at the order of $\varep^2$.

\subsection{Ansatz for the tensor $W$} \label{SECAnsatzEstimates} 

\ni In this section we introduce our precise ansatz for the tensor $W$ which is used to define the \emph{high-frequency} characteristic seed in \eqref{EQrep1} from which, together with \eqref{EQcharSEEDgiven}, the null structure equations are solved in a hierarchical fashion in Sections \ref{SECphiEstimatesPrecise} to \ref{SECestimateDuomb}.

We define $W(v,\th^1,\th^2)$ along $\HH$ to be the $S_v$-tangential $\gac$-tracefree symmetric $2$-covariant tensorfield on $\HH$ given by
\begin{subequations}
\begin{align}
\begin{aligned}
W := -\varep^{3/2} \cos\lrpar{\frac{v}{\varep}} \mfW_0 - \varep^{5/4} \lrpar{\cos\lrpar{\frac{v}{\sqrt{\varep}}} \mfW_1 -\sin\lrpar{\frac{v}{\sqrt{\varep}}} \mfW_2},
\end{aligned}\label{EQWansatz}
\end{align}
where
\begin{align}
\begin{aligned}
\mfW_0 := f_{30} \psi^{30} +  f_{22} \psi^{22} + \sum\limits_{m=-1}^1 f_{2m} \psi^{2m}, \,\,
\mfW_1 := \tilde{f}_{30} \psi^{30}, \,\,
\mfW_2 := \sum\limits_{m=-1}^1\tilde{f}_{2m}  \phi^{2m}.
\end{aligned}\label{EQWansatzDetail}
\end{align}
\end{subequations}
\ni Here $\psi^{lm}$ and $\phi^{lm}$ are the ($v$-independent) electric and magnetic tensor spherical harmonics, respectively, defined in Appendix \ref{APPconstructionW}, and $f_{30}, f_{22}, f_{2m},\tilde{f}_{30}, \tilde{f}_{2m}, $ for $m=-1,0,1$ are smooth $v$-dependent (low-frequency) coefficient functions given by
\begin{subequations}
\begin{align}
f_{30} :=& \half c_{30}, \,\, f_{2m} := \frac{1}{\lrpar{Y^{30}\cdot Y^{2m}}^{(1m)}} \lrpar{c_{2m} \xi_1 + c'_{2m} \xi_2}, \label{EQchoiceofCoeffFunction1} \\
f_{22} :=& c_{22} \xi_4,\label{EQchoiceofCoeffFunction2} \\
\tilde{f}_{30} :=& \tilde{c}_{30}, \,\, \tilde{f}_{2m} := \frac{1}{\lrpar{Y^{30}\cdot Y^{2m}}^{(1m)}} \tilde{c}_{2m} \xi_3,\label{EQchoiceofCoeffFunction3}
\end{align}
\end{subequations}
where $c_{30},c_{2m},c'_{2m},c_{22},\tilde{c}_{30},\tilde{c}_{2m}$ are freely choosable constants, $\lrpar{Y^{30}\cdot Y^{2m}}^{(1m)}$, $m=-1,0,1$, are Fourier coefficients, see Appendix \ref{APPconstructionW}, and $\xi_1,\xi_2,\xi_3,\xi_4$ are fixed smooth (low-frequency) universal functions satisfying $6$ universal, linearly independent integral conditions (explicitly stated in \eqref{EQintegralXcond1}, \eqref{EQintegralXcond2} and \eqref{EQintegralXcond3}).\\

\ni \emph{Remarks on the ansatz for $W$.}
\begin{enumerate} 

\item For $\varep>0$ sufficiently small, $\tilde g$ in \eqref{EQrep1} is a positive-definite metric on the spheres $S_v$, hence the characteristic seed $(\Om,\mathrm{conf}(\tilde \gd))$ is well-defined.

\item In Section \ref{SECproofW} we analyze the transport of the charges $\mathbf{E},\mathbf{P},\mathbf{L},\mathbf{G}$ (see Definition \ref{DEFchargesEPLG}) of the solution $x_{+W}$ along $\HH$, and prove that the constants $c_{30},c_{2m},c'_{2m},c_{22},\tilde{c}_{30},\tilde{c}_{2m}$ can be used to adjust the charges on $S_2$ (subject to some inequalities).

\item Our choice of the ansatz for $W$ above is specific. It is however clear from our analysis that the problem of high-frequency null gluing of the charges $\Ef,\Pf,\Lf,\Gf$ admits much flexibility in the choice of $W$, in fact both in its high frequency $v$ and the angular dependences, see 
Section \ref{SECproofW}. We do not pursue this direction here.
\end{enumerate}

\ni Based on \eqref{EQWansatz}, \eqref{EQWansatzDetail}, \eqref{EQchoiceofCoeffFunction1}, \eqref{EQchoiceofCoeffFunction2}, \eqref{EQchoiceofCoeffFunction3}, we make the following definition.

\begin{definition}[Norm of $W$] \label{DEFdefinitionWnorm}
We define
\begin{align}
\begin{aligned}
\Vert \mfW_0 \Vert := \vert c_{30} \vert + \sum\limits_{m=-1}^1( \vert c_{2m} \vert +\vert c'_{2m} \vert ) + \vert c_{22} \vert, \,\,
\Vert \mfW_1 \Vert := \vert \tilde{c}_{30} \vert, \,\,
\Vert \mfW_2 \Vert := \sum\limits_{m=-1}^1 \vert \tilde{c}_{2m} \vert,
\end{aligned}\label{EQdefSubnorms}
\end{align}
and let moreover
\begin{align}
\begin{aligned}
\Vert W \Vert := \Vert \mfW_0 \Vert + \Vert \mfW_1 \Vert + \Vert \mfW_2 \Vert.
\end{aligned}\label{EQdefNormW}
\end{align}
\end{definition}

\ni \textbf{Notation.} In the next sections, Ricci coefficients and null curvature components with tilde are calculated with respect to the metric $\tilde{\gd}$ and null lapse $\Om$ along $\HH$. 

\subsection{Overview of estimates and discussion}  \label{SECpreliminariesHIGHfrequency}

In the following Sections \ref{SECcontrol2} to \ref{SECestimateDuomb} we prove the following bound,
\begin{align}
\begin{aligned}
\Vert x_{+W} - x \Vert_{\XX(S_2)} + \Vert x_{+W} - x \Vert_{\XX^{\mathrm{h.f.}}(\HH)} \les \varep \Vert \mfW_0 \Vert + \varep^{5/4} \Vert W \Vert,
\end{aligned}\label{EQestimateDIFFx0x1}
\end{align}

\ni\textbf{Notation.} In \eqref{EQestimateDIFFx0x1}, as in the rest of the paper, we tacitly suppressed the dependence of the constant in the inequality on $\Vert \mfW_0\Vert$ and $\Vert \mfW_1\Vert+\Vert\mfW_2\Vert$. In fact, the right-hand side of \eqref{EQestimateDIFFx0x1} is given by
\begin{align}
\begin{aligned}
C_{\Vert \mfW_0\Vert}\varep \Vert \mfW_0 \Vert + C_{\Vert \mfW_1\Vert+\Vert \mfW_2\Vert}\varep^{5/4} \Vert W \Vert,
\end{aligned}\label{EQrealequationestimates}
\end{align}
with constants $C_{\Vert \mfW_0\Vert}>0$ and $C_{\Vert \mfW_1\Vert+\Vert \mfW_2\Vert}>0$ depending on $\Vert \mfW_0\Vert$ and $\Vert \mfW_1\Vert+\Vert \mfW_2\Vert$, respectively. In the setting of this paper, see Remark \ref{REMestimatesW} with \eqref{EQsmallnessw0estimatew1w2}, it holds that 
\begin{align}
\begin{aligned}
\Vert \mfW_0 \Vert \les 1, \,\, \Vert \mfW_1\Vert+\Vert \mfW_2\Vert \les 1 + \sqrt{C_2},
\end{aligned}\label{EQsizespaper}
\end{align}
where $C_2>0$ is the (possibly large) constant appearing in \eqref{EQsmallnessMain1002}. However, the constant $C_{\Vert \mfW_1\Vert+\Vert \mfW_2\Vert}$ always appears in terms with extra leeway in $\varep$ compared to terms with $C_{\Vert \mfW_0\Vert}$; for example, in \eqref{EQrealequationestimates} it comes with an extra $\varep^{1/4}$, which can be used to control the constant in front of $\Vert W \Vert$ in \eqref{EQrealequationestimates}. To not carry around the constants through the paper, we therefore write the estimates as in \eqref{EQestimateDIFFx0x1} and just remember the above. \\ 

\ni \emph{Remarks on the bound \eqref{EQestimateDIFFx0x1}.}

\begin{enumerate}
\item The bound \eqref{EQestimateDIFFx0x1} shows that despite the high-frequency nature of the characteristic seed, the resulting sphere data $x_{+W}\vert_{S_2}$ on $S_2$ (which includes, in particular, $\Om\chih$ and $\a$ evaluated at $S_2$) is small in $\XX(S_2)$. This is a consequence of $\varphi W$ being supported inside $1<v<2$. 
\item By \eqref{EQsmallnesscondBaseCONSTRUCTION} and the trivial estimate
\begin{align*}
\begin{aligned}
\Vert x_{+W} - \mathfrak{m} \Vert_{\XX^{\mathrm{h.f.}}(\HH)} \les \Vert x_{+W} - x \Vert_{\XX^{\mathrm{h.f.}}(\HH)} + \Vert x - \mathfrak{m} \Vert_{\XX(\HH)},
\end{aligned}
\end{align*}
the bound \eqref{EQestimateDIFFx0x1} shows that the constructed solution $x_{+W}$ to the null structure equations along $\HH$ is close to Minkowski in the space $\XX^{\mathrm{h.f.}}(\HH)$ which is sufficiently regular to apply local existence results for the characteristic Cauchy problem for the Einstein equations; see also Remark (3) after Theorem \ref{THMmain1}.)
\item The estimate \eqref{EQestimateDIFFx0x1} can be generalized to bound also higher transversal derivatives (such as $\Du\ab$) along $\HH$ and on $S_2$. Indeed, it is straight-forward to check that commutation of the null structure equations with $\Du$ does not lead to extra $D$-derivatives on the right-hand sides of the null transport equations of transversal derivatives, and thus there is no additional largeness coming from the high-frequency characteristic seed. Hence the control of the null transport equations and their nonlinearities are similar as for \eqref{EQestimateDIFFx0x1}.
\end{enumerate}

\ni We now turn to a preliminary discussion of the proof of \eqref{EQestimateDIFFx0x1}. By the null transport equations \eqref{EQfirstvariation001} and \eqref{EQtransportEQLnullstructurenonlinear},
\begin{align*}
\begin{aligned}
\Om\chi=\half D\gd, \,\,  \Om \ab  = -D\chih +\Om \vert \chih \vert^2 \gd + \om \chih,
\end{aligned}
\end{align*}
the high-frequency nature of the characteristic seed \eqref{EQhighfrequcharseed1introsection} causes $\Om\chih$ and $\a$ to be large along $\HH$,
\begin{align*}
\begin{aligned}
\vert \Om\chih \vert \sim \varep^{1/2}, \,\, \vert \a \vert \sim \varep^{-1/2}.
\end{aligned}
\end{align*}
As $\Om\chih$ lies at the core of the null transport equations of the null structure equations and their integration, it may seem surprising to nevertheless be able to bound $x$ in $\XX(S_2)$ and $\XX^{\mathrm{h.f.}}(\HH)$ to be of small size $\varep>0$. The key to proving \eqref{EQestimateDIFFx0x1} is to use that \emph{high-frequency terms improve when integrated}.

To illustrate this feature in a simple setting, let $f$ be a smooth (low-frequency) function along $\HH$ and consider first the following ODE for a scalar quantity $Q$,
\begin{align}
\begin{aligned}
DQ = f \varep^{1/2} \sin\lrpar{\frac{v}{\sqrt{\varep}}} \text{ along } \HH,
\end{aligned}\label{EQODEsimpletoy}
\end{align}
where $\varphi$ is a cut-off functions on $\HH$ vanishing at $v=1$ and $v=2$. We underline that the right-hand side source term of \eqref{EQODEsimpletoy} is large and high-frequency, and represents the way $\Om\chih$ appears linearly (for example, on the right-hand side of the null transport equation for $\eta$) or multiplied with a low-frequency term (for example, the second line of \eqref{EQomchihcompleteformula}) in null transport equations. When integrating \eqref{EQODEsimpletoy} in $v$, we observe the \emph{high frequency improvement} under integration, namely
\begin{align*}
\begin{aligned}
Q(v) = Q(1) + \varep^{1/2} \int\limits_1^v f(v') \sin\lrpar{\frac{v'}{\sqrt{\varep}}} dv' = \OO(\varep),
\end{aligned}
\end{align*}
where we integrated by parts and estimated as follows,
\begin{align*}
\begin{aligned}
\int\limits_1^v f(v') \sin\lrpar{\frac{v'}{\sqrt{\varep}}} dv' =&\sqrt{\varep} \left[ f \lrpar{-\cos\lrpar{\frac{v'}{\sqrt{\varep}}}} \right]_1^v + \sqrt{\varep}\int\limits_1^v (Df) \cos\lrpar{\frac{v'}{\sqrt{\varep}}} dv' \\
\les& \sqrt{\varep} \lrpar{\Vert f \Vert_{L^\infty(\HH)} + \Vert Df \Vert_{L^\infty(\HH)} }.
\end{aligned}
\end{align*}
For the null transport equations where the high-frequency terms appear quadratically, for example, 
\begin{itemize}
\item the Raychaudhuri equation at order $\varep$ through $\vert \Om\chih\vert^2_\gd$, or
\item the null transport equation for $\eta$ at order $\varep^2$ through $\Divd_{\gd}(\Om\chih)$, 
\end{itemize}
we observe that in the corresponding model problem, the right-hand side in the analog of \eqref{EQODEsimpletoy} is given by, for example,
\begin{align}
\begin{aligned}
\lrpar{\varep^{1/2} \sin^2\lrpar{\frac{v}{\sqrt{\varep}}} }^2 = \frac{\varep}{2} -  \frac{\varep}{2} \cos\lrpar{\frac{2v}{\sqrt{\varep}}},
\end{aligned}\label{EQsinesquaredformula}
\end{align}
where the first term on the right-hand side yields a (low-frequency) contribution of size $\varep/2$ (this is the key to adjust the charges $\Ef,\Pf,\Lf,\Gf$), and the second term improves by integration to be of smaller size $\varep^{3/2}$. With view on the ansatz \eqref{EQWansatz} for $W$ where both $\sin$ and $\cos$ appear, we note that we can similarly express
\begin{align*}
\begin{aligned}
\lrpar{\varep^{1/2} \cos^2\lrpar{\frac{v}{\sqrt{\varep}}} }^2 = \frac{\varep}{2}+  \frac{\varep}{2} \cos\lrpar{\frac{2v}{\sqrt{\varep}}}.
\end{aligned}
\end{align*}
Concerning cross-over terms in the quadratic nonlinearities, we note that the terms
\begin{align*}
\begin{aligned}
\sin\lrpar{\frac{v}{\sqrt{\varep}}}\cos\lrpar{\frac{v}{\sqrt{\varep}}}, \,\, \sin\lrpar{\frac{v}{{\varep}}}\sin\lrpar{\frac{v}{\sqrt{\varep}}}, \,\, \sin\lrpar{\frac{v}{{\varep}}}\cos\lrpar{\frac{v}{\sqrt{\varep}}},
\end{aligned}
\end{align*}
are high-frequency and yield high-frequency integration improvements of $\varep^{1/2}$.

In addition to the control \eqref{EQestimateDIFFx0x1} we prove in Section \ref{SECestimateDiffest2} the following \emph{difference estimates} which form the basis of the estimates in the iteration scheme in the proof of the main theorem, see Sections \ref{SECdiffEstimates2W} and \ref{SECgluingEP}. For given solutions $x$ and $x'$ to the null structure equations (both assumed to satisfy \eqref{EQsmallnesscondBaseCONSTRUCTION} for $\varep>0$ sufficiently small) and given tensors $W$ and $W'$ of the form \eqref{EQWansatz}-\eqref{EQchoiceofCoeffFunction3}, it holds that
\begin{align}
\begin{aligned}
\Vert (x_{+W} - x)-(x'_{+W'}-x') \Vert_{\XX(S_2)} \les \varep \Vert x - x' \Vert_{\XX(\HH)} +\varep \Vert \mfW_0-\mfW_0' \Vert + \varep^{5/4} \Vert W-W' \Vert,
\end{aligned}\label{EQtoProveIterationEstimateForX}
\end{align}
where, similarly to \eqref{EQestimateDIFFx0x1}, we tacitly suppressed the dependence of the constant. The proof of \eqref{EQtoProveIterationEstimateForX} is a generalization of the proof of \eqref{EQestimateDIFFx0x1}, based on the same idea of high-frequency improvement after integration, and standard product estimates.

\subsection{Preliminary estimates for the characteristic seed} \label{SECcontrol2} \ni In this section we calculate and estimate the quantity
\begin{align}
\begin{aligned}
e:= \vert \Om\chih\vert^2_\gd		
\end{aligned}\label{EQedef}
\end{align}
of the solution $x_{+W}$ to be constructed, which plays a crucial part in the Raychaudhuri equation \eqref{EQcompactRAY}. It is possible to calculate $e$ from the characteristic seed due to the conformal invariance (see \eqref{EQshearINVARIANCE}),
\begin{align*} 
\begin{aligned}
e:= \vert \Om\chih \vert^2_{\gd} = \vert \Om\widehat{\tilde{\chi}} \vert^2_{\tilde{\gd}}=: \tilde{e},
\end{aligned}
\end{align*}
where $\tilde{e}$ is directly calculated from $(\Om,\tilde{\gd})$ along $\HH$. 

In the following we calculate and estimate $\tilde{e}$, see \eqref{EQomchihcompleteformula} below, and prove that in $L^\infty_vH^6(S_v)$,
\begin{align}
\begin{aligned}
\int\limits_1^v \lrpar{\vert \Om\widehat{\tilde{\chi}} \vert^2_{\tilde{\gd}}- \vert (\Om\widehat{\tilde{\chi}} )_x\vert^2_{\gd_x}} = \varep \int\limits_1^v \frac{\varphi^2}{8r^4} \lrpar{\vert \mfW_0\vert^2_\gac}+ \varep^{3/2} \int\limits_1^v \frac{\varphi^2 }{8r^4}\lrpar{ \vert \mfW_1\vert^2_\gac+\vert \mfW_2\vert^2_\gac} + \OO(\varep^{7/4}\Vert W \Vert).
\end{aligned}\label{EQomchih2expr}
\end{align}

\begin{remark} \label{REMdiffWEIGHTS}
In the following sections we use that analogous versions of \eqref{EQomchih2expr} also hold for different $r$-weights, for example,
\begin{align*}
	\begin{aligned}
		\int\limits_1^v r \lrpar{\vert \Om\widehat{\tilde{\chi}} \vert^2_{\tilde{\gd}}- \vert (\Om\widehat{\tilde{\chi}} )_x\vert^2_{\gd_x}} = \varep \int\limits_1^v \frac{\varphi^2}{8r^3} \lrpar{\vert \mfW_0\vert^2_\gac}+ \varep^{3/2} \int\limits_1^v \frac{\varphi^2 }{8r^3}\lrpar{ \vert \mfW_1\vert^2_\gac+\vert \mfW_2\vert^2_\gac} + \OO(\varep^{7/4}\Vert W \Vert).
	\end{aligned}
\end{align*}
The proof is identical to \eqref{EQomchih2expr} and thus omitted.
\end{remark}

First, from \eqref{EQrep1} and \eqref{EQWansatz} we directly calculate that in $L^\infty_v H^6(S_v)$,
\begin{align*}
\begin{aligned}
D\tilde{\gd} = D\gd_x + \varphi \lrpar{\varep^{1/2} \sin\lrpar{\frac{v}{\varep}} \mfW_0 + \varep^{3/4} \sin\lrpar{\frac{v}{\sqrt{\varep}}} \mfW_1 + \varep^{3/4} \cos\lrpar{\frac{v}{\sqrt{\varep}}} \mfW_2} + \OO(\varep^{5/4}\Vert W \Vert).
\end{aligned}
\end{align*}
Taking the trace yields that in $L^\infty_v H^6(S_v)$,
\begin{align*}
\begin{aligned}
\tr_{\tilde{\gd}}(D\tilde{\gd}) = \tr_{{\gd}_x}(D{\gd}_x) + \OO(\varep^{5/4}\Vert W \Vert),
\end{aligned}
\end{align*}
where we used \eqref{EQsmallnesscondBaseCONSTRUCTION} and that by \eqref{EQsmallnesscondBaseCONSTRUCTION} in $L^\infty_v H^6(S_v)$,
\begin{align}
\begin{aligned}
\tilde{\gd}^{AB} = \gd_x^{AB} + \OO(\varep^{5/4}\Vert W \Vert).
\end{aligned}\label{EQinverseMetricExpansion}
\end{align}
From the above we calculate that in $L^\infty_v H^6(S_v)$,
\begin{align}
\begin{aligned}
\Om\widehat{\tilde{\chi}} =& \half \lrpar{D\tilde{\gd}-\half \tr_{\tilde{\gd}}(D\tilde{\gd})\tilde{\gd}}\\
=& \frac{\varphi}{2} \lrpar{\varep^{1/2} \sin\lrpar{\frac{v}{\varep}} \mfW_0 + \varep^{3/4} \sin\lrpar{\frac{v}{\sqrt{\varep}}} \mfW_1 + \varep^{3/4} \cos\lrpar{\frac{v}{\sqrt{\varep}}} \mfW_2} \\
&+ (\Om\widehat{\chi})_x+\OO(\varep^{5/4}\Vert W \Vert).
\end{aligned}\label{EQOmchihtildeexpr}
\end{align}
From \eqref{EQOmchihtildeexpr} we get, using \eqref{EQinverseMetricExpansion}, that in $L^\infty_vH^6(S_v)$,
\begin{align}
\begin{aligned}
\vert \Om\widehat{\tilde{\chi}} \vert^2_{\tilde{\gd}}=& \frac{\varphi^2 \varep}{4r^4} \lrpar{\sin^2\lrpar{\frac{v}{\varep}}\vert \mfW_0\vert^2_\gac}+ \frac{\varphi^2 \varep^{3/2}}{4r^4}\lrpar{ \sin^2\lrpar{\frac{v}{\varep^{1/2}}}\vert \mfW_1\vert^2_\gac+\cos^2\lrpar{\frac{v}{\varep^{1/2}}}\vert \mfW_2\vert^2_\gac} \\
&+ (\Om\chih)_x\cdot \varphi \lrpar{\varep^{1/2} \sin\lrpar{\frac{v}{\varep}} \mfW_0 + \varep^{3/4} \sin\lrpar{\frac{v}{\varep^{1/2}}} \mfW_1 + \varep^{3/4} \cos\lrpar{\frac{v}{\varep^{1/2}}} \mfW_2} \\
&+ \frac{\varphi^2}{2} \lrpar{\varep^{1/2} \sin\lrpar{\frac{v}{\varep}} \mfW_0} \cdot \lrpar{\varep^{3/4} \sin\lrpar{\frac{v}{\varep^{1/2}}} \mfW_1 + \varep^{3/4} \cos\lrpar{\frac{v}{\varep^{1/2}}} \mfW_2} \\
&+ \frac{\varphi^2}{2} \lrpar{\varep^{3/4} \sin\lrpar{\frac{v}{\varep^{1/2}}} \mfW_1}\cdot\lrpar{\varep^{3/4} \cos\lrpar{\frac{v}{\varep^{1/2}}} \mfW_2} + \OO(\varep^{7/4}\Vert W \Vert) + \vert (\Om\chih)_x\vert^2_{\gd_x}.
\end{aligned}\label{EQomchihcompleteformula}
\end{align}
Integrating \eqref{EQomchihcompleteformula} and applying the remarks of Section \ref{SECpreliminariesHIGHfrequency} leads to \eqref{EQomchih2expr}.

\subsection{Construction of $\gd$} \label{SECphiEstimatesPrecise} Having calculated and estimated $e$ in the previous section (see \eqref{EQedef}), we are now in position to construct and estimate the metric $\gd$ along $\HH$. Recall from \eqref{EQconfDecompintro} that $\gd$ is given through its conformal decomposition,
\begin{align*}
\begin{aligned}
\gd = \phi^2 \gd_c,
\end{aligned}
\end{align*}
where $\gd_c$ is defined as the unique representative of $\mathrm{conf}(\gd)$ along $\HH$ such that $\det \gd_c=\det\gac$, which in our setting is explicitly given by
\begin{align}
\begin{aligned}
\gd_c = \sqrt{\det\gac} \sqrt{\det\tilde{\gd}}^{-1} \tilde{\gd},
\end{aligned}\label{EQgdcDEF}
\end{align}
and the conformal factor $\phi$ is determined by the following linear second-order ODE (stemming from combining \eqref{EQcompactRAYpart1} and \eqref{EQfirstvariation001part2}),
\begin{align}
\begin{aligned}
D^2 \phi - 2 \om D\phi +\half \vert \Om\chih \vert^2 \phi =0.
\end{aligned}\label{EQntePHIestimatesection01}
\end{align}
The coefficients in \eqref{EQntePHIestimatesection01} are calculated from the characteristic seed along $\HH$, and the initial values $(\phi\vert_{S_1}, D\phi \vert_{S_1})$ for \eqref{EQntePHIestimatesection01} are given by the characteristic seed on $S_1$ through \eqref{EQconfDecompintro} and \eqref{EQcompactRAYpart1},
\begin{align*}
\begin{aligned}
\gd = \phi^2\gd_c, \,\, D\phi = \frac{\Om\trchi \phi}{2} \text{ on } S_1.
\end{aligned}
\end{align*}
In the following we first derive estimates for $\phi$, and then prove bounds for $\gd_c$ and $\gd$.\\

\ni \textbf{Estimates for $\phi$.} We claim that $\phi$ is bounded by
\begin{align}
\begin{aligned}
\Vert \phi-\phi_x \Vert_{H^6_2(\HH)} \les \varep \Vert \mfW_0 \Vert + \varep^{5/4} \Vert W \Vert.
\end{aligned}\label{EQphidiffest1}
\end{align}
and moreover, $\phi$ satisfies the following representation formula in $H^6_2(\HH)$,
\begin{align}
\begin{aligned}
\phi-\phi_x= -\varep \int\limits_1^v\int\limits_1^{v'} \frac{\varphi^2}{16r^3} \vert \mfW_0\vert_{\gac}^2 - \varep^{3/2}\int\limits_1^v\int\limits_1^{v'} \frac{\varphi^2}{16r^3} \lrpar{\vert \mfW_1\vert_{\gac}^2+\vert \mfW_2\vert_{\gac}^2}+\OO\lrpar{\varep^{7/4} \Vert W \Vert}.
\end{aligned}\label{EQrepformulaphi}
\end{align}
The representation formula \eqref{EQrepformulaphi} plays an important role in Section \ref{SECgluingL} where it is used that $\phi-\phi_x$ is low-frequency up to an error $\OO(\varep^{7/4}\Vert W \Vert)$.

We first rewrite the ODE \eqref{EQntePHIestimatesection01} in the form \eqref{EQcompactRAY}, that is,
\begin{align*}
\begin{aligned}
D\lrpar{\Om^{-2}D\phi} = -\frac{\phi}{2\Om^2} \vert \Om\chih \vert^2_{\gd}.
\end{aligned}
\end{align*}
To prove \eqref{EQphidiffest1} and \eqref{EQrepformulaphi} consider the ODE for the difference $\phi-\phi_x$,
\begin{align*}
\begin{aligned}
D\lrpar{\Om^{-2} D\lrpar{\phi-\phi_x}} =& -\frac{1}{2\Om^2} \lrpar{\phi \vert\Om\chih\vert^2_\gd - \phi_x \vert(\Om\chih)_x\vert^2_{\gd_x}} \\
=& -\frac{1}{2\Om^2} \lrpar{(\phi-\phi_x) \vert\Om\chih\vert^2_\gd + \phi_x \lrpar{\vert\Om\chih\vert^2_\gd-\vert(\Om\chih)_x\vert^2_{\gd_x}}},
\end{aligned}
\end{align*}
and integrate in $v$ twice to get
\begin{align}
\begin{aligned}
\lrpar{\phi-\phi_x}\vert_{S_v} = - \int\limits_1^v \Om^2 \int\limits_1^{v'} \frac{1}{2\Om^2} \lrpar{(\phi-\phi_x) \vert\Om\chih\vert^2_\gd + \phi_x \lrpar{\vert\Om\chih\vert^2_\gd-\vert(\Om\chih)_x\vert^2_{\gd_x}}} dv''dv',
\end{aligned}\label{EQrepFormulaDiffPHI}
\end{align}
where we used that by construction, $\phi=\phi_x$ and $D\phi=D\phi_x$ on $S_1$.

Using \eqref{EQsmallnesscondBaseCONSTRUCTION}, \eqref{EQomchihcompleteformula} and  \eqref{EQrepFormulaDiffPHI}, a straight-forward bootstrapping argument proves \eqref{EQphidiffest1}, i.e.
\begin{align*}
\begin{aligned}
\Vert \phi-\phi_x \Vert_{H^6_2(\HH)} \les \varep \Vert W \Vert.
\end{aligned}
\end{align*}
Having \eqref{EQphidiffest1}, we can apply a version of \eqref{EQomchih2expr} (see Remark \ref{REMdiffWEIGHTS}) to \eqref{EQrepFormulaDiffPHI} to get that in $H^6_1(\HH)$,
\begin{align*}
\begin{aligned}
\lrpar{\phi-\phi_x}\vert_{S_v} =& - \int\limits_1^v \Om^2 \int\limits_1^{v'} \frac{1}{2\Om^2} \lrpar{(\phi-\phi_x) \vert\Om\chih\vert^2_\gd + \phi_x \lrpar{\vert\Om\chih\vert^2_\gd-\vert(\Om\chih)_x\vert^2_{\gd_x}}} dv''dv' \\
=&- \int\limits_1^v \int\limits_1^{v'} \frac{r}{2} \lrpar{\vert\Om\chih\vert^2_\gd-\vert(\Om\chih)_x\vert^2_{\gd_x}} dv''dv' + \OO(\varep^{7/4}\Vert W \Vert)\\
=& -\varep \int\limits_1^v\int\limits_1^{v'} \frac{\varphi^2}{16r^3} \vert \mfW_0\vert_{\gac}^2 - \varep^{3/2}\int\limits_1^v\int\limits_1^{v'} \frac{\varphi^2}{16r^3} \lrpar{\vert \mfW_1\vert_{\gac}^2+\vert \mfW_2\vert_{\gac}^2}+\OO\lrpar{\varep^{7/4} \Vert W \Vert}.
\end{aligned}
\end{align*}
This finishes the proof of \eqref{EQrepformulaphi}.\\

\ni \textbf{Estimate for $\gd-\gd_x$.} Next we show that
\begin{align}
\Vert \gd-\gd_x \Vert_{L^\infty_vH^6(S_v)} \les \varep \Vert \mfW_0 \Vert + \varep^{5/4} \Vert W \Vert. \label{EQestimationdifferencegd}
\end{align}
Indeed, by \eqref{EQconfDecompintro} and \eqref{EQgdcDEF} it holds that 
\begin{align}
\begin{aligned}
\gd = \phi^2 \gd_c = \phi^2 \sqrt{\det\gac}\sqrt{\det\tilde{\gd}}^{-1} \tilde{\gd},
\end{aligned}\label{EQconformalrelation2}
\end{align}
so that
\begin{align}
\begin{aligned}
\gd-\gd_x =& \phi^2 \sqrt{\det\gac}\sqrt{\det\tilde{\gd}}^{-1} \tilde{\gd} - \gd_x \\
=& \phi^2 \sqrt{\det\gac}\sqrt{\det\tilde{\gd}}^{-1} (\tilde{\gd}-\gd_x) + \gd_x \lrpar{\phi^2 \sqrt{\det\gac}\sqrt{\det\tilde{\gd}}^{-1}-1}.
\end{aligned}\label{EQdiffgdgdx1009}
\end{align}
On the one hand, by \eqref{EQrep1}, in $L^\infty_vH^6(S_v)$,
\begin{align}
\begin{aligned}
\tilde{\gd}-\gd_x = \varphi W = \OO(\varep^{5/4}\Vert W \Vert).
\end{aligned}\label{EQrecallDifferencecontrolsection0091}
\end{align}
On the other hand, in $L^\infty_vH^6(S_v)$,
\begin{align}
\begin{aligned}
&\phi^2 \sqrt{\det\gac}\sqrt{\det\tilde{\gd}}^{-1}-1 \\
=& \phi^2 \sqrt{\det\gac}\sqrt{\det\tilde{\gd}}^{-1}- \phi_x^2 \sqrt{\det\gac}\sqrt{\det\gd_x}^{-1}\\
=& \sqrt{\det\gac} \lrpar{(\phi^2-\phi_x^2)\sqrt{\det\tgd}^{-1}+\phi_x^2 \lrpar{\sqrt{\det\tgd}^{-1}-\sqrt{\det\gd_x}^{-1}}} \\
=& \sqrt{\det\gac} \lrpar{(2\phi_x +(\phi-\phi_x))(\phi-\phi_x)\sqrt{\det\tgd}^{-1}+\phi_x^2 \lrpar{\sqrt{\det\tgd}^{-1}-\sqrt{\det\gd_x}^{-1}}}\\
=&\frac{2}{r}(\phi-\phi_x) + \OO(\varep^{7/4} \Vert W \Vert)\\
=& \frac{2}{r} \lrpar{-\varep \int\limits_1^v\int\limits_1^{v'} \frac{\varphi^2}{16r^3} \vert \mfW_0\vert_{\gac}^2 - \varep^{3/2}\int\limits_1^v\int\limits_1^{v'} \frac{\varphi^2}{16r^3} \lrpar{\vert \mfW_1\vert_{\gac}^2+\vert \mfW_2\vert_{\gac}^2}}+\OO\lrpar{\varep^{7/4} \Vert W \Vert}.
\end{aligned}\label{EQexpansionConformalFactor0091}
\end{align}
where we used that, as $x$ is a solution to the null structure equations,
\begin{align*}
\begin{aligned}
\phi_x^2 \sqrt{\det\gac}\sqrt{\det\gd_x}^{-1}=1,
\end{aligned}
\end{align*}
and moreover, as $\tr_{r^2\gac}W=0$, in $L^\infty_v H^6(S_v)$,
\begin{align}
\begin{aligned}
\sqrt{\det\tgd}^{-1}-\sqrt{\det\gd_x}^{-1} = \OO(\varep^{9/4} \Vert W \Vert).
\end{aligned}\label{EQdetdiffestimate}
\end{align}
Plugging \eqref{EQrecallDifferencecontrolsection0091} and \eqref{EQexpansionConformalFactor0091} into \eqref{EQdiffgdgdx1009} finishes the proof of \eqref{EQestimationdifferencegd}.\\

\ni \textbf{Estimate for $\gd_c-(\gd_c)_x$.} In the following we show that
\begin{align}
	\begin{aligned}
		\Vert \gd_c-(\gd_c)_x \Vert_{L^\infty_v H^6(S_v)} \les \varep^{5/4} \Vert W \Vert.
	\end{aligned}\label{EQgdcdiffest1}
\end{align}
Indeed, by \eqref{EQgdcDEF} we have that
\begin{align*}
\begin{aligned}
\gd_c-(\gd_c)_x =& \sqrt{\det\gac} \sqrt{\det\tilde{\gd}}^{-1} \tilde{\gd} - \sqrt{\det\gac} \sqrt{\det{\gd_x}}^{-1} {\gd_x}\\ 
=& \sqrt{\det\gac}\lrpar{ \lrpar{\sqrt{\det\tilde{\gd} }^{-1}-\sqrt{\det\gd_x}}\tilde{\gd}+\sqrt{\det\gd_x}\lrpar{\tilde{\gd}-\gd_x} }.
\end{aligned}
\end{align*}
From here, applying \eqref{EQsmallnesscondBaseCONSTRUCTION} and \eqref{EQrep1} leads to \eqref{EQgdcdiffest1}.\\

\ni \textbf{Estimate for the Gauss curvature $K$.} The Gauss curvature $K$ of $\gd$ can be explicitly expressed in terms of (derivatives up to order $2$ of) the metric components $\gd$, and can thus by \eqref{EQestimationdifferencegd} be estimated as follows,
\begin{align}
\begin{aligned}
\Vert K - K_x \Vert_{L^\infty_vH^4(S_v)} \les \varep\Vert \mfW_0 \Vert + \varep^{5/4} \Vert W \Vert.
\end{aligned}\label{EQestimateDiffKcurvature}
\end{align}

\subsection{Control of $\Om\tr\chi$ and $\Om\chih$} \label{SECestimatesOMTRCHI} By the first variation equation \eqref{EQfirstvariation001}, $\Om\chi$ can be directly calculated from the metric $\gd$. In this section we first bound $\Om\trchi$ and then $\Om\chih$.\\

\ni \textbf{Estimates for $\Om\trchi$.} In the following we show that
\begin{align}
\begin{aligned}
\Vert \Om\trchi-(\Om\trchi)_x \Vert_{H^6_1(\HH)} \les \varep \Vert \mfW_0 \Vert + \varep^{5/4} \Vert W \Vert.
\end{aligned}\label{EQomtrchidiffsimpleestimate}
\end{align}
We prove \eqref{EQomtrchidiffsimpleestimate} by studying the transport equation for $\frac{\phi \Om\trchi}{\Om^2}$ and applying the previous estimates \eqref{EQphidiffest1}, \eqref{EQrepformulaphi} for $\phi$. Indeed, by \eqref{EQcompactRAY} we have that
\begin{align}
\begin{aligned}
D\lrpar{\frac{\phi\Om\trchi}{\Om^2}}=&-\frac{\phi}{\Om^2} \vert \Om\chih \vert^2,
\end{aligned}\label{EQomtrchinullconstrainteq0011}
\end{align}
so that, using that $\Om=\Om_x$ by \eqref{EQomegadef},
\begin{align}
\begin{aligned}
D\lrpar{\frac{\phi\Om\trchi}{\Om^2}- \lrpar{\frac{\phi\Om\trchi}{\Om^2}}_x}=&-\frac{1}{\Om^2} \lrpar{\phi \vert \Om\chih \vert^2-\phi_x \vert (\Om\chih)_x \vert^2} \\
=& -\frac{1}{\Om^2} \lrpar{(\phi-\phi_x) \vert \Om\chih \vert^2+ \phi_x\lrpar{\vert \Om\chih \vert^2-  \vert (\Om\chih)_x \vert^2}},
\end{aligned}\label{EQNTEomtrchiphiom21}
\end{align}
Integrating \eqref{EQNTEomtrchiphiom21} in $v$ and applying \eqref{EQomchih2expr} (see also Remark \ref{REMdiffWEIGHTS}), \eqref{EQomchihcompleteformula} and \eqref{EQphidiffest1}, we get that in $L^\infty_vH^6(S_v)$,
\begin{align}
\begin{aligned}
\frac{\phi\Om\trchi}{\Om^2}- \lrpar{\frac{\phi\Om\trchi}{\Om^2}}_x =& - \int\limits_1^v \frac{1}{\Om^2} \lrpar{(\phi-\phi_x) \vert \Om\chih \vert^2+ \phi_x\lrpar{\vert \Om\chih \vert^2-  \vert (\Om\chih)_x \vert^2}}\\
=& -\int\limits_1^v r \lrpar{\vert \Om\chih \vert^2-  \vert (\Om\chih)_x \vert^2} dv' + \OO(\varep^2\Vert W \Vert)\\
=& - \varep \int\limits_1^v \frac{\varphi^2}{8r^3} \vert \mfW_0 \vert^2_\gac - \varep^{3/2} \int\limits_1^v \frac{\varphi^2}{8r^3} \lrpar{\vert \mfW_1\vert^2_\gac + \vert \mfW_2\vert^2_\gac} dv' + \OO(\varep^{7/4}\Vert W \Vert),
\end{aligned}\label{EQrepformulaRaychauduri}
\end{align}
from which we conclude, in particular, that
\begin{align*}
\begin{aligned}
\left\Vert \frac{\phi\Om\trchi}{\Om^2}- \lrpar{\frac{\phi\Om\trchi}{\Om^2}}_x \right\Vert_{L^\infty_v H^6(S_v)} \les \varep \Vert \mfW_0 \Vert + \varep^{5/4} \Vert W \Vert.
\end{aligned}
\end{align*}

\ni From the above it is straight-forward to conclude \eqref{EQomtrchidiffsimpleestimate} by writing 
\begin{align*}
\begin{aligned}
\frac{\phi\Om\trchi}{\Om^2}- \lrpar{\frac{\phi\Om\trchi}{\Om^2}}_x =& \frac{1}{\Om^2} \lrpar{\phi\Om\trchi-\phi_x(\Om\trchi)_x} \\
=& \frac{1}{\Om^2} \lrpar{(\phi-\phi_x) (\Om\trchi)_x + \phi \lrpar{\Om\trchi-(\Om\trchi)_x}},
\end{aligned}
\end{align*}
and isolating the term $\Om\trchi-(\Om\trchi)_x$. Here we divide by $\phi$ and use the estimates \eqref{EQphidiffest1}, \eqref{EQrepformulaphi}. \\

\ni\textbf{Estimate for $\Om\chih$.} In the following we prove that in $L^\infty_vH^6(S_v)$,
\begin{align}
\begin{aligned}
\Om\chih - (\Om\chih)_x =& \frac{\varphi}{2} \lrpar{\varep^{1/2} \sin\lrpar{\frac{v}{\varep}} \mfW_0 + \varep^{3/4} \sin\lrpar{\frac{v}{\varep^{1/2}}} \mfW_1 + \varep^{3/4} \cos\lrpar{\frac{v}{\varep^{1/2}}} \mfW_2} \\
& + \OO(\varep^{5/4}\Vert W \Vert).
\end{aligned}\label{EQexpansionomchih}
\end{align}
Indeed, using \eqref{EQconformalrelation2} we have that
\begin{align}
\begin{aligned}
\Om\chih= \half \widehat{D\gd}= \phi^2 \sqrt{\det\gac}\sqrt{\det\tilde{\gd}}^{-1} \Om\widehat{\tilde{\chi}}.
\end{aligned}\label{EQfirstexprOmchihexprdiff}
\end{align}
Using \eqref{EQOmchihtildeexpr}, \eqref{EQphidiffest1} and that by \eqref{EQsmallnesscondBaseCONSTRUCTION} and \eqref{EQrep1}, with $\tr_{r^2\gac} W =0$,
\begin{align}
\begin{aligned}
\left\Vert \sqrt{\det\tilde{\gd}}-\sqrt{\det\gd_x}\right\Vert_{L^\infty_vH^6(S_v)} \les \varep^{9/4} \Vert W \Vert,
\end{aligned}\label{EQdeterminantEstimatecloseness0009}
\end{align}
we get from \eqref{EQfirstexprOmchihexprdiff} that
\begin{align*}
\begin{aligned}
\Om\chih =& \frac{\varphi}{2} \lrpar{\varep^{1/2} \sin\lrpar{\frac{v}{\varep}} \mfW_0 + \varep^{3/4} \sin\lrpar{\frac{v}{\varep^{1/2}}} \mfW_1 + \varep^{3/4} \cos\lrpar{\frac{v}{\varep^{1/2}}} \mfW_2} + (\Om\chih)_x \\
&+ \OO(\varep^{5/4}\Vert W \Vert).
\end{aligned}
\end{align*}
This finishes the proof of \eqref{EQexpansionomchih}. 

Moreover, we note that on $S_2$ it holds by construction that (see \eqref{EQomegadef}, \eqref{EQrep1} and recall that $W$ is compactly supported in $\HH$)
\begin{align}
\Om=\Om_x, \,\, \om=\om_x, \,\, \tilde{\gd}=\gd_x, \,\, D\tilde{\gd}= D\gd_x  \text{ on } S_2, 
\end{align}
so that, using \eqref{EQfirstexprOmchihexprdiff} and \eqref{EQphidiffest1}, in $H^6(S_2)$,
\begin{align}
\begin{aligned}
\Om\chih-(\Om\chih)_x = \half \lrpar{\phi^2-\phi^2_x} \sqrt{\det\gac}\sqrt{\det\gd_x}^{-1}\widehat{D\gd_x}
= \OO_{x,\mfW_0}(\varep^2 \Vert\mfW_0\Vert)+ \OO(\varep^{9/4} \Vert W\Vert).
\end{aligned}\label{EQomchihS2identity}
\end{align}

\subsection{Control of $\a$ on $S_2$} By \eqref{EQtransportEQLnullstructurenonlinear} we calculate $\a$ along $\HH$ as follows,
\begin{align*}
\begin{aligned}
\a = \frac{1}{\Om}\lrpar{-D\chih + \om \chih + \Om \vert \chih \vert^2 \gd}= \frac{1}{\Om^2} \lrpar{-D\lrpar{\Om\chih} + 2\om (\Om\chih)+  \vert \Om\chih \vert^2 \gd}.
\end{aligned}
\end{align*}
Using that by \eqref{EQomegadef}, \eqref{EQrep1} and the compact support of $\varphi W$ in $\HH$ we have that in a neighbourhood of $S_2$ on $\HH$ it holds that
\begin{align}
\Om=\Om_x, \,\, \om=\om_x, \,\, \tilde{\gd}=\gd_x, \,\, D\tilde{\gd}= D\gd_x, \,\, D^2\tilde{\gd}= D^2\gd_x, \label{EQequalitys2}
\end{align}
so that on $S_2$, in $H^6(S_2)$,
\begin{align}
\begin{aligned}
\a -\a_x =& \frac{1}{\Om^2} \lrpar{\lrpar{-D\lrpar{\Om\chih} + 2\om (\Om\chih)+  \vert \Om\chih \vert^2 \gd}-\lrpar{-D\lrpar{\Om\chih}_x + 2\om (\Om\chih)_x+  \vert (\Om\chih)_x \vert^2_{\gd_x} \gd_x}}\\
=& \frac{1}{\Om^2} \Big(-D\lrpar{\Om\chih-(\Om\chih)_x}+2\om \lrpar{\Om\chih-(\Om\chih)_x} +\vert (\Om\chih)_x \vert^2_{\gd_x} (\gd-\gd_x) \Big),
\end{aligned}\label{EQdiffformulaaax}
\end{align}
where we used that by the conformal invariance of $e$ and \eqref{EQequalitys2}, on $S_2$,
\begin{align*}
\begin{aligned}
\vert \Om\chih \vert_\gd^2 = \vert \Om\widehat{\tilde{\chi}} \vert^2_{\tilde{\gd}} = \vert (\Om\chih)_x \vert^2_{\gd_x}. 
\end{aligned}
\end{align*}
By \eqref{EQequalitys2} we can write (see also \eqref{EQomchihS2identity}), near $S_2$ on $\HH$,
\begin{align*}
\begin{aligned}
\Om\chih-(\Om\chih)_x = \half \lrpar{\phi^2-\phi^2_x} \sqrt{\det\gac}\sqrt{\det\gd_x}^{-1}\widehat{D\gd_x}
\end{aligned}
\end{align*}
so that in particular, on $S_2$, by \eqref{EQphidiffest1}, in $H^6(S_2)$,
\begin{align}
\begin{aligned}
D\lrpar{\Om\chih-(\Om\chih)_x} =& \half D\lrpar{\phi^2-\phi^2_x} \sqrt{\det\gac}\sqrt{\det\gd_x}^{-1}\widehat{D\gd_x} \\
&+ \half \lrpar{\phi^2-\phi^2_x} D\lrpar{\sqrt{\det\gac}\sqrt{\det\gd_x}^{-1}\widehat{D\gd_x}} \\
=& \OO(\varep^2 \Vert \mfW_0\Vert) + \OO(\varep^{9/4} \Vert W \Vert).
\end{aligned}\label{EQdiffdomchih}
\end{align}

\ni Plugging \eqref{EQdiffdomchih} into \eqref{EQdiffformulaaax} and using \eqref{EQphidiffest1}, \eqref{EQestimationdifferencegd} and \eqref{EQomchihS2identity} leads to, in $H^6(S_2)$,
\begin{align}
\begin{aligned}
\a -\a_x= \OO(\varep^2 \Vert\mfW_0\Vert) + \OO(\varep^{9/4} \Vert W\Vert).
\end{aligned}\label{EQalphaS2estimate}
\end{align}

\subsection{Construction of $\eta$} \label{SECcontrolETA} 
\ni By combining \eqref{EQtransportEQLnullstructurenonlinear}, \eqref{EQcompactRAYpart1} and \eqref{EQGaussCodazzi}, the following null transport equation for $\eta$ holds,
\begin{align}
\begin{aligned}
D\lrpar{\phi^2\eta} =& \phi^2 \lrpar{\Divd \lrpar{\Om\chih} -\half \di\lrpar{\Om\trchi} +2\Om\trchi\di\log\Om}.
\end{aligned}\label{EQnulleta}
\end{align}
In this section we prove that 
\begin{align}
\begin{aligned}
\Vert\eta-\eta_x\Vert_{L^\infty_vH^5(S_v)} \les \varep \Vert \mfW_0 \Vert + \varep^{5/4} \Vert W \Vert,
\end{aligned}\label{EQetaNORMestimatediff}
\end{align}
and moreover, the following representation formula holds in $L^\infty_vH^5(S_v)$, 
\begin{align}
\begin{aligned}
\eta-\eta_x=&-\frac{\varep}{2r^2} \di \lrpar{\int\limits_1^v r^2 \lrpar{-\frac{1}{r} \int\limits_1^{v'} \frac{\varphi^2}{8r^3} \lrpar{\vert \mfW_0\vert^2_\gac} dv''+\frac{2}{r^2} \int\limits_1^{v'}  \int\limits_1^{v''} \frac{\varphi^2}{16r^4} \lrpar{\vert \mfW_0\vert^2_\gac} dv'''dv''}dv'}\\
&  + \OO(\varep^{5/4}\Vert W \Vert).
\end{aligned}\label{EQdiffestimateeta} 
\end{align}
To prove \eqref{EQetaNORMestimatediff} and \eqref{EQdiffestimateeta}, consider the null transport equation for the difference, i.e.
\begin{align*}
\begin{aligned}
D\lrpar{\phi^2 \eta - \phi^2_x \eta_x} =& \phi^2 \lrpar{\Divd_\gd \lrpar{\Om\chih} -\half \di\lrpar{\Om\trchi} +2\Om\trchi\di\log\Om} \\
&- \phi_x^2 \lrpar{\Divd_{\gd_x} \lrpar{\Om\chih}_x -\half \di\lrpar{\Om\trchi}_x +2(\Om\trchi)_x\di\log\Om} \\
=& \phi^2 \Divd_\gd (\Om\chih) - \phi_x^2 \Divd_{\gd_x} (\Om\chih)_x -\half \lrpar{\phi^2 \di(\Om\trchi) - \phi^2_x \di(\Om\trchi)_x} \\
&+ 2 \di\log\Om \cdot\lrpar{\phi^2\Om\trchi-\phi^2_x (\Om\trchi)_x},
\end{aligned}
\end{align*}
which, integrated along $v$, and writing
\begin{align*}
\begin{aligned}
\phi^2 \eta - \phi^2_x \eta_x = (\phi^2 - \phi^2_x)\eta_x + \phi^2_x (\eta-\eta_x) = (\phi^2 - \phi^2_x)\eta_x + \lrpar{\phi^2_x -r^2} (\eta-\eta_x)+ r^2 (\eta-\eta_x),
\end{aligned}
\end{align*}
yields the representation formula
\begin{align}
\begin{aligned}
r^2 (\eta-\eta_x)=&- (\phi^2 - \phi^2_x)\eta_x - (\phi^2_x-r^2) (\eta-\eta_x) + \underbrace{\int\limits_1^v \lrpar{\phi^2 \Divd_\gd (\Om\chih) - \phi_x^2 \Divd_{\gd_x} (\Om\chih)_x}dv'}_{:=\II_1} \\
&-\half \underbrace{\int\limits_1^v \lrpar{\phi^2 \di(\Om\trchi) - \phi^2_x \di(\Om\trchi)_x} dv'}_{:=\II_2} + 2 \underbrace{\int\limits_1^v \di\log\Om\cdot \lrpar{\phi^2\Om\trchi-\phi^2_x (\Om\trchi)_x}}_{:=\II_3}.
\end{aligned}\label{EQrepformeta}
\end{align}
\textbf{Control of $\II_1$.} We can rewrite $\II_1$ as follows,
\begin{align*}
\begin{aligned}
\II_1 =& \int\limits_1^v \lrpar{ \lrpar{\phi^2-\phi^2_x} \Divd_\gd (\Om\chih) + \phi^2_x \lrpar{\Divd_{\gd_x}\lrpar{\Om\chih-(\Om\chih)_x} + \lrpar{\Divd_\gd(\Om\chih)-\Divd_{\gd_x}(\Om\chih) }}}.
\end{aligned}
\end{align*}
On the one hand, from \eqref{EQsmallnesscondBaseCONSTRUCTION}, \eqref{EQexpansionomchih} and \eqref{EQphidiffest1} we directly have that in $L^\infty_vH^5(S_v)$,
\begin{align*}
\begin{aligned}
\lrpar{\phi^2-\phi^2_x} \Divd_\gd (\Om\chih) = \OO(\varep^{5/4} \Vert W \Vert), \,\, \Divd_\gd(\Om\chih)-\Divd_{\gd_x}(\Om\chih) = \OO(\varep^{5/4} \Vert W \Vert).
\end{aligned}
\end{align*}
On the other hand, by \eqref{EQexpansionomchih} and the high-frequency improvement when integrating, in $L^\infty_vH^5(S_v)$,
\begin{align*}
\begin{aligned}
\int\limits_1^v \phi^2_x \Divd_{\gd_x}\lrpar{\Om\chih-(\Om\chih)_x} = \OO(\varep^{5/4} \Vert W \Vert).
\end{aligned}
\end{align*}
We conclude from the above that in $L^\infty_vH^5(S_v)$,
\begin{align*}
\begin{aligned}
\II_1 = \OO(\varep^{5/4} \Vert W \Vert).
\end{aligned}
\end{align*}

\ni \textbf{Control of $\II_2$.} We have that 
\begin{align*}
\begin{aligned}
\II_2 =& \int\limits_1^v \lrpar{ \lrpar{\phi^2-\phi_x^2} \di(\Om\trchi) + \phi^2_x \di\lrpar{\Om\trchi-(\Om\trchi)_x}} dv'.
\end{aligned}
\end{align*}
On the one hand, by \eqref{EQsmallnesscondBaseCONSTRUCTION}, \eqref{EQphidiffest1} and \eqref{EQomtrchidiffsimpleestimate}, we have that in $L^\infty_vH^5(S_v)$,
\begin{align*}
\begin{aligned}
\lrpar{\phi^2-\phi_x^2} \di(\Om\trchi) = \OO(\varep^2\Vert W \Vert).
\end{aligned}
\end{align*}
On the other hand, by \eqref{EQsmallnesscondBaseCONSTRUCTION} and \eqref{EQomtrchidiffsimpleestimate}, in $L^\infty_vH^5(S_v)$,
\begin{align*}
\begin{aligned}
&\int\limits_1^v \phi^2_x \di\lrpar{\Om\trchi-(\Om\trchi)_x} dv' \\
=& \varep \cdot \di \lrpar{\int\limits_1^v r^2 \lrpar{-\frac{1}{r} \int\limits_1^{v'} \frac{\varphi^2}{8r^3} \lrpar{\vert \mfW_0\vert^2_\gac} dv''+\frac{2}{r^2} \int\limits_1^{v'}  \int\limits_1^{v''} \frac{\varphi^2}{16r^4} \lrpar{\vert \mfW_0\vert^2_\gac} dv'''dv''}dv'} +\OO(\varep^{5/4} \Vert W \Vert).
\end{aligned}
\end{align*}
From the above we conclude that in $L^\infty_vH^5(S_v)$,
\begin{align*}
\begin{aligned}
\II_2 =&\varep \cdot \di \lrpar{\int\limits_1^v r^2 \lrpar{-\frac{1}{r} \int\limits_1^{v'} \frac{\varphi^2}{8r^3} \lrpar{\vert \mfW_0\vert^2_\gac} dv''+\frac{2}{r^2} \int\limits_1^{v'}  \int\limits_1^{v''} \frac{\varphi^2}{16r^4} \lrpar{\vert \mfW_0\vert^2_\gac} dv'''dv''}dv'} +\OO(\varep^{5/4} \Vert W \Vert).
\end{aligned}
\end{align*}

\ni \textbf{Control of $\II_3$.} Similar to the estimates above it follows that in $L^\infty_v H^5(S_v)$,
\begin{align*}
\begin{aligned}
\II_3 = \OO(\varep^2\Vert W \Vert).
\end{aligned}
\end{align*}

\ni Plugging the above control of $\II_1,\II_2$ and $\II_3$ into \eqref{EQrepformeta}, and using \eqref{EQsmallnesscondBaseCONSTRUCTION} and \eqref{EQphidiffest1}, yields that in $L^\infty_vH^5(S_v)$,
\begin{align*}
\begin{aligned}
r^2 (\eta-\eta_x)=& - (\phi^2_x-r^2) (\eta-\eta_x) + \OO(\varep^{5/4}\Vert W \Vert) \\
&-\half \varep \cdot \di \lrpar{\int\limits_1^v r^2 \lrpar{-\frac{1}{r} \int\limits_1^{v'} \frac{\varphi^2}{8r^3} \lrpar{\vert \mfW_0\vert^2_\gac} dv''+\frac{2}{r^2} \int\limits_1^{v'}  \int\limits_1^{v''} \frac{\varphi^2}{16r^4} \lrpar{\vert \mfW_0\vert^2_\gac} dv'''dv''}dv'}.
\end{aligned}
\end{align*}
A simple bootstrapping argument based on the above representation formula shows \eqref{EQetaNORMestimatediff}, i.e.
\begin{align*}
\begin{aligned}
\Vert \eta-\eta_x \Vert_{L^\infty_vH^5(S_v)} \les \varep \Vert \mfW_0 \Vert + \varep^{5/4} \Vert W \Vert,
\end{aligned}
\end{align*}
and subsequently \eqref{EQdiffestimateeta}, that is, in $L^\infty_vH^5(S_v)$,
\begin{align*}
\begin{aligned}
\eta-\eta_x=&-\frac{\varep}{2r^2} \di \lrpar{\int\limits_1^v r^2 \lrpar{-\frac{1}{r} \int\limits_1^{v'} \frac{\varphi^2}{8r^3} \lrpar{\vert \mfW_0\vert^2_\gac} dv''+\frac{2}{r^2} \int\limits_1^{v'}  \int\limits_1^{v''} \frac{\varphi^2}{16r^4} \lrpar{\vert \mfW_0\vert^2_\gac} dv'''dv''}dv'}\\
&  + \OO(\varep^{5/4}\Vert W \Vert).
\end{aligned}
\end{align*}
\ni \textbf{Estimates for $\beta$ and $\si$.} At this point, the null curvature component $\beta$ is determined from \eqref{EQGaussCodazzi}, that is,
\begin{align*}
\begin{aligned}
-\beta= \Divd \chih -\half \di \trchi + \chih \cdot (\eta-\di\log\Om) -\half \trchi (\eta-\di\log\Om), 
\end{aligned}
\end{align*}
which implies, by \eqref{EQsmallnesscondBaseCONSTRUCTION}, \eqref{EQestimationdifferencegd}, \eqref{EQomtrchidiffsimpleestimate}, \eqref{EQexpansionomchih} and \eqref{EQdiffestimateeta}, that in $L^\infty_vH^5(S_v)$,
\begin{align*}
\begin{aligned}
\beta-\beta_x =& -\Divd_{r^2\gac}\lrpar{\frac{\varphi}{2} \lrpar{\varep^{1/2} \sin\lrpar{\frac{v}{\varep}} \mfW_0 + \varep^{3/4} \sin\lrpar{\frac{v}{\varep^{1/2}}} \mfW_1 + \varep^{3/4} \cos\lrpar{\frac{v}{\varep^{1/2}}} \mfW_2}}  \\
&+ \OO(\varep \Vert \mfW_0 \Vert) + \OO(\varep^{5/4} \Vert W \Vert).
\end{aligned}
\end{align*}
\subsection{Construction of $\chib$} \label{SECestimatechib} First consider $\Om\trchib$. By combining \eqref{EQtransporttrchitrchib1}, \eqref{EQcompactRAYpart1} and \eqref{EQGaussEQ}, the following null transport equation for $\Om\trchib$ holds,
\begin{align}
D\lrpar{\phi^2 \Om\trchib} =& -2\phi^2\Om^2 \Divd \lrpar{\eta-2\di\log\Om} + 2\phi^2\Om^2 \vert \eta - 2\di\log\Om\vert^2 - 2\phi^2\Om^2 K \label{EQtrchib}.
\end{align}
We claim that
\begin{align}
\begin{aligned}
\Vert \Om\trchib-(\Om\trchib)_x \Vert_{L^\infty_vH^4(S_v)} \les \varep \Vert \mfW_0 \Vert + \varep^{5/4} \Vert W \Vert.
\end{aligned}\label{EQestimateomtrchibdiff}
\end{align}
Indeed, the right-hand side of the null transport equation for the difference $(\phi^2\Om\trchib- \phi_x^2(\Om\trchib)_x)$ (see \eqref{EQtrchib}) is straight-forward to estimate in $L^\infty_vH^4(S_v)$ using \eqref{EQsmallnesscondBaseCONSTRUCTION}, \eqref{EQestimationdifferencegd}, \eqref{EQestimateDiffKcurvature}, \eqref{EQomtrchidiffsimpleestimate} and \eqref{EQetaNORMestimatediff}. Integrating in $v$ then yields \eqref{EQestimateomtrchibdiff}.

Second consider $\Om\chibh$. By combining \eqref{EQchihequations1}, \eqref{EQcompactRAYpart1} and \eqref{EQcompactRAYpart1}, the following null transport equation holds for $\Om\chibh$,
\begin{align}
\begin{aligned}
D\lrpar{\frac{\Om\chibh}{\phi}} - \lrpar{\Om\chih,\frac{\Om\chibh}{\phi}} \gd =& \frac{\Om^2}{\phi} \lrpar{\Nd \widehat{\otimes} \lrpar{2\di\log\Om-\eta}} \\
&+ \frac{\Om^2}{\phi} \lrpar{\lrpar{2\di\log\Om-\eta}\widehat{\otimes} \lrpar{2\di\log\Om-\eta} - \half \trchib \chih}.
\end{aligned}\label{EQchibh}
\end{align}
In the following we prove that
\begin{align}
\begin{aligned}
\Vert \Om\chibh-(\Om\chibh)_x \Vert_{L^\infty_v H^4(S_v)} \les \varep \Vert \mfW_0 \Vert + \varep^{5/4} \Vert W \Vert.
\end{aligned}\label{EQestimateDIFFomchibh}
\end{align}
Similarly to proofs of the above estimates, \eqref{EQestimateDIFFomchibh} is derived by integrating the null transport equation satisfied by the difference $$\frac{\Om\chibh}{\phi}-\frac{(\Om\chibh)_x}{\phi_x}$$
and applying \eqref{EQsmallnesscondBaseCONSTRUCTION},\eqref{EQphidiffest1}, \eqref{EQestimationdifferencegd}, \eqref{EQetaNORMestimatediff}, \eqref{EQestimateomtrchibdiff}. We omit the detailed proof and observe only that the largest term on the right-hand side of the differences of \eqref{EQchibh}, that is,
\begin{align*}
\begin{aligned}
\lrpar{- \frac{1}{2\phi} \Om\trchib \Om\chih}-\lrpar{- \frac{1}{2\phi_x} (\Om\trchib)_x (\Om\chih)_x},
\end{aligned}
\end{align*}
sees by \eqref{EQOmchihtildeexpr} a high-frequency improvement when integrating, that is, at lowest-order, in $L^\infty_vH^4(S_v)$,
\begin{align*}
\begin{aligned}
\int\limits_1^v -\frac{1}{2r} \lrpar{-\frac{2}{r}} \lrpar{\Om\chih-(\Om\chih)_x} dv' = \OO(\varep^{5/4}\Vert W \Vert).
\end{aligned}
\end{align*}

\ni\textbf{Estimates for $\beb$, $\rho$ and $\sigma$.} The null curvature components $\rho$, $\beb$ and $\si$ are determined by \eqref{EQGaussEQ}, \eqref{EQGaussCodazzi} and \eqref{EQcurlEquations}, that is,
\begin{align*}
\begin{aligned}
-\rho =& K + \frac{1}{4}\trchi\trchib - \half (\chih,\chibh), \,\, \beb = \Divd \chibh - \half \di \trchib -\chibh \cdot \zeta +\half \trchib \zeta, \\
\si =& - \half \chih \wedge \chibh - \Curld \eta,
\end{aligned}
\end{align*}
which implies by the previous estimates that
\begin{align*}
\begin{aligned}
\Vert \rho-\rho_x \Vert_{L^\infty_vH^4(S_v)} + \Vert \beb-\beb_x \Vert_{L^\infty_vH^3(S_v)} + \Vert \si-\si_x \Vert_{L^\infty_vH^4(S_v)} \les \varep\Vert \mfW_0 \Vert + \varep^{5/4} \Vert W \Vert.
\end{aligned}
\end{align*}

\subsection{Construction of $\omb$}\label{SECestimateomb} By combining \eqref{EQDUOMU1}, \eqref{EQGaussCodazzi}, \eqref{EQGaussEQ}, the following null transport equation for $\omb$ holds
\begin{align}
D\omb =& \Om^2 \lrpar{4\lrpar{\eta,\di\log\Om}-3\vert\eta\vert^2 + K + \frac{1}{4}\trchi\trchib-\half (\chih,\chibh)}, \label{EQombnulltransport}
\end{align}
By using \eqref{EQsmallnesscondBaseCONSTRUCTION}, \eqref{EQestimateDiffKcurvature}, \eqref{EQomtrchidiffsimpleestimate}, \eqref{EQetaNORMestimatediff}, \eqref{EQestimateomtrchibdiff} and \eqref{EQestimateDIFFomchibh}, it follows, similarly to the above estimates, that
\begin{align}
\begin{aligned}
\Vert \omb-\omb_x \Vert_{L^\infty_vH^4(S_v)} \les \varep \Vert \mfW_0 \Vert + \varep^{5/4} \Vert W \Vert.
\end{aligned}\label{EQestimateombdiff}
\end{align}
\subsection{Construction of $\ab$} \label{SECestimateab} By combining \eqref{EQnullBianchiEquations}, \eqref{EQcompactRAYpart1}, \eqref{EQGaussCodazzi}, \eqref{EQGaussEQ} and \eqref{EQcurlEquations}, the following null transport equation holds for $\ab$,
\begin{align}
\begin{aligned}
&D\lrpar{\frac{\Om^2\ab}{\phi}} - \lrpar{\Om\chih, \frac{\Om^2\ab}{\phi}} \gd \\
=& - \Om \Nd\widehat{\otimes} \lrpar{\Divd\chibh-\half\di\trchib-\chibh \cdot(\eta-\di\log\Om) +\half\trchib (\eta-\di\log\Om)} \\
&- \Om (9\di\log\Om-5\eta) \widehat{\otimes} \lrpar{\Divd \chibh - \half \di\trchib - \chibh \cdot (\eta-\di\log\Om) + \half \trchib (\eta-\di\log\Om)}\\
&+ 3\Om\chibh \lrpar{K+\frac{1}{4}\trchi\trchib -\half (\chih,\chibh)} + 3 \Om {}^\ast \chibh (\Curld \eta + \half \chih \wedge \chibh).
\end{aligned}\label{EQabNTE}
\end{align}
By using \eqref{EQsmallnesscondBaseCONSTRUCTION}, \eqref{EQphidiffest1}, \eqref{EQestimateDiffKcurvature}, \eqref{EQomtrchidiffsimpleestimate}, \eqref{EQetaNORMestimatediff}, \eqref{EQestimateomtrchibdiff}, \eqref{EQestimateDIFFomchibh}, it follows, similarly to the above estimates, that
\begin{align}
\begin{aligned}
\Vert \ab-\ab_x \Vert_{L^\infty_vH^2(S_v)} \les \varep \Vert \mfW_0 \Vert + \varep^{5/4} \Vert W \Vert.
\end{aligned}\label{EQestimateABdiff}
\end{align}

\subsection{Construction of $\Du \omb$} \label{SECestimateDuomb} By combining \eqref{EQDUOMU1}, \eqref{EQGaussCodazzi}, \eqref{EQGaussEQ}, the following null transport equation for $\Du\omb$ holds,
\begin{align}
\begin{aligned}
&D\Du\omb +12 \Om^2 (\eta-\di\log\Om,\di\omb) - 2\Om^2\omb\cdot(\eta,-3\eta+4\di\log\Om)  \\
&- \lrpar{2\Om^2 \omb-\frac{3}{2}\Om^3\trchib}\lrpar{K+ \frac{1}{4} \trchi\trchib-\half (\chih,\chibh)} -4\Om^3\chib(\eta,\di\log\Om) - \frac{\Om^3}{2} (\chih, \ab)\\
&-\Om^3 \lrpar{\Divd \chibh-\half \di\trchib-\chibh \cdot \lrpar{\eta- \di \log \Om}+ \half \trchib \lrpar{\eta-\di\log\Om},7\eta-3\di\log\Om} \\
&-\Om^3 \Divd\lrpar{\Divd \chibh-\half \di\trchib-\chibh \cdot \lrpar{\eta- \di \log \Om}+ \half \trchib \lrpar{\eta-\di\log\Om}} \\
=&0.
\end{aligned}\label{EQnteFORduomb}
\end{align}
By using \eqref{EQsmallnesscondBaseCONSTRUCTION}, \eqref{EQestimateDiffKcurvature}, \eqref{EQomtrchidiffsimpleestimate}, \eqref{EQetaNORMestimatediff}, \eqref{EQestimateomtrchibdiff}, \eqref{EQestimateDIFFomchibh} and \eqref{EQestimateABdiff}, it follows, similarly to the above estimates, that
\begin{align*}
\begin{aligned}
\Vert \Du\omb-(\Du\omb)_x \Vert_{L^\infty_vH^2(S_v)} \les \varep \Vert \mfW_0 \Vert + \varep^{5/4} \Vert W \Vert.
\end{aligned}
\end{align*}
\subsection{Difference estimates} \label{SECestimateDiffest2}

\ni In this section we prove \eqref{EQtoProveIterationEstimateForX}, that is, for given solutions $x$ and $x'$ to the null structure equations (both assumed to satisfy \eqref{EQsmallnesscondBaseCONSTRUCTION} for $\varep>0$ sufficiently small) and given tensors $W$ and $W'$ of the form \eqref{EQWansatz}-\eqref{EQchoiceofCoeffFunction3}, we show that
\begin{align}
\begin{aligned}
&\Vert \lrpar{x_{+W} - x}-\lrpar{x'_{+W'}-x'} \Vert_{\XX(S_2)} + \Vert \lrpar{x_{+W} - x}-\lrpar{x'_{+W'}-x'} \Vert_{\XX^{\mathrm{h.f.}}(\HH)}  \\
\les& \varep \Vert \mfW_0-\mfW_0' \Vert + \varep^{5/4} \Vert W-W' \Vert + \varep^{1/2} \Vert x - x' \Vert_{\XX(\HH)}.
\end{aligned}\label{EQtoProveIterationEstimateForXLATER}
\end{align}
The proof is analogous to the proof of \eqref{EQestimateDIFFx0x1}. The estimates are derived exactly the same hierarchical order as for the proof of \eqref{EQestimateDIFFx0x1} in the previous sections, similarly applying the high-frequency improvement in integration, and suitably estimating higher order terms by product estimates. In the following we give some details on the lowest-order control of the fundamental integrals
\begin{align*}
\begin{aligned}
\mathfrak{D}_1 :=& \int\limits_1^v \lrpar{\lrpar{\Om\chih - (\Om\chih)_x}-\lrpar{(\Om\chih)' -  (\Om\chih)'_{x'}}}dv', \\
\mathfrak{D}_2 :=& \int\limits_1^v \lrpar{\lrpar{\vert\Om\chih\vert^2 - \vert (\Om\chih)_x\vert^2}-\lrpar{\vert(\Om\chih)'\vert^2 - \vert (\Om\chih)'_{x'}\vert^2}}dv',
\end{aligned}
\end{align*}
to point out the use of our high-frequency ansatz which is relevant for the proof of \eqref{EQtoProveIterationEstimateForXLATER}.\\

\ni \textbf{Bound for $\mathfrak{D}_1$.} In the following we estimate the lowest-order terms in the integrand of $\mathfrak{D}_1$. Recall from \eqref{EQexpansionomchih} that,
\begin{align*}
\begin{aligned}
\Om\chih - (\Om\chih)_x =& \frac{\varphi}{2} \varep^{1/2} \sin\lrpar{\frac{v}{\varep}} \mfW_0 + \OO(\varep^{3/4}).
\end{aligned}
\end{align*}
Hence we can write, considering in this discussion only the lowest-order terms in the integrand,
\begin{align*}
\begin{aligned}
\mathfrak{D}_1 :=& \int\limits_1^v \lrpar{\lrpar{\Om\chih - (\Om\chih)_x}-\lrpar{(\Om\chih)' -  (\Om\chih)'_{x'}}}dv' \\
=&\int\limits_1^v \lrpar{\lrpar{\frac{\varphi}{2} \varep^{1/2} \sin\lrpar{\frac{v}{\varep}} \mfW_0}-\lrpar{\frac{\varphi}{2} \varep^{1/2} \sin\lrpar{\frac{v}{\varep}} \mfW_0'}}dv' \\
=& \int\limits_1^v \frac{\varphi}{2} \varep^{1/2} \sin\lrpar{\frac{v}{\varep}} (\mfW_0- \mfW_0') dv' \\
\les& \varep^{3/2} \Vert \mfW_0-\mfW_0' \Vert,
\end{aligned}
\end{align*}
where we used high-frequency improvement upon integration, and that, as the ansatz for $\mfW_0$ is in a finite-dimensional space (it is built from finitely many tensor spherical harmonics), any norm for the tensor $\mfW_0$ is equivalent to the norm for $\mfW_0$ which appears on the right-hand side above (see Definition \ref{DEFdefinitionWnorm}). \\

\ni\textbf{Bound for $\mathfrak{D}_2$.} In the following we estimate the lowest-order terms in the integrand of $\mathfrak{D}_2$. Recall from \eqref{EQomchihcompleteformula} that 
\begin{align*}
\begin{aligned}
\vert \Om\widehat{{\chi}} \vert^2_{{\gd}} -  \vert (\Om\chih)_x\vert^2_{\gd_x} = \frac{\varphi^2 \varep}{4r^4} \sin^2\lrpar{\frac{v}{\varep}}\vert \mfW_0\vert^2_\gac+ \OO(\varep^{5/4}\Vert W \Vert).
\end{aligned}
\end{align*}
Hence we can write, considering in this discussion only the lowest-order terms in the integrand,
\begin{align*}
\begin{aligned}
\mathfrak{D}_2 :=& \int\limits_1^v \lrpar{\lrpar{\vert\Om\chih\vert^2 - \vert (\Om\chih)_x\vert^2}-\lrpar{\vert(\Om\chih)'\vert^2 - \vert (\Om\chih)'_{x'}\vert^2}}dv' \\
=& \int\limits_1^v \frac{\varphi^2 \varep}{4r^4} \sin^2\lrpar{\frac{v}{\varep}}\lrpar{\vert \mfW_0\vert^2_\gac- \vert \mfW_0'\vert^2_\gac} dv'\\
\les& \varep \Vert \mfW_0 - \mfW_0' \Vert,
\end{aligned}
\end{align*}
where we recall that $\sin^2(v/\varep)$ has a low-frequency component (see \eqref{EQsinesquaredformula}), and argued as above.

This finishes our discussion of the proof of \eqref{EQtoProveIterationEstimateForX}.

\section{Null gluing of $\mathbf{E},\mathbf{P},\mathbf{L},\mathbf{G}$} \label{SECproofW} 

\ni In this section we prove Theorem \ref{THMmain0}. The strategy of the proof is as follows. Given a (low-frequency) solution $x \in \XX(\HH)$ to the null structure equations along $\HH$, and satisfying
\begin{align}
\begin{aligned}
\Vert x-\mathfrak{m} \Vert_{\XX(\HH)} \leq \varep,
\end{aligned}\label{EQsmallnessCONDbackgroundEPLGnullGluingsectionback}
\end{align}
we consider the corresponding high-frequency solution $x_{+W}$ constructed in Section \ref{SECconstructionSolution} \emph{for a tensor $W$ to be determined}. In Sections \ref{SECgluingE} to \ref{SECgluingL} we study the null transport of the charges $\mathbf{E},\mathbf{P},\mathbf{L},\mathbf{G}$ and prove the following decompositions,
\begin{align}
\begin{aligned}
\triangle \mathbf{E}(x_{+W}) =& \mathbf{E}(W) + \OO(\varep^{5/4}), &
\triangle \mathbf{P}(x_{+W})=& \mathbf{P}(W)+ \OO(\varep^{5/4}), \\
\triangle \mathbf{G}(x_{+W}) =& \mathbf{G}(W) + \OO(\varep^{5/4}), &
\triangle \mathbf{L}(x_{+W}) =& \mathbf{L}(W) + \OO_{x}(\varep^2)+\OO(\varep^{9/4}),
\end{aligned}\label{EQoverviewEPLGdecomps}
\end{align}
where $\Ef(W), \Pf(W), \Gf(W)$ are $W$-dependent terms of size $\varep \Vert W \Vert$, and $\Lf(W)$ is a $W$-dependent term of size $\varep^2 \Vert W \Vert$. In \eqref{EQoverviewEPLGdecomps}, $\OO_{x}(\varep^2)$ denotes error terms of size $\varep^2$ which depend only on $x$.

In Section \ref{SEC1EPLGgluing} we study the mapping $W\mapsto (\Ef(W),\Pf(W),\Lf(W),\Gf(W))$, showing that it is a bijection onto a specific set of vectors -- this leads, in particular, to the assumption \eqref{EQsmallnessMain02} in Theorem \ref{THMmain0} -- and deriving quantitative bounds.

In Section \ref{SECdegreeargument1} we prove the existence of null gluing of $\Ef,\Pf,\Lf,\Gf$, that is, part (1) of Theorem \ref{THMmain0}, by a classical degree argument based on the above surjective of the mapping $W\mapsto(\Ef(W),\Pf(W),\Lf(W),\Gf(W))$ together with the expansions \eqref{EQoverviewEPLGdecomps}.

In Section \ref{SECdiffEstimates1W} we derive quantitative bounds for the solutions $W$ to the null gluing of $\Ef,\Pf,\Lf,\Gf$, proving part (2) of Theorem \ref{THMmain0}. Despite the use of the degree argument (which \emph{a priori} does not give uniqueness), this is nevertheless possible due to the explicit nature of the class of tensors $W$. Namely, one can essentially read off the $W$-dependent quantities $\Ef(W),\Pf(W),\Lf(W),\Gf(W)$ from the coefficients of $W$.

In Section \ref{SECdiffEstimates2W} we prove \emph{difference estimates}, that is, part (3) of Theorem \ref{THMmain0}. These bounds are only a slight generalization of the estimates proved in Sections \ref{SECgluingE}-\ref{SEC1EPLGgluing} and \ref{SECdiffEstimates1W}, but important for the iteration scheme in the proof of the main theorem of this paper, see Section \ref{SECgluingEP}. \\

\ni \textbf{Notation.} In the following we use schematic notation for higher-order in $\varep$ terms to be estimated by product estimates (see, for example, \eqref{EQDmuschematic1}) where we tacitly suppress angular derivatives and denote by $x_{+W}-\mathfrak{m}$ the following components
\begin{align}
\begin{aligned}
x_{+W}-\mathfrak{m} = (\Om,\gd, \Om\trchi, \Om\trchib, \chibh, \eta, \om, D\om, \omb, \Du\omb, \ab)(x_{+W}).
\end{aligned}\label{EQschematicnotation}
\end{align}
We emphasize that in this schematic notation, $x_{+W}-\mathfrak{m}$ does not contain $\Om\chih$ and $\a$.

\subsection{Transport estimates for $\mathbf{E}$ and $\mathbf{P}$}\label{SECgluingE} Recall the definition of $\Ef$ and $\Pf$ from Definition \ref{DEFchargesEPLG}, that is, for $m=-1,0,1$,
\begin{align*}
\begin{aligned}
\mathbf{E} := \mathfrak{m}^{(0)}, \,\, \mathbf{P}^m := \mathfrak{m}^{(1m)},
\end{aligned}
\end{align*}
where the scalar function $\mathfrak{m}$ is defined as
\begin{align*}
\begin{aligned}
\mathfrak{m} := \phi^3 \lrpar{K+\frac{1}{4}\trchi\trchib} -\phi \Divd \lrpar{\frac{\phi^3}{2\Om^2} \lrpar{\di(\Om\trchi)+\Om\trchi\lrpar{\eta-2\di\log\Om}}}.
\end{aligned}
\end{align*}
In this section we prove that for $\varep>0$ sufficiently small,
\begin{align}
\begin{aligned}
\triangle \mathbf{E}(x_{+W})= \mathbf{E}(W) + \OO(\varep^{5/4}), \,\, \triangle \mathbf{P}^m(x_{+W})= \mathbf{P}(W)+ \OO(\varep^{5/4}),
\end{aligned}\label{EQDEFEWPW}
\end{align}
with
\begin{align*}
\begin{aligned}
\mathbf{E}(W):= \varep \int\limits_1^2 \frac{1}{16r^2}\varphi^2 \lrpar{\vert \mfW_0 \vert^2_{\gac}}^{(0)}, \,\,
\mathbf{P}(W):= -\varep \int\limits_1^2 \frac{1}{16r^2}\varphi^2 \lrpar{\vert \mfW_0 \vert^2_{\gac}}^{(1m)}.
\end{aligned}
\end{align*}

\ni Indeed, as shown in Appendix \ref{AppDerivation} (see \eqref{EQnulltransportMU}), $\mathfrak{m}$ satisfies the following null transport equation,
\begin{align}
\begin{aligned}
D\mathfrak{m}=& \frac{r^2}{2} \lrpar{\Ldo+1}\lrpar{\vert \Om\chih\vert^2} - \lrpar{\frac{r^2}{2}+\frac{\Om\trchib\phi^3}{4\Om^2}} \vert\Om\chih\vert^2 - \half\lrpar{r^2\Ldo-\frac{\phi^4}{\Om^2}\Ld} \vert\Om\chih\vert^2\\
& + \lrpar{\phi^3-\frac{\phi^4\Om\trchi}{2\Om^2}} \Divd\Divd \lrpar{\Om\chih} - \frac{\phi}{2} \Nd \lrpar{\frac{\Om\trchi \phi^3}{\Om^2}}\cdot \Divd\lrpar{\Om\chih} \\
&- \lrpar{ \frac{\phi^3}{2}-\frac{\Om\trchi\phi^4}{4\Om^2}} \Ld \lrpar{\Om\trchi} + \frac{\Om\trchi\phi}{4} \Nd\lrpar{\frac{\phi^3}{\Om^2}}\cdot \Nd\lrpar{\Om\trchi}\\
&-\lrpar{\frac{\Om\trchi\phi^3}{2}-\frac{(\Om\trchi)^2\phi^4}{4\Om^2}} \Divd\lrpar{\eta-2\di\log\Om} + \frac{\Om\trchi\phi}{4}\Nd\lrpar{\frac{\phi^3\trchi}{\Om}}\cdot \lrpar{\eta-2\di\log\Om} \\
& + \frac{\phi}{2} \Nd \lrpar{\frac{\phi^3}{\Om^2}}\cdot \Nd \lrpar{\vert\Om\chih\vert^2} +\frac{\Om\trchi \phi^3}{2} \vert \eta-2\di\log\Om\vert^2 + 2\phi \Divd \lrpar{\Om\chih \cdot \mfb} \\
&+\frac{\phi}{2}\Divd\lrpar{\frac{\phi^3}{\Om^2} (\eta-2\di\log\Om) \vert\Om\chih\vert^2 },
\end{aligned} \label{EQnulltransportMUEstimationForm}
\end{align}
Using the notation \eqref{EQschematicnotation}, we can express \eqref{EQnulltransportMUEstimationForm} as
\begin{align}
\begin{aligned}
D\mathfrak{m} =& \frac{r^2}{2} \lrpar{\Ldo+1}\lrpar{\vert \Om\chih\vert^2} + (x_{+W}-\mathfrak{m}) \cdot \Om\chih +  (x_{+W}-\mathfrak{m}) \cdot \vert \Om\chih\vert_\gd^2 \\
&+ (x_{+W}-\mathfrak{m})\cdot (x_{+W}-\mathfrak{m}).
\end{aligned}\label{EQDmuschematic1}
\end{align}
By \eqref{EQestimateDIFFx0x1}, \eqref{EQOmchihtildeexpr} and \eqref{EQomchihcompleteformula}, we get that
\begin{align}
\begin{aligned}
\triangle \mathfrak{m} =& \varep \int\limits_1^2 \frac{1}{16r^2}\varphi^2 (\Ldo+1)\lrpar{\vert \mfW_0 \vert^2_{\gac}} + \OO(\varep^{5/4}).
\end{aligned}\label{EQintegrationMU}
\end{align}
\ni The proof of \eqref{EQDEFEWPW} follows by projecting \eqref{EQintegrationMU} onto the modes $l=0$ and $l=1$ (recall that projection commutes with differentiation and integration in $v$), and using that for any scalar function $f$ on the sphere, for $m=-1,0,1$,
\begin{align*}
\begin{aligned}
((\Ldo+1)f)^{(1m)} = - f^{(1m)}.
\end{aligned}
\end{align*}

\subsection{Transport estimate for $\mathbf{G}$}\label{SECgluingG} Recall the definition of $\Gf$ from Definition \ref{DEFchargesEPLG}, that is, for $m=-1,0,1$,
\begin{align*}
\begin{aligned}
\Gf^m := \lrpar{\frac{\phi^3}{2\Om^2} \lrpar{\di(\Om\trchi)+\Om\trchi\lrpar{\eta-2\di\log\Om}}}^{(1m)}_E.
\end{aligned}
\end{align*}

\ni In this section we prove that
\begin{align}
\begin{aligned}
\triangle \mathbf{G}^m(x_{+W}) = \mathbf{G}^m(W) + \OO(\varep^{5/4}),
\end{aligned}\label{EQdefGW}
\end{align}
with
\begin{align*}
\begin{aligned}
\mathbf{G}^m(W) := \varep \int\limits_1^2 \frac{\sqrt{2}}{16r} \varphi^2 \lrpar{\vert \mfW_0 \vert_\gac^2}^{(1m)}.
\end{aligned}
\end{align*}
Indeed, $\Gf$ satisfies the following null transport equation (see \eqref{EQBequation1}),
\begin{align}
\begin{aligned}
D\mathbf{G}^m = D\mfb^{(1m)}_E
= \lrpar{ \frac{\phi^3}{2\Om^2} \lrpar{-(\eta-2\di\log\Om) \vert \Om\chih\vert^2 - \di \lrpar{\vert \Om\chih\vert^2} +\Om\trchi \Divd\lrpar{\Om\chih}}}^{(1m)}_E.
\end{aligned}\label{EQBequation1appliedTransportEstimateG}
\end{align}
On the one hand, by \eqref{EQestimateDIFFx0x1} and \eqref{EQomchihcompleteformula},
\begin{align}
\begin{aligned}
(\eta-2\di\log\Om) \vert \Om\chih\vert^2 = (x_{+W}-\mathfrak{m}) \cdot \vert\Om\chih\vert^2 = \OO(\varep^2).
\end{aligned}\label{EQcontrolG1}
\end{align}
On the other hand, by \eqref{EQestimateDIFFx0x1} and \eqref{EQOmchihtildeexpr}, 
\begin{align}
\begin{aligned}
\lrpar{ \frac{\phi^3}{2\Om^2}\Om\trchi \Divd\lrpar{\Om\chih}}^{(1m)}_E = (x_{+W}-\mathfrak{m})\cdot \Om\chih = \OO(\varep^{3/2}).
\end{aligned}\label{EQcontrolG2}
\end{align}
where we used that, by Fourier theory (see Appendix \ref{APPconstructionW}), for all $\gac$-tracefree symmetric $2$-tensors $T$ on the sphere,
\begin{align*}
\begin{aligned}
(\Divdo T)^{[1]} =0.
\end{aligned}
\end{align*}
We conclude \eqref{EQdefGW} by plugging \eqref{EQcontrolG1} and \eqref{EQcontrolG2} into \eqref{EQBequation1appliedTransportEstimateG}, integrating in $v$, and using that for any scalar function $f$ on the sphere, for $m=-1,0,1$,
\begin{align*}
\begin{aligned}
(-\di f)_E^{(1m)}= \sqrt{2} f^{(1m)}.
\end{aligned}
\end{align*}
\subsection{Transport estimate for $\mathbf{L}$}\label{SECgluingL} Recall the definition of $\Lf$ from Definition \ref{DEFchargesEPLG}, that is, for $m=-1,0,1$,
\begin{align*}
\begin{aligned}
\Lf^m := \lrpar{\frac{\phi^3}{2\Om^2} \lrpar{\di(\Om\trchi)+\Om\trchi\lrpar{\eta-2\di\log\Om}}}^{(1m)}_H.
\end{aligned}
\end{align*}
In this section we prove that the following expansion holds,
\begin{align}
\begin{aligned}
\triangle \mathbf{L}^m(x_{+W}) =& \mathbf{L}^m(W) + \OO^{\Lf^m}_{x}+\OO(\varep^{9/4}),
\end{aligned}\label{EQdeftriangleLW00}
\end{align}
with
\begin{align}
\begin{aligned}
\mathbf{L}^m(W):= \int\limits_1^2 \frac{\varep^2\varphi^2}{r^2}\lrpar{-\half  \lrpar{\Nd_A\lrpar{\mfW_1^{AB} (\mfW_2)_{B\cdot} }}^{(1m)}_H + \frac{1}{4}\lrpar{ \Nd_\cdot \lrpar{\mfW_1 \cdot\mfW_2}}^{(1m)}_H} + \OO^{\Lf^m}_{x,\mfW_0}.
\end{aligned}\label{EQdeftriangleLW}
\end{align}
where 
\begin{itemize}
\item $\OO^{\Lf^m}_{x}$ consists of terms that depend only on $x$, and is of size $\varep^2$,
\item $\OO^{\Lf^m}_{x,\mfW_0}$ consists of terms that depend only on $\mfW_0$ and (possibly) $x$, and is of size $\varep^2$.
\end{itemize}
The precise expressions for $\OO^{\Lf^m}_{x}$ and $\OO^{\Lf^m}_{x,\mfW_0}$ are given in \eqref{EQpreciseerrorL}.

Indeed, $\Lf$ satisfies the following null transport equation (see \eqref{EQLGtransportEquations} in Appendix \ref{AppDerivation}),
\begin{align}
\begin{aligned}
D\mathbf{L}^m =&\lrpar{ \frac{\phi^3}{2\Om^2} \lrpar{-(\eta-2\di\log\Om) \vert \Om\chih\vert^2 - \di \lrpar{\vert \Om\chih\vert^2} +\Om\trchi \Divd\lrpar{\Om\chih}}}^{(1m)}_H \\
=& \underbrace{\lrpar{-\frac{\phi^3}{2\Om^2}(\eta-2\di\log\Om) \vert \Om\chih\vert^2+ \di\lrpar{\frac{\phi^3}{2\Om^2}} \vert \Om\chih\vert^2}^{(1m)}_H}_{:= \TT_1} + \underbrace{\lrpar{ \frac{\phi^3\Om\trchi}{2\Om^2} \Divd\lrpar{\Om\chih}}^{(1m)}_H}_{:=\TT_2},
\end{aligned}\label{EQBequation1appliedTransportEstimateL}
\end{align}
where used that for any scalar function $f$, the projection of $\di f$ onto $H^{(1m)}$ vanishes.

In the following we control the terms $\TT_1$ and $\TT_2$ in full detail \emph{at the order $\varep^2$}.\\

\ni\underline{\textbf{Control of $\TT_1$.}} In the following we prove that 
\begin{align}
\begin{aligned}
\TT_1 := \lrpar{-\frac{\phi^3}{2\Om^2}(\eta-2\di\log\Om) \vert \Om\chih\vert^2+ \di\lrpar{\frac{\phi^3}{2\Om^2}} \vert \Om\chih\vert^2}^{(1m)}_H = \OO^{\TT_1}_{x,\mfW_0} + \OO(\varep^{9/4}),
\end{aligned}\label{EQLLLcontrol1}
\end{align}
where $\OO^{\TT_1}_{x,\mfW_0}$ is of size $\varep^2$ and its precise definition is given in \eqref{EQstructureOOTT1}.

To prove \eqref{EQLLLcontrol1}, recall first \eqref{EQomegadef} and the expansions for $x_{+W}$ proved in \eqref{EQrepformulaphi} and \eqref{EQdiffestimateeta},
\begin{align*}
\begin{aligned}
\Om =& \Om_x, &
\phi=& \varep[\phi] + \varep^{3/2}[\phi] + \OO(\varep^{7/4}) \, \text{ in } L^\infty_vH^6(S_v), &
\eta =& \varep[\eta] + \OO(\varep^{5/4}) \, \text{ in } L^\infty_vH^5(S_v),
\end{aligned}
\end{align*}
where the terms $\varep[\phi]$, $\varep^{3/2}[\phi]$ and $\varep[\eta]$ are explicitly given by
\begin{align*}
\begin{aligned}
\varep[\phi] :=& \phi_x -\varep \int\limits_1^v\int\limits_1^{v'} \frac{\varphi^2}{16r^3} \vert \mfW_0\vert_{\gac}^2,\qquad
\varep^{3/2}[\phi] := - \varep^{3/2}\int\limits_1^v\int\limits_1^{v'} \frac{\varphi^2}{16r^3} \lrpar{\vert \mfW_1\vert_{\gac}^2+\vert \mfW_2\vert_{\gac}^2},\\
\varep[\eta] :=&\eta_x -\frac{\varep}{2r^2} \di \lrpar{\int\limits_1^v r^2 \lrpar{-\frac{1}{r} \int\limits_1^{v'} \frac{\varphi^2}{8r^3} \lrpar{\vert \mfW_0\vert^2_\gac} dv''+\frac{2}{r^2} \int\limits_1^{v'}  \int\limits_1^{v''} \frac{\varphi^2}{16r^4} \lrpar{\vert \mfW_0\vert^2_\gac} dv'''dv''}dv'}.
\end{aligned}
\end{align*}
Moreover, recall from \eqref{EQomchihcompleteformula} that in $L^\infty_v H^6(S_v)$,
\begin{align*}
\begin{aligned}
\vert \Om\chih \vert^2_\gd = \varep\left[\vert \Om\chih \vert^2_\gd\right]+ \OO(\varep^{5/4}).
\end{aligned}
\end{align*}
with the $\mfW_0$-dependent term $\varep\left[\vert \Om\chih \vert^2_\gd\right]$ being given by
\begin{align}
\begin{aligned}
\varep\left[\vert \Om\chih \vert^2_\gd\right] := \frac{\varphi^2 \varep}{4r^4} \lrpar{\sin^2\lrpar{\frac{v}{\varep}}\vert \mfW_0\vert^2_\gac}.
\end{aligned}\label{EQexpansionchihsquaredsimplified}
\end{align}
Using these expansions, we get that
\begin{align*}
\begin{aligned}
\frac{\phi^3}{2\Om^2}(\eta-2\di\log\Om)\cdot \vert \Om\chih\vert^2 =& \frac{r^3}{2} \lrpar{\varep[\eta]-2\di(\log\Om_x)} \cdot \varep\left[\vert \Om\chih \vert^2_\gd\right] + \OO(\varep^{9/4}).
\end{aligned}
\end{align*}
and moreover,
\begin{align*}
\begin{aligned}
\di\lrpar{\frac{\phi^3}{2\Om^2}} \cdot \vert \Om\chih\vert^2 = \lrpar{\frac{3r^2 \di(\varep[\phi])}{2\Om^2}-\frac{r^3\di(\Om_x)}{\Om^3}} \cdot \varep\left[\vert \Om\chih \vert^2_\gd\right] + \OO(\varep^{9/4}).
\end{aligned}
\end{align*}
Combining the above yields that
\begin{align*}
\begin{aligned}
\TT_1 = \OO^{\TT_1}_{x,\mfW_0}+ \OO(\varep^{9/4}),
\end{aligned}
\end{align*}
with the term $\OO^{\TT_1}_{x,\mfW_0}$ being given by
\begin{align}
\begin{aligned}
\OO^{\TT_1}_{x,\mfW_0} := \lrpar{-\frac{r^3}{2} \lrpar{\varep[\eta]-2\di\log\Om_x} \cdot \varep\left[\vert \Om\chih \vert^2_\gd\right]+ \lrpar{\frac{3r^2 \di(\varep[\phi])}{2\Om^2}-\frac{r^3\di(\Om_x)}{\Om^3}} \cdot \varep\left[\vert \Om\chih \vert^2_\gd\right]}^{(1m)}_H
\end{aligned}\label{EQstructureOOTT1}
\end{align}
which depends only on $x$ and $\mfW_0$ by the expansions \eqref{EQrepformulaphi} and \eqref{EQdiffestimateeta} (see \eqref{EQexpansionchihsquaredsimplified} above). This finishes the proof of the control \eqref{EQLLLcontrol1} of $\TT_1$. \\

\ni\underline{\textbf{Control of $\TT_2$.}} The remainder of this section is concerned with the control of the term $\TT_2$ which turns out to be essential for our gluing of the charge $\Lf$ at the order of $\varep^2$. In the following we prove that
\begin{align}
\begin{aligned}
\int\limits_1^2 \TT_2 =& \varep^2\int\limits_1^2 \frac{\varphi^2}{r^2}\lrpar{-\half  \lrpar{\Nd_A\lrpar{\mfW_1^{AB} (\mfW_2)_{B\cdot} }}^{(1m)}_H + \frac{1}{4}\lrpar{ \Nd_\cdot \lrpar{\mfW_1}_{AD}\lrpar{\mfW_2}^{AD}}^{(1m)}_H}\\
&+ \OO^{\TT_2}_x + \OO^{\TT_2}_{x,\mfW_0}+\OO(\varep^{9/4}),
\end{aligned}\label{EQbigTT2statementclaim}
\end{align}
where the terms $\OO^{\TT_2}_{x}$ and $\OO^{\TT_2}_{x,\mfW_0}$ are of size $\varep^2$, see \eqref{EQpreciseerrorTT2} for their definition.

Indeed, first, by the fact that the divergence operator is conformal (see, for example, (2.2.1e) in \cite{ChrKl}), that is, for any $2$-covariant tracefree symmetric tensor $T$ and scalar function $f>0$ on $S_v$,
\begin{align*}
\begin{aligned}
\Divd_{f \gd} T = f^{-1} \Divd_{\gd} T,
\end{aligned}
\end{align*}
we can express, using \eqref{EQconformalrelation2} and \eqref{EQfirstexprOmchihexprdiff},\begin{align*}
\begin{aligned}
\Divd_\gd\lrpar{\Om\chih} = \tilde{f}^{-1} \tilde{\Nd}\tilde{f} \cdot \Om\widehat{\tilde{\chi}} + \Divd_\tgd \lrpar{\Om\widehat{\tilde{\chi}}},
\end{aligned}
\end{align*}
where we introduced the notation
\begin{align}
\begin{aligned}
\tilde{f}:= \phi^2 \sqrt{\det\gac}\sqrt{\det\tgd}^{-1}.
\end{aligned}\label{EQdeftildef}
\end{align}
Applying the above, we can rewrite $\TT_2$ as follows, 
\begin{align*}
\begin{aligned}
\TT_2 :=& \lrpar{\frac{\phi^3\Om\trchi}{2\Om^2}\Divd\lrpar{\Om\chih}}^{(1m)}_H \\
=& \underbrace{\lrpar{\lrpar{\lrpar{\frac{\phi^3\Om\trchi}{2\Om^2}-\frac{\phi_x^3(\Om\trchi)_x}{2\Om_x^2}}+\lrpar{\lrpar{\frac{\phi^3\Om\trchi}{2\Om^2}}_x-r^2}}\lrpar{\tilde{f}^{-1} \tilde{\Nd}\tilde{f} \cdot \Om\widehat{\tilde{\chi}}}}^{(1m)}_H}_{:=\TT_{2.1}} \\
& +\underbrace{\lrpar{r^2 \lrpar{\tilde{f}^{-1} \tilde{\Nd}\tilde{f} \cdot \Om\widehat{\tilde{\chi}}}}^{(1m)}_H}_{:=\TT_{2.2}}\\
&+\underbrace{\lrpar{\lrpar{\lrpar{\frac{\phi^3\Om\trchi}{2\Om^2}-\frac{\phi_x^3(\Om\trchi)_x}{2\Om_x^2}}+\lrpar{\frac{\phi_x^3(\Om\trchi)_x}{2\Om_x^2}-r^2}}\Divd_\tgd \lrpar{\Om\widehat{\tilde{\chi}}}}^{(1m)}_H}_{:=\TT_{2.3}} \\
&+ \underbrace{\lrpar{r^2 \Divd_\tgd \lrpar{\Om\widehat{\tilde{\chi}}}}^{(1m)}_H}_{:=\TT_{2.4}}.
\end{aligned}
\end{align*}

\ni{\textbf{Control of $\TT_{2.1}$.}} In the following we show that $\TT_{2.1}$ is directly a higher-order term, 
\begin{align}
\begin{aligned}
\TT_{2.1} = \OO(\varep^{9/4}).
\end{aligned}\label{EQterm21islowerorder}
\end{align}
Indeed, first, using that by \eqref{EQphidiffest1} and \eqref{EQrepformulaRaychauduri} in $L^\infty_vH^6(S_v)$,
\begin{align*}
\begin{aligned}
\phi-\phi_x = \OO(\varep), \,\, \frac{\phi\Om\trchi}{\Om^2}-\lrpar{\frac{\phi\Om\trchi}{\Om^2}}_x = \OO(\varep),
\end{aligned}
\end{align*}
we get that in $L^\infty_vH^6(S_v)$,
\begin{align}
\begin{aligned}
\frac{\phi^3\Om\trchi}{\Om^2}- \frac{\phi^3_x(\Om\trchi)_x}{\Om_x^2} =& \lrpar{\frac{\phi\Om\trchi}{\Om^2}}_x\lrpar{\phi^2-\phi_x^2} + \phi^2 \lrpar{\frac{\phi\Om\trchi}{\Om^2}- \lrpar{\frac{\phi\Om\trchi}{\Om^2}}_x \,}\\
=& \lrpar{\frac{\phi\Om\trchi}{\Om^2}}_x\lrpar{\phi-\phi_x}\lrpar{\phi+\phi_x} + \phi^2 \lrpar{\frac{\phi\Om\trchi}{\Om^2}-\lrpar{\frac{\phi\Om\trchi}{\Om^2}}_x \,} \\
=& \OO(\varep).
\end{aligned}\label{EQestimatediffsquares009}
\end{align}
Second, it clearly holds by \eqref{EQsmallnessCONDbackgroundEPLGnullGluingsectionback} that in $L^\infty_vH^6(S_v)$,
\begin{align}
\begin{aligned}
\lrpar{\frac{\phi^3\Om\trchi}{2\Om^2}}_x-r^2 = \OO(\varep).
\end{aligned}\label{EQestimatediffsquares0092}
\end{align}
Third, using that by \eqref{EQexpansionomchih} we have $\Om\widehat{\tilde{\chi}} = \OO(\varep^{1/2})$ in $L^\infty_vH^6(S_v)$, and  that by \eqref{EQexpansionConformalFactor0091}, in $L^\infty_vH^6(S_v)$ (see \eqref{EQdeftildef} for the definition of $\tilde{f}$)
\begin{align*}
\begin{aligned}
\tilde{f} = 1 +\OO(\varep) \text{ in } L^\infty_vH^6(S_v),  \,\, \tilde{f}^{-1} \tilde{\Nd}\tilde{f} = \OO(\varep) \text{ in } L^\infty_vH^5(S_v), 
\end{aligned}
\end{align*}
it follows that in $L^\infty_vH^5(S_v)$,
\begin{align}
\begin{aligned}
\tilde{f}^{-1} \tilde{\Nd}\tilde{f} \cdot \Om\widehat{\tilde{\chi}} = \OO(\varep^{3/2}).
\end{aligned}\label{EQproductestimate0093}
\end{align}
Combining the estimates \eqref{EQestimatediffsquares009}, \eqref{EQestimatediffsquares0092}, \eqref{EQproductestimate0093} in the definition of $\TT_{2.1}$, i.e.
\begin{align*}
\begin{aligned}
\TT_{2.1} := \lrpar{\lrpar{\lrpar{\frac{\phi^3\Om\trchi}{2\Om^2}-\frac{\phi_x^3(\Om\trchi)_x}{2\Om_x^2}}+\lrpar{\lrpar{\frac{\phi^3\Om\trchi}{2\Om^2}}_x-r^2}}\lrpar{\tilde{f}^{-1} \tilde{\Nd}\tilde{f} \cdot \Om\widehat{\tilde{\chi}}}}^{(1m)}_H,
\end{aligned}
\end{align*}
finishes the proof of \eqref{EQterm21islowerorder}. \\

\ni{\textbf{Control of $\TT_{2.2}$.}} In the following we prove that
\begin{align}
\begin{aligned}
\int\limits_1^2 \TT_{2.2} = \OO^{\TT_{2.2}}_{x,\mfW_0} + \OO(\varep^{9/4}),
\end{aligned}\label{EQTT22estimate}
\end{align}
with
\begin{align*}
\begin{aligned}
\OO^{\TT_{2.2}}_{x,\mfW_0} := -2\varep \int\limits_1^2 r \di \lrpar{ \int\limits_1^{v}\int\limits_1^{v'} \frac{\varphi^2}{16r^3} \vert \mfW_0\vert_{\gac}^2dv''dv'} \cdot (\Om\widehat{\chi})_x dv,
\end{aligned}
\end{align*}
where we note in particular that $\OO^{\TT_{2.2}}_{x,\mfW_0}=\OO_{x,\mfW_0}(\varep^2)$ by \eqref{EQsmallnessCONDbackgroundEPLGnullGluingsectionback}. 

Indeed, from \eqref{EQOmchihtildeexpr} and \eqref{EQexpansionConformalFactor0091} we get that in $L^\infty_vH^5(S_v)$,
\begin{align*}
\begin{aligned}
r^2 \tilde{f}^{-1}\tilde{\Nd}\tilde{f} \cdot \Om\widehat{\tilde{\chi}} =& 2r \di \lrpar{-\varep \int\limits_1^v\int\limits_1^{v'} \frac{\varphi^2}{16r^3} \vert \mfW_0\vert_{\gac}^2 - \varep^{3/2}\int\limits_1^v\int\limits_1^{v'} \frac{\varphi^2}{16r^3} \lrpar{\vert \mfW_1\vert_{\gac}^2+\vert \mfW_2\vert_{\gac}^2}} \\
&\qquad \times \lrpar{\frac{\varphi}{2} \lrpar{\varep^{1/2} \sin\lrpar{\frac{v}{\varep}} \mfW_0 + \varep^{3/4} \sin\lrpar{\frac{v}{\sqrt{\varep}}} \mfW_1 + \varep^{3/4} \cos\lrpar{\frac{v}{\sqrt{\varep}}} \mfW_2} + (\Om\widehat{\chi})_x} \\
&+ \OO(\varep^{9/4}),
\end{aligned}
\end{align*}
where we note that the terms in the first factor (i.e. inside the $\di$-derivative) are low-frequency. Thus, when integrating in $v$ we see the high-frequency improvement due to the high-frequency terms in the second factor, and arrive at
\begin{align*}
\begin{aligned}
\int\limits_1^v r^2 \tilde{f}\tilde{\Nd}\tilde{f} \cdot \Om\widehat{\tilde{\chi}} = \int\limits_1^v 2r \di \lrpar{-\varep \int\limits_1^{v'}\int\limits_1^{v''} \frac{\varphi^2}{16r^3} \vert \mfW_0\vert_{\gac}^2dv'''dv''} \cdot (\Om\widehat{\chi})_x dv' + \OO(\varep^{9/4}),
\end{aligned}
\end{align*}
Projecting the above onto the vectorfields $H^{(1m)}$ for $m=-1,0,1$ finishes the proof of \eqref{EQTT22estimate}.\\

\ni{\textbf{Control of $\TT_{2.3}$.}} In the following we prove that
\begin{align}
\begin{aligned}
\int\limits_1^2 \TT_{2.3} = \OO^{\TT_{2.3}}_x+\OO^{\TT_{2.3}}_{x,\mfW_0} + \OO(\varep^{9/4}),
\end{aligned}\label{EQTT23estimate}
\end{align}
with
\begin{align}
\begin{aligned}
\OO^{\TT_{2.3}}_x :=&\lrpar{ \int\limits_1^2  \lrpar{\lrpar{\frac{\phi^3\Om\trchi}{2\Om^2}}_x-r^2} \Divd_{\gd_x}(\Om\widehat{\chi})_x}^{(1m)}_H, \\
\OO^{\TT_{2.3}}_{x,\mfW_0}:=&\lrpar{- \varep \int\limits_1^2 \lrpar{4r \int\limits_1^v\int\limits_1^{v'} \frac{\varphi^2}{16r^3} \vert \mfW_0\vert_{\gac}^2 + r^2 \int\limits_1^v \frac{\varphi^2}{8r^3} \vert \mfW_0 \vert^2_\gac}\Divd_{\gd_x}(\Om\widehat{\chi})_x}^{(1m)}_H,
\end{aligned}\label{EQdeferrortermtt23}
\end{align}
where we note that by \eqref{EQsmallnessCONDbackgroundEPLGnullGluingsectionback},
\begin{align*}
\begin{aligned}
\OO^{\TT_{2.3}}_x = \OO(\varep^2), \,\, \OO^{\TT_{2.3}}_{x,\mfW_0}=\OO(\varep^2).
\end{aligned}
\end{align*}
To prove \eqref{EQTT22estimate}, recall first the definition of $\TT_{2.3}$,
\begin{align}
\begin{aligned}
\TT_{2.3} := \lrpar{\lrpar{\lrpar{\frac{\phi^3\Om\trchi}{2\Om^2}-\frac{\phi_x^3(\Om\trchi)_x}{2\Om_x^2}}+\lrpar{\frac{\phi_x^3(\Om\trchi)_x}{2\Om_x^2}-r^2}}\Divd_\tgd \lrpar{\Om\widehat{\tilde{\chi}}}}^{(1m)}_H.
\end{aligned}\label{EQrecalldefT23}
\end{align}
On the one hand, we can rewrite, using the definition of the divergence operator, in $L^\infty_vH^5(S_v)$,
\begin{align}
\begin{aligned}
\Divd_\tgd \lrpar{\Om\widehat{\tilde{\chi}}}=& \Divd_{\gd_x} \lrpar{\Om\widehat{\tilde{\chi}}} + \lrpar{\Divd_\tgd \lrpar{\Om\widehat{\tilde{\chi}}}- \Divd_{\gd_x}\lrpar{\Om\widehat{\tilde{\chi}}}} \\
=& \Divd_{\gd_x} \lrpar{\Om\widehat{\tilde{\chi}}} + (\tgd-\gd_x) \prd \lrpar{\Om\widehat{\tilde{\chi}}} + \prd(\tgd-\gd_x) \lrpar{\Om\widehat{\tilde{\chi}}} \\
=& \Divd_{\gd_x} \lrpar{\Om\widehat{\tilde{\chi}}} + \OO(\varep^{5/4})\\
=&\frac{\varphi}{2} \Divd_{\gd_x} \lrpar{ \varep^{1/2} \sin\lrpar{\frac{v}{\varep}} \mfW_0 + \varep^{3/4} \sin\lrpar{\frac{v}{\sqrt{\varep}}} \mfW_1 + \varep^{3/4} \cos\lrpar{\frac{v}{\sqrt{\varep}}} \mfW_2} \\
&+ \Divd_{\gd_x}(\Om\widehat{\chi})_x + \OO(\varep^{5/4}),
\end{aligned}\label{EQestimateingred10010}
\end{align}
where we applied \eqref{EQrep1} and \eqref{EQOmchihtildeexpr}. 

On the other hand, by \eqref{EQphidiffest1} and \eqref{EQrepformulaRaychauduri}, in $L^\infty_vH^6(S_v)$,
\begin{align}
\begin{aligned}
&\frac{\phi^3\Om\trchi}{\Om^2}- \lrpar{\frac{\phi^3\Om\trchi}{\Om^2}}_x \\
=& \lrpar{\frac{\phi\Om\trchi}{\Om^2}}_x\lrpar{\phi^2-\phi_x^2} + \phi^2 \lrpar{\frac{\phi\Om\trchi}{\Om^2}- \lrpar{\frac{\phi\Om\trchi}{\Om^2}}_x \,}\\
=& \lrpar{\frac{\phi\Om\trchi}{\Om^2}}_x\lrpar{\phi-\phi_x}\lrpar{\phi+\phi_x} + \phi^2 \lrpar{\frac{\phi\Om\trchi}{\Om^2}-\lrpar{\frac{\phi\Om\trchi}{\Om^2}}_x \,} \\
=& 4r \lrpar{-\varep \int\limits_1^v\int\limits_1^{v'} \frac{\varphi^2}{16r^3} \vert \mfW_0\vert_{\gac}^2 - \varep^{3/2}\int\limits_1^v\int\limits_1^{v'} \frac{\varphi^2}{16r^3} \lrpar{\vert \mfW_1\vert_{\gac}^2+\vert \mfW_2\vert_{\gac}^2}+\OO\lrpar{\varep^{7/4} }} \\
&+ r^2 \lrpar{- \varep \int\limits_1^v \frac{\varphi^2}{8r^3} \vert \mfW_0 \vert^2_\gac - \varep^{3/2} \int\limits_1^v \frac{\varphi^2}{8r^3} \lrpar{\vert \mfW_1\vert^2_\gac + \vert \mfW_2\vert^2_\gac} dv' + \OO(\varep^{7/4})} + \OO(\varep^2),
\end{aligned}\label{EQestimateingred20010}
\end{align}
where we emphasize that the terms of size $\varep$ and $\varep^{3/2}$ in \eqref{EQestimateingred20010} are low-frequency.

Combining \eqref{EQestimateingred10010} and \eqref{EQestimateingred20010} in \eqref{EQrecalldefT23} yields, in $L^\infty_vH^5(S_v)$,
\begin{align*}
\begin{aligned}
&\TT_{2.3} \\
=& \lrpar{\lrpar{\frac{\phi^3\Om\trchi}{\Om^2}}_x -r^2} \lrpar{\frac{\varphi}{2} \Divd_{\gd_x} \lrpar{ \varep^{1/2} \sin\lrpar{\frac{v}{\varep}} \mfW_0 + \varep^{3/4} \sin\lrpar{\frac{v}{\sqrt{\varep}}} \mfW_1 + \varep^{3/4} \cos\lrpar{\frac{v}{\sqrt{\varep}}} \mfW_2} } \\
&+ \lrpar{\lrpar{\frac{\phi^3\Om\trchi}{2\Om^2}}_x-r^2} \Divd_{\gd_x}(\Om\widehat{\chi})_x + \OO(\varep^{9/4})\\
&- 4r \lrpar{\varep \int\limits_1^v\int\limits_1^{v'} \frac{\varphi^2}{16r^3} \vert \mfW_0\vert_{\gac}^2 + \varep^{3/2}\int\limits_1^v\int\limits_1^{v'} \frac{\varphi^2}{16r^3} \lrpar{\vert \mfW_1\vert_{\gac}^2+\vert \mfW_2\vert_{\gac}^2}} \\
&\qquad\,\,\, \times \lrpar{\frac{\varphi}{2} \Divd_{\gd_x} \lrpar{ \varep^{1/2} \sin\lrpar{\frac{v}{\varep}} \mfW_0 + \varep^{3/4} \sin\lrpar{\frac{v}{\sqrt{\varep}}} \mfW_1 + \varep^{3/4} \cos\lrpar{\frac{v}{\sqrt{\varep}}} \mfW_2} + \Divd_{\gd_x}(\Om\widehat{\chi})_x} \\
&- r^2 \lrpar{ \varep \int\limits_1^v \frac{\varphi^2}{8r^3} \vert \mfW_0 \vert^2_\gac + \varep^{3/2} \int\limits_1^v \frac{\varphi^2}{8r^3} \lrpar{\vert \mfW_1\vert^2_\gac + \vert \mfW_2\vert^2_\gac} dv' } \\
&\qquad\,\,\, \times \lrpar{ \frac{\varphi}{2} \Divd_{\gd_x} \lrpar{ \varep^{1/2} \sin\lrpar{\frac{v}{\varep}} \mfW_0 + \varep^{3/4} \sin\lrpar{\frac{v}{\sqrt{\varep}}} \mfW_1 + \varep^{3/4} \cos\lrpar{\frac{v}{\sqrt{\varep}}} \mfW_2}+ \Divd_{\gd_x}(\Om\widehat{\chi})_x}.
\end{aligned}
\end{align*}
Integrating the above in $v$ and using the high-frequency improvement, we arrive at
\begin{align*}
\begin{aligned}
\int\limits_1^2 \TT_{2.3} =& \int\limits_1^2  \lrpar{\lrpar{\frac{\phi^3\Om\trchi}{2\Om^2}}_x-r^2} \Divd_{\gd_x}(\Om\widehat{\chi})_x\\
&- \varep \int\limits_1^2 \lrpar{4r \int\limits_1^v\int\limits_1^{v'} \frac{\varphi^2}{16r^3} \vert \mfW_0\vert_{\gac}^2 + r^2 \int\limits_1^v \frac{\varphi^2}{8r^3} \vert \mfW_0 \vert^2_\gac}\Divd_{\gd_x}(\Om\widehat{\chi})_x  + \OO(\varep^{9/4}).
\end{aligned}
\end{align*}
Projecting the above onto the vectorfields $H^{(1m)}$ for $m=-1,0,1$ finishes the proof of \eqref{EQTT22estimate}. \\

\ni{\textbf{Control of $\TT_{2.4}$.}} The rest of this section is concerned with the analysis of the term $\TT_{2.4}$, that is,
\begin{align}
\begin{aligned}
\TT_{2.4}:= \lrpar{r^2 \Divd_\tgd \lrpar{\Om\widehat{\tilde{\chi}}}}^{(1m)}_H.
\end{aligned}\label{EQrecalldefT24}
\end{align}
This term lies at the heart of our gluing of the charge $\mathbf{L}$ at the order $\varep^2$ in Section \ref{SEC1EPLGgluing}, and thus requires a precise analysis. Specifically, we show in the following that
\begin{align}
\begin{aligned}
\int\limits_1^2 \TT_{2.4} =& \varep^2\int\limits_1^2 \frac{\varphi^2}{r^2}\lrpar{-\half  \lrpar{\Nd_A\lrpar{\mfW_1^{AB} (\mfW_2)_{B\cdot} }}^{(1m)}_H + \frac{1}{4}\lrpar{ \Nd_\cdot \lrpar{\mfW_1}_{AD}\lrpar{\mfW_2}^{AD}}^{(1m)}_H}\\
&+ \OO^{\TT_{2.4}}_{x,\mfW_0}+\OO(\varep^{9/4}),
\end{aligned}\label{EQLLLcontrol4}
\end{align}
where term $\OO^{\TT_{2.4}}_{x,\mfW_0}$ in \eqref{EQLLLcontrol4} is of size $\varep^2$, see its precise structure in \eqref{EQpreciseerrorTT24}. To prove \eqref{EQLLLcontrol4} we first discuss in an interlude the divergence-operator $\Divd_{\tilde{g}}$, and then turn to the estimate \eqref{EQLLLcontrol4} for $\TT_4$. \\

\ni\emph{\underline{Interlude:} The divergence operator $\Divd_{\tgd}$.} In the following we prove that for any $2$-covariant symmetric tensor $M_{AB} \in H^6(S_v)$ it holds that, in $H^5(S_v)$, with $\overset{\circ}{M}{}^{AD} := (r^2\gac)^{AE}(r^2\gac)^{DF}M_{EF}$ and $\overset{\circ}{M}{}^A_{\,\,\,\,\,\,E} := (r^2\gac)^{AB}M_{BE}$,
\begin{align}
\begin{aligned}
\lrpar{\Divd_{\tilde{\gd}} M}_C =& \lrpar{\Divd_{r^2\gac}M}_{C} - \varphi \overset{\circ}{\Nd}_A\lrpar{{W}_{AB}\overset{\circ}{M}{}^{BC}} + \half \varphi \overset{\circ}{\Nd}_C W_{DA} \overset{\circ}{M}{}^{AD} \\
&+ \overset{\circ}{\Nd}_A\lrpar{\lrpar{\gd_x^{AB}-(r^2\gac)^{AB}}M_{BC}} + \lrpar{(\Ga_x)^{A}_{AE}-\overset{\circ}{\Ga}{}^A_{AE}} \overset{\circ}{M}{}^E_{\,\,\,\,\,\,C}\\
&+ \lrpar{\OO^{\PP_3}_x}_{AC}^E \overset{\circ}{M}{}^A_{\,\,\,\,\,\,E} + \OO\lrpar{\varep^{2} \Vert M \Vert_{H^6(S_v)}},
\end{aligned}\label{EQdivOperatorExpansion}
\end{align}
where the term $\OO^{\PP_3}_x$ is of size $\varep$, see \eqref{EQdefOOPP3} for its definition.

Indeed, by definition of the divergence operator we have that, with $\tilde{M}^A_{\,\,\,\,\,\,C}:= \tgd^{AB}M_{BC}$, 
\begin{align}
\begin{aligned}
\lrpar{\Divd_{\tilde{\gd}} M}_C =& \tgd^{AB}\tilde{\Nd}_A M_{BC} \\
=& \tilde{\Nd}_A \tilde{M}^A_{\,\,\,\,\,\,C}\\
=& \pr_A\lrpar{\tilde{M}^A_{\,\,\,\,\,\,C}} + \tilde{\Ga}^A_{AE} \tilde{M}^E_{\,\,\,\,\,\,C} - \tilde{\Ga}^E_{CA} \tilde{M}^A_{\,\,\,\,\,\,E}\\
=& \underbrace{\overset{\circ}{\Nd}_A \tilde{M}^A_{\,\,\,\,\,\,C}}_{:=\PP_1} + \underbrace{\lrpar{\tilde{\Ga}^A_{AE}-\overset{\circ}{\Ga}{}^A_{AE}} \tilde{M}^E_{\,\,\,\,\,\,C}}_{:=\PP_2} -\underbrace{\lrpar{\tilde{\Ga}^E_{AC}-{\overset{\circ}{\Ga}}{}^E_{AC}}\tilde{M}^A_{\,\,\,\,\,\,E}}_{:=\PP_3}.
\end{aligned}\label{EQdivanalysis1}
\end{align}
where $\tilde{\Nd}$ and $\tilde{\Gamma}$ denote the covariant derivative and Christoffel symbols with respect to $\tilde{\gd}$, and $\overset{\circ}{\Ga}$ and $\overset{\circ}{\Nd}$ denote the Christoffel symbols and the covariant derivative with respect to $r^2\gac$. In the following we analyze $\PP_1, \PP_2$ and $\PP_3$.

First, consider the term $\PP_1$ on the right-hand side of \eqref{EQdivanalysis1}. We have that
\begin{align}
\begin{aligned}
\PP_1 := \overset{\circ}{\Nd}_A\tilde{M}^A_{\,\,\,\,\,\,C} =& \overset{\circ}{\Nd}_A\overset{\circ}{M}{}^A_{\,\,\,\,\,\,C} + \overset{\circ}{\Nd}_A\lrpar{\tilde{M}^A_{\,\,\,\,\,\,C} - \overset{\circ}{M}{}^A_{\,\,\,\,\,\,C}} \\
=&\overset{\circ}{\Nd}_A\overset{\circ}{M}{}^A_{\,\,\,\,\,\,C} + \overset{\circ}{\Nd}_A\lrpar{\lrpar{\tgd^{AB}-(r^2\gac)^{AB}}M_{BC}} \\
=& \lrpar{\Divd_{r^2\gac}M}_{C} + \overset{\circ}{\Nd}_A\lrpar{\lrpar{\lrpar{\tgd^{AB}-\gd_x^{AB}}+\lrpar{\gd_x^{AB}-(r^2\gac)^{AB}}}M_{BC}}.
\end{aligned}\label{EQexpansionPP1}
\end{align}
By \eqref{EQrep1} and \eqref{EQsmallnessCONDbackgroundEPLGnullGluingsectionback} it holds that in $L^\infty_vH^6(S_v)$,
\begin{align}
\begin{aligned}
\tgd^{AB}-\gd_x^{AB} =& -\gd_x^{AC}\gd_x^{BD}(\varphi W)_{CD} + \OO(\varep^{9/4})\\
=&-(r^2\gac)^{AC}(r^2\gac)^{BD}(\varphi W)_{CD} + \OO(\varep^{9/4}),
\end{aligned}\label{EQinvMetricdiff00101}
\end{align}
so that from \eqref{EQexpansionPP1} we get that in $L^\infty_vH^5(S_v)$,
\begin{align}
\begin{aligned}
\PP_1 =& \lrpar{\Divd_{r^2\gac}M}_{C} - \varphi \overset{\circ}{\Nd}_A\lrpar{\overset{\circ}{W}{}^{AB}M_{BC}} \\
&+ \overset{\circ}{\Nd}_A\lrpar{\lrpar{\gd_x^{AB}-(r^2\gac)^{AB}}M_{BC}} + \OO\lrpar{\varep^{9/4} \Vert M \Vert_{H^6(S_v)}},
\end{aligned}\label{EQPP1expansionfinal}
\end{align}
where $\overset{\circ}{W}{}^{AB}:= (r^2\gac)^{AC}(r^2\gac)^{BD}(\varphi W)_{CD}$.

Second, consider the term $\PP_{2}$ on the right-hand side of \eqref{EQdivanalysis1}, that is,
\begin{align*}
\begin{aligned}
\PP_2:= \lrpar{\tilde{\Ga}^A_{AE}-\overset{\circ}{\Ga}{}^A_{AE}} \tilde{M}^E_{\,\,\,\,\,\,C}.
\end{aligned}
\end{align*}
From \eqref{EQdeterminantEstimatecloseness0009} and the classical differential geometry identity 
$${\Ga}^A_{AE} = \sqrt{\det \gd}^{-1}\pr_E \lrpar{\sqrt{\det\gd}},$$ 
it follows that in $L^\infty_vH^6(S_v)$,
\begin{align*}
\begin{aligned}
\tilde{\Ga}^A_{AE}-\overset{\circ}{\Ga}{}^A_{AE} =& \lrpar{\tilde{\Ga}^A_{AE}-(\Ga_x)^{A}_{AE}}+\lrpar{(\Ga_x)^{A}_{AE}-\overset{\circ}{\Ga}{}^A_{AE}} = \OO(\varep^{9/4}) + \lrpar{(\Ga_x)^{A}_{AE}-\overset{\circ}{\Ga}{}^A_{AE}},
\end{aligned}
\end{align*}
so that (using \eqref{EQsmallnessCONDbackgroundEPLGnullGluingsectionback} and \eqref{EQinvMetricdiff00101}) in $L^\infty_vH^6(S_v)$,
\begin{align}
\begin{aligned}
\PP_2 := \lrpar{\tilde{\Ga}^A_{AE}-\overset{\circ}{\Ga}{}^A_{AE}} \tilde{M}^E_{\,\,\,\,\,\,C} =& \lrpar{(\Ga_x)^{A}_{AE}-\overset{\circ}{\Ga}{}^A_{AE}} \tilde{M}^E_{\,\,\,\,\,\,C} + \OO\lrpar{\varep^{9/4} \Vert M\Vert_{H^6(S_v)}} \\
=& \lrpar{(\Ga_x)^{A}_{AE}-\overset{\circ}{\Ga}{}^A_{AE}} \overset{\circ}{M}{}^E_{\,\,\,\,\,\,C} + \OO\lrpar{\varep^{2} \Vert M\Vert_{H^6(S_v)}}.
\end{aligned}\label{EQPP2expansionfinal}
\end{align} 
\ni Third, consider the term $\PP_3$ on the right-hand side of \eqref{EQdivanalysis1}, that is,
\begin{align}
\begin{aligned}
\PP_3 := \lrpar{\tilde{\Ga}^E_{AC}-{\overset{\circ}{\Ga}}{}^E_{AC}}\tilde{M}^A_{\,\,\,\,\,\,E}.
\end{aligned}\label{EQdefpp3seclate}
\end{align}
On the one hand, as above we can rewrite
\begin{align}
\begin{aligned}
\tilde{M}^A_{\,\,\,\,\,\,E} = \overset{\circ}{M}{}^A_{\,\,\,\,\,\,E} + \lrpar{\lrpar{\tgd^{AB}-\gd_x^{AB}}+\lrpar{\gd_x^{AB}-(r^2\gac)^{AB}}}M_{BC}.
\end{aligned}\label{EQexppp3term}
\end{align}

Using \eqref{EQrep1} we can express
\begin{align*}
\begin{aligned}
\tilde{\Ga}^E_{CA} =& \half \tgd^{ED}\lrpar{\pr_C\tgd_{DA}+\pr_A\tgd_{DC}-\pr_D\tgd_{CA}} \\
=& \half \lrpar{\lrpar{\tgd^{ED}-\gd_x^{ED}}+\lrpar{\gd_x^{ED}-(r^2\gac)^{ED}}+(r^2\gac)^{ED}} \\
&\qquad \times \lrpar{\varphi \lrpar{\pr_C W_{DA}+\pr_A W_{DC}-\pr_D W_{CA}} + \lrpar{\pr_C(r^2\gac)_{DA}+\pr_A(r^2\gac)_{DC}-\pr_D(r^2\gac)_{CA}}} \\
&+ \half \lrpar{\lrpar{\tgd^{ED}-\gd_x^{ED}}+\lrpar{\gd_x^{ED}-(r^2\gac)^{ED}}+(r^2\gac)^{ED}} \\
&\qquad \times \lrpar{\pr_C(\gd_x-r^2\gac)_{DA}+\pr_A(\gd_x-r^2\gac)_{DC}-\pr_D(\gd_x-r^2\gac)_{CA}},
\end{aligned}
\end{align*}
so that by \eqref{EQsmallnessCONDbackgroundEPLGnullGluingsectionback}, \eqref{EQWansatz} and \eqref{EQinvMetricdiff00101},
\begin{align}
\begin{aligned}
\tilde{\Ga}^E_{CA}-\overset{\circ}{\Ga}{}^E_{CA} =& \half \lrpar{\tgd^{ED}-\gd_x^{ED}} \lrpar{\pr_C(r^2\gac)_{DA}+\pr_A(r^2\gac)_{DC}-\pr_D(r^2\gac)_{CA}} \\
&+ \half (r^2\gac)^{ED} \varphi \lrpar{\pr_C W_{DA}+\pr_A W_{DC}-\pr_D W_{CA}} +\OO_{\PP_3} + \OO(\varep^{9/4}) \\
=& - (r^2\gac)^{EG}\varphi W_{GN} \overset{\circ}{\Ga}{}^N_{AC} + \half (r^2\gac)^{ED} \varphi \lrpar{\pr_C W_{DA}+\pr_A W_{DC}-\pr_D W_{CA}} \\
&+\lrpar{\OO^{\PP_3}_x}_{AC}^E + \OO(\varep^{9/4}),
\end{aligned}\label{EQexpansiondiffChrSym001012}
\end{align}
with the term $\lrpar{\OO^{\PP_3}_x}_{AC}^E$ being given by
\begin{align}
\begin{aligned}
&\lrpar{\OO^{\PP_3}_x}_{AC}^E\\
 :=& \half \lrpar{\gd_x^{ED}-(r^2\gac)^{ED}} \lrpar{\pr_C(r^2\gac)_{DA}+\pr_A(r^2\gac)_{DC}-\pr_D(r^2\gac)_{CA}} \\
&+\half \lrpar{\gd_x^{ED}-(r^2\gac)^{ED}} \lrpar{\pr_C(\gd_x-r^2\gac)_{DA}+\pr_A(\gd_x-r^2\gac)_{DC}-\pr_D(\gd_x-r^2\gac)_{CA}} \\
&+ \half (r^2\gac)^{ED} \lrpar{\pr_C(\gd_x-r^2\gac)_{DA}+\pr_A(\gd_x-r^2\gac)_{DC}-\pr_D(\gd_x-r^2\gac)_{CA}},
\end{aligned}\label{EQdefOOPP3}
\end{align}
where we note that $\lrpar{\OO^{\PP_3}_x}_{AC}^E= \OO(\varep)$ and depends only on $x$.

Moreover, we observe that the first line on the right-hand side of \eqref{EQexpansiondiffChrSym001012} satisfies the property that 
\begin{align}
\begin{aligned}
&\lrpar{- (r^2\gac)^{EG}\varphi W_{GN} \overset{\circ}{\Ga}{}^N_{AC} + \half (r^2\gac)^{ED} \varphi \lrpar{\pr_C W_{DA}+\pr_A W_{DC}-\pr_D W_{CA}}} \overset{\circ}{M}{}^A_{\,\,\,\,\,\,E} \\
=& - \varphi W_{DB} \overset{\circ}{\Ga}{}^B_{AC} \overset{\circ}{M}{}^{AD} + \half \varphi \pr_C W_{DA}\overset{\circ}{M}{}^{AD} \\
=& \half \varphi \lrpar{\pr_C W_{DA} - \overset{\circ}{\Ga}{}^B_{CA} W_{DB}-\overset{\circ}{\Ga}{}^B_{CD} W_{AB} } \overset{\circ}{M}{}^{AD}\\
=& \half \varphi \overset{\circ}{\Nd}_C W_{DA} \overset{\circ}{M}{}^{AD},
\end{aligned}\label{EQcancellation}
\end{align}
where we denote $\overset{\circ}{M}{}^{AD} := (r^2\gac)^{AB}(r^2\gac)^{DC} M_{BC}$ and used that $\overset{\circ}{M}{}^{AD}=\overset{\circ}{M}{}^{DA}$.

Plugging \eqref{EQexppp3term} and \eqref{EQexpansiondiffChrSym001012} into \eqref{EQdefpp3seclate} and using \eqref{EQcancellation}, we thus arrive at 
\begin{align}
\begin{aligned}
\PP_3 =&\half \varphi \overset{\circ}{\Nd} W_{DA} \overset{\circ}{M}{}^{AD} + \lrpar{\OO^{\PP_3}_x}_{AC}^E \overset{\circ}{M}{}^A_{\,\,\,\,\,\,E} + \OO(\varep^2 \Vert M \Vert_{H^6(S_v)}).
\end{aligned}\label{EQestimationpp3final}
\end{align}
Plugging the above estimates \eqref{EQPP1expansionfinal}, \eqref{EQPP2expansionfinal} and \eqref{EQestimationpp3final} into \eqref{EQdivanalysis1} leads to, in $H^5(S_v)$,
\begin{align*}
\begin{aligned}
\lrpar{\Divd_{\tilde{\gd}} M}_C =& \lrpar{\Divd_{r^2\gac}M}_{C} - \varphi \overset{\circ}{\Nd}_A\lrpar{\overset{\circ}{W}{}^{AB}M_{BC}} + \half \varphi \overset{\circ}{\Nd}_C W_{DA} \overset{\circ}{M}{}^{AD} \\
&+ \overset{\circ}{\Nd}_A\lrpar{\lrpar{\gd_x^{AB}-(r^2\gac)^{AB}}M_{BC}} + \lrpar{(\Ga_x)^{A}_{AE}-\overset{\circ}{\Ga}{}^A_{AE}} \overset{\circ}{M}{}^E_{\,\,\,\,\,\,C}+ \lrpar{\OO^{\PP_3}_x}_{AC}^E \overset{\circ}{M}{}^A_{\,\,\,\,\,\,E} \\
&+ \OO\lrpar{\varep^{2} \Vert M \Vert_{H^6(S_v)}}.
\end{aligned}
\end{align*}
and thus finishes the proof of \eqref{EQdivOperatorExpansion}. We now return to the proof of \eqref{EQLLLcontrol4}. \\

\ni \underline{\emph{Proof of the estimate \eqref{EQLLLcontrol4} for $\TT_{2.4}$.}} Recalling the definition of $\TT_{2.4}$ in \eqref{EQrecalldefT24}, and applying \eqref{EQOmchihtildeexpr} and \eqref{EQdivOperatorExpansion}, we get that 
\begin{align*}
\begin{aligned}
\TT_{2.4} :=& r^2\lrpar{\Divd_{\tilde{\gd}} \lrpar{\Om\widehat{\tilde{\chi}} \, }}^{(1m)}_H\\
=& r^2\underbrace{\lrpar{\Divd_{r^2\gac} \lrpar{\Om\widehat{\tilde{\chi}} \, }}^{(1m)}_H}_{=0} - r^2\varphi \underbrace{\lrpar{ \overset{\circ}{\Nd}_A \lrpar{ W_{AB} \overset{\circ}{\lrpar{\Om\widehat{\tilde{\chi}}}}{}^{B\cdot}}}^{(1m)}_H}_{:= \TT_{2.4.1}} + \frac{r^2\varphi}{2} \underbrace{\lrpar{ \overset{\circ}{\Nd}_{\cdot} W_{DA} \overset{\circ}{\lrpar{\Om\widehat{\tilde{\chi}}}}{}^{AD}}^{(1m)}_H}_{:=\TT_{2.4.2}} \\
& +\underbrace{\lrpar{ \overset{\circ}{\Nd}_A\lrpar{\lrpar{\gd_x^{AB}-(r^2\gac)^{AB}}\lrpar{\Om\widehat{\tilde{\chi}}}_{B\cdot}} + \lrpar{(\Ga_x)^{A}_{AE}-\overset{\circ}{\Ga}{}^A_{AE}} (r^2\gac)^{ED}(\Om\widehat{\tilde{\chi}})_{D\cdot} }^{(1m)}_H}_{:=\TT_{2.4.3}}\\
&+ \underbrace{\lrpar{\lrpar{\OO^{\PP_3}_x}_{A\cdot}^E (r^2\gac)^{AD}(\Om\widehat{\tilde{\chi}})_{DE} }^{(1m)}_H}_{:= \TT_{2.4.4}} + \OO(\varep^{9/4}),
\end{aligned}
\end{align*}
where we used that for any $2$-covariant symmetric tensor $T$, 
\begin{align*}
\begin{aligned}
(\Divd_{r^2\gac} V)^{(1m)}_H =0, \,\, \text{ for } m=-1,0,1;
\end{aligned}
\end{align*}
here we refer to the Fourier theory in Appendix \ref{APPconstructionW} for the definition of the projection onto the vectorfield $H^{(1m)}$, and recall, in particular, that $(\di f)_H=0$ for any scalar function $f$. 

We immediately observe that by \eqref{EQOmchihtildeexpr}, due to similar high-frequency improvement as before (note that $\OO_{\PP_3}$ is low-frequency, see the expression \eqref{EQdefOOPP3}), it holds that
\begin{align}
\begin{aligned}
\int\limits_1^2 (\TT_{2.4.3} + \TT_{2.4.4}) dv= \OO^{\TT_{2.4.3}}_x+ \OO^{\TT_{2.4.4}}_x + \OO(\varep^{9/4}),
\end{aligned}\label{EQprelimcontrolTT24}
\end{align}
where
\begin{align*}
\begin{aligned}
\OO^{\TT_{2.4.3}}_x:=& \int\limits_1^2 \lrpar{ \overset{\circ}{\Nd}_A\lrpar{\lrpar{\gd_x^{AB}-(r^2\gac)^{AB}}\lrpar{(\Om\widehat{{\chi}})_x}_{B\cdot}} + \lrpar{(\Ga_x)^{A}_{AE}-\overset{\circ}{\Ga}{}^A_{AE}} (r^2\gac)^{ED}((\Om\widehat{{\chi}})_x)_{D\cdot} }^{(1m)}_H,\\
\OO^{\TT_{2.4.4}}_x:=& \int\limits_1^2 \lrpar{\lrpar{\OO^{\PP_3}_x}_{A\cdot}^E (r^2\gac)^{AD}((\Om\widehat{{\chi}})_x)_{DE} }^{(1m)}_H.
\end{aligned}
\end{align*}
Thus, with view onto proving \eqref{EQLLLcontrol4}, it remains only to discuss the terms $\TT_{2.4.1}$ and $\TT_{2.4.2}$.  \\

\ni \textbf{Notation.} In the following control of $\TT_{2.4.1}$ and $\TT_{2.4.2}$, all differential operators on the sphere and raising of indices are with respect to the round metric $r^2\gac$. \\ 

\ni \emph{Control of $\TT_{2.4.1}$.} In the following we prove that
\begin{align}
\begin{aligned}
\int\limits_1^2 \TT_{2.4.1} = - \int\limits_{1}^2 \half \varep^2 \lrpar{\Nd_A\lrpar{\mfW_1^{AB} (\mfW_2)_{B\cdot} }}^{(1m)}_H + \OO(\varep^{9/4}),
\end{aligned}\label{EQclaimTT241integralproperty}
\end{align}
where we recall that $\mfW_1$ and $\mfW_2$ appear in the high-frequency ansatz \eqref{EQWansatz} for $W$, that is,
\begin{align*}
\begin{aligned}
W := -\varep^{3/2} \cos\lrpar{\frac{v}{\varep}} \mfW_0 - \varep^{5/4} \lrpar{\cos\lrpar{\frac{v}{\sqrt{\varep}}} \mfW_1 - \sin\lrpar{\frac{v}{\sqrt{\varep}}} \mfW_2}.
\end{aligned}
\end{align*}
To prove \eqref{EQclaimTT241integralproperty}, we first we note that for any symmetric $2$-covariant tensor $M$ on the sphere,
\begin{align}
\begin{aligned}
\lrpar{\Nd_A \lrpar{ M^{AB} M_{B\cdot}}}^{(1m)}_H := \int (H^{1m})^C \Nd_A\lrpar{M^{AB} M_{BC}}=  - \int \Nd_AH^{1m}_C M^{AB} M_{B}^{\,\,\,\,C} = 0,
\end{aligned}\label{EQvanishingP1tensoranalysis}
\end{align}
where we used the symmetry $M^{AB} M_{B}^{\,\,\,\,C}=M^{CB} M_{B}^{\,\,\,\,A}$ and that $\Nd_A H^{(1m)}_C=-\Nd_C H^{(1m)}_A$ because $H^{(1m)}$ is a Killing vectorfield of the round sphere for $m=-1,0,1$; see Appendix \ref{APPconstructionW}. 

Expanding $\TT_{2.4.1}$ with the formulas \eqref{EQWansatz} for $W$ and \eqref{EQOmchihtildeexpr} for $\Om\widehat{\tilde{\chi}}$, and applying \eqref{EQvanishingP1tensoranalysis}, yields
\begin{align}
\begin{aligned}
\TT_{2.4.1} :=& \lrpar{ \Nd_A \lrpar{ W_{AB} \lrpar{\Om\widehat{\tilde{\chi}}}^{B\cdot}}}^{(1m)}_H \\
=& \lrpar{\TT_{2.4.1.a}}^{(1m)}_H + \lrpar{\TT_{2.4.1.b}}^{(1m)}_H + \lrpar{\TT_{2.4.1.c}}^{(1m)}_H +\OO(\varep^{9/4}),
\end{aligned}\label{EQP1expansion1}
\end{align}
where the terms $\TT_{2.4.1.a}$, $\TT_{2.4.1.b}$ and $\TT_{2.4.1.c}$ are
\begin{align*}
\begin{aligned}
\TT_{2.4.1.a} =& \Nd_A\lrpar{ \varep^{5/4} \lrpar{-\cos\lrpar{\frac{v}{\sqrt{\varep}}}\mfW_1 + \sin\lrpar{\frac{v}{\sqrt{\varep}}} \mfW_2 }^{AB}\lrpar{\frac{\varphi \varep^{1/2}}{2}\sin\lrpar{\frac{v}{\varep}}\mfW_0}_{B\cdot}  },\\
\TT_{2.4.1.b} =& \Nd_A\lrpar{ \lrpar{- \varep^{3/2}\cos\lrpar{\frac{v}{\varep}}\mfW_0}^{AB}\lrpar{\varep^{3/4}\lrpar{\sin\lrpar{\frac{v}{\sqrt{\varep}}}\mfW_1 + \cos\lrpar{\frac{v}{\sqrt{\varep}}} \mfW_2 } }_{B\cdot}}, \\
\TT_{2.4.1.c} =& \half \varep^2 \Nd_A\lrpar{ \sin^2\lrpar{\frac{v}{\sqrt{\varep}}} \mfW_2^{AB} (\mfW_1)_{B\cdot} - \cos^2\lrpar{\frac{v}{\sqrt{\varep}}} \mfW_1^{AB} (\mfW_2)_{B\cdot} }.
\end{aligned}
\end{align*}
First, using that $\cos(\frac{v}{\varep^{1/2}})\cdot \sin(\frac{v}{\varep})$ and $\sin(\frac{v}{\varep^{1/2}})\cdot \cos(\frac{v}{\varep})$ are high-frequency functions (see Section \ref{SECpreliminariesHIGHfrequency}), it holds that
\begin{align*}
\begin{aligned}
-\int\limits_1^2 r^2 \varphi \TT_{2.4.1.a} = \OO(\varep^{9/4}),
\end{aligned}
\end{align*}
Second, we can directly estimate $\TT_{2.4.1.b}$ to be a higher-order term,
\begin{align*}
\begin{aligned}
\TT_{2.4.1.b} = \OO(\varep^{9/4}).
\end{aligned}
\end{align*}
Third, using that $\cos^2\lrpar{\frac{v}{\sqrt{\varep}}} = 1- \sin^2\lrpar{\frac{v}{\sqrt{\varep}}}$, it holds that 
\begin{align*}
\begin{aligned}
\lrpar{\TT_{2.4.1.c}}^{(1m)}_H =& \half \varep^2 \lrpar{ \Nd_A\lrpar{ \sin^2\lrpar{\frac{v}{\sqrt{\varep}}} \mfW_2^{AB} (\mfW_1)_{B\cdot} - \cos^2\lrpar{\frac{v}{\sqrt{\varep}}} \mfW_1^{AB} (\mfW_2)_{B\cdot} }}^{(1m)}_H\\
=& \half \varep^2 \lrpar{\Nd_A\lrpar{ \sin^2\lrpar{\frac{v}{\sqrt{\varep}}} (\mfW_2^{AB} (\mfW_1)_{B\cdot} + \mfW_1^{AB} (\mfW_2)_{B\cdot}) -  \mfW_1^{AB} (\mfW_2)_{B\cdot} }}^{(1m)}_H \\
=& -\half \varep^2 \lrpar{\Nd_A\lrpar{\mfW_1^{AB} (\mfW_2)_{B\cdot} }}^{(1m)}_H,
\end{aligned}
\end{align*}
where we used the symmetry $\mfW_2^{AB} (\mfW_1)_{B}^{\,\,\,\,\,C} + \mfW_1^{AB} (\mfW_2)_{B}^{\,\,\,\,\,C}=\mfW_2^{CB} (\mfW_1)_{B}^{\,\,\,\,\,A} + \mfW_1^{CB} (\mfW_2)_{B}^{\,\,\,\,\,A}$, and argued as in \eqref{EQvanishingP1tensoranalysis}. Integrating \eqref{EQP1expansion1} in $v$ and applying the above finishes the proof of \eqref{EQclaimTT241integralproperty}.\\ 

\ni \emph{Control of $\TT_{2.4.2}$.} In the following we prove that
\begin{align}
\begin{aligned}
\int\limits_1^2 \TT_{2.4.2} = -\int\limits_1^2 \half \varep^{2}\lrpar{ \Nd_\cdot \lrpar{\mfW_1}_{AD}\lrpar{\mfW_2}^{AD}}^{(1m)}_H + \OO(\varep^{9/4}).
\end{aligned}\label{EQpropertyTT242}
\end{align}

\ni Indeed, first we note that for any symmetric $2$-covariant tensor $M$,
\begin{align}
\begin{aligned}
\lrpar{\Nd_\cdot M_{AD} M^{AD}}^{(1m)}_H = \lrpar{\di \lrpar{\vert M \vert^2_{r^2\gac}}}^{(1m)}_H =0,
\end{aligned}\label{EQdinoHpartMM}
\end{align}
Expanding $\TT_{2.4.2}$ with the formulas \eqref{EQWansatz} for $W$ and \eqref{EQOmchihtildeexpr} for $\Om\widehat{\tilde{\chi}}$, and applying \eqref{EQdinoHpartMM}, yields
\begin{align}
\begin{aligned}
\TT_{2.4.2} :=& \lrpar{ \Nd_{\cdot} W_{DA} \lrpar{\Om\widehat{\tilde{\chi}}}^{AD}}^{(1m)}_H\\
=&  \lrpar{ \TT_{2.4.2.a}}^{(1m)}_H+ \lrpar{\TT_{2.4.2.b}}^{(1m)}_H + \lrpar{\TT_{2.4.2.c}}^{(1m)}_H + \OO(\varep^{9/4}),
\end{aligned}\label{EQP2expansion2}
\end{align}
where the terms $\TT_{2.4.2.a}$, $\TT_{2.4.2.b}$, $\TT_{2.4.2.c}$ are
\begin{align*}
\begin{aligned}
\TT_{2.4.2.a} :=& \Nd_C  \lrpar{\varep^{5/4}\lrpar{-\cos\lrpar{\frac{v}{\sqrt{\varep}}}\mfW_1 +\sin\lrpar{\frac{v}{\sqrt{\varep}}}\mfW_2}}_{AD}\lrpar{\frac{\varphi\varep^{1/2}}{2}\sin\lrpar{\frac{v}{\varep}} \mfW_0}^{AD}, \\
\TT_{2.4.2.b} :=& \Nd_C \lrpar{-\varep^{3/2} \cos\lrpar{\frac{v}{\varep}}\mfW_0}_{AD} \lrpar{\varep^{3/4}\lrpar{\sin\lrpar{\frac{v}{\sqrt{\varep}}}\mfW_1 +\cos\lrpar{\frac{v}{\sqrt{\varep}}}\mfW_2}}^{AD}, \\
\TT_{2.4.2.c} :=& \half \varep^{2}\lrpar{-\cos^2\lrpar{\frac{v}{\sqrt{\varep}}}  \Nd_C  \lrpar{\mfW_1}_{AD}\lrpar{\mfW_2}^{AD}+\sin^2\lrpar{\frac{v}{\sqrt{\varep}}} \Nd_C  \lrpar{\mfW_2}_{AD}\lrpar{\mfW_1}^{AD}}.
\end{aligned}
\end{align*}
First, using that $\cos(\frac{v}{\varep^{1/2}})\cdot \sin(\frac{v}{\varep})$ and $\sin(\frac{v}{\varep^{1/2}})\cdot \cos(\frac{v}{\varep})$ are high-frequency functions (see Section \ref{SECpreliminariesHIGHfrequency}), it holds that
\begin{align*}
\begin{aligned}
\int\limits_1^2 r^2 \varphi \TT_{2.4.2.a} = \OO(\varep^{9/4}),
\end{aligned}
\end{align*}
Second, we directly have that $\TT_{2.4.2.b}$ is a higher-order term,
\begin{align*}
\begin{aligned}
\TT_{2.4.2.b} = \OO(\varep^{9/4}).
\end{aligned}
\end{align*}
Third, it holds that 
\begin{align*}
\begin{aligned}
\lrpar{\TT_{2.4.2.c}}^{(1m)}_H :=& \half \varep^{2}\lrpar{-\cos^2\lrpar{\frac{v}{\varep^{1/2}}}  \Nd_C  \lrpar{\mfW_1}_{AD}\lrpar{\mfW_2}^{AD}+\sin^2\lrpar{\frac{v}{\varep^{1/2}}} \Nd_C  \lrpar{\mfW_2}_{AD}\lrpar{\mfW_1}^{AD}}^{(1m)}_H\\
=& - \half \varep^{2}\lrpar{ \Nd_\cdot \lrpar{\mfW_1}_{AD}\lrpar{\mfW_2}^{AD}}^{(1m)}_H,
\end{aligned}
\end{align*}
where we integrated by parts the second term. Integrating \eqref{EQP2expansion2} in $v$ and applying the above finishes the proof of \eqref{EQpropertyTT242}.

To conclude this section, combining \eqref{EQprelimcontrolTT24}, \eqref{EQclaimTT241integralproperty} and \eqref{EQpropertyTT242} finishes the proof of the estimate \eqref{EQLLLcontrol4} for $\TT_{2.4}$ with
\begin{align}
\begin{aligned}
\OO^{\TT_{2.4}}_{x} = \OO^{\TT_{2.4.3}}_{x}+\OO^{\TT_{2.4.4}}_{x}.
\end{aligned}\label{EQpreciseerrorTT24}
\end{align}
Combined with the control estimates \eqref{EQterm21islowerorder} for $\TT_{2.1}$, \eqref{EQTT22estimate} for $\TT_{2.2}$, \eqref{EQTT23estimate} for $\TT_{2.3}$, this proves the control \eqref{EQbigTT2statementclaim} for $\TT_2$, with $\OO^{\TT_2}_x$ and $\OO^{\TT_2}_{x,\mfW_0}$ being given by (see \eqref{EQTT22estimate}, \eqref{EQdeferrortermtt23}, \eqref{EQprelimcontrolTT24})
\begin{align}
\begin{aligned}
\OO^{\TT_2}_x =& \OO^{\TT_{2.3}}_{x} + \OO^{\TT_{2.4.3}}_{x}+\OO^{\TT_{2.4.4}}_{x},\\
\OO^{\TT_2}_{x,\mfW_0} =& \OO^{\TT_{2.2}}_{x,\mfW_0} +\OO^{\TT_{2.3}}_{x,\mfW_0}.
\end{aligned}\label{EQpreciseerrorTT2}
\end{align}
Moreover, combining in \eqref{EQBequation1appliedTransportEstimateL} the control \eqref{EQbigTT2statementclaim} of $\TT_2$ with the estimate \eqref{EQLLLcontrol1} for $\TT_1$, we conclude the proof of the bound \eqref{EQdeftriangleLW00} for $\triangle \Lf(x_{+W})$ with \eqref{EQdeftriangleLW}. The precise definition of $\OO^{\Lf^m}_{x}$ and $\OO^{\Lf^m}_{x,\mfW_0}$ is given by (see \eqref{EQpreciseerrorTT2})
\begin{align}
\begin{aligned}
\OO^{\Lf^m}_{x}:= \OO^{\TT_2}_{x}, \,\,
\OO^{\Lf^m}_{x,\mfW_0} := \int\limits_1^2 \OO^{\TT_1}_{x,\mfW_0}dv+\OO^{\TT_2}_{x,\mfW_0}.
\end{aligned}\label{EQpreciseerrorL}
\end{align}

\subsection{Preliminaries for null gluing of $\Ef,\Pf,\Lf,\Gf$} \label{SEC1EPLGgluing} 

In Sections \ref{SECdegreeargument1} and \ref{SECdiffEstimates1W} below we prove the existence of and estimates for the null gluing of $\Ef,\Pf,\Lf,\Gf$ of Theorem \ref{THMmain0}. That is, for a given solution $x$ to the null structure equations satisfying $\Vert x- \mathfrak{m} \Vert_{\XX(\HH)} \leq \varep$ for small $\varep>0$, and given charge difference difference vector $(\triangle\Ef_0,\triangle\Pf_0,\triangle\Lf_0,\triangle\Gf_0)\in \RRR^{10}$ (subject to further conditions), there exists a tensor $W$ such that, in terms of the decompositions of the previous sections,
\begin{align}
\begin{aligned}
\Ef(W)+\OO(\varep^{5/4}) =& \triangle\Ef_0, & \Pf(W)+\OO(\varep^{5/4}) =& \triangle\Pf_0, \\
\Gf(W)+\OO(\varep^{5/4}) =& \triangle\Gf_0, & \Lf(W)+\OO_x^{\Lf}+\OO(\varep^{9/4}) =& \triangle\Lf_0,
\end{aligned}\label{EQtobesolved101past99}
\end{align}
and $W$ satisfies appropriate bounds in terms of $x, \Ef_0,\Pf_0,\Lf_0,\Gf_0$; see \eqref{EQdiffEstimate1mainthm2statement}.

In preparation for proving the above, we show in this section that for $\varep>0$ sufficiently small, the mapping
\begin{align}
\begin{aligned}
W\mapsto (\Ef(W),\Pf(W),\Lf(W),\Gf(W)) \text{ is a bijection }
\end{aligned}\label{EQbijectionstatementWproblem}
\end{align}
onto the set of vectors $({\triangle \EE}_0,{\triangle \PP}_0,{\triangle \LL}_0,{\triangle\GG}_0)\in \RRR^{10}$ satisfying, for a sufficiently large real number $\CC_3>0$, and real numbers $\CC_1, \CC_2>0$,
\begin{align}
\begin{aligned}
\triangle \EE_0 =& \CC_1\varep, \\
\vert \triangle \LL_0 \vert \leq& \CC_2 \varep^2, \\
\varep^{-1} \triangle \EE_0 >& \CC_3 \lrpar{\varep^{-1} \vert \triangle \PP_0\vert + \varep^{-1} \vert \triangle \GG_0\vert },
\end{aligned}\label{EQrecallassumptionsEPLGtopordergluing}
\end{align}
and derive the following bounds for $W$,
\begin{align}
\begin{aligned}
\Vert \mfW_0 \Vert \les& \sqrt{\varep^{-1} \mathbf{E}(W)} + \sqrt{\varep^{-1} \vert\mathbf{P}(W)\vert} + \sqrt{\varep^{-1} \vert\mathbf{G}(W)\vert}, \\
\Vert \mfW_1 \Vert + \Vert \mfW_2 \Vert \les& \sqrt{\varep^{-1} \mathbf{E}(W)}+ \sqrt{\varep^{-1} \mathbf{P}(W)}+ \sqrt{\varep^{-1} \vert \mathbf{G}(W)\vert} + \sqrt{\varep^{-2} \vert \mathbf{L}(W)\vert}.
\end{aligned}\label{EQboundsWproblem}
\end{align}
We emphasize here that $(\triangle \EE_0,\triangle \PP_0,\triangle \LL_0,\triangle\GG_0)$ are not charge differences but constitutes simply a vector in $\RRR^{10}$.

Subsequently, \eqref{EQbijectionstatementWproblem} is used in Section \ref{SECdegreeargument1} in a degree argument to derive the existence of solutions to \eqref{EQtobesolved101past99}, and \eqref{EQboundsWproblem} is applied in Section \ref{SECdiffEstimates1W} together with bounds for the error terms in \eqref{EQtobesolved101past99} to derive estimates for solutions to \eqref{EQtobesolved101past99}.

We turn to the proof of \eqref{EQbijectionstatementWproblem} and \eqref{EQboundsWproblem}. For given vector $(\triangle\EE_0,\triangle\PP_0,\triangle\LL_0,\triangle\GG_0)$ satisfying \eqref{EQrecallassumptionsEPLGtopordergluing}, in the following we determine $W$ such that 
\begin{align}
\begin{aligned}
(\Ef(W),\Pf(W),\Lf(W),\Gf(W))=(\triangle\EE_0,\triangle\PP_0,\triangle\LL_0,\triangle\GG_0)
\end{aligned}\label{EQtoporderproblemtobesolved}
\end{align}
and prove the estimates \eqref{EQboundsWproblem}. 

Recall from Section \ref{SECAnsatzEstimates} the precise ansatz for $W$. The $2$-tensorfield $W(v,\th^1,\th^2)$ along $\HH$ is defined by
\begin{subequations}
\begin{align}
\begin{aligned}
W := -\varep^{3/2} \cos\lrpar{\frac{v}{\varep}} \mfW_0 - \varep^{5/4} \lrpar{\cos\lrpar{\frac{v}{\sqrt{\varep}}} \mfW_1 - \sin\lrpar{\frac{v}{\sqrt{\varep}}} \mfW_2},
\end{aligned}\label{EQWansatzrecallgluing}
\end{align}
with
\begin{align}
\begin{aligned}
\mfW_0 := f_{30} \psi^{30} +  f_{22} \psi^{22} + \sum\limits_{m=-1}^1 f_{2m} \psi^{2m}, \,\,
\mfW_1 := \tilde{f}_{30} \psi^{30}, \,\,
\mfW_2 := \sum\limits_{m=-1}^1\tilde{f}_{2m}  \phi^{2m},
\end{aligned}\label{EQWansatzDetailrecallgluing}
\end{align}
\end{subequations}
where $\psi^{lm}$ and $\phi^{lm}$ are the ($v$-independent) electric and magnetic tensor spherical harmonics, respectively (see Appendix \ref{APPconstructionW}), and the $v$-dependent scalar functions $f_{30}, f_{22}, f_{2m},\tilde{f}_{30}, \tilde{f}_{2m}$, for $m=-1,0,1$, are defined by
\begin{subequations}
\begin{align}
f_{30} :=& \half c_{30}, \,\, f_{2m} := \frac{1}{\lrpar{Y^{30}\cdot Y^{2m}}^{(1m)}} \lrpar{c_{2m} \xi_1 + c'_{2m} \xi_2}, \label{EQchoiceofCoeffFunction1recallgluing} \\
f_{22} :=& c_{22} \xi_4,\label{EQchoiceofCoeffFunction2recallgluing} \\
\tilde{f}_{30} :=& \tilde{c}_{30}, \,\, \tilde{f}_{2m} := \frac{1}{\lrpar{Y^{30}\cdot Y^{2m}}^{(1m)}} \tilde{c}_{2m} \xi_3,\label{EQchoiceofCoeffFunction3recallgluing}
\end{align}
\end{subequations}
where
\begin{itemize}
\item $\lrpar{Y^{30}\cdot Y^{2m}}^{(1m)}$, $m=-1,0,1$, are (non-vanishing) Fourier coefficients, see Appendix \ref{APPconstructionW},
\item $\xi_1,\xi_2,\xi_3,\xi_4$ are fixed, smooth (low-frequency) universal $v$-dependent scalar functions satisfying $6$ universal, linearly independent integral conditions (to be stated in \eqref{EQintegralXcond1}, \eqref{EQintegralXcond2} and \eqref{EQintegralXcond3} below),
\item $c_{30},c_{2m},c'_{2m},c_{22},\tilde{c}_{30},\tilde{c}_{2m}$ are freely choosable constants.
\end{itemize}
\ni In the following we solve \eqref{EQtoporderproblemtobesolved} by suitably choosing the constants $c_{30},c_{2m},c'_{2m},c_{22},\tilde{c}_{30},\tilde{c}_{2m}$. This can be done in an explicit manner, which allows for a straight-forward control of $\Vert W \Vert$ (the norm is set up in Definition \eqref{DEFdefinitionWnorm}) to prove \eqref{EQboundsWproblem}. \\

\ni \textbf{Solving prescribed $\Pf(W)$ and $\Gf(W)$.} In the following we solve
\begin{align}
\begin{aligned}
\mathbf{P}(W)=\triangle \PP_0, \,\, \mathbf{G}(W)=\triangle \GG_0.
\end{aligned}\label{EQPWGWgluingtopordercond}
\end{align}
by suitably choosing the constants $c_{30}$, $c_{2m}$ and $c'_{2m}$, for $m=-1,0,1$. Recall from \eqref{EQDEFEWPW} and \eqref{EQdefGW} the definitions
\begin{align}
\begin{aligned}
\mathbf{P}(W) := -\varep \int\limits_1^2 \frac{1}{16r^2} \varphi^2 \lrpar{\vert \mfW_0 \vert_\gac^2}^{(1m)},\,\, \mathbf{G}(W) := \varep \int\limits_1^2 \frac{\sqrt{2}}{16r} \varphi^2 \lrpar{\vert \mfW_0 \vert_\gac^2}^{(1m)}.
\end{aligned}\label{EQrecallDEFsPandGofW}
\end{align}
By the concrete ansatz \eqref{EQWansatzDetailrecallgluing} for $\mfW_0$ it is possible to explicitly calculate (see Appendix \ref{SECFouriervertWvert2}) that for $m=-1,0,1$,
\begin{align}
\begin{aligned}
\lrpar{\vert \mfW_0 \vert_\gac^2}^{(1m)} = 2\frac{\sqrt{5}}{3} \underbrace{\lrpar{Y^{30}\cdot Y^{2m}}^{(1m)}}_{\neq 0} f_{30}f_{2m}.
\end{aligned}\label{EQFourierIdentityL1}
\end{align}
Using \eqref{EQrecallDEFsPandGofW} and \eqref{EQFourierIdentityL1}, \eqref{EQPWGWgluingtopordercond} becomes the integral conditions
\begin{align}
\begin{aligned}
-2\frac{\sqrt{5}}{3}\lrpar{Y^{30}\cdot Y^{2m}}^{(1m)} \int\limits_1^2 \frac{1}{16r^2} \varphi^2 f_{30}f_{2m} =& \varep^{-1} \triangle \PP^m_0, \\ 
2\frac{\sqrt{5}}{3}\lrpar{Y^{30}\cdot Y^{2m}}^{(1m)} \int\limits_1^2 \frac{\sqrt{2}}{16r} \varphi^2 f_{30}f_{2m} =& \varep^{-1} \triangle \GG^m_0.
\end{aligned}\label{EQintegralcondEPgluingtoporder}
\end{align}
Recalling that by \eqref{EQchoiceofCoeffFunction1recallgluing}, for $m=-1,0,1$,
\begin{align}
\begin{aligned}
f_{30} := \half c_{30}, \,\, f_{2m} := \frac{1}{\lrpar{Y^{30}\cdot Y^{2m}}^{(1m)}} \lrpar{c_{2m} \xi_1 + c'_{2m} \xi_2},
\end{aligned}\label{EQexplicitf2mformPG}
\end{align}
and stipulating that $\xi_1(v)$ and $\xi_2(v)$ satisfy the universal, linearly independent integral conditions 
\begin{align}
\begin{aligned}
\frac{\sqrt{5}}{3}\int\limits_1^2 \frac{1}{16r^2} \varphi^2 \xi_1 = 1,\,\,  \int\limits_1^2 \frac{\sqrt{2}}{16r} \varphi^2 \xi_1 =0, \,\,\int\limits_1^2 \frac{1}{16r^2} \varphi^2 \xi_2 = 0,\,\, \frac{\sqrt{5}}{3} \int\limits_1^2 \frac{\sqrt{2}}{16r} \varphi^2 \xi_2 =1,
\end{aligned}\label{EQintegralXcond1}
\end{align}
to solve \eqref{EQintegralcondEPgluingtoporder} it is sufficient to pick the constants $c_{30}$, $c_{2m}$ and $c'_{2m}$, for $m=-1,0,1$, as follows,
\begin{align}
\begin{aligned}
c_{30} :=& \sqrt{d_0\varep^{-1}\triangle\EE_0+ \varep^{-1}\vert\triangle\PP_0\vert + \varep^{-1}\vert\triangle\GG_0\vert}, \\
c_{2m} :=& \frac{-\varep^{-1}\triangle \PP^m_0}{\sqrt{d_0\varep^{-1}\triangle\EE_0+\varep^{-1}\vert\triangle\PP_0\vert + \varep^{-1}\vert\triangle\GG_0\vert}}, \\
c'_{2m} :=& \frac{\varep^{-1}\triangle \GG^m_0}{\sqrt{d_0\varep^{-1}\triangle\EE_0 +\varep^{-1}\vert\triangle\PP_0\vert + \varep^{-1}\vert\triangle\GG_0\vert}},
\end{aligned}\label{EQchoiceConstantsPG}
\end{align}
where $d_0>0$ is a universal constant to be determined below in the solving of $\Ef(W)=\triangle \EE_0$.

From the explicit expressions in \eqref{EQchoiceConstantsPG} we directly get the estimate
\begin{align}
\begin{aligned}
\vert c_{30} \vert+ \sum\limits_{m=-1}^1 ( \vert c_{2m} \vert + \vert c'_{2m} \vert ) \les \sqrt{d_0\varep^{-1} \triangle\EE_0}+ \sqrt{\varep^{-1} \vert \triangle\PP_0\vert} + \sqrt{\varep^{-1} \vert \triangle\GG_0\vert}.
\end{aligned}\label{EQestimateCoeffconstantsPG}
\end{align}
 
\ni \textbf{Solving prescribed $\mathbf{E}(W)$.}
\ni In the following we solve
\begin{align}
\begin{aligned}
\mathbf{E}(W)=\triangle \EE_0,
\end{aligned}\label{EQEWgluingtopordercond}
\end{align}
by suitably choosing the constant $c_{22}$. Recall from \eqref{EQDEFEWPW} the definition
\begin{align}
\begin{aligned}
\mathbf{E}(W) := \varep \int\limits_1^2 \frac{1}{16r^2}\varphi^2 \lrpar{\vert \mfW_0 \vert^2_{\gac}}^{(0)}.
\end{aligned}\label{EQdefErecallMatching}
\end{align}
\ni By the concrete ansatz \eqref{EQWansatzDetailrecallgluing} for $\mfW_0$ it is possible to explicitly calculate (see Appendix \ref{SECFouriervertWvert2}) that
\begin{align}
\begin{aligned}
 \lrpar{\vert \mfW_0 \vert^2_{\gac}}^{(0)} = \frac{1}{\sqrt{4\pi}}\lrpar{ f_{30}^2 + \sum\limits_{m=-1}^1 f_{2m}^2 + f_{22}^2}.
\end{aligned}\label{EQexprEgluingfourier}
\end{align}
Using \eqref{EQdefErecallMatching} and \eqref{EQexprEgluingfourier}, \eqref{EQEWgluingtopordercond} turns into the integral condition
\begin{align}
\begin{aligned}
\varep^{-1} \triangle\Ef_0 =& \int\limits_1^2 \frac{1}{\sqrt{4\pi}16r^2}\varphi^2 \lrpar{ f_{30}^2 + \sum\limits_{m=-1}^1 f_{2m}^2 + f_{22}^2}\\
=& \int\limits_1^2 \frac{1}{\sqrt{4\pi}16r^2}\varphi^2 \lrpar{ f_{30}^2 + \sum\limits_{m=-1}^1 f_{2m}^2}+ \int\limits_1^2 \frac{1}{\sqrt{4\pi}16r^2}\varphi^2 c_{22}^2 (\xi_4)^2,
\end{aligned}\label{EQenergyadjustment1}
\end{align}
where used \eqref{EQchoiceofCoeffFunction2recallgluing}. We note that the first term on the right-hand side is determined from solving \eqref{EQPWGWgluingtopordercond} for $\Pf(W)$ and $\Gf(W)$ above, see \eqref{EQexplicitf2mformPG} and \eqref{EQchoiceConstantsPG}. We moreover remark that the second term is \emph{positive-definite}, which puts constraints on the solvability of \eqref{EQenergyadjustment1}. 

Using \eqref{EQestimateCoeffconstantsPG}, the first term on the right-hand side of \eqref{EQenergyadjustment1} is bounded for $d_0>0$ sufficiently small (depending only on universal quantities) by 
\begin{align}
\begin{aligned}
\int\limits_1^2 \frac{1}{\sqrt{4\pi}16r^2}\varphi^2 \lrpar{ f_{30}^2 + \sum\limits_{m=-1}^1 f_{2m}^2} \les& d_0 \varep^{-1} \triangle \EE_0 + \varep^{-1} \vert \triangle \PP_0 \vert + \varep^{-1} \vert \triangle \GG_0 \vert \\
\leq& \half \varep^{-1} \triangle \EE_0 +\tilde{C} \lrpar{\varep^{-1} \vert \triangle \PP_0 \vert + \varep^{-1} \vert \triangle \GG_0 \vert},
\end{aligned}\label{EQenergyadjustment2}
\end{align}
where $\tilde{C}>0$ is a universal constant. 

By assumption \eqref{EQrecallassumptionsEPLGtopordergluing} it holds that for a real number $\CC_3>0$,
\begin{align*}
\begin{aligned}
\varep^{-1} \triangle \EE_0 > C_3 \lrpar{\varep^{-1} \vert \triangle \PP_0 \vert + \varep^{-1} \vert \triangle \GG_0 \vert}.
\end{aligned}
\end{align*}
Plugging this into the right-hand side of \eqref{EQenergyadjustment2}, we get that for $\CC_3>0$ sufficiently large (it suffices to have $\CC_3>\tilde{C}\frac{10}{4}$),
\begin{align*}
\begin{aligned}
\int\limits_1^2 \frac{1}{\sqrt{4\pi}16r^2}\varphi^2 \lrpar{ f_{30}^2 + \sum\limits_{m=-1}^1 f_{2m}^2} <& \frac{9}{10}\varep^{-1}\triangle\EE_0,
\end{aligned}
\end{align*}
so that
\begin{align}
\begin{aligned}
\varep^{-1} \triangle \EE_0 - \int\limits_1^2 \frac{1}{\sqrt{4\pi}16r^2}\varphi^2 \lrpar{ f_{30}^2 + \sum\limits_{m=-1}^1 f_{2m}^2} >\frac{1}{10} \varep^{-1} \triangle \EE_0 >0.
\end{aligned}\label{EQlowerboundc22}
\end{align}
Having the control \eqref{EQlowerboundc22}, we are now in position to solve \eqref{EQenergyadjustment1} by suitably choosing the constant $c_{22}$. It is straight-forward to check that a solution to \eqref{EQenergyadjustment1} is given by picking
\begin{align}
\begin{aligned}
c_{22}:= \sqrt{\varep^{-1} \triangle \EE_0 - \int\limits_1^2 \frac{1}{\sqrt{4\pi}16r^2}\varphi^2 \lrpar{ f_{30}^2 + \sum\limits_{m=-1}^1 f_{2m}^2}},
\end{aligned}\label{EQcoefficientdefEadjustment}
\end{align}
where the square-root is well-defined due to \eqref{EQlowerboundc22}, and stipulating that the universal scalar function $\xi_4$ satisfies the integral condition
 \begin{align}
\begin{aligned}
\int\limits_1^2 \frac{1}{\sqrt{4\pi}16r^2}\varphi^2 (\xi_4)^2 =1.
\end{aligned}\label{EQintegralXcond2}
\end{align}
By \eqref{EQexplicitf2mformPG}, \eqref{EQestimateCoeffconstantsPG}, \eqref{EQlowerboundc22}, \eqref{EQcoefficientdefEadjustment}, we can bound
\begin{align}
\begin{aligned}
\sqrt{\frac{1}{10}\varep^{-1}\triangle\EE_0}<\vert c_{22} \vert \les \sqrt{\varep^{-1}\triangle\EE_0} +  \sqrt{\varep^{-1}\vert\triangle\PP_0\vert} +  \sqrt{\varep^{-1}\vert \triangle\GG_0\vert}
\end{aligned}\label{EQestimatec22choice}
\end{align}
We note that by the above choices \eqref{EQchoiceConstantsPG} and \eqref{EQestimatec22choice}, the full tensor $\mfW_0$ is determined. In the following we determine the tensors $\mfW_1$ and $\mfW_2$.\\

\ni \textbf{Solving prescribed $\mathbf{L}(W)$.}
\ni In the following we solve
\begin{align}
\begin{aligned}
\mathbf{L}(W)=\triangle \LL_0,
\end{aligned}\label{EQLWtopordergluingprob}
\end{align}
by suitably choosing the constants $\tilde{c}_{30}$ and $\tilde{c}_{2m}$. Recall from \eqref{EQdeftriangleLW} the definition
\begin{align}
\begin{aligned}
\mathbf{L}^m(W):=& \int\limits_1^2 \frac{\varep^2\varphi^2}{r^2}\lrpar{-\half  \lrpar{\Nd_A\lrpar{\mfW_1^{AB} (\mfW_2)_{B\cdot} }}^{(1m)}_H + \frac{1}{4}\lrpar{ \Nd_\cdot \lrpar{\mfW_1 \cdot\mfW_2}}^{(1m)}_H} + \OO_{x,\mfW_0}^{\Lf^m},
\end{aligned}\label{EQdefLmWrecallgluing}
\end{align}
where we note that $\OO_{x,\mfW_0}^{\Lf^m}$ depends on $x$ and $\mfW_0$ and is thus determined from \eqref{EQPWGWgluingtopordercond} and \eqref{EQEWgluingtopordercond} above (see \eqref{EQWansatzDetailrecallgluing}, \eqref{EQchoiceofCoeffFunction1recallgluing} \eqref{EQchoiceofCoeffFunction2recallgluing}, \eqref{EQchoiceConstantsPG}, \eqref{EQcoefficientdefEadjustment}). However, $\OO_{x,\mfW_0}^{\Lf^m}$ does \emph{not} depend on $\mfW_1$ and $\mfW_2$; in other words, $\OO_{x,\mfW_0}^{\Lf^m}$ does not involve the constants $\tilde{c}_{30}$ and $\tilde{c}_{2m}$.

By the concrete ansatz \eqref{EQWansatzDetailrecallgluing} for $\mfW_1$ and $\mfW_2$ it is possible to explicitly calculate (see Appendix \ref{SECFouriervertWvert2}) that for $m=-1,0,1$,
\begin{align}
\begin{aligned}
-\half  \lrpar{\Nd_A\lrpar{\mfW_1^{AB} (\mfW_2)_{B\cdot} }}^{(1m)}_H + \frac{1}{4}\lrpar{ \Nd_\cdot \lrpar{\mfW_1 \cdot\mfW_2}}^{(1m)}_H = -\frac{\sqrt{5}}{3\sqrt{2}} \underbrace{\lrpar{Y^{30}Y^{2{m}}}^{(1 m)}}_{\neq 0} \tilde{f}_{30}\tilde{f}_{2{m}}.
\end{aligned}\label{EQFourierIdentityL}
\end{align}
By \eqref{EQdefLmWrecallgluing} and \eqref{EQFourierIdentityL}, \eqref{EQLWtopordergluingprob} turns into the following integral conditions for $m=-1,0,1$,
\begin{align}
\begin{aligned}
-\frac{\sqrt{5}}{3\sqrt{2}}\lrpar{Y^{30}Y^{2{m}}}^{(1 m)} \int\limits_1^2 \frac{\varphi^2}{r^2} \tilde{f}_{30}\tilde{f}_{2{m}} = \varep^{-2} \lrpar{\triangle\LL^{m}_0 - \OO_{\Lf^m}}.
\end{aligned}\label{EQintegralcondLW1}
\end{align}
Recalling from \eqref{EQchoiceofCoeffFunction3recallgluing} that 
\begin{align}
\begin{aligned}
\tilde{f}_{30} := \tilde{c}_{30}, \,\, \tilde{f}_{2m} := \frac{1}{\lrpar{Y^{30}\cdot Y^{2m}}^{(1m)}} \tilde{c}_{2m} \xi_3,
\end{aligned}\label{EQexplicitf2mformL}
\end{align}
and stipulating that the universal scalar function $\xi_3(v)$ satisfies the integral condition 
\begin{align}
\begin{aligned}
-\frac{\sqrt{5}}{3\sqrt{2}} \int\limits_1^2 \frac{\varphi^2}{r^2} \xi_3 = 1,
\end{aligned}\label{EQintegralXcond3}
\end{align}
to solve \eqref{EQintegralcondLW1} it suffices to choose the constants $\tilde{c}_{30}$ and $\tilde{c}_{2m}$ as
\begin{align}
\begin{aligned}
\tilde{c}_{30} :=& \sqrt{\varep^{-1}\triangle \EE_0+ \varep^{-2}\vert\triangle\LL_0\vert + \varep^{-2}\vert\OO_{x,\mfW_0}^{\Lf^m}\vert}, \\
\tilde{c}_{2m}:=& \frac{\varep^{-2} (\triangle\LL^m_0- \OO_{x,\mfW_0}^{\Lf^m})}{\sqrt{\varep^{-1}\triangle \EE_0+ \varep^{-2}\vert\triangle\LL_0\vert + \varep^{-2}\vert\OO_{x,\mfW_0}^{\Lf^m}\vert}}.
\end{aligned}\label{EQchoiceConstantsL}
\end{align}
We note that this fully determines the tensors $\mfW_1$ and $\mfW_2$.

From the explicit expressions in \eqref{EQchoiceConstantsL} we directly get
\begin{align}
\begin{aligned}
\vert \tilde{c}_{30} \vert + \sum\limits_{m=-1}^1 \vert \tilde{c}_{2m}\vert  \les& \sqrt{\varep^{-1} \triangle\EE_0}+ \sqrt{\varep^{-2} \vert \triangle\LL_0\vert} + \sum\limits_{m=-1}^1\sqrt{\varep^{-2} \vert \OO_{x,\mfW_0}^{\Lf^m}\vert}.
\end{aligned}\label{EQestimateCoeffconstantsLchoice}
\end{align}

\ni To summarize the above, in \eqref{EQchoiceConstantsPG}, \eqref{EQcoefficientdefEadjustment} and \eqref{EQchoiceConstantsL} we determined the constants 
\begin{align*}
\begin{aligned}
c_{30},c_{2m},c'_{2m},c_{22},\tilde{c}_{30},\tilde{c}_{2m} \, \text{ for } m=-1,0,1,
\end{aligned}
\end{align*}
such that the corresponding tensor $W$, given through the ansatz \eqref{EQWansatzrecallgluing}-\eqref{EQchoiceofCoeffFunction3recallgluing}, satisfies
\begin{align*}
\begin{aligned}
(\mathbf{E},\mathbf{P},\mathbf{L},\mathbf{G})(W) = (\triangle \EE_0,\triangle \PP_0,\triangle \LL_0,\triangle\GG_0).
\end{aligned}
\end{align*}
It remains to prove the estimates \eqref{EQboundsWproblem} for the constructed tensor $W$.\\

\ni \textbf{Estimates for $W$.} 
\ni By \eqref{EQestimateCoeffconstantsPG}, \eqref{EQestimatec22choice} and \eqref{EQestimateCoeffconstantsLchoice}, the solution $W$ constructed above to
\begin{align*}
\begin{aligned}
\lrpar{\mathbf{E},\mathbf{P},\mathbf{L},\mathbf{G}}(W) =  \lrpar{\triangle\EE_0,\triangle\PP_0,\triangle\LL_0,\triangle\LL_0}. 
\end{aligned}
\end{align*}
is controlled by
\begin{align}
\begin{aligned}
\Vert \mfW_0 \Vert \les& \sqrt{\varep^{-1} \mathbf{E}(W)} + \sqrt{\varep^{-1} \vert\mathbf{P}(W)\vert} + \sqrt{\varep^{-1} \vert\mathbf{G}(W)\vert}, \\
\Vert \mfW_1 \Vert + \Vert \mfW_2 \Vert \les& \sqrt{\varep^{-1} \mathbf{E}(W)}+ \sqrt{\varep^{-2} \vert \mathbf{L}(W)\vert} + \sum\limits_{m=-1}^1\sqrt{\varep^{-2} \vert \OO_{x,\mfW_0}^{\Lf^m}\vert}.
\end{aligned}\label{EQWbounduniform}
\end{align}

\ni Furthermore, by inspecting the term $\OO^{\Lf^m}_{x,\mfW_0}$, see \eqref{EQpreciseerrorTT2} and \eqref{EQpreciseerrorL}, that is,
\begin{align*}
\begin{aligned}
\OO^{\Lf^m}_{x,\mfW_0} = \int\limits_1^2 \OO^{\TT_1}_{x,\mfW_0}dv+\OO^{\TT_{2.2}}_{x,\mfW_0} +\OO^{\TT_{2.3}}_{x,\mfW_0}, 
\end{aligned}
\end{align*}
with $\OO^{\TT_1}_{x,\mfW_0}, \OO^{\TT_{2.2}}_{x,\mfW_0}, \OO^{\TT_{2.3}}_{x,\mfW_0}$ given in \eqref{EQstructureOOTT1}, \eqref{EQTT22estimate}, \eqref{EQdeferrortermtt23}, and using that each term depends on $\mfW_0$, it is straight-forward to show with \eqref{EQWbounduniform} that
\begin{align}
\begin{aligned}
\sum\limits_{m=-1}^1\sqrt{\varep^{-2} \vert \OO_{x,\mfW_0}^{\Lf^m}\vert} \les \Vert \mfW_0 \Vert.
\end{aligned}\label{EQnonlinearityEstimateLxerrorterm}
\end{align}
Combining \eqref{EQnonlinearityEstimateLxerrorterm} with \eqref{EQWbounduniform} yields
\begin{align}
\begin{aligned}
\Vert \mfW_1 \Vert + \Vert \mfW_2 \Vert \les& \sqrt{\varep^{-1} \mathbf{E}(W)}+ \sqrt{\varep^{-1} \mathbf{P}(W)}+ \sqrt{\varep^{-1} \vert \mathbf{G}(W)\vert} + \sqrt{\varep^{-2} \vert \mathbf{L}(W)\vert}.
\end{aligned}\label{EQestimateCoeffconstantsLchoice2}
\end{align}
This finishes the proof of \eqref{EQbijectionstatementWproblem} and \eqref{EQboundsWproblem}.
\subsection{Proof of existence} \label{SECdegreeargument1} In this section we conclude the proof of part (1) of Theorem \ref{THMmain0} by a classical degree argument. First, we recall the following.

\begin{itemize}
\item The assumptions \eqref{EQsmallnessMain021}, \eqref{EQsmallnessMain021LSMALL} and \eqref{EQsmallnessMain02} on $(\triangle \mathbf{E}_0,\triangle \mathbf{P}_0,\triangle \mathbf{L}_0,\triangle \mathbf{G}_0)$, are that for real numbers $C_1,C_2,C_3>0$, 
\begin{subequations}
\begin{align}
\triangle \mathbf{E}_0 =& C_1\, \varep, \label{EQsmallnessMain021inProof} \\ 
\vert \triangle \mathbf{L}_0 \vert \leq& C_2 \varep^2, \label{EQsmallnessMain021LSMALLinProof} \\
\varep^{-1} (\triangle \mathbf{E}_0) >& C_3 \lrpar{\varep^{-1} \vert \triangle \mathbf{P}_0\vert + \varep^{-1} \vert \triangle \mathbf{G}_0\vert}. \label{EQsmallnessMain02inProof}
\end{align}
\end{subequations}

\item By \eqref{EQsmallnessMain1001} and the estimates of Sections \ref{SECgluingE}, \ref{SECgluingG} and \ref{SECgluingL}, we have that for $\varep>0$ sufficiently small,
\begin{align}
\begin{aligned}
\triangle\mathbf{E}(x_{+W}) =& \mathbf{E}(W) + \OO(\varep^{5/4}), &
\triangle\mathbf{P}(x_{+W}) =& \mathbf{P}(W) + \OO(\varep^{5/4}), \\
\triangle\mathbf{L}(x_{+W}) =& \mathbf{L}(W) + \OO_x(\varep^2)+\OO(\varep^{9/4}), &
\triangle\mathbf{G}(x_{+W}) =& \mathbf{G}(W) + \OO(\varep^{5/4}),
\end{aligned}\label{EQtransportEstimatesForDegree}
\end{align}
where the term $\OO_x(\varep^2)$ is bounded by
\begin{align}
\begin{aligned}
\OO_x(\varep^2) \leq C_0 \varep^2,
\end{aligned}\label{EQuniversalErrorBound}
\end{align}
where $C_0>0$ is a universal constant.
\end{itemize}
For tensors $W$ of the form \eqref{EQWansatz}-\eqref{EQchoiceofCoeffFunction3}, consider the error map
\begin{align}
\begin{aligned}
\FF(W):=& \lrpar{ \varep^{-1} \triangle\mathbf{E},\varep^{-1}\triangle\mathbf{P},\varep^{-2}\triangle\mathbf{L},\varep^{-1}\triangle\mathbf{G}}(x_{+W}) \\
&- \lrpar{ \varep^{-1} \triangle\mathbf{E}_0,\varep^{-1}\triangle\mathbf{P}_0,\varep^{-2}\triangle\mathbf{L}_0,\varep^{-1}\triangle\mathbf{G}_0}. 
\end{aligned}\label{EQdefF}
\end{align}
In the following we prove that there exists a $W$ such that 
\begin{align}
\FF(W) =0. \label{EQexistenceWzeroerror}
\end{align}
Plugging \eqref{EQtransportEstimatesForDegree} into \eqref{EQdefF} we get that
\begin{align*}
\begin{aligned}
\FF(W) =& \varep^{-1} \lrpar{\mathbf{E}(W)-\triangle\mathbf{E}_0 ,\mathbf{P}(W)-\triangle\mathbf{P}_0,\varep^{-1}\lrpar{\mathbf{L}(W)-\triangle\mathbf{L}_0+ \OO_x(\varep^2)},\mathbf{G}(W)-\triangle\mathbf{G}_0} \\
&+ \lrpar{ \OO(\varep^{1/4}), \OO(\varep^{1/4}), \OO(\varep^{1/4}), \OO(\varep^{1/4})}.
\end{aligned}
\end{align*}
For $t \in [0,1]$ define the homotopy 
\begin{align*}
\begin{aligned}
\FF(W,t):=& \varep^{-1}\lrpar{\mathbf{E}(W)-\triangle\mathbf{E}_0,\mathbf{P}(W)-\triangle\mathbf{P}_0,\varep^{-1}\lrpar{\mathbf{L}(W)-\triangle\mathbf{L}_0+ \OO_x(\varep^2)},\mathbf{G}(W)-\triangle\mathbf{G}_0} \\
&+ t\cdot \lrpar{ \OO(\varep^{1/4}), \OO(\varep^{1/4}), \OO(\varep^{1/4}), \OO(\varep^{1/4})},
\end{aligned}
\end{align*}
such that $\FF(W,1)=\FF(W)$. We make the following two claims.\\

\noindent \textbf{Claim $1$.} \emph{For $\varep>0$ sufficiently small and $C_3>0$ sufficiently large, the map $$W\mapsto(\Ef(W),\Pf(W),\Lf(W),\Gf(W))$$ is a bijection onto the ellipsoid $\EE$ centered at $(\triangle \mathbf{E}_0, \triangle \mathbf{P}_0,\triangle \mathbf{L}_0 - \OO_x(\varep^2),\triangle \mathbf{G}_0 )$ defined to be the set of vectors $(\triangle \mathbf{E},\triangle \mathbf{P},\triangle \mathbf{L},\triangle \mathbf{G}) \in \RRR^{10}$ such that}

\begin{align}
\begin{aligned}
&\lrpar{\varep^{-1}( \triangle\mathbf{E}- \triangle \mathbf{E}_0)}^2 + \lrpar{\varep^{-1}\vert \triangle\mathbf{P}- \triangle \mathbf{P}_0\vert}^2 \\
&+ \lrpar{\frac{C_1}{C_2+C_0} \varep^{-2} \vert \triangle \mathbf{L}- (\triangle \mathbf{L}_0-\OO_x(\varep^2)) \vert}^2 + \lrpar{\varep^{-1} \vert \triangle\mathbf{G}- \triangle \mathbf{G}_0\vert}^2 \leq \lrpar{\frac{1}{C_3}\varep^{-1}\triangle \mathbf{E}_0}^2.
\end{aligned}\label{EQdefEllipsoid}
\end{align}

\ni \textbf{Claim $2$.} \emph{Let $\EE' \subset \RRR^{10}$ denote the pre-image of $\EE$ under $\FF(W,0)$. Then for each $t\in[0,1]$ there is no $W\in \pr \EE'$ such that $\FF(W,t)=0$}. \\

\ni From the above two claims, a classical degree argument (see, for example, Lemma 4.1 in \cite{ACR3}) proves the existence of $W\in \EE'$ satisfying \eqref{EQexistenceWzeroerror}, and thus finishes the proof of part (1) of Theorem \ref{THMmain0}. It remains to prove the two claims. \\

\ni\emph{Proof of Claim $1$.} From Section \ref{SEC1EPLGgluing} we have that for $\varep>0$ sufficiently small and $C_3>0$ sufficiently large, the map 
\begin{align}
\begin{aligned}
W \mapsto \lrpar{\mathbf{E}(W),\mathbf{P}(W) ,\mathbf{L}(W),\mathbf{G}(W)}
\end{aligned}\label{EQWmapproofdegree}
\end{align}
is a bijection onto the set $\CC$ of $(\triangle \mathbf{E},\triangle \mathbf{P},\triangle \mathbf{L},\triangle \mathbf{G})\in \RRR^{10}$ satisfying
\begin{align*}
\begin{aligned}
\frac{1}{2}C_1 \varep \leq& \triangle \mathbf{E}, \\
\vert \triangle\mathbf{L} \vert \leq& 2(C_2+C_0) \varep^2, \\
\varep^{-1} (\triangle \mathbf{E}) >& \frac{1}{8}C_3 \lrpar{\varep^{-1} \vert \triangle \mathbf{P}\vert + \varep^{-1} \vert \triangle \mathbf{G}\vert}.
\end{aligned}
\end{align*}
Using \eqref{EQsmallnessMain021inProof}, \eqref{EQsmallnessMain021LSMALLinProof}, \eqref{EQsmallnessMain02inProof}, is straight-forward to verify that the ellipsoid $\EE$ (defined in \eqref{EQdefEllipsoid}) is contained in $\CC$ for $\varep>0$ sufficiently small and $C_3>0$ sufficiently large. This finishes the proof of Claim $1$.\\

\ni \emph{Proof of Claim $2$.} Assume for contradiction that $W\in \pr \EE'$ satisfies $\FF(W,t)=0$. By definition of $\FF(W,t)$, $\FF(W,t)=0$ implies that
\begin{align*}
\begin{aligned}
&\lrpar{\varep^{-1}( \triangle\mathbf{E}- \triangle \mathbf{E}_0)}^2 + \lrpar{\varep^{-1}( \triangle\mathbf{P}- \triangle \mathbf{P}_0)}^2 \\
&+ \lrpar{\frac{C_1}{C_2+C_0}\varep^{-2}\vert\triangle \mathbf{L}- (\triangle \mathbf{L}_0 -\OO_x(\varep^2)) \vert}^2 + \lrpar{\varep^{-1}( \triangle\mathbf{G}- \triangle \mathbf{G}_0)}^2 \leq t \cdot \OO(\varep^{1/2}).
\end{aligned}
\end{align*}
For $\varep>0$ sufficiently small, this implies that $W$ lies in the \emph{interior} of $\EE'$ (see the definition of $\EE$ in \eqref{EQdefEllipsoid}), a contradiction to the assumption $W\in\partial \EE'$. This finishes the proof of Claim $2$.

\subsection{Estimates} \label{SECdiffEstimates1W} In this section we prove the estimate \eqref{EQdiffEstimate1mainthm2statement} of Theorem \ref{THMmain0}, that is, we derive the following bound for the constructed $\Ef,\Pf,\Lf,\Gf$ null gluing solution,
\begin{align}
\begin{aligned}
&\Vert x_{+W} -x \Vert_{\XX^{\mathrm{h.f.}}(\HH)} + \Vert x_{+W}- x\Vert_{\XX(S_2)} 
\\
\les&\varep\lrpar{\sqrt{\varep^{-1}\triangle\Ef(x_{+W})} + \sqrt{\varep^{-1}\vert\triangle\Pf(x_{+W})\vert} + \sqrt{\varep^{-1}\vert\triangle\Gf(x_{+W})\vert}}\\
&+ \varep^{5/4} \sqrt{\varep^{-2}\vert\triangle\Lf(x_{+W})\vert }+ \varep^{1/4} \Vert x - \mathfrak{m}\Vert_{\XX(\HH)}.
\end{aligned}\label{EQcontrolPartTHMmain0}
\end{align}
The proof of \eqref{EQcontrolPartTHMmain0} follows by combining 
\begin{itemize}
\item the bounds for the high-frequency solution $x_{+W}$ proved in Section \ref{SECconstructionSolution},
\item the analysis of the null transport of the charges $\Ef,\Pf,\Lf,\Gf$ studied in Sections \ref{SECgluingE}, \ref{SECgluingG}, \ref{SECgluingL},
\item the estimates for the control of $W$ proved in Section \ref{SEC1EPLGgluing}.
\end{itemize}

\ni As first step in the proof of \eqref{EQcontrolPartTHMmain0}, we claim that combining the analysis of Sections \ref{SEC1EPLGgluing} and \ref{SECgluingE}, \ref{SECgluingG}, \ref{SECgluingL} leads to the following, 
\begin{align}
\begin{aligned}
\Vert \mfW_0 \Vert \les& \sqrt{\varep^{-1}\triangle \mathbf{E}(x_{+W})} +\sqrt{\varep^{-1}\vert\triangle \mathbf{P}(x_{+W}) \vert}+ \sqrt{\varep^{-1}\vert\triangle \mathbf{G}(x_{+W}) \vert}\\
&+\varep^{1/2} \sqrt{\varep^{-2}\vert\triangle \mathbf{L}(x_{+W}) \vert} + \varep^{-3/4} \lrpar{ \Vert x_{+W}-\mathfrak{m} \Vert_{\XX^{\mathrm{h.f.}}(\HH)}+ \Vert x-\mathfrak{m} \Vert_{\XX(\HH)}}, \\
\Vert \mfW_1 \Vert+ \Vert \mfW_2 \Vert\les& \sqrt{\varep^{-1}\triangle \mathbf{E}(x_{+W})} +\sqrt{\varep^{-1}\vert\triangle \mathbf{P}(x_{+W}) \vert}+ \sqrt{\varep^{-1}\vert\triangle \mathbf{G}(x_{+W}) \vert}\\
&+\sqrt{\varep^{-2}\vert\triangle \mathbf{L}(x_{+W}) \vert}  + \varep^{-1} \lrpar{ \Vert x_{+W}-\mathfrak{m} \Vert_{\XX^{\mathrm{h.f.}}(\HH)}+ \Vert x-\mathfrak{m} \Vert_{\XX(\HH)}}.
\end{aligned}\label{EQw0estimate}
\end{align}
Indeed, on the one hand, from \eqref{EQWbounduniform} and \eqref{EQestimateCoeffconstantsLchoice2} in Section \ref{SEC1EPLGgluing} we have that
\begin{align}
\begin{aligned}
\Vert \mfW_0 \Vert \les& \sqrt{\varep^{-1} \mathbf{E}(W)} + \sqrt{\varep^{-1} \vert\mathbf{P}(W)\vert} + \sqrt{\varep^{-1} \vert\mathbf{G}(W)\vert} \\
\Vert \mfW_1 \Vert + \Vert \mfW_2 \Vert \les& \sqrt{\varep^{-1} \mathbf{E}(W)}+ \sqrt{\varep^{-1} \vert\mathbf{P}(W)\vert}+ \sqrt{\varep^{-1} \vert \mathbf{G}(W)\vert} + \sqrt{\varep^{-2} \vert \mathbf{L}(W)\vert}.
\end{aligned}\label{EQnonlinWestimate1proof}
\end{align}
On the other hand, a straight-forward inspection of the higher-order terms in \eqref{EQDEFEWPW} and \eqref{EQdefGW} shows that
\begin{align}
\begin{aligned}
&\vert\triangle \Ef(x_{+W})-\Ef(W)\vert+\vert\triangle \Pf(x_{+W})-\Pf(W)\vert+\vert\triangle \Gf(x_{+W})-\Gf(W)\vert \\
\les& \varep^{3/2} \Vert \mfW_0\Vert^2+ \varep^{2} (\Vert \mfW_1\Vert^2+\Vert \mfW_2\Vert^2) + \varep^{-1/2} \lrpar{ \Vert x_{+W}-\mathfrak{m} \Vert_{\XX^{\mathrm{h.f.}}(\HH)}^2+ \Vert x-\mathfrak{m} \Vert_{\XX(\HH)}^2},
\end{aligned}\label{EQnonlinWestimate2proof}
\end{align}
and, similarly, inspecting the higher-order terms of \eqref{EQdeftriangleLW00} shows that 
\begin{align}
\begin{aligned}
\vert\triangle \Lf(x_{+W})-\Lf(W)\vert \les& \varep^2 \Vert \mfW_0 \Vert^2+ \varep^{5/2} (\Vert \mfW_1 \Vert^2+\Vert \mfW_2\Vert^2) \\
&+ \Vert x_{+W}-\mathfrak{m} \Vert_{\XX^{\mathrm{h.f.}}(\HH)}^2 + \Vert x- \mathfrak{m} \Vert_{\XX(\HH)}^2,
\end{aligned}\label{EQnonlinWestimate3proof}
\end{align}
where we estimated the quadratic term $\OO^{\Lf^m}_{x}$ in \eqref{EQdeftriangleLW00} (which depends only on $x$) by
\begin{align*}
\begin{aligned}
\vert\OO^{\Lf^m}_{x}\vert\les \Vert x-\mathfrak{m} \Vert^2_{\XX(\HH)}.
\end{aligned}
\end{align*} 
Plugging \eqref{EQnonlinWestimate3proof} and \eqref{EQnonlinWestimate2proof} into \eqref{EQnonlinWestimate1proof}, and arguing in a standard fashion (i.e. bringing terms to the left-hand side for $\varep>0$) finishes the proof of \eqref{EQw0estimate}.

As second step to prove \eqref{EQcontrolPartTHMmain0}, we recall \eqref{EQestimateDIFFx0x1} from Section \ref{SECconstructionSolution} and apply the above \eqref{EQw0estimate}, 
\begin{align*}
\begin{aligned}
&\Vert x_{+W} - x \Vert_{\XX(S_2)} + \Vert x_{+W} - x \Vert_{\XX^{\mathrm{h.f.}}(\HH)} \\
\les& \varep \Vert \mfW_0 \Vert + \varep^{5/4} \lrpar{\Vert \mfW_1 \Vert+\Vert \mfW_2 \Vert} \\
\les& \varep\lrpar{\sqrt{\varep^{-1}\triangle\Ef(x_{+W})} + \sqrt{\varep^{-1}\vert\triangle\Pf(x_{+W})\vert} + \sqrt{\varep^{-1}\vert\triangle\Gf(x_{+W})\vert}}+ \varep^{5/4}  \sqrt{\varep^{-2}\vert\triangle\Lf(x_{+W})\vert }\\
&+\varep^{1/4} \Vert x-\mathfrak{m} \Vert_{\XX(\HH)}+ \varep^{1/4} \Vert x_{+W}-\mathfrak{m} \Vert_{\XX^{\mathrm{h.f.}}(\HH)} \\
\les& \varep\lrpar{\sqrt{\varep^{-1}\triangle\Ef(x_{+W})} + \sqrt{\varep^{-1}\vert\triangle\Pf(x_{+W})\vert} + \sqrt{\varep^{-1}\vert\triangle\Gf(x_{+W})\vert}}+ \varep^{5/4}  \sqrt{\varep^{-2}\vert\triangle\Lf(x_{+W})\vert }\\
&+\varep^{1/4} \Vert x-\mathfrak{m} \Vert_{\XX(\HH)}+ \varep^{1/4} \Vert x_{+W}-x \Vert_{\XX^{\mathrm{h.f.}}(\HH)},
\end{aligned}
\end{align*}
where we wrote $x_{+W}-\mathfrak{m}=(x-\mathfrak{m})+(x_{+W}-x)$ and used \eqref{EQembeddingsobolevnulldata}. For $\varep>0$ sufficiently small, we can bring the last term to the left-hand side and thereby conclude the proof of \eqref{EQcontrolPartTHMmain0}.

\begin{remark}[Estimates for $W$] \label{REMestimatesW}Combining the estimates \eqref{EQcontrolPartTHMmain0} and \eqref{EQw0estimate}, we directly get that
\begin{align*}
\begin{aligned}
\Vert \mfW_0 \Vert \les& \sqrt{\varep^{-1}\triangle \mathbf{E}(x_{+W})} +\sqrt{\varep^{-1}\vert\triangle \mathbf{P}(x_{+W}) \vert}+ \sqrt{\varep^{-1}\vert\triangle \mathbf{G}(x_{+W}) \vert}\\
&+\varep^{1/2} \sqrt{\varep^{-2}\vert\triangle \mathbf{L}(x_{+W}) \vert} + \varep^{-3/4} \Vert x-\mathfrak{m} \Vert_{\XX(\HH)}, \\
\Vert \mfW_1 \Vert+ \Vert \mfW_2 \Vert\les& \sqrt{\varep^{-1}\triangle \mathbf{E}(x_{+W})} +\sqrt{\varep^{-1}\vert\triangle \mathbf{P}(x_{+W}) \vert}+ \sqrt{\varep^{-1}\vert\triangle \mathbf{G}(x_{+W}) \vert}\\
&+\sqrt{\varep^{-2}\vert\triangle \mathbf{L}(x_{+W}) \vert}  + \varep^{-1} \Vert x-\mathfrak{m} \Vert_{\XX(\HH)}.
\end{aligned}
\end{align*}
In particular, in view of the construction of obstruction-free null gluing in this paper, if for real numbers $\mathfrak{c}_2, \mathfrak{c}_3>0$ ($\mathfrak{c}_3>0$ large) the prescriptions $\triangle\mathbf{E}(x_{+W}), \triangle\mathbf{P}(x_{+W}),\triangle\mathbf{L}(x_{+W}),\triangle\mathbf{G}(x_{+W})$ satisfy 
\begin{align*}
\begin{aligned}
\vert \triangle\mathbf{E}(x_{+W}) \vert\les \varep, \,\, \vert\triangle\mathbf{L}(x_{+W})\vert \leq \mathfrak{c}_2 \varep^2, \,\, \triangle\mathbf{E}(x_{+W}) > \mathfrak{c}_3 \lrpar{\vert \triangle\mathbf{P}(x_{+W})\vert + \vert \triangle\mathbf{G}(x_{+W}) \vert},
\end{aligned}
\end{align*}
then by the above, with \eqref{EQsmallnessCONDEPLGnullgluing1}, for $\varep>0$ sufficiently small,
\begin{align}
\begin{aligned}
\Vert \mfW_0 \Vert \les 1+ \varep^{1/2} \sqrt{\mathfrak{c}_2} \les 1, \,\, \Vert \mfW_1 \Vert+ \Vert \mfW_2 \Vert\les 1 + \sqrt{\mathfrak{c}_2}.
\end{aligned}\label{EQsmallnessw0estimatew1w2}
\end{align}
\end{remark}

\subsection{Difference estimates}\label{SECdiffEstimates2W} In this section we prove the \emph{difference estimates} \eqref{EQdiffEstimate2mainthm2statement} of Theorem \ref{THMmain0}. That is, given two solutions $x$ and $x'$ to the null structure equations on $\HH$ satisfying 
\begin{align}
\begin{aligned}
\Vert x-\mathfrak{m} \Vert_{\XX(\HH)} \leq \varep, \,\, \Vert x'-\mathfrak{m} \Vert_{\XX(\HH)} \leq \varep,
\end{aligned}\label{EQdoublesmallnesscondition}
\end{align}
for some sufficiently small real number $\varep>0$, and given two vectors 
\begin{align*}
\begin{aligned}
(\triangle\mathbf{E},\triangle\mathbf{P},\triangle\mathbf{L},\triangle\mathbf{G}), \,\, (\triangle\mathbf{E}',\triangle\mathbf{P}',\triangle\mathbf{L}',\triangle\mathbf{G}') \in \RRR^{10},
\end{aligned}
\end{align*}
both satisfying the conditions \eqref{EQsmallnessMain021}-\eqref{EQsmallnessMain02}, we consider the solutions $W$ and $W'$ to (respectively constructed by applying part (1) of Theorem \ref{THMmain0})
\begin{align}
\begin{aligned}
(\triangle\mathbf{E},\triangle\mathbf{P},\triangle\mathbf{L},\triangle\mathbf{G})(x_{+W})=&(\triangle\mathbf{E}_0,\triangle\mathbf{P}_0,\triangle\mathbf{L}_0,\triangle\mathbf{G}_0), \\
(\triangle\mathbf{E},\triangle\mathbf{P},\triangle\mathbf{L},\triangle\mathbf{G})(x'_{+W'})=&(\triangle\mathbf{E}'_0,\triangle\mathbf{P}'_0,\triangle\mathbf{L}'_0,\triangle\mathbf{G}'_0),
\end{aligned}\label{EQnonlineardifferences1}
\end{align}
and prove that
\begin{align}
\begin{aligned}
&\Vert (x_{+W} - x) - (x'_{+W'} - x')\Vert_{\XX(S_2)} \\
\les& \, \varep \lrpar{\varep^{-1}\vert \triangle\Ef_0-\triangle\Ef_0'\vert + \varep^{-1}\vert\triangle\Pf_0-\triangle\Pf_0'\vert + \varep^{-1}\vert\triangle\Gf_0-\triangle\Gf_0'\vert} \\
&+\varep^{5/4} \lrpar{\varep^{-2}\vert\triangle\Lf_0-\triangle\Lf_0'\vert} +\varep^{1/4} \Vert x - x' \Vert_{\XX(\HH)},
\end{aligned}\label{EQdiffEstimate2mainthm2statementCITEPROOFSECTION}
\end{align}
where the constant in the inequality depends on $C_1>0$ and $C_1'>0$ of \eqref{EQsmallnessMain02}.

The proof of \eqref{EQdiffEstimate2mainthm2statementCITEPROOFSECTION} follows from direct combination of 
\begin{itemize}
\item the difference estimates proved for the construction of high-frequency solutions $x_{+W}$ in Section \ref{SECconstructionSolution}, 
\item difference estimates for two tensors $W$ and $W'$ analogous to the estimates of Section \ref{SEC1EPLGgluing}. These are proved in Section \ref{SECdiff00122} below, based on the explicit coefficient formulas \eqref{EQchoiceConstantsPG}, \eqref{EQcoefficientdefEadjustment} and \eqref{EQchoiceConstantsL}.
\item difference estimates for the null transport of $\Ef,\Pf,\Lf,\Gf$ along $\HH$. These estimates are discussed in Section \ref{SECdiff00121}, generalizing the analysis of the $\Ef,\Pf,\Lf,\Gf$ null transport in Sections \ref{SECgluingE}, \ref{SECgluingG} and \ref{SECgluingL}.
\end{itemize}
The proof of \eqref{EQdiffEstimate2mainthm2statementCITEPROOFSECTION} is given in Section \ref{SECdiff00123} below.

\subsubsection{Difference estimates for two tensors $W$ and $W'$.} \label{SECdiff00122}

\ni Let $x$ and $x'$ be two given solutions to the null structure equations along $\HH$ satisfying \eqref{EQdoublesmallnesscondition}, and consider two solutions $W$ and $W'$ constructed in Section \ref{SEC1EPLGgluing} to
\begin{align*}
\begin{aligned}
\lrpar{\mathbf{E},\mathbf{P},\mathbf{L},\mathbf{G}}(W) = \lrpar{\triangle\mathbf{E}_0,\triangle\mathbf{P}_0,\triangle\mathbf{L}_0,\triangle\mathbf{G}_0}, \,\, \lrpar{\mathbf{E},\mathbf{P},\mathbf{L},\mathbf{G}}(W') = \lrpar{\triangle\mathbf{E}'_0,\triangle\mathbf{P}'_0,\triangle\mathbf{L}'_0,\triangle\mathbf{G}'_0}.
\end{aligned}
\end{align*}
From the explicit formulas \eqref{EQchoiceConstantsPG}, \eqref{EQcoefficientdefEadjustment} and \eqref{EQchoiceConstantsL} it follows, denoting 
\begin{align}
\begin{aligned}
m:= \min\{\varep^{-1}\mathbf{E}(W'),\varep^{-1}\mathbf{E}(W)\}
\end{aligned}\label{EQdefm}
\end{align}
and using \eqref{EQrecallassumptionsEPLGtopordergluing}, that
\begin{align}
\begin{aligned}
\Vert \mfW'_0-\mfW_0 \Vert \les& \frac{1}{\sqrt{m}}\lrpar{{\varep^{-1}\vert \mathbf{E}(W')- \mathbf{E}(W) \vert} + {\varep^{-1}\vert  \mathbf{P}(W')-  \mathbf{P}(W) \vert}} \\
&+\frac{1}{\sqrt{m}}\lrpar{\varep^{-1}\vert  \mathbf{G}(W')-  \mathbf{G}(W) \vert},
\end{aligned}\label{EQdifferenceEstimateWlinearPART1}
\end{align}
and
\begin{align}
\begin{aligned}
\Vert \mfW'_1-\mfW_1 \Vert + \Vert \mfW'_2-\mfW_2 \Vert \les& \frac{1}{\sqrt{m}}\lrpar{{\varep^{-1}\vert  \mathbf{E}(W')-  \mathbf{E}(W) \vert} + {\varep^{-2}\vert  \mathbf{L}(W')-  \mathbf{L}(W) \vert}}  \\
&+ \frac{1}{\sqrt{m}}\lrpar{\sum\limits_{m=-1}^1 \varep^{-2}\vert \OO_{x',\mfW'_0}^{\Lf^m} -\OO_{x,\mfW_0}^{\Lf^m} \vert}.
\end{aligned}\label{EQdifferenceEstimateWlinearPART2precursor}
\end{align}
Similar to \eqref{EQnonlinearityEstimateLxerrorterm}, by inspecting $\OO_{x,\mfW_0}^{\Lf^m}$ we can estimate
\begin{align*}
\begin{aligned}
\varep^{-2}\vert \OO_{x',\mfW'_0}^{\Lf^m} -\OO_{x,\mfW_0}^{\Lf^m} \vert &\les \varep^{-1} \Vert x-x' \Vert_{\XX(\HH)} + \Vert \mfW_0-\mfW_0' \Vert,
\end{aligned}
\end{align*}
which, plugged into \eqref{EQdifferenceEstimateWlinearPART2precursor} and using \eqref{EQdifferenceEstimateWlinearPART1}, leads to
\begin{align}
\begin{aligned}
\Vert \mfW'_1-\mfW_1 \Vert + \Vert \mfW'_2-\mfW_2 \Vert \les& \frac{1}{\sqrt{m}}\lrpar{{\varep^{-1}\vert  \mathbf{E}(W')-  \mathbf{E}(W) \vert} + {\varep^{-2}\vert  \mathbf{L}(W')-  \mathbf{L}(W) \vert}}\\
&+\frac{1}{\sqrt{m}}\lrpar{{\varep^{-1}\vert  \mathbf{P}(W')- \mathbf{P}(W) \vert}+{\varep^{-1}\vert \mathbf{G}(W')- \mathbf{G}(W) \vert}}  \\
&+ \frac{1}{\sqrt{m}}\lrpar{\varep^{-1} \Vert x-x' \Vert_{\XX(\HH)}}.
\end{aligned}\label{EQdifferenceEstimateWlinearPART2}
\end{align}

\subsubsection{Difference estimates for the null transport of $\Ef,\Pf,\Lf,\Gf$.} \label{SECdiff00121}

\ni In the previous Sections \ref{SECgluingE}, \ref{SECgluingG} and \ref{SECgluingL}, we showed that given a solution $x$ to the null structure equations, satisfying for some sufficiently small real number $\varep>0$,
\begin{align*}
\begin{aligned}
\Vert x -\mathfrak{m} \Vert_{\XX(\HH)} \leq \varep,
\end{aligned}
\end{align*}
and given a tensor $W$ in the form of the ansatz \eqref{EQWansatz}-\eqref{EQchoiceofCoeffFunction3}, then the charges $\Ef,\Pf,\Lf,\Gf$ of the corresponding high-frequency solution $x_{+W}$ (constructed in Section \ref{SECconstructionSolution}) are transported along $\HH$ as follows (see also the more detailed \eqref{EQnonlinWestimate2proof} and \eqref{EQnonlinWestimate3proof})
\begin{align*}
\begin{aligned}
\triangle\mathbf{E}(x_{+W}) =& \mathbf{E}(W) + \OO(\varep^{5/4}), &
\triangle\mathbf{P}(x_{+W}) =& \mathbf{P}(W) + \OO(\varep^{5/4}), \\
\triangle\mathbf{G}(x_{+W}) =& \mathbf{G}(W) + \OO(\varep^{5/4}), & \triangle\mathbf{L}(x_{+W}) =& \mathbf{L}(W) + \OO_x^{\Lf}+\OO(\varep^{9/4}),
\end{aligned}
\end{align*}
where the term $\OO_x^{\Lf}$ depends only on $x$ and is of size $\varep^2$.

Inspecting the higher-order terms in the null transport equations for $\Ef,\Pf,\Lf,\Gf$ in Sections \ref{SECgluingE}, \ref{SECgluingG} and \ref{SECgluingL} and applying standard product estimates (similarly to \eqref{EQnonlinWestimate2proof} and \eqref{EQnonlinWestimate3proof}) yields the following difference estimates. Consider two solutions $x$ and $x'$ to the null structure equations along $\HH$, satisfying for some sufficiently small real number $\varep>0$,
\begin{align*}
\begin{aligned}
\Vert x -\mathfrak{m} \Vert_{\XX(\HH)} \leq \varep, \,\, \Vert x' -\mathfrak{m} \Vert_{\XX(\HH)} \leq \varep,
\end{aligned}
\end{align*}
and consider moreover two tensors $W$ and $W'$ in the form of the ansatz \eqref{EQWansatz}-\eqref{EQchoiceofCoeffFunction3}. Then we have the following difference estimates for the charges $\Ef,\Pf,\Gf$,
\begin{align}
\begin{aligned}
&\left\vert\lrpar{\triangle\mathbf{E}(x_{+W})- \mathbf{E}(W)} - \lrpar{\triangle\mathbf{E}(x'_{+W'})- \mathbf{E}(W')}\right\vert \\
&+\left\vert\lrpar{\triangle\mathbf{P}(x_{+W})- \mathbf{P}(W)} - \lrpar{\triangle\mathbf{P}(x'_{+W'})- \mathbf{P}(W')}\right\vert \\
&+\left\vert\lrpar{\triangle\mathbf{G}(x_{+W})- \mathbf{G}(W)} - \lrpar{\triangle\mathbf{G}(x'_{+W'})- \mathbf{G}(W')}\right\vert \\
\les& \varep^{1/2} \Vert x_{+W}-x'_{+W'} \Vert_{\XX^{\mathrm{h.f.}}(\HH)} + \varep^{3/2} \Vert \mfW_0-\mfW_0' \Vert \\
&+ \varep^{7/4} \lrpar{\Vert \mfW_1-\mfW_1' \Vert +\Vert \mfW_2-\mfW_2' \Vert } + \varep^{1/2} \Vert x-x'\Vert_{\XX(\HH)}.
\end{aligned}\label{EQdiffestimateEPLG1}
\end{align}
and for the charge $\Lf$,
\begin{align}
\begin{aligned}
&\left\vert \lrpar{\triangle\mathbf{L}(x_{+W})- \mathbf{L}(W)} - \lrpar{\triangle\mathbf{L}(x'_{+W'})- \mathbf{L}(W')}\right\vert \\
\les& \varep^{9/4} \Vert \mfW_0 - \mfW_0' \Vert + \varep^{9/4} \lrpar{\Vert \mfW_1 - \mfW_1' \Vert+\Vert \mfW_2 - \mfW_2' \Vert }\\
&+ \varep \lrpar{ \Vert x - x' \Vert_{\XX(\HH)} + \Vert x_{+W} - x'_{+W'} \Vert_{\XX^{\mathrm{h.f.}}(\HH)}},
\end{aligned}\label{EQdiffestimateEPLG2}
\end{align}
where we note that the ($x$- and $x'$-dependent, respectively) terms $\OO^{\Lf^m}_x$ and $\OO^{\Lf^m}_{x'}$ (defined in \eqref{EQpreciseerrorL}) were estimated as follows,
\begin{align}
\begin{aligned}
\left\vert \OO^{\Lf^m}_x - \OO^{\Lf^m}_{x'} \right\vert \les \varep \Vert x - x' \Vert_{\XX(\HH)}.
\end{aligned}\label{EQtoporderdiffestimate00112}
\end{align}

\subsubsection{Conclusion of difference estimates for the null gluing of $\Ef,\Pf,\Lf,\Gf$.} \label{SECdiff00123} In this section we conclude the proof of \eqref{EQdiffEstimate2mainthm2statementCITEPROOFSECTION}. The argument is analogous to the conclusion of \eqref{EQcontrolPartTHMmain0} in Section \ref{SECdiffEstimates1W}.

First, by \eqref{EQDEFEWPW} and \eqref{EQsmallnessMain02}, we have that for $\varep>0$ sufficiently small,
\begin{align}
\begin{aligned}
m :=& \min\{\varep^{-1}\mathbf{E}(W),\varep^{-1}\mathbf{E}(W')\} \\
\geq& \half \min\{\varep^{-1}\mathbf{E}(x_{+W}),\varep^{-1}\mathbf{E}(x'_{+W'})\} \\
=& \half \min\{C_1,C_1'\} \\
>& 0.
\end{aligned}\label{EQlowerboundWiterationestimates}
\end{align}
Second, combining the estimates \eqref{EQdifferenceEstimateWlinearPART1}, \eqref{EQdifferenceEstimateWlinearPART2} and \eqref{EQdiffestimateEPLG1}, \eqref{EQdiffestimateEPLG2} yields that, for $\varep>0$ sufficiently small (such that terms can be brought onto the left-hand side)
\begin{align}
\begin{aligned}
\Vert \mfW_0 - \mfW_0' \Vert \les& \, \varep^{-1}\vert \triangle\Ef(x_{+W})-\triangle\Ef_0(x'_{+W'})\vert + \varep^{-1}\vert\triangle\Pf(x_{+W})-\triangle\Pf(x'_{+W'})\vert \\
& + \varep^{-1}\vert\triangle\Gf(x_{+W})-\triangle\Gf(x'_{+W'})\vert +\varep^{3/4}\cdot \varep^{-2}\vert\triangle\Lf(x_{+W})-\triangle\Lf(x'_{+W'})\vert  \\
&+ \varep^{-1/4} \lrpar{\Vert x - x' \Vert_{\XX(\HH)}+\Vert x_{+W} - x'_{+W'} \Vert_{\XX^{\mathrm{h.f.}}(\HH)}},
\end{aligned}\label{EQW0fullnonlindffest13}
\end{align}
and
\begin{align}
\begin{aligned}
&\Vert \mfW_1 - \mfW_1' \Vert+\Vert \mfW_2- \mfW_2' \Vert \\
\les& \, \varep^{-1}\vert \triangle\Ef(x_{+W})-\triangle\Ef_0(x'_{+W'})\vert + \varep^{-1}\vert\triangle\Pf(x_{+W})-\triangle\Pf(x'_{+W'})\vert \\
& + \varep^{-1}\vert\triangle\Gf(x_{+W})-\triangle\Gf(x'_{+W'})\vert + \varep^{-2}\vert\triangle\Lf(x_{+W})-\triangle\Lf(x'_{+W'})\vert  \\
&+ \varep^{-1} \lrpar{\Vert x - x' \Vert_{\XX(\HH)}+\Vert x_{+W} - x'_{+W'} \Vert_{\XX^{\mathrm{h.f.}}(\HH)}},
\end{aligned}\label{EQW0fullnonlindffest134}
\end{align}
where the constants in the inequalities depend on $C_1>0$ and $C_1'>0$ of \eqref{EQsmallnessMain02}.

Third, applying \eqref{EQW0fullnonlindffest13} and \eqref{EQW0fullnonlindffest134} in the difference estimates for the construction of $x_{+W}$, see \eqref{EQtoProveIterationEstimateForX} and Section \ref{SECestimateDiffest2}, and writing
\begin{align*}
\begin{aligned}
x_{+W} - x'_{+W'} = \lrpar{\lrpar{x_{+W} - x} -\lrpar{x'_{+W'}-x'}} + \lrpar{x-x'},
\end{aligned}
\end{align*}
we get that
\begin{align*}
\begin{aligned}
&\Vert \lrpar{x_{+W} - x}-\lrpar{x'_{+W'}-x'} \Vert_{\XX(S_2)} + \Vert \lrpar{x_{+W} - x} -\lrpar{x'_{+W'}-x'} \Vert_{\XX^{\mathrm{h.f.}}(\HH)} \\
\les& \varep \Vert \mfW_0-\mfW_0' \Vert + \varep^{5/4} \Vert W-W' \Vert + \varep^{1/2} \Vert x - x' \Vert_{\XX(\HH)} \\
\les& \, \varep \lrpar{\varep^{-1}\vert \triangle\Ef_0-\triangle\Ef_0'\vert + \varep^{-1}\vert\triangle\Pf_0-\triangle\Pf_0'\vert + \varep^{-1}\vert\triangle\Gf_0-\triangle\Gf_0'\vert} \\
&+\varep^{5/4} \lrpar{\varep^{-2}\vert\triangle\Lf_0-\triangle\Lf_0'\vert} +\varep^{1/4} \Vert x - x' \Vert_{\XX(\HH)}+ \varep^{1/4}\Vert \lrpar{x_{+W} - x} -\lrpar{x'_{+W'}-x'} \Vert_{\XX^{\mathrm{h.f.}}(\HH)},
\end{aligned}
\end{align*}
where the constant in the inequality depends on $C_1>0$ and $C_1'>0$ of \eqref{EQsmallnessMain02}. For $\varep>0$ sufficiently small we can bring the last term to the left-hand side, which finishes the proof of \eqref{EQdiffEstimate2mainthm2statementCITEPROOFSECTION}.

\appendix

\section{Fourier theory} \label{APPconstructionW} 

\ni In this section all integrals and differential operators are with respect to $\gac$ on the unit sphere $\SSS^2$. 

\subsection{Tensor spherical harmonics} \label{SECdefTENSORspherical} For $l\geq0$, $m=-l,\dots,l$, we let $Y^{lm}$ denote the real-valued spherical harmonics. Moreover, we define (see, for example, \cite{Czimek1}),
\begin{align*}
\begin{aligned}
E^{lm} :=& \frac{1}{\sqrt{l(l+1)}} \DDd_1^\ast(Y^{lm},0), & H^{lm} :=& \frac{1}{\sqrt{l(l+1)}} \DDd_1^\ast(0,Y^{lm}), & &\text{ for } l\geq1, m=-l,\dots,l,\\
\psi^{lm} :=& \frac{1}{\sqrt{\half l(l+1)-1}} \DDd_2^\ast E^{lm}, & \phi^{lm} :=& \frac{1}{\sqrt{\half l(l+1)-1}} \DDd_2^\ast H^{lm} & &\text{ for } l\geq2, m=-l,\dots,l,
\end{aligned}
\end{align*}
where for functions $f_1,f_2$ and vectorfields $X$, with ${}^\ast X_A := \ind_{AB}X^B$,
\begin{align*}
\begin{aligned}
\DD_1^\ast(f_1,f_2):= -\di f_1 +{}^\ast \di f_2, \,\,(\DD_2^\ast X)_{AB} := -\half \lrpar{\Nd_AX_B+\Nd_BX_A - \Divd X \gac_{AB}}.
\end{aligned}
\end{align*}
Moreover, it holds that ${}^\ast E^{lm} = -H^{lm}, {}^\ast H^{lm} = E^{lm}$, and
\begin{align}
\begin{aligned}
E^{lm}\cdot E^{l'm'} = H^{lm}\cdot H^{l'm'}, \,\, E^{lm}\cdot H^{l'm'} = - H^{lm}\cdot E^{l'm'}, \,\, E^{lm}\cdot H^{lm} =0.
\end{aligned}\label{EQbasicsEM}
\end{align}
Tensor spherical harmonics satisfy moreover 
\begin{align}
\begin{aligned}
\Ld Y^{lm} =& -l(l+1)Y^{lm}, & \Ld E^{lm}=&(1-l(l+1)) E^{lm}, & \Ld H^{lm}=&(1-l(l+1)) H^{lm}, \\ 
\Ld \psi^{lm}=&(4-l(l+1)) \psi^{lm}, & \Ld \phi^{lm}=&(4-l(l+1)) \psi^{lm}.
\end{aligned}\label{EQeigenvalue}
\end{align}
For scalar functions $f$, vectorfields $X$ and $\gac$-tracefree symmetric $2$-tensors $V$, we define
\begin{align}
\begin{aligned}
f^{(lm)} :=& \int_S f \, Y^{lm} &&\text{ for } l\geq0, m=-l,\dots,l, \\
X_E^{(lm)} :=& \int_S X\cdot E^{lm}, \,\, X_H^{(lm)} := \int_S X\cdot H^{lm}, & &\text{ for } l\geq1, m=-l,\dots,l, \\
V_\psi^{(lm)} :=& \int_S V\cdot \psi^{lm}, \,\, V_\phi^{(lm)} := \int_S V\cdot \phi^{lm}, & &\text{ for } l\geq2, m=-l,\dots,l.
\end{aligned}\label{EQdefProjectionsFourier}
\end{align}

\subsection{Fourier relations} The following relations are used for calculations.

\begin{lemma}\label{LEMfourier} For $l\geq1,l'\geq1, m=-l,\dots,l, m'=-l',\dots,l'$, $\tilde l \geq0, \tilde m = -\tilde l, \dots, \tilde l$,
\begin{align*}
\begin{aligned}
\lrpar{E^{lm} \cdot E^{l'm'}}^{(\tilde{l}\tilde{m})} = \frac{l(l+1)+l'(l'+1)-\tilde{l}(\tilde{l}+1)}{2\sqrt{l(l+1)}\sqrt{l'(l'+1)}} \lrpar{ Y^{lm}Y^{l'm'} }^{\tilde{l}\tilde{m}}.
\end{aligned}
\end{align*}
Moreover, for $l\geq2,l'\geq2, m=-l,\dots,l, m'=-l',\dots,l'$ and $\tilde l \geq0, \tilde m = -\tilde l, \dots, \tilde l$,
\begin{align*}
\begin{aligned}
\lrpar{\psi^{lm} \cdot \psi^{l'm'}}^{(\tilde{l}\tilde{m})} =& \frac{(l(l+1)-1)+(l'(l'+1)-1) - \tilde l (\tilde l+1)}{2\sqrt{\half l(l+1)-1} \sqrt{\half l'(l'+1)-1}} \lrpar{ E^{l'm'} \cdot E^{lm}}^{(\tilde l \tilde m)} \\
& - \frac{\sqrt{l(l+1)} \sqrt{l'(l'+1)}}{2 \sqrt{\half l(l+1)-1} \sqrt{\half l'(l'+1)-1}}\lrpar{ Y^{l'm'}Y^{lm} }^{(\tilde l \tilde m)}.
\end{aligned}
\end{align*}
\end{lemma}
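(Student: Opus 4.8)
\emph{Plan of proof of Lemma~\ref{LEMfourier}.} The plan is to prove both identities by repeated integration by parts on the round unit sphere $(\SSS^2,\gac)$, exploiting that the electric tensor harmonics are gradients of scalar harmonics, that the $\psi^{lm}$ are renormalised trace-free Hessians, and that the Gauss curvature of $\gac$ equals $1$, together with the eigenvalue relations \eqref{EQeigenvalue}.

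\emph{First identity.} Since $\DDd_1^\ast(Y^{lm},0)=-\di Y^{lm}$, we have $E^{lm}=-\tfrac{1}{\sqrt{l(l+1)}}\di Y^{lm}$, hence $E^{lm}\cdot E^{l'm'}=\frac{\gac(\Nd Y^{lm},\Nd Y^{l'm'})}{\sqrt{l(l+1)}\sqrt{l'(l'+1)}}$. First I would insert the polarisation identity $2\,\gac(\Nd f,\Nd g)=\Ld(fg)-f\,\Ld g-g\,\Ld f$ and the relations $\Ld Y^{lm}=-l(l+1)Y^{lm}$, pair the result against $Y^{\tilde l\tilde m}$ in $L^2(\SSS^2)$, and move $\Ld$ onto $Y^{\tilde l\tilde m}$ by self-adjointness. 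This gives $(E^{lm}\cdot E^{l'm'})^{(\tilde l\tilde m)}=\frac{l(l+1)+l'(l'+1)-\tilde l(\tilde l+1)}{2\sqrt{l(l+1)}\sqrt{l'(l'+1)}}\,(Y^{lm}Y^{l'm'})^{(\tilde l\tilde m)}$, the first claim.

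\emph{Second identity.} Here $\psi^{lm}=c_l^{-1}\DDd_2^\ast E^{lm}$ with $c_l:=\sqrt{\tfrac12 l(l+1)-1}$, and $c_l^2=\int_{\SSS^2}|\DDd_2^\ast E^{lm}|^2$ by the normalisation. I would compute $\int_{\SSS^2}(\DDd_2^\ast E^{lm}\cdot\DDd_2^\ast E^{l'm'})Y^{\tilde l\tilde m}$ by integrating $\DDd_2^\ast$ off the second factor, using the adjoint relation $\int(\DDd_2^\ast X)_{AB}T^{AB}=\int X\cdot\Divd T$ for symmetric trace-free $T$, applied to $T=Y^{\tilde l\tilde m}\,\DDd_2^\ast E^{l'm'}$; the Leibniz rule splits this into (i)~$\int E^{lm}\cdot\lrpar{Y^{\tilde l\tilde m}\,\Divd\DDd_2^\ast E^{l'm'}}$ and (ii)~$\int (E^{lm})^B(\Nd^A Y^{\tilde l\tilde m})(\DDd_2^\ast E^{l'm'})_{AB}$. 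For (i), a short Bochner computation on $1$-forms (using $K\equiv1$) gives $\Divd\DDd_2^\ast E^{l'm'}=\lrpar{\tfrac12 l'(l'+1)-1}E^{l'm'}=c_{l'}^2 E^{l'm'}$, so (i) contributes $\lrpar{\tfrac12 l'(l'+1)-1}(E^{lm}\cdot E^{l'm'})^{(\tilde l\tilde m)}$. For (ii), using symmetry and trace-freeness of $\DDd_2^\ast E^{l'm'}$ I would rewrite the integrand as $\tfrac12(\DDd_2^\ast E^{l'm'})\cdot(E^{lm}\widehat{\otimes}\Nd Y^{\tilde l\tilde m})$, integrate $\DDd_2^\ast$ by parts once more onto $E^{l'm'}$, and compute $\Divd(E^{lm}\widehat{\otimes}\Nd Y^{\tilde l\tilde m})$: since $E^{lm}$ is exact the curl terms cancel, leaving $\Divd(E^{lm}\widehat{\otimes}\Nd Y^{\tilde l\tilde m})=(\Divd E^{lm})\Nd Y^{\tilde l\tilde m}+(\Ld Y^{\tilde l\tilde m})E^{lm}=\sqrt{l(l+1)}\,Y^{lm}\Nd Y^{\tilde l\tilde m}-\tilde l(\tilde l+1)\,Y^{\tilde l\tilde m}E^{lm}$. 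Pairing this with $E^{l'm'}$ and reducing the resulting $\gac(\Nd\cdot,\Nd\cdot)$-integral by the first-identity computation expresses (ii) purely in terms of $(E^{lm}\cdot E^{l'm'})^{(\tilde l\tilde m)}$ and $(Y^{lm}Y^{l'm'})^{(\tilde l\tilde m)}$. Collecting (i) and (ii), dividing by $c_lc_{l'}$, and a short algebraic rearrangement (trading part of the $(E\cdot E)^{(\tilde l\tilde m)}$-coefficient for $(YY)^{(\tilde l\tilde m)}$ via the first identity, which also symmetrises the answer under $l\leftrightarrow l'$) yields exactly the stated two-term formula.

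\emph{Main obstacle.} The one-step first identity is routine; the real work is the second. Because covariant derivatives on the curved sphere do not commute, the second integration by parts through $\DDd_2^\ast$ produces the genuine curvature correction — the $-1$ shift in $c_{l'}^2=\tfrac12 l'(l'+1)-1$ and, ultimately, the $-\frac{\sqrt{l(l+1)}\sqrt{l'(l'+1)}}{2c_lc_{l'}}(YY)^{(\tilde l\tilde m)}$ term — and one must track every curvature term and normalisation constant so that they assemble into the clean coefficients claimed. A convenient sanity check is that, by $SO(3)$-equivariance, $(\psi^{lm}\cdot\psi^{l'm'})^{(\tilde l\tilde m)}$ must be a single $m,m',\tilde m$-independent multiple of $(Y^{lm}Y^{l'm'})^{(\tilde l\tilde m)}$, so it suffices to pin down that one scalar and match it against the right-hand side.
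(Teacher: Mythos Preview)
Your proof of the first identity is exactly the paper's argument, just phrased through the polarisation identity rather than expanding $\Ldo(Y^{lm}Y^{l'm'})$; these are the same computation.

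For the second identity your approach is correct but differs from the one the paper has in mind. The paper says the second proof is ``similar'' to the first and omits it; the intended argument repeats the Laplacian trick one level up: start from $-\tilde l(\tilde l+1)\,(E^{lm}\cdot E^{l'm'})^{(\tilde l\tilde m)}=\int \Ldo(E^{lm}\cdot E^{l'm'})\,Y^{\tilde l\tilde m}$, expand by the product rule, use the eigenvalue relation $\Ldo E^{lm}=(1-l(l+1))E^{lm}$, and decompose the Hessian term via $\Nd_A E^{lm}_B=-\sqrt{\tfrac12 l(l+1)-1}\,\psi^{lm}_{AB}+\tfrac{\sqrt{l(l+1)}}{2}Y^{lm}\gac_{AB}$. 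The cross terms vanish by trace-freeness of $\psi$, and the stated formula drops out in one step, manifestly symmetric in $l\leftrightarrow l'$ from the outset. Your route---two successive integrations by parts of $\DDd_2^\ast$, the Bochner computation $\Divd\DDd_2^\ast E^{l'm'}=c_{l'}^2 E^{l'm'}$, then the divergence formula for $E^{lm}\widehat{\otimes}\Nd Y^{\tilde l\tilde m}$---reaches the same endpoint, but your intermediate expression is asymmetric in $l,l'$ and requires a further application of the first identity to recast it in the stated form (as you note). The paper's route avoids this bookkeeping; your route makes the curvature contribution (the ``$-1$'' in $c_{l'}^2$) more visibly tied to a Bochner term.
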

\begin{proof}[Proof of Lemma \ref{LEMfourier}] By \eqref{EQeigenvalue} it holds that
\begin{align*}
\begin{aligned}
-\tilde{l}(\tilde{l}+1) \int Y^{lm}Y^{l'm'} Y^{\tilde{l}\tilde{m}} =& \int Y^{lm}Y^{l'm'} \Ldo Y^{\tilde{l}\tilde{m}} \\
=& \int \Ldo \lrpar{ Y^{lm}Y^{l'm'}} Y^{\tilde{l}\tilde{m}} \\
=& \lrpar{-l(l+1)-l'(l'+1)} \int Y^{lm}Y^{l'm'} Y^{\tilde{l}\tilde{m}} +2 \int  \di Y^{lm}\cdot \di Y^{l'm'} Y^{\tilde{l}\tilde{m}},
\end{aligned}
\end{align*}
from which we conclude that
\begin{align*}
\begin{aligned}
\int  \di Y^{lm}\cdot \di Y^{l'm'} Y^{\tilde{l}\tilde{m}} = \frac{l(l+1)+l'(l'+1)-\tilde{l}(\tilde{l}+1)}{2} \int Y^{lm}Y^{l'm'} Y^{\tilde{l}\tilde{m}}.
\end{aligned}
\end{align*}
The first identity of Lemma \ref{LEMfourier} follows by definition of $E^{lm}$ in Section \ref{SECdefTENSORspherical}. The proof of the second identity is similar and omitted. \end{proof}

\ni The next lemma is proved similarly by integration by parts, its proof is omitted.
\begin{lemma} \label{LEMbilinearFourierID} For $l,l'\geq2$, $m=-l,\dots,l$, $m'=-l',\dots,l'$, and $\tilde m =-1,0,1$,
\begin{align}
\begin{aligned}
\int\psi^{lm}_{AB} \Nd_A \phi^{l'm'}_{BC}H^{1\tm}_C =& \frac{4-l'(l'+1)-l(l+1)+2}{2\sql} \int H^{1\tm}_C E^{lm}_B \phi^{l'm'}_{BC} \\
& +\frac{\sqll \sqlp}{2\sql}  \int Y^{lm} E^{l'm'} \cdot E^{1\tm}, \\
\int\phi^{lm}_{AB} \Nd_A \psi^{l'm'}_{BC}H^{1\tm}_C =& -\sqrt{\half l(l+1)-1} \int H^{lm}_B \Psi^{l'm'}_{BC} H^{1\tm}_C \\
&-\frac{4-l(l+1)+l'(l'+1)}{\sqrt{\half l'(l'+1)-1}} \int \phi^{lm}_{AB}E^{l'm'}_A H^{1\tm}_B.
\end{aligned}\label{EQintegrationByPartsLemmaTRI}
\end{align}
\end{lemma}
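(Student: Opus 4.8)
\textbf{Proof plan for Lemma \ref{LEMbilinearFourierID}.}

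The plan is to reduce both displayed identities to integration by parts on $\SSS^2$ together with the definitions of $\psi^{lm}$, $\phi^{lm}$ from Section \ref{SECdefTENSORspherical} and the eigenvalue relations \eqref{EQeigenvalue}. The key observation is that $\psi^{l'm'}$ and $\phi^{l'm'}$ are, up to the normalizing constants $\sqrt{\tfrac12 l'(l'+1)-1}$, given by $\DD_2^\ast$ applied to $E^{l'm'}$ and $H^{l'm'}$ respectively, and $\DD_2^\ast$ is (up to sign and the trace term) the symmetrized gradient. Since the integrand $\psi^{lm}_{AB}\Nd_A\phi^{l'm'}_{BC}H^{1\tm}_C$ already carries one derivative on $\phi^{l'm'}$ and $\phi^{l'm'}$ itself is one derivative of $H^{l'm'}$, the natural move is to integrate by parts to move derivatives around so that the result is expressed purely in terms of inner products of first-order tensor harmonics $E,H$ against scalar harmonics $Y$, which is what the right-hand sides are.

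First I would set up notation: write $\phi^{l'm'}_{BC} = \tfrac{1}{\sqrt{\frac12 l'(l'+1)-1}}(\DD_2^\ast H^{l'm'})_{BC} = -\tfrac{1}{\sqrt{\frac12 l'(l'+1)-1}}\cdot\tfrac12(\Nd_B H^{l'm'}_C + \Nd_C H^{l'm'}_B - \Divd H^{l'm'}\gac_{BC})$, and recall $\Divd H^{l'm'}=0$. Then $\Nd_A\phi^{l'm'}_{BC}$ is a second derivative of $H^{l'm'}$; commuting covariant derivatives on $\SSS^2$ introduces curvature terms proportional to $\gac$ contracted with $H^{l'm'}$, and using $\lapp H^{l'm'} = (1-l'(l'+1))H^{l'm'}$ these can be collected. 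The contraction with $\psi^{lm}_{AB}$ (which is $\gac$-tracefree and divergence-free in the sense $\Nd_A\psi^{lm}_{AB} = \sqrt{\tfrac12 l(l+1)-1}\,E^{lm}_B$, the adjoint relation $\DD_2\DD_2^\ast = -\tfrac12(\lapp+2)$ on suitable modes) lets one integrate by parts to shift the derivative from $\phi^{l'm'}$ onto $\psi^{lm}$, producing the term $\propto \int H^{1\tm}_C E^{lm}_B\phi^{l'm'}_{BC}$, while the curvature/trace corrections produce the scalar term $\propto \int Y^{lm} E^{l'm'}\cdot E^{1\tm}$ after using ${}^\ast$-duality and the eigenvalue bookkeeping; the coefficients $\tfrac{4-l'(l'+1)-l(l+1)+2}{2\sqlp}$ and $\tfrac{\sqll\sqlp}{2\sqlp}$ are exactly what the eigenvalue shifts predict (note the paper writes $\sql$ where the normalization $\sqlp$ naturally appears, so I would double-check which normalization constant belongs to which harmonic and match the stated formula). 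The second identity is handled by the symmetric computation with the roles of electric and magnetic parts and of $(l,m)$ versus $(l',m')$ interchanged, again just integrating by parts and using $\Nd_A\phi^{lm}_{AB} = \sqrt{\tfrac12 l(l+1)-1}\,H^{lm}_B$ together with $\Ld$-eigenvalues.

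The main obstacle I expect is the bookkeeping of the curvature (commutator) terms on $\SSS^2$: moving two derivatives around on a sphere of Gauss curvature $1$ generates lower-order terms proportional to the metric times the harmonic, and keeping track of exact numerical coefficients (the $4-l'(l'+1)$, the $+2$, the factors of $2$ in the denominators) requires care rather than cleverness. A clean way to organize this is to use the operator identities $\DD_2\DD_2^\ast = -\tfrac12(\lapp+2)$ acting on electric/magnetic vector harmonics and the formula $\Divd\DD_2^\ast X = -\tfrac12(\lapp+1)X$ for divergence-free $X$, so that every integration by parts is replaced by an algebraic eigenvalue manipulation; the remaining input is the scalar triple-product identity from Lemma \ref{LEMfourier}. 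Since the paper explicitly states the proof is "similar" to Lemma \ref{LEMfourier} and omitted, I would present it at the same level: set up the integration by parts, invoke the eigenvalue relations \eqref{EQeigenvalue} and the adjoint formulas, and read off the coefficients, flagging the one normalization-constant ambiguity for the reader to verify.
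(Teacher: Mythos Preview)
Your plan is correct and matches the paper's approach exactly: the paper states that the lemma ``is proved similarly by integration by parts, its proof is omitted,'' and what you describe—integrating by parts, using the definitions of $\psi^{lm},\phi^{lm}$ via $\DD_2^\ast$, the divergence relations $\Divd\psi^{lm}=\sql\,E^{lm}$ and $\Divd\phi^{lm}=\sql\,H^{lm}$, the eigenvalue identities \eqref{EQeigenvalue}, and curvature commutators on $\SSS^2$—is precisely that computation. Your remark about double-checking which normalization constant ($\sql$ versus $\sqlp$) appears in the coefficients is well taken and worth verifying carefully when carrying out the bookkeeping.
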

\ni The following integration-by-parts identity is used to calculate terms such as appearing on the right-hand side of \eqref{EQintegrationByPartsLemmaTRI}. Its proof is omitted.
\begin{lemma} \label{LEMibp1} Let $X,Y$ be two vectorfields and let $V= \DDd_2^\ast Z$ be a symmetric tracefree $2$-tensor. Then it holds that
\begin{align*}
\begin{aligned}
\int V^{AB}X_AY_B = \half \int \lrpar{Y\cdot Z \Divd X + X \cdot Z \Divd Y +X\cdot{}^\ast Z\Curld Y + Y\cdot{}^{\ast}Z \Curld X}.
\end{aligned}
\end{align*}
\end{lemma}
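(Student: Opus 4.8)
The plan is to reduce the identity, via the self-adjointness of the operators involved, to a direct computation of the divergence of $X\widehat{\otimes}Y$. Since $V$ is $\gac$-symmetric and $\gac$-trace-free, contracting $V^{AB}$ against $X_AY_B$ detects only the trace-free symmetric part of $X_AY_B$; writing $\half(X_AY_B+X_BY_A)=\half(X\widehat{\otimes}Y)_{AB}+\half(X\cdot Y)\gac_{AB}$, the pure-trace piece is killed by $\tr V=0$, so
\[
\int V^{AB}X_AY_B=\half\int V^{AB}(X\widehat{\otimes}Y)_{AB}.
\]
Next I would use that $\DDd_2^\ast$ is, up to sign, the formal $L^2$-adjoint of $\Divd$ on symmetric trace-free $2$-tensors: for any symmetric trace-free $W$, expanding $(\DDd_2^\ast Z)^{AB}W_{AB}=-\Nd^AZ^B\,W_{AB}$ (using symmetry and trace-freeness of $W$) and integrating by parts once on the closed surface gives $\int(\DDd_2^\ast Z)^{AB}W_{AB}=\int Z\cdot\Divd W$. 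Applying this with $W=X\widehat{\otimes}Y$ yields $\int V^{AB}X_AY_B=\half\int Z\cdot\Divd(X\widehat{\otimes}Y)$.

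The remaining work is to compute $\Divd(X\widehat{\otimes}Y)$ explicitly. I would expand $\Nd^A\big(X_AY_B+X_BY_A-(X\cdot Y)\gac_{AB}\big)$ and regroup, using $\Nd_B(X\cdot Y)=Y^A\Nd_BX_A+X^A\Nd_BY_A$, so that the antisymmetrized gradients appear:
\[
\big(\Divd(X\widehat{\otimes}Y)\big)_B=(\Divd X)Y_B+(\Divd Y)X_B+X^A(\Nd_AY_B-\Nd_BY_A)+Y^A(\Nd_AX_B-\Nd_BX_A).
\]
On a two-dimensional surface every antisymmetric $2$-tensor is a scalar multiple of the area form $\ind$, and contracting $\Nd_AY_B-\Nd_BY_A$ with $\ind^{AB}$ identifies the multiple as $\Curld Y$, so $\Nd_AY_B-\Nd_BY_A=(\Curld Y)\,\ind_{AB}$, and likewise for $X$. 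Using ${}^\ast X_A=\ind_{AB}X^B$ this converts those terms into $-(\Curld Y)\,{}^\ast X_B$ and $-(\Curld X)\,{}^\ast Y_B$, whence
\[
\big(\Divd(X\widehat{\otimes}Y)\big)_B=(\Divd X)Y_B+(\Divd Y)X_B-(\Curld Y)\,{}^\ast X_B-(\Curld X)\,{}^\ast Y_B.
\]
Contracting with $Z^B$, dividing by two, and rewriting $Z\cdot{}^\ast X=-X\cdot{}^\ast Z$ and $Z\cdot{}^\ast Y=-Y\cdot{}^\ast Z$ reproduces exactly the claimed formula.

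The computation is elementary and I do not anticipate a genuine obstacle; the one point that requires care is the bookkeeping of sign conventions — the precise relation between the antisymmetrized gradient of a vectorfield, the curl $\Curld$, and the area form $\ind$, and the antisymmetry $Z\cdot{}^\ast X=-X\cdot{}^\ast Z$ in the chosen conventions of the paper. Since the appendix works on the round unit sphere, no curvature terms enter anywhere; in fact the identity and this proof are valid verbatim on any Riemannian surface.
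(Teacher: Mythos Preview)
Your proof is correct. The paper explicitly omits the proof of this lemma (``Its proof is omitted''), so there is nothing to compare against; your argument via the adjointness $\int(\DDd_2^\ast Z)\cdot W=\int Z\cdot\Divd W$ followed by the explicit computation of $\Divd(X\widehat{\otimes}Y)$ is the natural route and your sign bookkeeping with the paper's conventions for ${}^\ast X$ and $\Curld$ checks out.
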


\subsection{Collection of Fourier coefficients} The next lemma follows from explicit calculation.\begin{lemma}[Products of spherical harmonics] \label{LEMCGcoeff} It holds that for $\tilde m=-1,0,1$,
\begin{align*}
\begin{aligned}
\lrpar{Y^{2m} Y^{2m'}}^{(1\tilde{m})} = 0 \,\,\,\,\, \text{ for } m,m'=-2,\dots,2.
\end{aligned}
\end{align*}
Moreover, the only non-vanishing components of 
\begin{align*}
\begin{aligned}
\lrpar{Y^{30}Y^{30}}^{(1\tilde{m})}, \,\, \lrpar{Y^{30}Y^{2m'}}^{(1\tilde{m})}  \,\,\,\,\, \text{ for } m'=-2,\dots,2, \, \tilde m=-1,0,1,
\end{aligned}
\end{align*}
are given by
\begin{align*}
\begin{aligned}
\lrpar{Y^{30}Y^{2-1}}^{(1-1)} = -\frac{\sqrt{3}}{\sqrt{35}} \sqrt{\frac{3}{4\pi}}, \,\,
\lrpar{ Y^{30}Y^{20}}^{(10)} = \frac{3}{\sqrt{35}} \sqrt{\frac{3}{4\pi}}, \,\,
\lrpar{ Y^{30}Y^{21}}^{11} = -\frac{\sqrt{3}}{\sqrt{35}} \sqrt{\frac{3}{4\pi}}.
\end{aligned}
\end{align*}
\end{lemma}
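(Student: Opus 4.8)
\textbf{Proof proposal for Lemma \ref{LEMCGcoeff}.}

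The plan is to establish this entirely by direct computation, exploiting parity (evenness/oddness under the antipodal map $x\mapsto -x$ on $\SSS^2$) as the main labor-saving device, and then explicitly evaluating the finitely many surviving triple integrals. First I would recall that $Y^{lm}$ is a polynomial of degree $l$ in the Cartesian coordinates restricted to $\SSS^2$, so it has parity $(-1)^l$ under the antipodal map; hence $Y^{lm}Y^{l'm'}$ has parity $(-1)^{l+l'}$. Since $Y^{1\tilde m}$ has parity $-1$, the integrand $Y^{lm}Y^{l'm'}Y^{1\tilde m}$ is odd whenever $l+l'$ is even, and its integral over $\SSS^2$ vanishes. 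This immediately kills $\lrpar{Y^{2m}Y^{2m'}}^{(1\tilde m)}$ for all $m,m'$ (here $l+l'=4$ is even) and also $\lrpar{Y^{30}Y^{30}}^{(1\tilde m)}$ (here $l+l'=6$ is even), so those two families vanish identically. This leaves only the mixed products $\lrpar{Y^{30}Y^{2m'}}^{(1\tilde m)}$, where $l+l'=5$ is odd and the parity obstruction does not apply, so genuine evaluation is needed.

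For the mixed products, the second tool I would use is the $\SSS^2$-rotation (azimuthal) symmetry: $Y^{30}$ is $m$-independent, i.e.\ invariant under rotations about the $z$-axis, so it carries azimuthal quantum number $0$; the triple product $Y^{30}Y^{2m'}Y^{1\tilde m}$ then integrates to zero unless the azimuthal numbers sum to zero, forcing $m'+\tilde m=0$ (working in the real-valued basis this amounts to the statement that the $\cos/\sin$ azimuthal factors must pair up). This pins the nonvanishing entries to exactly $(m',\tilde m)\in\{(-2,-1)\}$ — wait, one should be careful: $m'\in\{-2,\dots,2\}$ but the $m$-index of the $l=1$ harmonic only ranges over $\tilde m\in\{-1,0,1\}$, so the constraint $m'+\tilde m = 0$ together with $|\tilde m|\le 1$ forces $m'\in\{-1,0,1\}$ and $\tilde m = -m'$. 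This reproduces precisely the three candidate nonzero entries listed in the statement, namely $\lrpar{Y^{30}Y^{2-1}}^{(1-1)}$, $\lrpar{Y^{30}Y^{20}}^{(10)}$, $\lrpar{Y^{30}Y^{21}}^{(11)}$, and shows all other components vanish.

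Finally I would evaluate these three surviving integrals explicitly. The cleanest route is to write down the standard normalized real spherical harmonics $Y^{10},Y^{1\pm1}$, $Y^{20},Y^{2\pm1}$, $Y^{30}$ as elementary trigonometric/Cartesian expressions (e.g.\ $Y^{30}$ proportional to $5z^3-3z$ restricted to $\SSS^2$, $Y^{20}$ proportional to $3z^2-1$, $Y^{2\pm1}$ proportional to $z$ times the real/imaginary parts of $x+iy$, and similarly for the $l=1$ harmonics), multiply, and integrate over $\SSS^2$ using the elementary moment formulas $\int_{\SSS^2} z^{2k}\,d\mu_{\SSS^2}$ and orthogonality in the azimuthal variable. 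Each reduces to a single one-variable polynomial integral in $z=\cos\th^1$ over $[-1,1]$ times a $2\pi$ (or $\pi$) factor from the $\th^2$-integration. Collecting the normalization constants yields the stated values $-\tfrac{\sqrt3}{\sqrt{35}}\sqrt{\tfrac{3}{4\pi}}$, $\tfrac{3}{\sqrt{35}}\sqrt{\tfrac{3}{4\pi}}$, $-\tfrac{\sqrt3}{\sqrt{35}}\sqrt{\tfrac{3}{4\pi}}$; alternatively one may cite the Wigner $3j$-symbol / Gaunt coefficient formula, which gives these numbers directly from $\begin{pmatrix}3&2&1\\0&0&0\end{pmatrix}$ and the relevant $m$-dependent $3j$-symbols (the equality of the first and third values being forced by the $m'\leftrightarrow -m'$ symmetry of the real basis). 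The only mildly delicate point is bookkeeping of the normalization conventions for the real-valued harmonics and the sign conventions hidden in ${}^\ast\di$ and $\DDd_i^\ast$ from Appendix \ref{APPconstructionW}, but since the statement concerns only scalar products $Y^{lm}Y^{l'm'}$ these tensor-harmonic conventions do not enter, and the computation is routine.
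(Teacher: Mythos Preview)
Your overall approach---parity to kill the even-$(l+l')$ cases, azimuthal selection to narrow the $Y^{30}Y^{2m'}$ cases, then direct integration---is correct and is exactly what the paper does (it simply says the lemma ``follows from explicit calculation''). The parity step is fine.

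However, your azimuthal selection rule is misstated. The condition $m'+\tilde m=0$ is the selection rule for \emph{complex} spherical harmonics carrying $e^{im\phi}$. For the \emph{real} harmonics used here, the azimuthal factor of $Y^{lm}$ is $\cos(m\phi)$ or $\sin(|m|\phi)$ depending on the sign of $m$, and $Y^{30}$ has no $\phi$-dependence; hence $\int_0^{2\pi} Y^{2m'}Y^{1\tilde m}\,d\phi$ vanishes unless $m'$ and $\tilde m$ have the same magnitude \emph{and} the same sign (both $\cos$-type or both $\sin$-type). The correct selection rule is therefore $m'=\tilde m$, not $\tilde m=-m'$. Note that the surviving triples listed in the lemma, namely $(m',\tilde m)\in\{(-1,-1),(0,0),(1,1)\}$, satisfy $m'=\tilde m$ and do \emph{not} satisfy your stated condition $\tilde m=-m'$; your sentence ``This reproduces precisely the three candidate nonzero entries listed in the statement'' is thus inconsistent with the condition you just derived. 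This is a bookkeeping slip rather than a structural gap---once you replace $m'+\tilde m=0$ by $m'=\tilde m$, the rest of the argument goes through unchanged.
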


\subsection{Fourier theory for the gluing of $\mathbf{E},\mathbf{P}$ and $\mathbf{G}$} \label{SECFouriervertWvert2} 

By the ansatz \eqref{EQWansatzDetail} for $\mfW_0$ it follows on the one hand that for $\tilde m =-1,0,1$,
\begin{align}
\begin{aligned}
\lrpar{\vert \mfW_0 \vert^2_\gac}^{(1\tilde m)} = \sum\limits_{m=-1}^{1} f_{30}f_{2m} \lrpar{\psi^{30}\cdot\psi^{2m}}^{(1\tilde m)} = f_{30}f_{2\tilde m} \underbrace{\lrpar{\psi^{30}\cdot\psi^{2\tilde m}}^{(1\tilde m)}}_{\neq 0},
\end{aligned}\label{EQFourierFactPG}
\end{align}
where we applied Lemma \ref{LEMCGcoeff}, using that by Lemma \ref{LEMfourier},
\begin{align*}
\begin{aligned}
\lrpar{\psi^{2m} \cdot \psi^{2m'}}^{(1\tilde{m})} = \frac{1}{6}\lrpar{ Y^{2m}Y^{2m'} }^{1\tilde{m}}, \,\,\lrpar{\psi^{3m} \cdot \psi^{2m'}}^{(1\tilde{m})} = \frac{\sqrt{5}}{3} \lrpar{ Y^{3m}Y^{2m'} }^{1\tilde{m}}.
\end{aligned}
\end{align*}

\ni On the other hand, as the $\psi^{lm}$ form an orthonormal basis, 
\begin{align}
\begin{aligned}
\lrpar{\vert \mfW_0\vert_\gac^2}^{(0)} = \frac{1}{\sqrt{4\pi}} \lrpar{f_{30}^2 + \sum\limits_{m=-1}^1 f_{2m}^2 + f^2_{22}}.
\end{aligned}\label{EQFourierFactE}
\end{align}
\subsection{Fourier theory for the gluing of $\mathbf{L}$} \label{SECFourierWNDW} 

In this section we analyze the integrals
\begin{align*}
\begin{aligned}
\int (H^{1\tilde{m}})^C \Nd_A \lrpar{(\Psi^{30})^{AB}(\phi^{2 m'})_{B}^{\,\,\,\,C}} \,\,\, \text{ and } \,\,\, \int (H^{1\tilde{m}})^C \Nd_C (\Psi^{30})_{AD}(\phi^{2 m'})^{AD}.
\end{aligned}
\end{align*}
On the one hand, by $\Divd \Psi^{30} = \sqrt{5} E^{30}$ and Lemmas \ref{LEMfourier}, \ref{LEMbilinearFourierID} and \ref{LEMibp1},
\begin{align*}
\begin{aligned}
\int (H^{1\tilde{m}})^C \Nd_A \lrpar{(\Psi^{30})^{AB}(\phi^{2 m'})_{BC}} =& \sqrt{5} \int (H^{(1\tilde m)})^C (E^{30})^B \phi^{2m}_{BC} + \underbrace{\int (H^{(1\tilde m)})^C (\Psi^{30})^{AB} \Nd_A \phi_{BC}^{2m'}}_{=0} \\
=& -\frac{\sqrt{5}}{3\sqrt{2}} \lrpar{Y^{30}Y^{2m'}}^{(1\tilde m)},
\end{aligned}
\end{align*}
On the other hand, by Lemmas \ref{LEMfourier}, \ref{LEMbilinearFourierID} and \ref{LEMibp1} and using that $\Psi^{30}=\frac{1}{\sqrt{5}} \DDd_2^\ast(E^{30})$ by Section \ref{SECdefTENSORspherical},
\begin{align*}
\begin{aligned}
&\int (H^{1\tilde{m}})^C \Nd_C (\Psi^{30})_{AD}(\phi^{2 m'})^{AD}\\
=& -\frac{1}{\sqrt{5}} \int (H^{(1\tilde m)})^C \Nd_C \Nd_A E^{30}_D (\phi^{2m'})^{AD}\\
=& -\frac{1}{\sqrt{5}} \int (H^{(1\tilde m)})^C \Nd_A \Nd_C E^{30}_D (\phi^{2m'})^{AD} -\frac{1}{\sqrt{5}} \int (H^{(1\tilde m)})^C (\Nd_C \Nd_A-\Nd_A\Nd_C) E^{30}_D (\phi^{2m'})^{AD} \\
=& -\frac{1}{\sqrt{5}} \lrpar{-\sqrt{5} \int (H^{(1\tilde m)})^C\Nd_A \Psi^{30}_{CD}(\phi^{2m'})^{AD}+\sqrt{2}\lrpar{Y^{30}Y^{2m'}}^{(1\tilde m)}} -\frac{1}{\sqrt{5}} \lrpar{-\frac{1}{3\sqrt{2}}\lrpar{Y^{30}Y^{2m'}}^{(1\tilde m)}}\\
=& -\frac{1}{\sqrt{5}} \lrpar{-\frac{5}{\sqrt{2}}+\sqrt{2}-\frac{1}{3\sqrt{2}}},
\end{aligned}
\end{align*}
where we used that on the round unit sphere,
\begin{align*}
\begin{aligned}
(\Nd_C \Nd_A-\Nd_A\Nd_C) (E^{30})_D = \lrpar{\gac_{DC} \gac_{BA}-\gac_{DA}\gac_{BC}} (E^{30})^B = \gac_{DC}E^{30}_A - \gac_{DA} E^{30}_C.
\end{aligned}
\end{align*}

\ni Combining the above two calculations, we get in particular that
\begin{align}
\begin{aligned}
&\half \lrpar{\, \int (H^{1\tilde{m}})^C \Nd_A \lrpar{(\Psi^{30})^{AB}(\phi^{2 m'})_{B}^{\,\,\,\,C}}} -\frac{1}{4} \lrpar{\, \int (H^{1\tilde{m}})^C \Nd_C (\Psi^{30})_{AD}(\phi^{2 m'})^{AD}}\\
=& -\frac{\sqrt{5}}{3\sqrt{2}} \lrpar{Y^{30}Y^{2m'}}^{(1\tilde m)}.
\end{aligned}\label{EQFourierLTheorySurjectivity}
\end{align}
By Lemma \ref{LEMCGcoeff}, the right-hand side of \eqref{EQFourierLTheorySurjectivity} is non-vanishing for $m'=\tilde{m}$. 

\section{Derivation of null transport equations for $\Ef,\Pf,\Lf,\Gf$} \label{AppDerivation}

\ni In the following we first derive null transport equations for $\mathfrak{m}$ and $\mfb$ defined in \eqref{EQdefMUBETA}. Projecting these null transport equations onto the Fourier modes $l=0$ and $l=1$ in accordance with \eqref{EQdefNonlinCharges} then yields the transport equations for the charges $\Ef,\Pf,\Lf,\Gf$. Here we use that the Fourier decomposition is defined with respect to the unit round metric $\gac$ on each sphere and is thus $v$-invariant.\\

\ni \textbf{Transport equation for $\mfb$.} We claim that 
\begin{align}
\begin{aligned}
D\mfb =& \frac{\phi^3}{2\Om^2} \lrpar{-(\eta-2\di\log\Om) \vert \Om\chih\vert^2_\gd - \di \lrpar{\vert \Om\chih\vert^2_\gd} +\Om\trchi \Divd\lrpar{\Om\chih}}.
\end{aligned}\label{EQBequation1}
\end{align}
\ni Indeed, from \eqref{EQdefRicciINTRO}, \eqref{EQfirstvariation001}, \eqref{EQcompactRAY} and \eqref{EQnulleta}, and using that $[D,\di]=0$,
\begin{align*}
\begin{aligned}
D\mfb =& D\lrpar{\frac{\phi^3}{2\Om} \lrpar{\di\trchi+\trchi(\eta-\di\log\Om)}} \\
=& \frac{\phi^3}{2\Om}\lrpar{\frac{3\Om\trchi}{2}-\om} \lrpar{\di\trchi+\trchi(\eta-\di\log\Om)} \\
&+\frac{\phi^3}{2\Om} \lrpar{\di \lrpar{D\trchi}+ (D\trchi)(\eta-\di\log\Om) +\trchi \lrpar{D\eta-\di\om}}\\
=& \frac{\phi^3}{2\Om} \lrpar{-\di\lrpar{\Om\vert\chih\vert^2} -\Om\vert\chih\vert^2 (\eta-\di\log\Om) +\Om\trchi\Divd\chih + \Om\trchi\chih\cdot\di\log\Om} \\
=& \frac{\phi^3}{2\Om^2} \lrpar{-(\eta-2\di\log\Om) \vert\Om\chih\vert^2 - \di\lrpar{\vert\Om\chih\vert^2} + \Om\trchi \Divd (\Om\chih)}.
\end{aligned}
\end{align*}
From \eqref{EQdefNonlinCharges} and \eqref{EQBequation1} we explicitly get that
\begin{align}
\begin{aligned}
D\Lf =& \lrpar{\frac{\phi^3}{2\Om^2} \lrpar{-(\eta-2\di\log\Om) \vert \Om\chih\vert^2_\gd - \di \lrpar{\vert \Om\chih\vert^2_\gd} +\Om\trchi \Divd\lrpar{\Om\chih}}}_H^{(1m)},\\
D\Gf =& \lrpar{\frac{\phi^3}{2\Om^2} \lrpar{-(\eta-2\di\log\Om) \vert \Om\chih\vert^2_\gd - \di \lrpar{\vert \Om\chih\vert^2_\gd} +\Om\trchi \Divd\lrpar{\Om\chih}}}^{(1m)}_E.
\end{aligned}\label{EQLGtransportEquations}
\end{align}

\ni \textbf{Transport equation for $\mathfrak{m}$.} We claim that
\begin{align}
\begin{aligned}
D\mathfrak{m}=& \phi^3 \Divd\Divd \lrpar{\Om\chih} - \frac{\phi}{2} \Divd \lrpar{\frac{\Om\trchi \phi^3}{\Om^2} \Divd\lrpar{\Om\chih}} \\
&-\frac{\phi^3}{2} \Ld \lrpar{\Om\trchi} + \frac{\Om\trchi\phi}{4} \Divd\lrpar{\frac{\phi^3}{\Om^2}\di\lrpar{\Om\trchi}}\\
&-\frac{\Om\trchi\phi^3}{2} \Divd\lrpar{\eta-2\di\log\Om} + \frac{\Om\trchi\phi}{4}\Divd\lrpar{\frac{\phi^3\Om\trchi}{\Om^2}\lrpar{\eta-2\di\log\Om}} \\
&-\frac{\Om\trchib\phi^3}{4\Om^2} \vert \Om\chih \vert^2 + \frac{\phi}{2} \Divd \lrpar{\frac{\phi^3}{\Om^2}\di\lrpar{\vert\Om\chih\vert^2}} \\
&+\frac{\Om\trchi \phi^3}{2} \vert \eta-2\di\log\Om\vert^2 + 2\phi \Divd \lrpar{\Om\chih \cdot \mfb} +\frac{\phi}{2}\Divd\lrpar{\frac{\phi^3}{\Om^2} (\eta-2\di\log\Om) \vert\Om\chih\vert^2}.
\end{aligned}\label{EQnulltransportMU}
\end{align}
Indeed, we start by writing
\begin{align}
\begin{aligned}
D\mathfrak{m} =& D\lrpar{\phi^3\lrpar{K+\frac{1}{4}\trchi\trchib}}-D\lrpar{\phi\Divd \mfb}. 
\end{aligned}\label{EQmueqderiv0}
\end{align}
On the one hand, using that the Gauss curvature $K$ satisfies (see, for example, (5.28) in \cite{ChrForm})
\begin{align*}
\begin{aligned}
D\lrpar{\phi^2 K} = \phi^2 \lrpar{\Divd\Divd \lrpar{\Om\chih} -\half \Ld (\Om\trchi)}
\end{aligned}
\end{align*}
and that by \eqref{EQcompactRAY} and \eqref{EQtrchib}, 
\begin{align*}
\begin{aligned}
\frac{1}{4} D\lrpar{\phi^3 \trchi\trchib} =& \frac{1}{4} D\lrpar{\lrpar{\frac{\phi\trchi}{\Om} }\lrpar{\Om\phi^2\trchib}}\\
=& \frac{1}{4} \lrpar{-\phi\vert\chih\vert^2} \Om\phi^2\trchib \\
&+\frac{1}{4}{ \frac{\phi\trchi}{\Om} \lrpar{-2\phi^2\Om^2\Divd\lrpar{\eta-2\di\log\Om} + 2\phi^2\Om^2\vert \eta-2\di \log \Om\vert^2-2\Om^2\phi^2K}},
\end{aligned}
\end{align*}
we get that, using again \eqref{EQcompactRAY},
\begin{align}
\begin{aligned}
&D\lrpar{\phi^3\lrpar{K+\frac{1}{4}\trchi\trchib}}\\
=&\frac{\Om\trchi\phi}{2} \phi^2 K + \phi^3 \lrpar{\Divd\Divd \lrpar{\Om\chih} -\half \Ld (\Om\trchi)} +\frac{1}{4} \lrpar{-\phi\vert\chih\vert^2} \Om\phi^2\trchib \\
&+ \frac{1}{4}{ \frac{\phi\trchi}{\Om} \lrpar{-2\phi^2\Om^2\Divd\lrpar{\eta-2\di\log\Om} + 2\phi^2\Om^2\vert \eta-2\di \log \Om\vert^2-2\Om^2\phi^2K}} \\
=& \phi^3 \lrpar{\Divd\Divd \lrpar{\Om\chih} -\half \Ld (\Om\trchi)} + \frac{1}{4} \lrpar{-\phi\vert\chih\vert^2} \Om\phi^2\trchib \\
&+\frac{1}{4}{ \frac{\phi\trchi}{\Om} \lrpar{-2\phi^2\Om^2\Divd\lrpar{\eta-2\di\log\Om} + 2\phi^2\Om^2\vert \eta-2\di \log \Om\vert^2}}.
\end{aligned}\label{EQmueqderiv1}
\end{align}
On the other hand, using that for $1$-forms $\xi$ (see, for example, (6.107) in \cite{ChrForm})
\begin{align*}
\begin{aligned}
D\Divd \xi - \Divd D \xi = -2\Divd \lrpar{\Om\chih \cdot \xi} - \Om\trchi \Divd \xi,
\end{aligned}
\end{align*}
and \eqref{EQBequation1}, we have that by \eqref{EQcompactRAY},
\begin{align}
\begin{aligned}
&-D\lrpar{\phi \Divd \mfb} \\
=&- \frac{\Om\trchi\phi}{2} \Divd \mfb - \phi \lrpar{\Divd \lrpar{D\mfb}-2\Divd \lrpar{\Om\chih\cdot \mfb} -\Om\trchi \Divd \mfb} \\
=&  \frac{\Om\trchi\phi}{2} \Divd \mfb + 2\phi \Divd \lrpar{\Om\chih\cdot \mfb} \\
&-\phi \Divd \lrpar{\frac{\phi^3}{2\Om^2} \lrpar{-(\eta-2\di\log\Om) \vert \Om\chih\vert^2_\gd - \di \lrpar{\vert \Om\chih\vert^2_\gd} +\Om\trchi \Divd\lrpar{\Om\chih}}}.
\end{aligned}\label{EQmueqderiv2}
\end{align}

\ni Plugging \eqref{EQmueqderiv1} and \eqref{EQmueqderiv2} into \eqref{EQmueqderiv0} and using \eqref{EQdefMUBETA} finishes the proof of \eqref{EQnulltransportMU}.

Projecting \eqref{EQnulltransportMU} onto the Fourier modes \eqref{EQdefNonlinCharges} yields the null transport equations for $\Ef$ and $\Pf$, respectively. We omit writing out these projections of \eqref{EQnulltransportMU} explicitly.


\end{document}